\documentclass{article}

% if you need to pass options to natbib, use, e.g.:
\PassOptionsToPackage{numbers, compress}{natbib}
% before loading neurips_2025

% ready for submission
% \usepackage{neurips_2025}

% to compile a preprint version, e.g., for submission to arXiv, add add the
% [preprint] option:
% \usepackage[preprint]{neurips_2025}

% to compile a camera-ready version, add the [final] option, e.g.:
\usepackage[final]{neurips_2025}

% to avoid loading the natbib package, add option nonatbib:
% \usepackage[nonatbib]{neurips_2025}

\usepackage[utf8]{inputenc} % allow utf-8 input
\usepackage[T1]{fontenc}    % use 8-bit T1 fonts
\usepackage[colorlinks,linkcolor=blue,citecolor=blue,pagebackref=true]{hyperref}
\usepackage{url}            % simple URL typesetting
\usepackage{booktabs}       % professional-quality tables
\usepackage{amsfonts}       % blackboard math symbols
\usepackage{nicefrac}       % compact symbols for 1/2, etc.
\usepackage{microtype}      % microtypography
\usepackage{xcolor}         % colors

%% Add your packages here.
%%%%% NEW MATH DEFINITIONS %%%%%
\usepackage{amsmath,bm}
\usepackage{amsthm}
\usepackage{amssymb}
\usepackage{mathtools}
\usepackage{amsfonts}
\usepackage{enumitem}
\newtheorem{assumption}{Assumption}
\usepackage{graphicx,subcaption,caption}
\usepackage{booktabs}
\usepackage{algorithm}
\usepackage{algorithmic}
%%%%%%%%%%%%%%%%%%%%%%%%%%%%%%%%%%%%%%%%%%%%%%%%%

%%%%%%%%%%%%%%%%%%%%%%%%%%%%%%%%%%%%%%%%%%%%%%%%%
\newtheorem{lemma}{Lemma}

\newtheorem{theorem}{Theorem}
\newtheorem{proposition}{Proposition}
\newtheorem{fact}{Fact}
\newtheorem{definition}{Definition}
\newtheorem{corollary}{Corollary}%opening

\newtheorem{remark}{Remark}

\newcommand*\tageq{\refstepcounter{equation}\tag{\theequation}}
% \crefname{assumption}{Assumption}{Assumptions}

%% Good latex color: https://latexcolor.com/
\usepackage{xcolor}
\definecolor{forestgreen}{rgb}{0.13, 0.55, 0.13}

\definecolor{frenchblue}{rgb}{0.0, 0.45, 0.73}

\newcommand{\rebuttal}[1]{{\color{frenchblue}#1}}

\definecolor{cherryblossompink}{rgb}{1.0, 0.72, 0.77}

\definecolor{bittersweet}{rgb}{1.0, 0.44, 0.37}

\newcommand{\update}[1]{{\color{black}#1}}

\definecolor{navyblue}{rgb}{0.0, 0.0, 0.5}

% Mark sections of captions for referring to divisions of figures

\newcommand\blfootnote[1]{%
  \begingroup
  \renewcommand\thefootnote{}\footnote{#1}%
  \addtocounter{footnote}{-1}%
  \endgroup
}

% Highlight a newly defined term

\newcommand{\Es}{\mathbb{E}}
% \newcommand{\Ls}{\mathcal{L}}
% \newcommand{\R}{\mathbb{R}}
% \newcommand{\emp}{\tilde{p}}
% \newcommand{\lr}{\alpha}
% \newcommand{\reg}{\lambda}
% \newcommand{\rect}{\mathrm{rectifier}}
% \newcommand{\softmax}{\mathrm{softmax}}
% \newcommand{\sigmoid}{\sigma}
% \newcommand{\softplus}{\zeta}
% \newcommand{\KL}{D_{\mathrm{KL}}}
% \newcommand{\Var}{\mathrm{Var}}
% \newcommand{\standarderror}{\mathrm{SE}}
% \newcommand{\Cov}{\mathrm{Cov}}

% Figure reference, lower-case.

% Figure reference, capital. For start of sentence

% Section reference, lower-case.

% Section reference, capital.

% Reference to two sections.

% Reference to three sections.

% Reference to an equation, lower-case.
\def\eqref#1{equation~(\ref{#1})}
% Reference to an equation, upper case

% A raw reference to an equation---avoid using if possible

% Reference to a chapter, lower-case.

% Reference to an equation, upper case.

% Reference to a range of chapters

% Reference to an algorithm, lower-case.

% Reference to an algorithm, upper case.

% Reference to a part, lower case

% Reference to a part, upper case

\def\1{\bm{1}}

\newcommand{\clust}{\text{clust}}
\newcommand{\reg}{\text{reg}}
\newcommand{\indep}{\text{indep}}

\newcommand{\BIC}{\mbox{BIC}}
\newcommand{\crit}{\mbox{crit}}

\def\eps{{\epsilon}}
\newcommand{\PHI}{\Phi}

\newcommand{\ellnone}{\ell_1}
\newcommand{\fclust}{f_{\text{clust}}}
\newcommand{\at}[2][]{#1|_{#2}}
\newcommand{\Ef}[2]{\mathbb{E}_{#1} \left[#2\right]} 
\newcommand{\BICdiff}{\text{BIC}_{\text{diff}}}

\newcommand{\nparams}{\nu}

% Random variables

\def\ra{{\textnormal{a}}}

% rm is already a command, just don't name any random variables m

\def\rx{{\textnormal{x}}}

\def\rz{{\textnormal{z}}}

% Random vectors

\def\rvtheta{{\mathbf{\theta}}}
\def\rva{{\mathbf{a}}}

\def\rvx{{\mathbf{x}}}
\def\rvy{{\mathbf{y}}}

% Elements of random vectors
\def\erva{{\textnormal{a}}}

\def\ervx{{\textnormal{x}}}

% Random matrices
\def\rmA{{\mathbf{A}}}

\def\rmY{{\mathbf{Y}}}
\def\rmZ{{\mathbf{Z}}}

% Elements of random matrices

%%% Preambleicml2024.tex
% Vectors
\def\vzero{{\bm{0}}}

\def\vmu{{\bm{\mu}}}
\def\vnu{{\bm{\nu}}}
\def\vlambda{{\bm{\lambda}}}
\def\vgamma{{\bm{\gamma}}}
\def\vbeta{{\bm{\beta}}}
\def\vpsi{{\bm{\psi}}}
\def\vxi{{\bm{\xi}}}
\def\vpi{{\bm{\pi}}}
\def\vtheta{{\bm{\theta}}}
\def\vdelta{{\bm{\delta}}}

\def\va{{\bm{a}}}
\def\vb{{\bm{b}}}
\def\vc{{\bm{c}}}

\def\ve{{\bm{e}}}

\def\vm{{\bm{m}}}

\def\vx{{\bm{x}}}
\def\vy{{\bm{y}}}
\def\vz{{\bm{z}}}

% Elements of vectors
\def\evalpha{{\alpha}}
\def\evbeta{{\beta}}

\def\evmu{{\mu}}
\def\evpsi{{\psi}}

\def\evtheta{{\theta}}
\def\evnu{{\nu}}
\def\evpi{{\pi}}

\def\eva{{a}}

\def\evy{{y}}

% Matrix
\def\mA{{\bm{A}}}
\def\mB{{\bm{B}}}
\def\mC{{\bm{C}}}

\def\mH{{\bm{H}}}
\def\mI{{\bm{I}}}
\def\mJ{{\bm{J}}}

\def\mM{{\bm{M}}}

\def\mP{{\bm{P}}}

\def\mS{{\bm{S}}}

\def\mX{{\bm{X}}}
\def\mY{{\bm{Y}}}
\def\mZ{{\bm{Z}}}
\def\mBeta{{\bm{\beta}}}

\def\mLambda{{\bm{\Lambda}}}
\def\mSigma{{\bm{\Sigma}}}
\def\mOmega{{\bm{\Omega}}}
\def\mDelta{{\bm{\Delta}}}
\def\mGamma{{\bm{\Gamma}}}
\def\mTheta{{\bm{\Theta}}}
\def\mUpsilon{{\bm{\Upsilon}}}
\def\mPsi{{\bm{\Psi}}}

% Tensor
\DeclareMathAlphabet{\mathsfit}{\encodingdefault}{\sfdefault}{m}{sl}
\SetMathAlphabet{\mathsfit}{bold}{\encodingdefault}{\sfdefault}{bx}{n}
\newcommand{\tens}[1]{\bm{\mathsfit{#1}}}
\def\tA{{\tens{A}}}

\def\tX{{\tens{X}}}

% Graph

\def\gB{{\mathcal{B}}}
\def\gC{{\mathcal{C}}}
\def\gD{{\mathcal{D}}}
\def\gE{{\mathcal{E}}}

\def\gG{{\mathcal{G}}}

\def\gI{{\mathcal{I}}}

\def\gM{{\mathcal{M}}}

\def\gO{{\mathcal{O}}}
\def\gP{{\mathcal{P}}}

\def\gS{{\mathcal{S}}}

\def\gV{{\mathcal{V}}}

% Sets
\def\sA{{\mathbb{A}}}
\def\sB{{\mathbb{B}}}

% Don't use a set called E, because this would be the same as our symbol
% for expectation.

\def\sM{{\mathbb{M}}}

\def\sO{{\mathbb{O}}}

\def\sR{{\mathbb{R}}}
\def\sS{{\mathbb{S}}}

\def\sU{{\mathbb{U}}}
\def\sV{{\mathbb{V}}}
\def\sW{{\mathbb{W}}}

% Entries of a matrix
\def\emLambda{{\Lambda}}
\def\emPsi{{\Psi}}
\def\emA{{A}}

\def\emSigma{{\Sigma}}

\def\emBeta{{\beta}}
\def\emDelta{{\Delta}}

% entries of a tensor
% Same font as tensor, without \bm wrapper

% The true underlying data generating distribution

% The empirical distribution defined by the training set

% The model distribution

% Stochastic autoencoder distributions

 % Laplace distribution
\newcommand\numberthis{\addtocounter{equation}{1}\tag{\theequation}}

\newcommand{\KL}{D_{\mathrm{KL}}}
\newcommand{\Var}{\mathrm{Var}}

\newcommand{\Cov}{\mathrm{Cov}}
% Wolfram Mathworld says $L^2$ is for function spaces and $\ell^2$ is for vectors
% But then they seem to use $L^2$ for vectors throughout the site, and so does
% wikipedia.

\newcommand{\normltwo}{L^2}
\newcommand{\normlp}{L^p}

\newcommand{\parents}{Pa} % See usage in notation.tex. Chosen to match Daphne's book.

\DeclareMathOperator{\sign}{sign}

% Vectors
\def\vzero{{\bm{0}}}

\def\vmu{{\bm{\mu}}}
\def\vtheta{{\bm{\theta}}}
\def\valpha{{\bm{\alpha}}}
\def\va{{\bm{a}}}
\def\vb{{\bm{b}}}
\def\vc{{\bm{c}}}

\def\ve{{\bm{e}}}

\def\vm{{\bm{m}}}

\def\vv{{\bm{v}}}

\def\vx{{\bm{x}}}
\def\vy{{\bm{y}}}
\def\vz{{\bm{z}}}

% Matrix
\usepackage{bm}
\def\mA{{\bm{A}}}
\def\mB{{\bm{B}}}
\def\mC{{\bm{C}}}

\def\mH{{\bm{H}}}
\def\mI{{\bm{I}}}
\def\mJ{{\bm{J}}}

\def\mM{{\bm{M}}}

\def\mP{{\bm{P}}}

\def\mS{{\bm{S}}}

\def\mX{{\bm{X}}}
\def\mY{{\bm{Y}}}
\def\mZ{{\bm{Z}}}
\def\mBeta{{\bm{\beta}}}

\def\mLambda{{\bm{\Lambda}}}
\def\mSigma{{\bm{\Sigma}}}

%%%%%%%%%%%%%%%%%%%%%%%%%%%%%
%%% TrungTin Nguyen preamble
%%%%%%%%%%%%%%%%%%%%%%%%%%%%%

\usepackage{comment}

% How do I put text over symbols?

% Some abbreviations
% \def\st{{\em s.t.~}}
\def\ie{{\em i.e.,~}}

 % Nonumber in equation
\newcommand{\R}{\mathbb{R}}
 % Define real number.

 % Indicator functions.
 % Define natural number without 0.
 % Define natural number

\newcommand{\zero}{\mathbf{0}}
\usepackage{dsfont}
\newcommand{\sRe}{\mathds{R}}

% Define expectation, variance and covariance.

% \newcommand{\Ep}[1]{\mathbb{E}\left(#1\right)}

%% \mathbb

%% caligraphic

%\newcommand{\cC}{\mathcal{C}}
%\newcommand{\cD}{\mathcal{D}}

\newcommand{\cE}{\mathcal{E}}

\newcommand{\cJ}{\mathcal{J}}

\newcommand{\cL}{\mathcal{L}}

%\newcommand{\cN}{\mathcal{N}}
%\newcommand{\co}{\mathcal{o}}
%\newcommand{\cO}{\mathcal{O}}

%\newcommand{\cP}{\mathcal{P}}

%\newcommand{\cr}{\mathcal{r}}
%\newcommand{\cR}{\mathcal{R}}

%\newcommand{\cS}{\mathcal{S}}
%\newcommand{\cI}{\mathcal{I}} 

%\newcommand{\cT}{\mathcal{T}}
%\newcommand{\cg}{\mathcal{g}}

%\newcommand{\cU}{\mathcal{U}}

%\newcommand{\cV}{\mathcal{V}}

%%%%%%%%%%%%  math symb %%%%%%%

\newcommand{\bfA}{\mathbf{A}}

\newcommand{\bfD}{\mathbf{D}}

\newcommand{\bfI}{\mathbf{I}}

\newcommand{\bfL}{\mathbf{L}}

\newcommand{\bfP}{\mathbf{P}}

\newcommand{\bfR}{\mathbf{R}}

\newcommand{\bfX}{\mathbf{X}}

%% Bold symbol for Grecs

\newcommand{\bsvarepsilon}{\boldsymbol{\varepsilon}}

\newcommand{\bsbeta}{\boldsymbol{\beta}}
\newcommand{\bsgamma}{\boldsymbol{\gamma}}

\newcommand{\bsepsilon}{\boldsymbol{\epsilon}}

\newcommand{\bsPsi}{\boldsymbol{\Psi}}

%%% Macros.tex ICML 2024 Molecular Conformer Aggregation Networks.
% Bold. 

%\newcommand{\ma}{\mathbf{a}}
%\newcommand{\mb}{\mathbf{b}}
%\newcommand{\mw}{\mathbf{w}}

%\newcommand{\bc}{\ensuremath{{\bf c}}}

% Calligraphic.

\newcommand{\cD}{\mathcal{D}}
\newcommand{\cF}{\mathcal{F}}
\newcommand{\cG}{\mathcal{G}}

\newcommand{\cN}{\mathcal{N}}

% Blackboard.

% Architecture names. 

\DeclarePairedDelimiter\abs{\lvert}{\rvert}
\DeclarePairedDelimiter{\norm}{\lVert}{\rVert}

\newcommand\ddfrac[2]{\frac{\displaystyle #1}{\displaystyle #2}}

\usepackage{bm}

\newcommand{\vertiii}[1]{{\left\vert\kern-0.25ex\left\vert\kern-0.25ex\left\vert #1 
    \right\vert\kern-0.25ex\right\vert\kern-0.25ex\right\vert}}

\definecolor{cb-black}      {RGB}{  0,   0,   0}
\definecolor{cb-blue-green} {RGB}{  0,  073,  073}
% \definecolor{cb-green-sea}  {RGB}{  0, 146, 146}
\definecolor{cb-rose}       {RGB}{255, 109, 182}
\definecolor{cb-salmon-pink}{RGB}{255, 182, 119}
\definecolor{cb-purple}     {RGB}{ 73,   0, 146}
\definecolor{cb-blue}       {RGB}{ 0, 109, 219}
\definecolor{cb-lilac}      {RGB}{182, 109, 255}
\definecolor{cb-blue-sky}   {RGB}{109, 182, 255}
\definecolor{cb-blue-light} {RGB}{182, 219, 255}
\definecolor{cb-burgundy}   {RGB}{146,   0,   0}
\definecolor{cb-brown}      {RGB}{146,  73,   0}
\definecolor{cb-clay}       {RGB}{219, 209,   0}
\definecolor{cb-green-lime} {RGB}{ 36, 255,  36}
\definecolor{cb-yellow}     {RGB}{255, 255, 109}

\definecolor{cb-green-sea}{HTML}{0173B2}
\definecolor{cb-burgundy}{HTML}{029E73}
\definecolor{cb-lilac}{HTML}{D55E00}

\definecolor{first}{HTML}{0173B2}
\definecolor{second}{HTML}{029E73}
\definecolor{third}{HTML}{D55E00}
\definecolor{fourth}{HTML}{00427e}

\definecolor{sns_cb1}{HTML}{0173B2}
\definecolor{sns_cb2}{HTML}{029E73}
\definecolor{sns_cb3}{HTML}{D55E00}
\definecolor{sns_cb4}{HTML}{CC78BC}
\definecolor{sns_cb5}{HTML}{ECE133}
\definecolor{sns_cb6}{HTML}{56B4E9}
\usepackage[capitalize,noabbrev]{cleveref}

\title{A Unified Framework for Variable Selection in Model-Based Clustering with Missing Not at Random}

% The \author macro works with any number of authors. There are two commands
% used to separate the names and addresses of multiple authors: \And and \AND.
%
% Using \And between authors leaves it to LaTeX to determine where to break the
% lines. Using \AND forces a line break at that point. So, if LaTeX puts 3 of 4
% authors names on the first line, and the last on the second line, try using
% \AND instead of \And before the third author name.

\author{Binh H. Ho$^{\star,1,2}$, Long Nguyen Chi$^{\star,3}$, TrungTin Nguyen$^{\star,\dagger,4,5}$,\\
	\textbf{Binh T. Nguyen$^{1,2}$, Van Ha Hoang$^{1,2}$, Christopher Drovandi$^{4,5}$}\\
	$^{1}$Faculty of Mathematics and Computer Science,\\ University of Science, Ho Chi Minh City, Vietnam.\\
	$^{2}$Vietnam National University Ho Chi Minh City, Vietnam.\\
	$^{3}$School of Information and Communications Technology,\\ Hanoi University of Science and Technology, Ha Noi, Vietnam.\\
	$^{4}$ARC Centre of Excellence for the Mathematical Analysis of Cellular Systems.\\
	$^{5}$School of Mathematical Sciences,\\ Queensland University of Technology, Brisbane City, Australia.
}

\iffalse
\author{%               
	David S.~Hippocampus\thanks{Use footnote for providing further information
		about author (webpage, alternative address)---\emph{not} for acknowledging
		funding agencies.} \\
	Department of Computer Science\\
	Cranberry-Lemon University\\
	Pittsburgh, PA 15213 \\
	\texttt{hippo@cs.cranberry-lemon.edu} \\
	% examples of more authors
	% \And
	% Coauthor \\
	% Affiliation \\
	% Address \\
	% \texttt{email} \\
	% \AND
	% Coauthor \\
	% Affiliation \\
	% Address \\
	% \texttt{email} \\
	% \And
	% Coauthor \\
	% Affiliation \\
	% Address \\
	% \texttt{email} \\
	% \And
	% Coauthor \\
	% Affiliation \\
	% Address \\
	% \texttt{email} \\
}
\fi

% ----------------------------------------------------

\begin{document}
	
	\maketitle
	\begin{abstract}
		Model-based clustering integrated with variable selection is a powerful tool for uncovering latent structures within complex data. However, its effectiveness is often hindered by challenges such as identifying relevant variables that define heterogeneous subgroups and handling data that are missing not at random, a prevalent issue in fields like transcriptomics. While several notable methods have been proposed to address these problems, they typically tackle each issue in isolation, thereby limiting their flexibility and adaptability. This paper introduces a unified framework designed to address these challenges simultaneously. Our approach incorporates a data-driven penalty matrix into penalized clustering to enable more flexible variable selection, along with a mechanism that explicitly models the relationship between missingness and latent class membership. We demonstrate that, under certain regularity conditions, the proposed framework achieves both asymptotic consistency and selection consistency, even in the presence of missing data. This unified strategy significantly enhances the capability and efficiency of model-based clustering, advancing methodologies for identifying informative variables that define homogeneous subgroups in the presence of complex missing data patterns. The performance of the framework, including its computational efficiency, is evaluated through simulations and demonstrated using both synthetic and real-world transcriptomic datasets.
		\blfootnote{$^\star$Co-first author, $^{\dagger}$Corresponding author.}
	\end{abstract}
	
	% \tableofcontents
	
	\section{Introduction}
	
	%% Paragraph 1 comment: Introduces the probabilistic philosophy of model-based clustering versus heuristic methods.
	\iffalse
	\subsection{Model-based clustering}
	\fi
	\textbf{Model-based Clustering.} Model-based clustering formulates clustering as a probabilistic inference task, assuming data are generated from a finite mixture model, with each component corresponding to a cluster. This enables likelihood-based estimation and principled model selection. Gaussian mixture models (GMMs)~\cite{mclachlan_mixture_1988,mclachlan_finite_2004} are a classical instance, which can be estimated via the expectation-maximization (EM) algorithm \cite{dempster1977maximum}, providing soft assignments and flexibility in modeling non-spherical and overlapping clusters. Bayesian mixture models extend this framework by treating parameters, and even the number of components, as random variables, thereby enabling the quantification of uncertainty. Techniques such as Markov chain Monte Carlo (MCMC) and variational inference are employed to sample the posterior distribution, although challenges such as \update{label switching persist \cite{stephens_bayesian_2000}}. Bayesian nonparametric approaches, including reversible-jump MCMC and birth-death processes, impose prior constraints on model complexity and infer the number of clusters directly. Despite offering interpretability and robustness, especially in small-sample settings, Bayesian methods suffer from high computational costs, label switching, and convergence issues \cite{jasra_markov_2005}. Therefore, this paper focuses on the maximum likelihood estimation (MLE) approach for dealing with variable selection and missing data in model-based clustering.

	\iffalse
	\subsection{Variable Selection for Model-based Clustering.}
	\fi
	{\bf Related Works on Variable Selection for Model-based Clustering.} Traditional variable selection methods like best-subset and stepwise regression rely on information criteria such as Akaike information criterion (AIC) \cite{akaike2003new} and Bayesian information criterion (BIC) \cite{schwarz1978estimating}, but become computationally infeasible as dimensionality increases. Penalized likelihood approaches, notably least absolute shrinkage and selection operator (LASSO) \cite{tibshirani1996regression}, address scalability by inducing sparsity; other earlier contributions include ridge regression \cite{hoerl1970ridge} and the nonnegative garrote \cite{breiman1995better}. Efficient algorithms like least-angle regression \cite{efron2004least} and coordinate descent enable application in high dimensions. To reduce dimensionality before modeling, screening methods such as sure independence screening~\cite{fan2008sure} and its conditional extension \cite{barut2016conditional} filter variables based on marginal or conditional correlations.
	In clustering, Law et al. \cite{law2004simultaneous} propose simultaneous variable selection and clustering using feature saliency. Andrews and McNicholas \cite{andrews2014variable} introduce variable selection for clustering and classification, combining filter and wrapper strategies to discard noisy variables. Bayesian methods such as those by Tadesse et al. \cite{tadesse2005bayesian} and Kim et al. \cite{kim2006variable} extend variable relevance indicators to mixture models, though MCMC scalability remains a challenge. Raftery and Dean \cite{raftery2006variable} develop a BIC-based stepwise method for identifying clustering-relevant variables, later optimized by Scrucca \cite{scrucca2018clustvarsel}. Finally, Maugis et al. \cite{maugis2009variable_b} enrich this framework by incorporating redundancy modeling, enhancing selection accuracy in correlated settings at the cost of increased computational complexity. 
	To overcome this computational complexity, Celeux et al.~\cite{celeux2019variable} propose a two-step strategy for selecting variables in mixture models. First, variables are ranked by optimizing a penalized likelihood function that shrinks both component means and precisions, following the approach of Zhou, Pan, and Shen \cite{zhou_penalized_2009}. This method is scalable to moderate dimensions and yields an ordered list where informative variables are expected to appear early. Second, roles are assigned in a single linear pass through the ranked list, replacing combinatorial search with a faster and interpretable procedure that maintains competitive clustering accuracy. However, there is no theoretical guarantee that this ranking recovers the true signal-redundant-uninformative partition. 
	
	\iffalse
	\subsection{Handling Missing Data}
	\fi
	\textbf{Handling Missing Data.} Missing data are commonly classified into three categories: missing completely at random (MCAR), missing at random (MAR), and missing not at random (MNAR). The last category, MNAR, is the most challenging, as the probability of missingness depends on unobserved values. Under the MAR assumption, the method of multiple imputation provides a principled framework by replacing each missing value with several plausible alternatives, then combining analyses across imputed datasets to account for uncertainty \cite{rubin1988overview}. The multivariate imputation by chained equations algorithm \cite{van2011mice} is a flexible implementation that fits a sequence of conditional models, accommodating mixed data types. In high-dimensional settings, estimating full joint or conditional models becomes unstable. To address this, regularized regression techniques, such as the LASSO and its Bayesian counterpart, have been integrated into imputation models to improve predictive performance by selecting and shrinking predictors \cite{zhao2016multiple}. Nonparametric and machine learning-based imputation methods have also been developed. The MissForest algorithm \cite{stekhoven2012missforest} uses random forests to iteratively impute missing values, capturing nonlinear relations. Deep generative models, such as the missing data importance-weighted autoencoder \cite{mattei_miwae_2019} and its federated variant \cite{balelli2023fed}, rely on latent variable representations to generate imputations, assuming data lie near a low-dimensional manifold. For MNAR data, model-based approaches face difficulties due to unidentifiability without external constraints. Two common strategies include selection models, which jointly specify the data and missingness mechanism, and pattern mixture models, which condition on observed missingness patterns. Both frameworks require strong assumptions or instruments to yield valid inferences.
	To address this challenge, Sportisse et al.~\cite{sportisse2024model} studied the identifiability of selection models under the MNAR assumption, particularly when missingness depends on unobserved values. They showed that identifiability can be achieved under structural assumptions, such as requiring each variable's missingness to be explained either by its value or by the latent class, but not both simultaneously. Their proposed joint modeling framework, termed MNARz, ensures that cluster-specific missingness reflects meaningful distributional differences. A key insight is that under MNARz, the missing data problem can be reformulated as MAR on an augmented data matrix, enabling tractable inference.
	
	{\bf Main Contribution.} Inspired by the work of  Maugis et al.~\cite{maugis2012selvarclustmv}, who addressed variable selection under MAR without imputation, we generalize their framework to latent class MNARz. Under the MNARz mechanism, we extend the LASSO-based variable selection framework for model-based clustering proposed by \cite{celeux2019variable}, establishing identifiability and consistency guarantees in \Cref{sec_theory} for recovering the true signal-redundant-uninformative partition. Our method achieves substantial empirical improvements in \Cref{sec_experiment} over existing approaches. This yields a unified, high-dimensional model-based clustering method that jointly addresses variable selection and MNAR inference in a principled manner.

	{\bf Paper Organization.} The remainder of the paper is structured as follows. \Cref{sec_background} reviews variable selection in model-based clustering with MNAR data. \Cref{sec_proposal} and \Cref{sec_theory} present our proposed framework and its theoretical guarantees, respectively. Simulation and real data results are reported in \Cref{sec_experiment}. We conclude with a summary, limitations, and future directions in \Cref{sec_conclusion}. Proofs and additional details are provided in the supplementary material.

	{\bf Notation.} Throughout the paper, we use the shorthand $[N]$ to denote the index set $\{1, 2, \ldots, N\}$ for any positive integer $N \in \mathbb{N}$. We denote $\sRe$ as the set of real numbers. The notation $|\sS|$ represents the cardinality of any set $\sS$. For any vector $\vv \in \sRe^D$, $\|\vv\|_{p}$ denotes its $p$-norm. The operator $\odot$ indicates the element-wise (Hadamard) product between two matrices. We use $n \in [N]$ to index observations and $d \in [D]$ to index variables. The vector $\vy_n \in \sRe^D$ denotes the full data vector for observation $n$, which may be further specified in more granular forms. The binary vector $\vc_n = (c_{n1}, \dots, c_{nD}) \in \{0,1\}^D$ represents the missingness mask, where $c_{nd} = 1$ if $\evy_{nd}$ is missing. The latent variable $z_n \in [K]$ denotes the mixture component assignment for observation $n$, with indicator variable $z_{nk} = \mathbf{1}_{\{z_n = k\}}$. We use $\ell(\cdot)$ to denote the log-likelihood.

	\section{Background}\label{sec_background}
	We begin by introducing some preliminaries on model-based clustering with variable selection, penalized clustering, and MNARz formulation.
	
	%%
	\iffalse
	\subsection{Model-based Clustering and Parsimonious Mixture Form}
	\label{sec_background}
	\fi
	%%
	{\bf Model-based Clustering and Parsimonious Mixture Form.} Model-based clustering adopts a probabilistic framework by assuming that the data matrix $\rmY = (\vy_1, \dots, \vy_N)^\top \in \sRe^{N \times D}$, with each $\vy_n \in \sRe^{D}$, is independently drawn from a location-scale mixture distribution, a class known for its universal approximation capabilities and favorable convergence properties~\cite{genovese_rates_2000, rakhlin_risk_2005, chong_risk_2024, nguyen_convergence_2013, nguyen_approximation_2020,nguyen_demystifying_2023,nguyen_towards_2024, nguyen_approximation_2023, shen_adaptive_2013, ho_convergence_2016, ho_strong_2016}. 
	The density of an observation $\vy_n$ under a $K$-component GMM with parameters $\valpha = \{\vpi, \vmu, \mSigma\}$ is given by:
	$f_{\text{GMM}}(\vy_n \mid K, m, \valpha) = \sum_{k=1}^{K} \pi_k \, \cN(\vy_n \mid \vmu_k, \mSigma_k)$
	\iffalse
	\begin{align}
		\label{eq_gmm_density}
		f(\vy_n \mid K, m, \valpha) = \sum_{k=1}^{K} \pi_k \, \cN(\vy_n \mid \vmu_k, \mSigma_k),
	\end{align}
	\fi
	where $\pi_k > 0$ and $\sum_{k=1}^{K} \pi_k = 1$. The term $\cN(\vy_n \mid \vmu_k, \mSigma_k)$ denotes the multivariate normal density with mean vector $\vmu_k$ and covariance matrix $\mSigma_k$, which is compactly denoted by $\mTheta_k = (\vmu_k, \mSigma_k)$. The covariance matrix $\mSigma_k$ encodes the structure determined by model form $m$, allowing for parsimonious modeling via spectral decompositions that control cluster volume, shape, and orientation~\cite{celeux_gaussian_1995}. These constraints are particularly valuable in high-dimensional settings, where $D \gg N $ and standard GMMs require estimating $\gO(K^2D)$ parameters, leading to overparameterization. Model selection criteria such as the BIC~\cite{schwarz1978estimating,maugis2012selvarclustmv,forbes_summary_2022,forbes_mixture_2022,nguyen_bayesian_2024}, slope heuristics \cite{baudry_slope_2012,nguyen_non_asymptotic_2022,nguyen_non_asymptotic_2023}, integrated classification likelihood (ICL)~\cite{biernacki_assessing_2000, fruhwirth_schnatter_labor_2012}, extended BIC (eBIC)~\cite{foygel_extended_2010, nguyen_joint_2024}, dendrogram selection criterion~\cite{do_dendrogram_2024,thai_model_2025}, and Sin-White information criterion (SWIC)~\cite{sin_information_1996, westerhout_asymptotic_2024} can be employed to select both the number of components $K$ and the covariance structure $m$.
	
	%%
	\iffalse
	\subsection{Variable Selection in Model-based Clustering}
	\fi
	%%
	{\bf Variable Selection as a Model Selection Problem in Model-based Clustering.}
	In high-dimensional settings, many variables may be irrelevant or redundant with respect to the underlying cluster structure, thereby degrading clustering performance and interpretability. To address this, \cite{maugis2009variable_b} proposed the $\sS\sR\sU\sW$ model, extending their earlier work~\cite{maugis2009variable_a}, which assigns variables to one of four roles. Let $\sS$ denote the set of relevant clustering variables. Its complement, $\sS^c$, comprises the irrelevant variables and is partitioned into subsets $\sU$ and $\sW$. Variables in $\sU$ are linearly explained by a subset $\sR \subseteq \sS$, while variables in $\sW$ are assumed independent of all relevant variables. This framework facilitates variable-specific interpretation and avoids over-penalization from complex covariance structures, as typically encountered in constrained GMMs~\cite{celeux_gaussian_1995}. The joint density under the \emph{$\sS\sR\sU\sW$ model}, combining mixture components for clustering, regression, and independence, is defined as:
	\begin{equation}
		\label{eq_sruw_density}
		f_{\text{$\sS\sR\sU\sW$}}(\vy_n \mid K, m, r, l, \sV, \mTheta) = f_{\text{clust}}(\vy^{\sS}_n \mid K, m, \valpha) \, f_{\text{reg}}(\vy^{\sU}_n \mid r, \, \va + \vy^{\sR}_n \bsbeta, \mOmega) \, f_{\text{indep}}(\vy^{\sW}_n \mid l, \vgamma, \mGamma).
	\end{equation}
	Here, $\sV = (\sS, \sU, \sR, \sW)$ denotes the variable partition, and $\mTheta$ is the full parameter set. The components are defined as: $f_{\text{clust}} := f_{\text{GMM}}$, $f_{\text{reg}}(\vy^{\sU}_n \mid r, \va + \vy^{\sR}_n \bsbeta, \mOmega) := \cN(\vy^{\sU}_n \mid \va + \vy^{\sR}_n \bsbeta, \mOmega)$, with $\va \in \sRe^{1 \times |\sU|}$, $\bsbeta \in \sRe^{|\sR| \times |\sU|}$, and $\mOmega \in \sRe^{|\sU| \times |\sU|}$ structured by $r$. The independent part is $f_{\text{indep}} := \cN$, with variance structure $l$ and covariance $\mGamma$.
	Model selection proceeds by maximizing a BIC-type criterion:
	\begin{equation}
		\label{eq_sruw_crit_BIC}
		\crit_{\text{BIC}}(K, m, r, l, \sV) = \BIC_{\text{clust}}(\mY^{\sS} \mid K, m) + \BIC_{\text{reg}}(\mY^{\sU} \mid r, \mY^{\sR}) + \BIC_{\text{indep}}(\mY^{\sW} \mid l),
	\end{equation}
	over $(K, m, r, l, \sV)$, where each term scores the corresponding model component. Although this approach offers fine-grained variable treatment, it can be computationally demanding in high dimensions due to stepwise selection for clustering and regression.

	{\bf Penalized Log-Likelihood Methods for Simultaneous Clustering and Variable Selection.}
	We follow the framework of \cite{casa2022group}, which extends \cite{zhou_penalized_2009,celeux2019variable}, introducing cluster-specific penalties via group-wise weighting matrices $\mP_k$ for adaptive regularization in Gaussian graphical mixture models. This method improves upon stepwise procedures~\cite{maugis2009variable_b,maugis2009variable_a} by handling high-dimensional data more efficiently. The penalized log-likelihood is given by:
	\begin{align}\label{eq_crit_penalized_Glasso}
		\sum_{n=1}^N \ln \Big[ \sum_{k=1}^K \pi_k \cN\big(\bar{\vy}_n \mid \vmu_k, \mSigma_k\big)\Big] 
		- \lambda \sum_{k = 1}^K \|\vmu_k\|_1
		- \rho \sum_{k = 1}^K \sum_{d \ne d'} \left|(\mP_k \odot \mSigma_k^{-1})_{dd'}\right|.
	\end{align}
	Here, $\bar{\vy}_n$ denotes centered and scaled observations, and $\mP_k$ is typically chosen to be inversely proportional to initial partial correlations, enabling adaptive shrinkage akin to the adaptive lasso~\cite{zou2006}. Given grids of regularization parameters $\mathcal G_\lambda$ and $\mathcal G_\rho$, the EM algorithm from \cite{zhou_penalized_2009} is used to estimate the parameters
	$\widehat{\valpha}(\lambda, \rho) = \big(\widehat{\vpi}(\lambda, \rho),
	\widehat{\vmu}_1(\lambda, \rho), \ldots, \widehat{\vmu}_K(\lambda, \rho), 
	\widehat{\mSigma}_1(\lambda, \rho), \ldots, \widehat{\mSigma}_K(\lambda, \rho)\big)$. For each variable $d \in [D]$, clustering scores $\gO_K(d)$ are computed, and the relevant set $\sS$ is formed from those that improve BIC, while $\sU$ captures non-informative variables. The final model is selected by maximizing
	$\crit_{\text{BIC}} \big(K, m, r, l, \sV_{(K,m,r)}\big)$ in \Cref{eq_sruw_crit_BIC}, where $\sV_{(K,m,r)} = (\sS_{(K,m)}, \sR_{(K,m,r)}, \sU_{(K,m)}, \sW_{(K,m)})$.

	\section{Our Proposal: Variable Selection in Model-Based Clustering with MNAR}\label{sec_proposal}
	%% 1. Notations: Binh, Long, Tin
	\iffalse
	Method: SelvarClustMV \cite{maugis2012selvarclustmv} + $\sS\sR\sU\sW$ \cite{maugis2009variable_b} + Lasso-ranking \cite{celeux2019variable} + MNARz \cite{sportisse2024model}.
	\fi
	%%
	We propose a compact framework integrating adaptive variable selection with robust missing data handling in model-based clustering.
	
	% \subsection{Our Proposal}
	% write a short introduction here
	\iffalse
	\subsection{Global GMM Representation of \texorpdfstring{$\sS\sR\sU\sW$}{SRUW} under MAR}
	\fi
	{\bf Global GMM Representation of \texorpdfstring{$\sS\sR\sU\sW$}{SRUW} under MAR.} A key aspect is expressing the observed-data likelihood in a tractable form. Under the MAR assumption, the $\sS\sR\sU\sW$ model admits a global GMM representation, extending the result of \cite{maugis2012selvarclustmv} for the $\sS\sR$ model. Given the density in \Cref{eq_sruw_density} and Gaussian properties, the observed likelihood becomes
	\begin{align}
		f(\mY^{o} \mid K, m, r, l, \sV, \mTheta) 
		= \prod_{n=1}^{N} \left( \sum_{k=1}^{K} \pi_{k} \, \cN\left( \vy_{n}^{o} \mid \tilde{\vnu}_{k, o},  \tilde{\mDelta}_{k, oo} \right) \right),
		\label{eq_mar_sruw_gmm}
	\end{align}
	where
	$
	\tilde{\vnu}_{k, o} = \begin{pmatrix} \vnu_{k,o} \\ \vgamma_o \end{pmatrix}, \quad 
	\tilde{\mDelta}_{k, oo} = \begin{pmatrix} \mDelta_{k,oo} & \zero \\ \zero & \mGamma_{oo} \end{pmatrix}.
	$
	Here, $\vnu_{k,o},\vgamma_o$ and $\mDelta_{k,oo},\mGamma_{oo}$ correspond to the block means and covariances from the $\sS\sR$ model. The derivation extends the technical proof from \cite{maugis2012selvarclustmv} and is detailed in the supplement. This representation offers: (i) a unified GMM encoding the variable roles in $\sS\sR\sU\sW$ via observed-data parameters, (ii) compatibility with standard EM algorithms, enabling MAR-aware estimation and internal imputation that respects cluster structure, and (iii) theoretical guarantees of identifiability and consistency in selecting the true variable partition via \Cref{eq_sruw_crit_BIC} (see Section~\ref{sec_theory}).

	{\bf Incorporating MNARz Mechanism into \texorpdfstring{$\sS\sR\sU\sW$}{SRUW}.} To explicitly address MNAR data, we integrate the MNARz mechanism from \cite{sportisse2024model}, where missingness depends on the latent class membership. Estimation proceeds via EM, while identifiability and the transformation of MNARz into MAR on augmented data (i.e., original data with the missingness indicator matrix $\mC$) are retained (see Section~\ref{sec_theory}).
	\update{Let $\vy_n = (\vy_n^{\sS}, \vy_n^{\sU}, \vy_n^{\sW})$} denote the full data for observation $n$, partitioned by role (Section~\ref{sec_background}), and let $\vc_n$ be its binary missingness pattern. With $z_n \in [K]$ denoting the latent class, the complete-data density under $z_{nk} = 1$ is
	\begin{multline*}
		f(\vy_n, \vc_n \mid z_{nk}=1; \valpha_k, \vpsi_k) = f_{\text{clust}}(\vy_n^{\sS} \mid \valpha_k) \,
		f_{\text{reg}}(\vy_n^{\sU} \mid \vy_n^{\sR}; \vtheta_{\text{reg}}) \\ 
		\times f_{\text{indep}}(\vy_n^{\sW} \mid \vtheta_{\text{indep}}) \,
		f_{\text{MNARz}}(\vc_n \mid z_{nk}=1; \vpsi_k),
	\end{multline*}
	where $f_{\text{MNARz}}(\vc_n \mid z_{nk}=1; \vpsi_k) = \prod_{d=1}^D \rho(\vpsi_k)^{c_{nd}} (1-\rho(\vpsi_k))^{1-c_{nd}}$ and $\rho(\vpsi_k)$ is the class-specific missingness probability. The model parameters are $\mTheta = (\pi_1, \dots, \pi_K, \{\valpha_k\}_{k=1}^K, \vxi = \{ \vtheta_{\text{reg}}, \vtheta_{\text{indep}} \}, \{\vpsi_k\}_{k=1}^K)$. 
	The observed-data log-likelihood is:
	\begin{align*}
		\ell(\mTheta; \mY, \mC) &=\sum_{n=1}^N \log \sum_{k=1}^K \pi_k f_k^{o}(\vy_n^o; \valpha_k, \vxi) f_c(\vc_n; \vpsi_k), \numberthis \label{eq:ll_obs_mnarz}
	\end{align*}
	\iffalse
	\begin{align*}
		\ell(\mTheta; \mY, \mC) &= \sum_{n=1}^N \log \sum_{k=1}^K \pi_k \int f(\vy_n, \vc_n \mid z_{nk}=1; \valpha_k, \vpsi_k) d\vy_n^{\text{mis}} \\
		&= \sum_{n=1}^N \log \sum_{k=1}^K \pi_k f_c(\vc_n; \vpsi_k) \int f(\vy_n; \valpha_k, \vxi) d\vy_n^{\text{mis}} \\
		&= \sum_{n=1}^N \log \sum_{k=1}^K \pi_k f_k^{o}(\vy_n^o; \valpha_k, \vxi) f_c(\vc_n; \vpsi_k), \numberthis \label{eq:ll_obs_mnarz}
	\end{align*}
	\fi
	with the MNARz mechanism factored out due to independence. The complete-data log-likelihood is:
	\begin{equation}
		\label{eq:ll_comp_mnarz}
		\ell_{\text{comp}}(\mTheta; \mY, \mC) = \sum_{n=1}^N \sum_{k=1}^K z_{nk} \log \bigl(\pi_k f_k(\vy_n; \valpha_k, \vxi) f_c(\vc_n; \vpsi_k)\bigr).
	\end{equation}
	To accommodate both MAR and MNAR patterns, we extend the model with a binary partition of variable indices: $\cD_{\text{MAR}} \, \cup \, \cD_{\text{MNAR}} = [D],
	\quad
	|\cD_{\text{MAR}}| = D_M,\;|\cD_{\text{MNAR}}| = D_{M'}.$
	Variables in $\cD_{\text{MAR}}$ follow a MAR mechanism, and those in $\cD_{\text{MNAR}}$ follow MNARz:
	\begin{align*}
		&f_k(\vy_n;\valpha_k) = f_{\text{clust}}(\vy_n^{\sS} \mid \valpha_k)
		f_{\text{reg}}(\vy_n^{\sU} \mid \vy_n^{\sR}; \vtheta_{\text{reg}})
		f_{\text{indep}}(\vy_n^{\sW} \mid \vtheta_{\text{indep}}),~ \vxi = (\vtheta_{\text{reg}}, \vtheta_{\text{indep}}), \\
		&f_c^{\text{MAR}}(\vc_{n,M} \mid \vy_n^o; \vpsi_M) 
		= \prod_{d\in\cD_{\text{MAR}}}
		\rho_d(\vy_n^o; \evpsi_{Md})^{c_{nd}} \left(1 - \rho_d(\vy_n^o; \evpsi_{Md})\right)^{1 - c_{nd}}, \\
		&f_c^{\text{MNARz}}(\vc_{n,N}; \vpsi_k)
		= \prod_{d\in\cD_{\text{MNAR}}}
		\rho(\vpsi_k)^{c_{nd}} \left(1 - \rho(\vpsi_k)\right)^{1 - c_{nd}}.
	\end{align*}
	The parameter blocks expand as: $\mTheta = \left(\pi_1,\dots,\pi_K, \{\valpha_k\}_{k=1}^K, \vxi, \vpsi_M, \{\vpsi_k\}_{k=1}^K \right).$
	Let $\vy_n = (\vy_n^o, \vy_n^m)$, and define 
	$\vc_{n,M} = (c_{nd})_{d \in \cD_{\text{MAR}}}$, 
	$\vc_{n,M'} = (c_{nd})_{d \in \cD_{\text{MNAR}}}$. Then, since the MNARz mask is independent of $\vy_n$ given $k$, the observed-data likelihood becomes:
	\begin{equation}
		\label{eq:ll_obs_mixed}
		\ell(\mTheta; \mY, \mC) =
		\sum_{n=1}^{N}
		\log \left[
		\sum_{k=1}^{K} \pi_k\,
		f_c^{\text{MNARz}}(\vc_{n,M'}; \vpsi_k) f_{k,M}^{o}(\vy_n^{o};\valpha_k,\vpsi_M)
		\right].
	\end{equation}
	where $ f_{k,M}^{o}(\vy_n^{o};\valpha_k,\vpsi_M):=\int f_k(\vy_n;\valpha_k)\,
	f_c^{\text{MAR}}(\vc_{n,M} \mid \vy_n^o; \vpsi_M)\,
	d\vy_n^{m}$. The MNARz term factors out of the integral, preserving the ignorability property for that block \cite{sportisse2024model}. Only the MAR component requires integration or imputation.

	{\bf Adaptive Weighting for Penalty of Precision Matrix.} To enhance performance, we replace the inverse partial correlation scheme with a spectral-based computation of $\bf\pi_k$ in \Cref{eq_crit_penalized_Glasso}. Starting from the initial precision matrix $\hat{\bsPsi}^{(0)}_{k}$, we construct an unweighted, undirected graph $\gG^{(k)} = (\gV, \gE^{(k)})$, where $\gV = [D]$, and include edge $(i,j)$ in $\gE^{(k)}$ if $\norm{\hat{\vpsi}^{(0)}_{k, ij}} > \mGamma_{adj}$. Let $\bfA^{(k)}$ and $\bfD^{(k)}$ denote the adjacency and degree matrices, respectively. The symmetrically normalized Laplacian is then defined as $\bfL^{(k)}_{sym} = \bfI - (\bfD^{(k)})^{-1/2} \bfA^{(k)} (\bfD^{(k)})^{-1/2},$ which offers a scale-invariant measure of connectivity. We aim to shrink $\bsPsi_k$ toward a diagonal target (i.e., an empty graph), with $\bfL_{\text{target}} = \bf0$. The spectral distance between $\hat{\bsPsi}^{(0)}_k$ and this target is 
	$D_{\text{LS}}(\bsPsi_k^{(0)}) = \norm{\text{spec}(\bfL_{\text{sym}}^{k})}_2,$ and the corresponding adaptive weights are given by $\bfP_{k, ij} = \left(D_{\text{LS}}(\bsPsi_{k}^{(0)}) + \epsilon\right)^{-1}.$

	{\bf Parameter Estimation.} Following \cite{zhou_penalized_2009}, we adopt similar parameter estimation for variable ranking and focus here on estimating $\sS\sR\sU\sW$ under the MNARz mechanism using the EM algorithm. Other estimation details are provided in the Supplement. Let each observation be partitioned as $\vy_n = (\vy_n^{o}, \vy_n^{m})$, and define $\vxi = (\vtheta_{\text{reg}}, \vtheta_{\text{indep}})$. Since $f_c$ is independent of $\vy_n$, the observed-data log-likelihood becomes:
	\[
	\ell(\mTheta; \mY^{o}, \mC) 
	= \sum_{n=1}^{N} \log \sum_{k=1}^{K} \pi_k 
	f_c(\vc_n; \vpsi_k) 
	f_k^{o}(\vy_n^{o}; \valpha_k, \vxi),
	\]
	where 
	$f_k^{o}(\vy_n^{o}; \valpha_k, \vxi) := \int f_{\text{clust}}(\vy_n^{S} \mid \valpha_k)
	f_{\text{reg}}(\vy_n^{U} \mid \vy_n^{R}; \vtheta_{\text{reg}})
	f_{\text{indep}}(\vy_n^{W} \mid \vtheta_{\text{indep}}) 
	\, d\vy_n^{m}$.
	
	The E-step computes responsibilities:
	\[
	t_{nk}^{(t)} = \frac{
		\pi_k^{(t-1)} f_k^{o}(\vy_n^{o}; \valpha_k^{(t-1)}, \vxi^{(t-1)}) 
		f_c(\vc_n; \vpsi_k^{(t-1)})}{
		\sum_{l=1}^{K} \pi_l^{(t-1)} 
		f_l^{o}(\vy_n^{o}; \valpha_l^{(t-1)}, \vxi^{(t-1)}) 
		f_c(\vc_n; \vpsi_l^{(t-1)})}.
	\]
	
	The $Q$-function to be maximized in the M-step is:
	\begin{align}
		Q(\mTheta; \mTheta^{(t-1)}) 
		&= \sum_{n=1}^{N} \sum_{k=1}^{K} t_{nk}^{(t)} 
		\left\{ \log \pi_k + g_y(\valpha_k) + g_c(\vpsi_k) \right\}.\nonumber\\
		g_y(\valpha_k) 
		&= \mathbb{E}_{\mTheta^{(t-1)}} 
		\left[ \log f_k(\vy_n; \valpha_k, \vxi_k) \mid \vy_n^{o}, \vc_n, z_{nk}=1 \right],\nonumber\\
		g_c(\vpsi_k) 
		&= \mathbb{E}_{\mTheta^{(t-1)}} 
		\left[ \log f_c(\vc_n \mid \vy_n; \vpsi_k) \mid \vy_n^{o}, \vc_n, z_{nk}=1 \right].\nonumber
	\end{align}
	In the M-step, we update:
	\begin{align*}
		\pi_k^{(t)} &= \frac{1}{N} \sum_{n} t_{nk}^{(t)}, \quad
		\valpha_k^{(t)} = \arg\max_{\valpha_k} \sum_{n} t_{nk}^{(t)} g_y(\valpha_k), \\
		\vpsi_k^{(t)} &= \arg\max_{\vpsi_k} \sum_{n} t_{nk}^{(t)} g_c(\vpsi_k), \quad
		\vxi^{(t)} = \arg\max_{\vxi} \sum_{n,k} t_{nk}^{(t)} g_y(\vxi).
	\end{align*}

	We now summarize the workflow; the penalty acts only in Stage~A (ranking), while Stage~B performs unpenalized SRUW model selection by a single pass over the ranked list.
	\begin{algorithm}[ht]
		\caption{High level two-stage SRUW-MNARz procedure}
		\label{alg:twostage}
		\textbf{Input:} incomplete data $\mY$, mask $\mC$, model grid $(K,m)$; regularization grids $\gG_\lambda$ (and $\gG_\rho$).\\
		\textbf{Stage A:} Ranking (penalized GMM on a fast-imputed data)
		\begin{enumerate}[leftmargin=1.2em,itemsep=1pt]
			\item Produce a fast single imputation $\tilde{\mY}$ \emph{only to enable ranking}.
			\item For each $(\lambda,\rho)\in\gG_\lambda\times\gG_\rho$, fit the penalized GMM to $\bar{\mY}$ where $\bar{\mY}$ denotes the centered/scaled $\tilde{\mY}$; record $\hat\mu_{kd}(\lambda,\rho)$.
			\item Compute ranking score $\gO_K(d)$ for each variable $d$ by counting along the path how often $\{\hat\mu_{kd}(\lambda,\rho)\}_{k}$ remain nonzero; sort variables by $\gO_K(d)$ (see \cite{celeux2019variable} for details).
		\end{enumerate}
		\textbf{Stage B:} Role assignment (SRUW on \emph{original} incomplete data)
		\begin{enumerate}[leftmargin=1.2em,itemsep=1pt]
			\item Traverse ranked variables once (similar to \cite{celeux2019variable}); at each step, fit unpenalized SRUW-MNARz on incomplete $\mY$ and decide the role (S/U/W) of the new variable using BIC criterion using \Cref{eq_sruw_crit_BIC} as in \cite{maugis2009variable_b}
			\item Return $(\hat \sS,\hat \sR,\hat \sU,\hat \sW)$, $\hat K,\hat m$, and parameter estimates.
		\end{enumerate}
		\textbf{Output:} final role sets and estimates.
	\end{algorithm}
	
	% Add Initialization
	{\bf Practical Initialization. } Like all mixture EM algorithms, ours is sensitive to initialization. We use a robust warm start: (i) fast single imputation $\tilde{\mY}$ (only for ranking); (ii) $K$-means++/hierarchical clustering/random start on $\tilde{\mY}$ for $(\pi_k,\vmu_k,\mSigma_k)$; (iii) given initial partitions from step (ii) $\bsPsi_k^{(0)}$ set to diagonal estimates using samples assigned to each cluster (cluster-aware estimates); (iv) $\vpsi_k$ initialized from class-wise missing rates after a soft E-step. We then run a $\lambda$-path (as illustrated below) for Stage~A, followed by unpenalized SRUW estimation for Stage~B. Per-$K$ penalty grids are constructed by data-dependent upper bounds and a geometric path: For $\lambda$, with the hard partition $\mZ^{(0)}$ from initialization,
	\[
	\lambda_{\max}\;=\;\max_{k\in[K],\,d\in[D]}\left|(\mZ^{(0)\top}\mX)_{kd}\right|,\qquad
	\lambda_\ell=\lambda_{\max}\Big(\tfrac{\lambda_{\min}}{\lambda_{\max}}\Big)^{\frac{\ell-1}{L-1}},~
	\lambda_{\min}= \xi \lambda_{\max},~\ell=[L].
	\]
	This matches the KKT threshold at which the $\ell_1$ penalty zeros all $\vmu_k$ entries. For $\rho$, let $\mS_k$ be the (hard-label) empirical covariance and note our Stage-A glasso objective uses the coefficient $(2\rho/n_k)$ in front of the weighted $\ell_1$ term. A diagonal-forcing threshold is
	\[
	\rho_{\max}\;=\;\max_{k}\max_{i\neq j}\frac{n_k\,|(\mS_k)_{ij}|}{(\mP_k)_{ij}},
	\]
	followed by the same $L$-point geometric path $\rho_\ell=\rho_{\max}(\rho_{\min}/\rho_{\max})^{(\ell-1)/(L-1)}$
	with $\rho_{\min}=\xi\rho_{\max}$. We set $\operatorname{diag}(\mP_k)=\mathbf{0}$ when
	we do not impose a penalty on diagonal entries.
	
	\section{Theoretical Properties}
	\label{sec_theory}
	
	In this section, we present theoretical guarantees for our extended framework, including identifiability and selection consistency under the MNARz mechanism, and extend the results to the general $\sS\sR\sU\sW$ model. We also show that these guarantees apply to the regularized approach of \cite{celeux2019variable}. Given the known issues in finite mixture models, such as degeneracy and non-identifiability, we adopt standard assumptions, largely consistent with those in \cite{maugis2009variable_b}. For any configuration $(\sS,\sR,\sU,\sW)$, let $\vtheta^*_{(\sS,\sR,\sU,\sW)}$ denote the true parameter and $\hat{\vtheta}_{(\sS,\sR,\sU,\sW)}$ its maximum likelihood estimator.

	\begin{theorem}[Informal: Identifiability of the $\sS\sR\sU\sW$ Model]\label{theorem_identifiability_SRUW}
		Let $(K, m, r, l, \sV)$ and $(K^*, m^*, r^*, l^*, \sV^*)$ denote two models under the MNARz mechanism.
		Let $\mTheta_{(K,m,r,l,\sV)} \subseteq \mUpsilon_{(K,m,r,l,\sV)}$ denote the parameter space such that each element \update{$\vtheta = (\valpha, \va, \mBeta, \mOmega, \vgamma, \mGamma)$} \iffalse $\vtheta = (\alpha, a, \vbeta, \Omega, \gamma, \tau)$ \fi satisfies the following: (i) the component-specific parameters $(\vmu_k, \mSigma_k)$ are distinct and satisfy, for all $s \subseteq \sS$, there exist $1 \leq k < k' \leq K$ such that 
		$
		\vmu_{k,\bar{s}|s} \neq \vmu_{k',\bar{s}|s} \quad \text{or} \quad \mSigma_{k,\bar{s}|s} \neq \mSigma_{k',\bar{s}|s} \quad \text{or} \quad \mSigma_{k,\bar{s}\bar{s}|s} \neq \mSigma_{k',\bar{s}\bar{s}|s},
		$
		where $\bar{s}$ denotes the complement of $s$ in $\sS$; (ii) if $\sU \neq \emptyset$, then for every $j \in \sR$, there exists $u \in \sU$ such that $\vbeta_{uj} \neq 0$, and each $u \in \sU$ affects at least one $j \in \sR$ with $\vbeta_{uj} \neq 0$; and (iii) the parameters $\mGamma$ and $\vgamma$ exactly respect the structural forms $r$ and $l$, respectively.
		If there exist $\vtheta \in \mTheta_{(K,m,r,l,\sV)}$ and $\vtheta^* \in \mTheta_{(K^*,m^*,r^*,l^*,\sV^*)}$ such that
		$
		f(\cdot \mid K, m, r, l, \sV, \vtheta) = f(\cdot \mid K^*, m^*, r^*, l^*, \sV^*, \vtheta^*),
		$
		then $(K, m, r, l, \sV) = (K^*, m^*, r^*, l^*, \sV^*)$ and $\vtheta = \vtheta^*$.
		% (up to a permutation of mixture components)
	\end{theorem}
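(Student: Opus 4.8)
The plan is to follow the strategy of Maugis et al.~\cite{maugis2009variable_b,maugis2012selvarclustmv}, augmenting it with the MNARz-to-MAR reduction of Sportisse et al.~\cite{sportisse2024model}. The backbone is the global GMM representation in \cref{eq_mar_sruw_gmm}: treating the missingness mask $\bfC$ as extra observed binary coordinates makes the MNARz mechanism ignorable, so the observed-data density is a finite mixture in which each component is a product of (a) a Gaussian on the clustering block $\vy^{\sS}$ with block structure $m$, (b) a conditional Gaussian $\cN(\vy^{\sU}\mid \va+\vy^{\sR}\mBeta,\mOmega)$ for the redundant block, (c) a single shared Gaussian $\cN(\vy^{\sW}\mid\vgamma,\mGamma)$ for the independent block, and (d) a product of class-specific Bernoulli factors $\prod_d \rho(\vpsi_k)^{c_{nd}}(1-\rho(\vpsi_k))^{1-c_{nd}}$ over the MNAR coordinates. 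I would then peel off the pieces one at a time, using a characterization of each variable role that is intrinsic to the density.

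First I would identify the independent block. The index set $\sW$ is characterized intrinsically: $d\in\sW$ iff $\vy_d$ is, jointly with the mask, stochastically independent of all remaining coordinates, in which case the $\sW$-marginal is a single Gaussian. Using condition~(ii) — every $u\in\sU$ has some $j\in\sR$ with $\vbeta_{uj}\neq 0$, so no $\sU$-variable is conditionally independent of $\sS$ — and the ``for all $s\subseteq\sS$'' richness in condition~(i) — which prevents a clustering variable from degenerating to a single Gaussian after any conditioning — one shows that no $d\in\sS\cup\sU$ can have this independence property. Hence $\sW=\sW^{*}$; condition~(iii) then forces $l=l^{*}$ and $(\vgamma,\mGamma)=(\vgamma^{*},\mGamma^{*})$ from the $\sW$-marginal Gaussian. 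After dividing it out, we are left with a mixture on $(\vy^{\sS},\vy^{\sU},\vc_{\mathrm{MNAR}})$.

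Next I would identify the clustering block and the mixture itself. The set $\sS$ is the \emph{minimal} subset of the remaining coordinates such that, conditionally on it, the complement exhibits the pure regression structure (b); minimality is exactly what the ``for all $s\subseteq\sS$'' clause of~(i) guarantees, so $\sS=\sS^{*}$. Restricting to the augmented vector $(\vy^{\sS},\vc_{\mathrm{MNAR}})$, we obtain an equality of finite mixtures whose components are products of a multivariate Gaussian and Bernoulli factors; this product family is identifiable (Teicher/Yakowitz--Spragins-type results for mixtures of products), so $K=K^{*}$ and, up to a label permutation $\sigma$, the weights $p_k$ and the triples $(\vnu_{k},\mDelta_{k},\vpsi_k)$ match — the permutation being pinned down by the Gaussian part via the pairwise distinctness in~(i), which then fixes $\vpsi_k$ even when some missingness probabilities coincide. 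Peeling off the clustering and mask factors leaves the conditional law of $\vy^{\sU}\mid\vy^{\sR}$, whose mean is affine in $\vy^{\sR}$; $\sR$ is recovered as the set of $\sS$-coordinates carrying a nonzero regression coefficient for some $u\in\sU$, which by condition~(ii) is ``tight'', giving $\sR=\sR^{*}$ and identifying $(\va,\mBeta,\mOmega)$; condition~(iii) then yields $r=r^{*}$. Finally, since all block covariances are now pinned down and conditions~(i) and~(iii) force the parameters to respect their structural forms exactly, the form $m$ is read off uniquely, completing the argument that $(K,m,r,l,\sV)=(K^{*},m^{*},r^{*},l^{*},\sV^{*})$ and $\vtheta=\vtheta^{*}$ (up to the unavoidable relabeling).

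The main obstacle I anticipate is the \emph{set-level} identifiability of the partition — proving that a coordinate cannot switch roles (e.g., clustering in one model, redundant in the other) — which is precisely where the strengthened non-degeneracy clauses ``for all $s\subseteq\sS$'' in~(i) and the two-sided linking condition in~(ii) do the work; care is needed to verify that conditioning on subsets of $\sS$ never collapses the mixture in one model while leaving it genuinely $K$-atomic in the other. A secondary technical point is invoking identifiability of the Gaussian$\times$Bernoulli product-mixture family on the augmented data and checking that the label permutation inferred from the Gaussian marginals is consistent with the recovered MNARz probabilities; this is routine once the product-identifiability result is cited, but must be stated with care because the $\rho(\vpsi_k)$ need not be distinct across classes.
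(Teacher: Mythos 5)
Your handling of the MNARz part coincides with the paper's: both arguments reduce MNARz to MAR on the data augmented with the mask (this is exactly \cref{thm:MAR_MNARz_SRUW}), view the observed-data density as a finite mixture whose components factor into a data part and a class-specific Bernoulli mask part, invoke mixture-of-products identifiability (Teicher/Allman-type) to match $K$, the weights and the component products up to a permutation pinned down by the distinctness of the Gaussian parts, and then separate the two factors per component to recover the $\vpsi_k$ even when missingness probabilities coincide across classes. Where you genuinely diverge is the data-model (SRUW-under-MAR) identifiability. The paper does not re-derive the role partition intrinsically: it evaluates at the complete-observation pattern, rewrites the $\sS\sR\sU\sW$ density as an equivalent $\sS\sR$ density by absorbing $\sW$ into the regression block (with $\check{\va}=(\va,\vgamma)$, $\check{\mBeta}$ padded with zero rows, $\check{\mOmega}=\mathrm{blockdiag}(\mOmega,\mGamma)$), invokes the known $\sS\sR$ identifiability of Maugis et al.\ together with the complete-data $\sS\sR\sU\sW$ result (\cref{theorem_complete_sruw_iden}), and then separates $\sU$ from $\sW$ by a short contradiction on the rows of $\check{\mBeta}$ using condition (ii). Your peeling scheme (identify $\sW$ by an independence characterization, then $\sS$ by minimality, then $\sR$ from nonzero regression coefficients) is a from-scratch analogue of that theorem rather than an application of it.

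That is also where the gap lies. The claims that no coordinate of $\sS\cup\sU$ can satisfy your $\sW$-characterization, that $\sS$ is recovered as a minimal conditioning set, and above all that no coordinate can swap roles between the two representations, are precisely the content of the complete-data $\sS\sR\sU\sW$ identifiability theorem; in your write-up they are asserted (``conditions (i) and (ii) do the work'') rather than proved, and you yourself flag them as the main obstacle. As written the proposal therefore does not close the argument: you would either need to carry out those set-level arguments in full (essentially reproving Maugis et al.'s theorem), or do what the paper does and reduce to the existing $\sS\sR$/$\sS\sR\sU\sW$ results after establishing the MAR reduction. One further caution: your characterization of $\sW$ must be stated as $\vy_d\perp(\vy_{-d},\vc)$, not as joint independence of $(\vy_d,c_d)$ from the remaining coordinates; under MNARz the mask coordinate of a $\sW$-variable still depends on the latent class and is therefore correlated with $\vy^{\sS}$, so after you ``divide out'' the Gaussian $\sW$-marginal the corresponding Bernoulli mask factors must remain inside the class-dependent part of the mixture.
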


	\begin{theorem}[Informal: BIC consistency for the $\sS\sR\sU\sW$ model]
		\label{theorem_bic_consistency_SRUW}
		Assume the MNARz mechanism, the regularity conditions specified in the Supplementary Material, and the existence of a unique tuple $(K_0, m_0, r_0, l_0, \sV_0)$ such that the data-generating distribution satisfies 
		$
		h = f\bigl(\cdot \mid \vtheta^*_{(K_0, m_0, r_0, l_0, \sV_0)}\bigr)
		$ 
		for some true parameter $\vtheta^*$ where $h$ is the density function of the sample $\mY$ and the model specification $(K_0, m_0, r_0, l_0, \sV_0)$ is assumed to be known. Let $(K_0, m_0, r_0, l_0)$ be fixed. Then the variable selection procedure that selects the subsets $(\hat{\sS}, \hat{\sR}, \hat{\sU}, \hat{\sW})$ by maximizing the BIC criterion is consistent, in the sense that
		$
		\mathbb{P}\left((\hat{\sS}, \hat{\sR}, \hat{\sU}, \hat{\sW}) = (\sS_0, \sR_0, \sU_0, \sW_0)\right) 
		\xrightarrow[N \rightarrow \infty]{} 1.
		$
	\end{theorem}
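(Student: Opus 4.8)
The plan is to turn \cref{theorem_bic_consistency_SRUW} into a classical penalized‑likelihood model‑selection consistency statement over a \emph{finite} family of competitors and then control the relevant log‑likelihood ratios in two separate regimes. First I would apply the MNARz‑to‑MAR reduction of \cite{sportisse2024model}: working on the augmented data $(\mY,\bfC)$ the missingness block is ignorable, so for every partition $\sV=(\sS,\sR,\sU,\sW)$ the observed‑data likelihood of the model $(K_0,m_0,r_0,l_0,\sV)$ equals the likelihood of the global GMM representation in \cref{eq_mar_sruw_gmm} with the block constraints induced by $\sV$, and — because the joint density in \cref{eq_sruw_density} factorizes into the clustering, regression and independence parts with additive parameter counts — its criterion is $\crit_{\text{BIC}}(\sV)=\ell_N(\hat\vtheta_\sV)-\tfrac{\nu_\sV}{2}\log N=\BIC_{\text{clust}}+\BIC_{\text{reg}}+\BIC_{\text{indep}}$. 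Since $D$ is fixed, the set $\mathcal V$ of admissible partitions of $[D]$ is finite, so it suffices to prove $\mathbb P(\crit_{\text{BIC}}(\sV_0)>\crit_{\text{BIC}}(\sV))\to1$ for each fixed $\sV\neq\sV_0$ and finish with a union bound; the corresponding statement for the procedure of \cite{celeux2019variable} then follows by specializing to $\sU=\sW=\emptyset$.

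\textbf{Misspecified competitors.} I would split $\mathcal V\setminus\{\sV_0\}$ according to whether the model family $\mathcal F_\sV$ contains the true density $h$. If it does not, then under the regularity conditions of the Supplement (parameter‑space compactness after the usual penalization excluding covariance degeneracy, continuity and dominated integrability of the log‑densities, quotienting out label switching) a uniform law of large numbers together with a standard $M$‑estimation argument gives $\tfrac1N\ell_N(\hat\vtheta_\sV)\xrightarrow{\;p\;}\mathbb E_h[\log h]-\KL(h,\mathcal F_\sV)$, and \cref{theorem_identifiability_SRUW} guarantees $\KL(h,\mathcal F_\sV)>0$; simultaneously $\tfrac1N\ell_N(\hat\vtheta_{\sV_0})\xrightarrow{\;p\;}\mathbb E_h[\log h]$. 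Hence $\crit_{\text{BIC}}(\sV_0)-\crit_{\text{BIC}}(\sV)=N(\KL(h,\mathcal F_\sV)+o_P(1))-O(\log N)\to+\infty$ in probability.

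\textbf{Overfitted competitors.} If $h\in\mathcal F_\sV$, then by the identifiability in \cref{theorem_identifiability_SRUW} the partition $\sV$ cannot be an admissible non‑degenerate parametrization of $h$, so $h$ is realized only on the boundary of $\mathcal F_\sV$ (e.g.\ a genuinely independent variable placed in $\sS$ with component‑wise equal parameters, or in $\sU$ with a zero regression coefficient), which forces $\nu_\sV\ge\nu_{\sV_0}+1$. Embedding the true parameter $\vtheta^*$ into both parameter spaces gives $\crit_{\text{BIC}}(\sV_0)-\crit_{\text{BIC}}(\sV)\ge\tfrac12\log N-[\ell_N(\hat\vtheta_\sV)-\ell_N(\vtheta^*)]$, so it remains to show the overfitted log‑likelihood ratio $\ell_N(\hat\vtheta_\sV)-\ell_N(\vtheta^*)=o_P(\log N)$. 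Using the clust/reg/indep factorization, the regression and independence blocks are regular parametric families (their observed‑data marginals stay Gaussian under MAR), so their contributions are $O_P(1)$ by a Wilks‑type expansion; the clustering block is a finite Gaussian mixture evaluated at an over‑parametrized point, hence non‑regular, but its log‑likelihood ratio is $o_P(\log N)$ by entropy/empirical‑process control of Gaussian‑mixture likelihood ratios under the stated compactness and eigenvalue conditions (in the spirit of penalized‑likelihood order‑selection analyses for mixtures). Combining these bounds and taking a union bound over the finitely many $\sV\neq\sV_0$ proves the theorem.

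\textbf{Main obstacle.} The crux is the overfitted regime: establishing $\ell_N(\hat\vtheta_\sV)-\ell_N(\vtheta^*)=o_P(\log N)$ for the clustering block. Mixtures are non‑regular — at the over‑parametrized point the extra parameters are unidentifiable, the Fisher information is singular, and the likelihood ratio need not even be $O_P(1)$ — so classical $\chi^2$ asymptotics do not apply and one must instead bound the supremum of the mixture empirical process via bracketing entropy (or invoke an existing such bound). The regularity conditions in the Supplement (compact parameter set, covariance eigenvalues bounded away from $0$ and $\infty$, smoothness of the Gaussian densities) are exactly what makes this control possible; once the problem is reduced to MAR on $(\mY,\bfC)$, the missingness mechanism contributes only bookkeeping, since the observed‑data model stays within the constrained Gaussian‑mixture class.
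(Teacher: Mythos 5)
Your proposal is correct in outline and shares the paper's skeleton — the MNARz-to-MAR reduction on the augmented data $(\mY,\bfC)$, a union bound over the finitely many partitions, a uniform law of large numbers for the normalized maximized log-likelihood, KL-positivity from identifiability, and the $O(\log N/N)$ penalty — but it classifies the competitors differently. The paper, following Maugis et al., treats \emph{every} $\sV\neq\sV_0$ as misspecified: it invokes \cref{theorem_identifiability_SRUW} to assert $\KL[h,f(\cdot;\vtheta^*_{\sV})]>0$ for all $\sV\neq\sV_0$, after which the whole argument reduces to the consistency of the sample Kullback–Leibler divergence (\cref{prop:consistency_of_KL_ex2}); the bulk of the paper's effort goes into the eigenvalue and envelope bookkeeping for the observed-data blocks $\tilde{\mDelta}_{k,oo}$, uniformly over missingness patterns, needed for the Glivenko–Cantelli property — precisely the part you compress into ``regularity conditions of the Supplement.'' You instead split off an overfitted regime $h\in\cF_{\sV}$ and bound $\ell_N(\hat{\vtheta}_{\sV})-\ell_N(\vtheta^*)=o_P(\log N)$ there. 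This is a genuine and arguably valuable difference: the paper's shortcut relies on the pseudo-true parameter respecting the non-degeneracy constraints of the identifiability theorem, whereas the MLE is taken over the compact set $\mTheta'_{\sV}$, which does contain degenerate configurations (equal component parameters for a $\sW_0$ variable misplaced in $\sS$, zero regression rows for one misplaced in $\sU$); your overfitted case covers exactly the situation the identifiability theorem does not literally address, so your route is more self-contained, at the cost of an extra likelihood-ratio bound. On that bound, note that your non-regularity worry is milder than stated: $K=K_0$ is fixed and the components stay identified through the genuinely relevant coordinates, so a misplaced noise variable yields an interior, locally identifiable over-parametrization and a standard $O_P(1)$ Wilks-type argument suffices — the heavy bracketing-entropy machinery for mixture-order selection is not needed — though your claim $\nu_{\sV}\ge\nu_{\sV_0}+1$ in the overfitted case should be verified from the SRUW parameter counts rather than asserted.
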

	\begin{theorem}[Informal: Selection consistency of the two-stage procedure]
		\label{theorem_sel_Lasso_like_ranking_consistency}
		Under certain regularity conditions in Supplementary Material, the backward stepwise selection procedure in the $\sS\sR\sU\sW$ model, guided by the variable ranking stage, consistently recovers the true relevant variable set $ \bm{\sS}_0^* $ with high probability. That is,
		$
		\mathbb{P}(\hat{\bm{\sS}}^* = \bm{\sS}_0^*) \xrightarrow[N \rightarrow \infty]{} 1.
		$
	\end{theorem}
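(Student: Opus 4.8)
The plan is to split the event $\{\hat{\bm{\sS}}^* = \bm{\sS}_0^*\}$ along the two stages of the procedure and control each stage separately, then recombine by a union bound. Let $\mathcal{A}_N$ be the event that the variable-ranking stage orders every $d\in\bm{\sS}_0^*$ strictly ahead of every $d\notin\bm{\sS}_0^*$ in the ranked list, and let $\mathcal{B}_N$ be the event that, given any ranked list consistent with $\mathcal{A}_N$, the stepwise $\crit_{\text{BIC}}$ search along that list terminates at exactly $\bm{\sS}_0^*$. Since the output of the procedure equals $\bm{\sS}_0^*$ whenever $\mathcal{A}_N\cap\mathcal{B}_N$ holds, it is enough to show $\mathbb{P}(\mathcal{A}_N^c)\to 0$ and $\mathbb{P}(\mathcal{B}_N^c\mid\mathcal{A}_N)\to 0$, whence $\mathbb{P}(\hat{\bm{\sS}}^*\neq\bm{\sS}_0^*)\le\mathbb{P}(\mathcal{A}_N^c)+\mathbb{P}(\mathcal{B}_N^c\mid\mathcal{A}_N)\to 0$.

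\textbf{Step 1: consistency of the ranking stage.} First I would establish that the penalized mixture estimator $\widehat{\valpha}(\lambda,\rho)$ from \cref{eq_crit_penalized_Glasso}, computed under the MNARz mechanism, is consistent for $\valpha^*$, uniformly over the fixed grid $\mathcal{G}_\lambda\times\mathcal{G}_\rho$. This rests on the MNARz-to-MAR reformulation on the augmented matrix $(\mY,\bfC)$ (the same device used for \cref{theorem_bic_consistency_SRUW}), the identifiability of the model from \cref{theorem_identifiability_SRUW}, compactness of the constrained parameter space (ruling out degeneracy), and the fact that for fixed $(\lambda,\rho)$ the penalty contributes only $o(N)$ to the objective, so the usual $M$-estimation argument gives $\widehat{\valpha}(\lambda,\rho)\xrightarrow{p}\valpha^*$. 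I would then impose a margin/signal-strength regularity condition (stated in the Supplement): at the population level, the clustering score of every $d\in\bm{\sS}_0^*$ exceeds that of every $d\notin\bm{\sS}_0^*$ by a fixed gap $\delta>0$, uniformly over the grid. Combining this with the continuous mapping theorem transfers the separation to the empirical scores $\gO_K(d)$, so with probability tending to $1$ the empirical ordering between the two groups is correct; this is exactly $\mathbb{P}(\mathcal{A}_N)\to 1$.

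\textbf{Step 2: stepwise BIC on a correct ranking, and conclusion.} On $\mathcal{A}_N$, the clustering-set candidates examined by the stepwise search are prefixes of the ranked list, i.e.\ a nested family $\emptyset=\sS^{(0)}\subset\sS^{(1)}\subset\cdots$, and $\bm{\sS}_0^*$ appears in this family exactly once. Hence the restricted search still contains the true configuration $(K_0,m_0,r_0,l_0,\sV_0)$, so \cref{theorem_bic_consistency_SRUW} applies to this restricted family: for any under-fitted prefix the $\crit_{\text{BIC}}$ gap to the true model diverges to $+\infty$ (via a likelihood-ratio lower bound that is linear in $N$), and for any over-fitted prefix the dimension penalty $\tfrac{1}{2}\nparams\log N$ dominates the $O_p(1)$ likelihood gain. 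Therefore each greedy accept/reject decision agrees with the global $\crit_{\text{BIC}}$ maximizer with probability tending to $1$, giving $\mathbb{P}(\mathcal{B}_N\mid\mathcal{A}_N)\to 1$. Feeding this into the union bound of the first paragraph yields $\mathbb{P}(\hat{\bm{\sS}}^*=\bm{\sS}_0^*)\to 1$.

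\textbf{Main obstacle.} The delicate part is Step 1: as noted in the introduction, the Lasso-type ranking of \cite{celeux2019variable} carries no unconditional guarantee of recovering the signal/redundant/uninformative partition, so the argument hinges on pinning down the minimal regularity needed — a margin condition on the clustering scores together with an irrepresentability-type control of how $\sU$-variables correlate with $\sS$-variables — and then showing these conditions survive the extra noise from MNARz missingness. Concretely, the hard work will be the uniform-over-grid concentration of $\gO_K(d)$ when the MAR block is handled by internal EM imputation, and ruling out that missingness can inflate the scores of genuinely uninformative variables enough to break the ordering; once that is in place, everything downstream reduces to the already-established BIC consistency.
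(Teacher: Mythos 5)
Your overall decomposition (ranking correctness $\mathcal{A}_N$, then stepwise BIC correctness $\mathcal{B}_N$, combined by a union bound) matches the paper's three-step structure, but the mechanism you propose for Step 1 has a genuine gap. The ranking score $\gO_K(d)$ is not a continuous functional of the estimator: it counts, over the grid $\mathcal{G}_\lambda$, whether the penalized estimate $\widehat{\vmu}_{k d}(\lambda)$ is \emph{exactly} zero. Plain consistency $\widehat{\valpha}(\lambda,\rho)\xrightarrow{p}\valpha^*$ (which is all your classical $M$-estimation argument with a fixed grid and an $o(N)$ penalty delivers) tells you the noise coordinates are small, not that they are set to zero, so a ``population score gap plus continuous mapping'' argument cannot order the empirical scores. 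What is needed — and what the paper actually proves — is a sparsistency-type analysis of the KKT stationarity conditions: a sup-norm bound on the score at $\valpha^*$ (Bernstein-type), a restricted strong convexity condition and cone condition giving $\ell_1/\ell_2$ rates of order $\sqrt{s_0\ln d_\alpha/N}$, and then explicit noise and signal thresholds ($\Lambda_N^*$ and $\Lambda_S^*(j;\lambda)$ in the supplement) showing that for $\lambda$ above the noise threshold all irrelevant means are exactly zeroed while for $\lambda$ below the signal threshold at least one relevant mean stays active. Your ``Main obstacle'' paragraph correctly senses that this is where the work lies, but the plan as written does not contain the ingredients (gradient concentration, RSC, KKT remainder control) that make the separation of $\gO_K$ scores provable.

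There is a second, smaller mismatch in Step 2: you treat the second stage as a global comparison of $\crit_{\text{BIC}}$ over the nested prefixes and invoke Theorem~\ref{theorem_bic_consistency_SRUW}, but the actual procedure makes sequential accept/reject decisions and terminates after $c$ consecutive non-positive BIC differences. Consequently the paper must bound, via per-decision error probabilities $p_S(N)$ (false negatives among the $s_0$ relevant variables) and $p_N(N)$ (false positives among the $M_{NR}$ remaining variables), the chance that some noise variable produces a spurious positive $\BICdiff$ before the stopping rule fires, which is exactly where the condition on $c$ (roughly $c \gtrsim \ln(M_{NR}/\epsilon_{S,FP})/p_N(N)$, or $M_{NR}\,c\,p_N(N)\to 0$ for fixed $c$) enters; the paper also needs the analogous regression-BIC consistency to finish the $\sU/\sW/\sR$ assignment. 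Your argument, by ignoring $c$, analyzes a different (exhaustive) algorithm than the one whose consistency the theorem asserts, and the divergence/penalty-domination argument you cite would need to be converted into these per-decision error bounds to control the stopped, greedy search.
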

	
	{\bf Proof sketch.} The proof of Theorem~\ref{theorem_sel_Lasso_like_ranking_consistency} proceeds in three main steps: 1) We first establish the consistency of the penalized M-estimator used in the variable ranking stage, specifically the penalized GMMs. 2) This consistency implies that the variables are ranked in a manner that, with high probability, separates relevant variables from noise variables. 3) We then show that the subsequent BIC-based role determination step, when applied to this consistent ranking, correctly recovers the true variable roles. 
	A key component of the proof involves establishing consistency in parameter estimation. In particular, we derive a Restricted Strong Convexity (RSC)-type condition for the negative GMM log-likelihood. More precisely, let $\cL_N(\valpha)$ denote the empirical objective function based on sample size $N$, and let $\valpha^\star$ be the true parameter. There exist constants $\gamma_1, \gamma_2 > 0$ such that for every perturbation $\mDelta$ in a restricted neighborhood of $\valpha^\star$,
	$
	\cL_N(\valpha^\star + \mDelta) - \cL_N(\valpha^\star) - \langle \nabla \cL_N(\valpha^\star), \mDelta \rangle 
	\geq \frac{\gamma_1}{2} \norm{\mDelta_{\vmu}}_2^2 + \frac{\gamma_2}{2} \norm{\mDelta_{\mSigma^{-1}}}_F^2 - \tau_N(\mDelta),
	$
	where $\tau_N(\mDelta)$ is a tolerance term that decays at rate $N^{-1/2}$. A complete and rigorous proof is provided in the Supplementary Material.
	
	{\bf Implications of the theorems.} In the framework of \cite{celeux2019variable}, the parameter $c$ in the BIC-based selection step (representing the number of consecutive non-positive BIC differences allowed before terminating inclusion into $\sS$ or $\sW$) governs the balance between false negatives (omitting true variables) and false positives (including irrelevant variables).
	Theorem~\ref{theorem_sel_Lasso_like_ranking_consistency} implies that for finite samples, choosing a moderately large $c$, such as $c \approx \log(D)$ (e.g., 3 to 5, as commonly practiced), provides robustness against spurious BIC fluctuations caused by noise variables. In particular, if the gap in the $\gO_K(d)$ scores between the last true signal and the first noise variable is sufficiently pronounced, the exact value of $c$ is not overly critical for ensuring asymptotic consistency, as long as it adequately accounts for statistical noise in the BIC differences sequence. However, an excessively large $c$ could lead to the inclusion of noise variables, especially if the variable ranking is imperfect.

	\section{Experiments}\label{sec_experiment}
	% 2. Binh, Tin
	Our primary objective in this section is to evaluate the clustering quality, imputation accuracy, as well as variable-role recovery under incomplete data scenarios. For the simulated dataset, we compare our method with its direct predecessors in \cite{maugis2009variable_a, celeux2019variable}, denoting them as \texttt{Clustvarsel} and \texttt{Selvar} respectively, and design each of them to be a pre-imputation+clustering pipeline with the imputation method to be Random Forest \cite{breiman2001randomForests} using missRanger package \cite{mayer2025missRanger,stekhoven2012missforest}.
	\iffalse
	(may be substituted by this
	@article{mayer2019missranger,
		title={missRanger: Fast imputation of missing values},
		author={Mayer, Michael},
		journal={R package version},
		volume={2},
		number={0},
		year={2019}
	}
	) 
	\fi
	For \cite{maugis2009variable_a}, we use the algorithm in \cite{scrucca2018clustvarsel} with forward direction and headlong process, with rationale being explained in \cite{celeux2019variable}. 
	
	We also benchmark with \texttt{VarSelLCM} illustrated in \cite{marbac2017variable} since the model itself also recasts the variable selection problem into a model-selection one. For this case study, we adopt the same real high-dimensional dataset on transcriptome as in \cite{maugis2009variable_a, maugis2012selvarclustmv, celeux2019variable} and interpret the result in comparison with previous findings in \cite{maugis2009variable_a, maugis2012selvarclustmv}. To measure imputation accuracy, we use WNRMSE - a weighted version of NRMSE where the weights are deduced from the proportion of missingness in each group, while clustering performance is measured by ARI. The variable selection will be assessed by the frequency the model chooses the correct set of relevant variables, which are controlled in the simulation. 
	% \subsection{Experimental Setup}
	\iffalse
	\subsection{Simulated Dataset}
	\label{subsec:simulated_dataset}
	\fi
	
	{\bf Simulated Dataset.} We generate two synthetic data sets with sample size $n=2000$ for variable/model selection with missing values treatment benchmarking. The first data set mimics \cite{maugis2012selvarclustmv} where each observation $\vy_n \in \sRe^7$  is drawn from a four-component Gaussian mixture on the clustering block $\sS=\{\vy_1,\vy_2,\vy_3\}$.  The components are equiprobable and have means $(0,0,0)^{'} ,\,(-6,6,0)^{'} ,\,(0,0,6)^{'} ,\,(-6,6,6)^{'}$ with a diagonal covariance matrix $ \mSigma=\text{diag} (6\sqrt2, 1, 2)\,\sigma_{\text{scale}}^{2} $. The redundant block $\sU=\{\vy_4, \vy_5\}$ is generated by a linear regression $\sU=(-1,2)^{'}+\bfX_{S} [(0.5, 2)^{'}, (1, 0)^{'}] + \bsepsilon$, where $\bsepsilon \sim \mathcal {N}\bigl(\bf0,\bfR(\pi/6) \text{diag}(1,3) \bfR(\pi/6)^{'}\bigr)$ and $\bfR(\vtheta)$ denotes the usual planar rotation. The remaining $W$ variables are iid $\mathcal N(0,1)$ noise. In the second design, an observation $\vy_n \in \sRe^{14}$. The clustering block now consists of $(\vy_1,\vy_2)$, drawn from an equiprobable four-component Gaussian mixture with means $(0,0),(4,0),(0,2),(4,2)$ and common covariance $0.5\,\bf I_2$.  Conditional on $(\vy_1,\vy_2)$, the vector $\vy_{3{:}14}$ follows a linear model $\vy_{3{:}14}= \valpha + (\vy_1,\vy_2)\bsbeta + \bsvarepsilon$, where $\valpha$, the $2\times12$ coefficient matrix $\bsbeta$ and the diagonal covariance $\mOmega$ are varied across eight sub-scenarios: scenario $1$ contains only pure noise, scenarios $2\text{-}7$ introduce more regressors, and scenario $8$ adds intercept shifts together with three additional pure-noise variables $\vy_{12{:}14}$. 
	In our setting, we consider $K \in \{2, 3, 4\}$ and every method except \texttt{VarSelLCM} is allowed to learn its own covariance structure. We vary the missing ratio $\{0.05,0.1,0.2,0.3,0.5\}$ under both MAR and MNAR patterns. For the second simulated dataset, we temporarily consider only scenario $8$. Moreover, for \texttt{Selvar} and \texttt{SelvarMNARz}, we fix $c =2$ and compute $\mP_k$ with the spectral distance. Further details of the synthetic-data generation scheme are provided in the supplement. From Figure \ref{fig_maugis_ari_wnrmse}, we observe that \texttt{SelvarMNARz} delivers the highest ARI and the lowest WNRMSE across every missing-data level in both scenarios. Its performance declines only modestly as missingness increases, while the impute-then-cluster baselines deteriorate sharply, especially at 30\% and 50\% missingness under MNAR, pointing to the benefit of modeling the missing-data mechanism within the clustering algorithm. 
	To statistically validate that our model exhibits a significantly slower performance decline under high missingness, we conducted one-sided Welch's t-tests (Bonferroni-corrected) at the 50\% MNAR level. As shown in Table \ref{tab:welch_mnar50}, the ARI of \texttt{SelvarMNARz} is significantly greater than all baseline models ($p < 0.001$) across 20 replications.
	\begin{table}[ht]
		\centering\small
		\caption{Welch's t-test for $\mathrm{ARI}$ (MNAR, 50\% missingness, $\alpha=0.05$).}
		\label{tab:welch_mnar50}
		\begin{tabular}{lccccc}
			\toprule
			\textbf{Model} & \textbf{Mean ARI} & \textbf{Std} & \textbf{Comparison} & \textbf{p-value} & \textbf{Signif.} \\
			\midrule
			\texttt{SelvarMNARz} & 0.511 & 0.052 & -- & -- & NA \\
			\texttt{Clustvarsel} & 0.363 & 0.088 & vs. SelvarMNARz & $<0.001$ & $\ast\ast\ast\ast$ \\
			\texttt{Selvar}      & 0.348 & 0.108 & vs. SelvarMNARz & $<0.001$ & $\ast\ast\ast\ast$ \\
			\texttt{VarSelLCM}   & 0.344 & 0.101 & vs. SelvarMNARz & $<0.001$ & $\ast\ast\ast\ast$ \\
			\bottomrule
		\end{tabular}
	\end{table}
	
	While all methods show a monotonic performance drop as missingness increases, these results confirm that the superior performance of \texttt{SelvarMNARz} is not a random artifact but a statistically significant improvement, demonstrating its robustness in the challenging missing data scenarios.
	Figure \ref{fig_maugis_prop} shows that \texttt{SelvarMNARz} consistently recovers the true cluster component and the correct set of clustering variables, regardless of the missing-data mechanism or rate. On Dataset 2, every method succeeds under both MAR and MNAR, whereas on Dataset 1, the performance of \texttt{Clustvarsel} and \texttt{VarSelLCM} drops as missingness increases, leading to the omission of important variables. The contrast reflects the data-generating designs: Dataset 2 includes pure-noise variables $\sW$ within the $\sS\sR\sU\sW$ framework, whereas Dataset 1 aligns more closely with the assumptions behind \texttt{Clustvarsel} and \texttt{VarSelLCM}. Finally, workflows that rely on external imputation misidentify relevant variables once missingness exceeds 20\% under either missing-data scenarios.
	%% \subsection{Experimental Results}
	%% [trim={left bottom right top},clip]
	\begin{figure}[t]
		\centering
		\begin{subfigure}[b]{0.49\textwidth}
			\includegraphics[trim={0 .7cm 0 3cm},width=\linewidth]{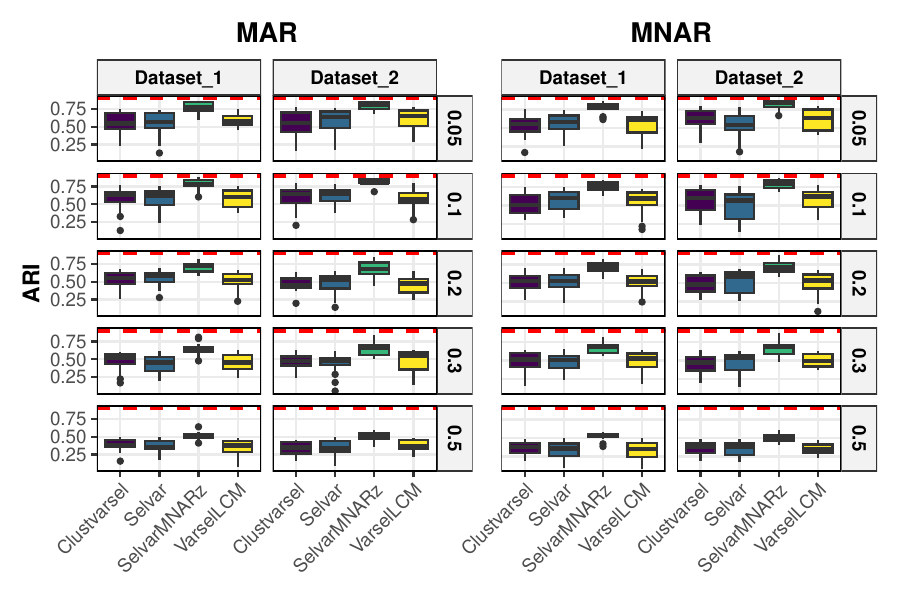}
			% \subcaption{ARI}
		\end{subfigure}
		~
		\begin{subfigure}[b]{0.49\textwidth}
			\includegraphics[trim={0 .7cm 0 0},width=\linewidth]{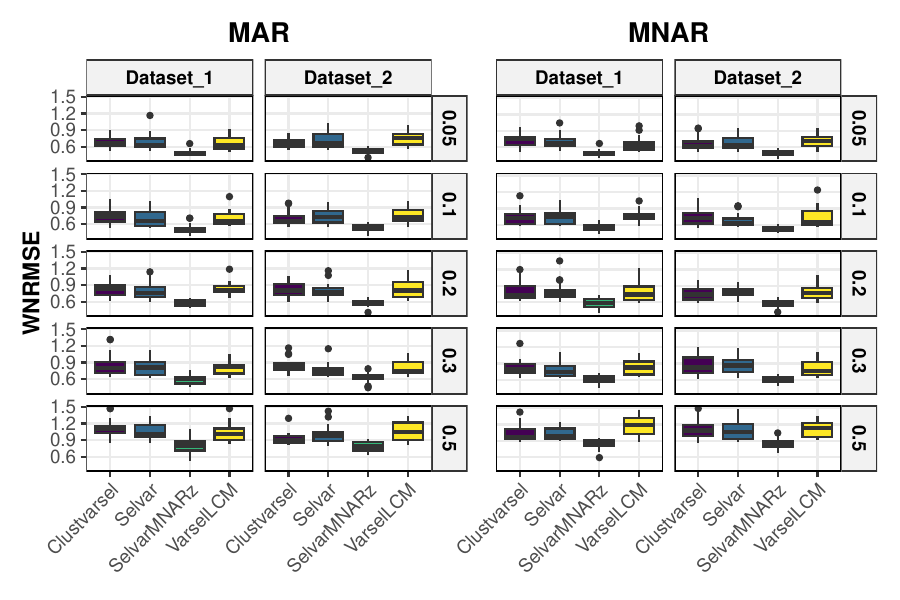}
			% \subcaption{WNRMSE}
		\end{subfigure}
		\caption{Comparison of four models under MAR and MNAR mechanisms over 20 replications; for ARI/WNRMSE, higher/lower boxplots indicate better performance.}
		\label{fig_maugis_ari_wnrmse}
	\end{figure}
	
	\begin{figure}[t]
		\centering
		\includegraphics[width=0.5\linewidth]{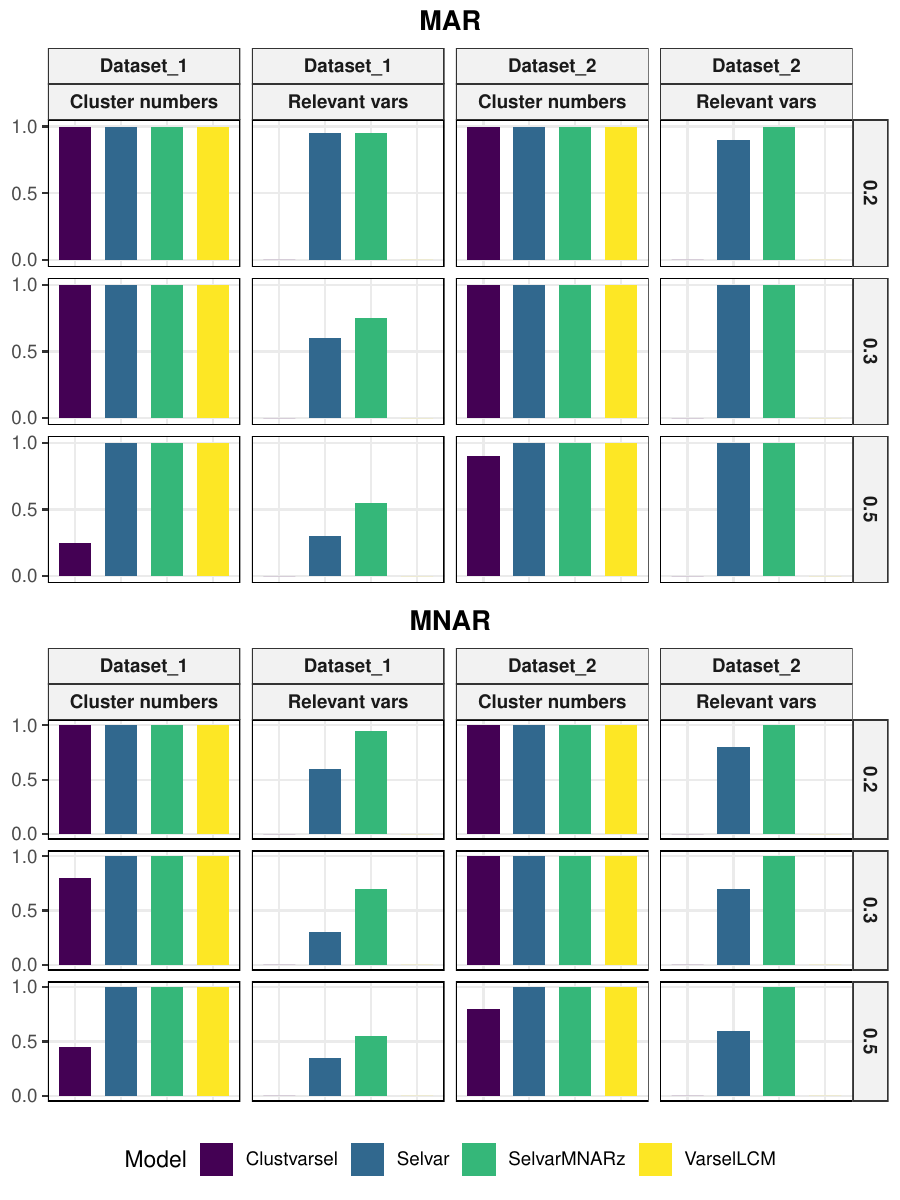}
		\caption{Proportions choosing correct relevant variables and cluster components over 20 replications.}
		\label{fig_maugis_prop}
	\end{figure}
	\iffalse
	\subsection{Case Study: Transcriptome Data}
	\fi
	%%
	\begin{figure}[t]
		\centering
		\includegraphics[width=.5\linewidth]{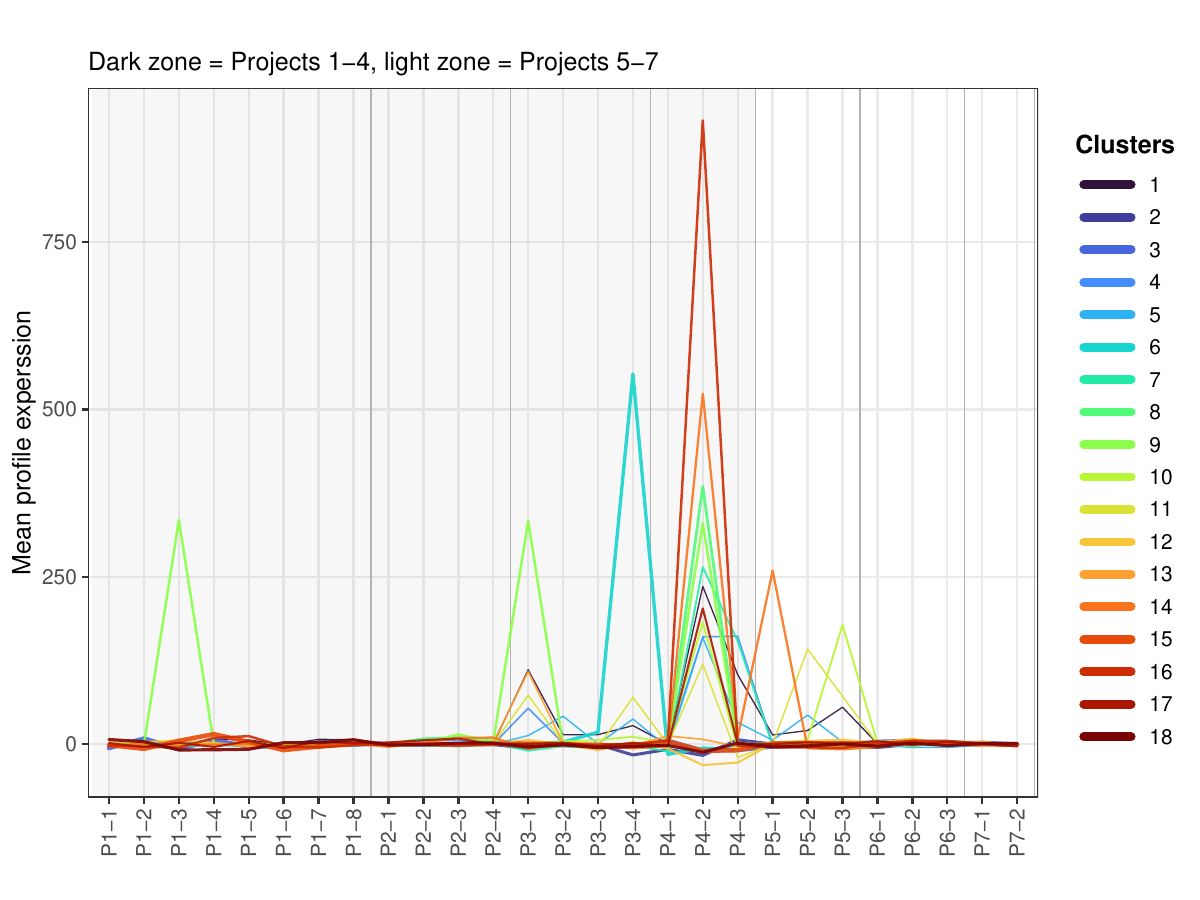}
		\caption{Mean expression profiles across 18 clusters. Light region indicates irrelevant P.}
		\label{fig:transcriptome_mean_expr}
	\end{figure}
	\begin{table}[t]
		\centering
		\caption{Size of each mixture component and $R^{2}$ obtained by regressing the
			$\sU$-block on the $\sS$-block \textbf{within that cluster}.
			Values extremely close to~1 (marked by ${}^{\ast}$) occur because
			the $\sU$-block is nearly constant in those groups.}
		\label{tab:r2_cluster}
		\small
		\setlength{\tabcolsep}{6pt}
		\begin{tabular}{@{}lrr@{}}
			\toprule
			\textbf{Cluster} & \textbf{\#\,Genes} & $\boldsymbol{R^{2}}$ \\
			\midrule
			1 & 715 & 0.02 \\
			2 & 11  & 1.00$^{\ast}$ \\
			3 &  6  & 1.00$^{\ast}$ \\
			4 & 93  & 0.45 \\
			5 & 124 & 0.12 \\
			6 &  9  & 1.00$^{\ast}$ \\
			\bottomrule
		\end{tabular}%
		\hspace{0.03\linewidth}
		\begin{tabular}{@{}lrr@{}}
			\toprule
			\textbf{Cluster} & \textbf{\#\,Genes} & $\boldsymbol{R^{2}}$ \\
			\midrule
			7 & 36 & 0.68 \\
			8 & 13 & 1.00$^{\ast}$ \\
			9 & 15 & 1.00$^{\ast}$ \\
			10 & 29 & 0.59 \\
			11 & 71 & 0.43 \\
			12 & 24 & 0.91 \\
			\bottomrule
		\end{tabular}%
		\hspace{0.03\linewidth}
		\begin{tabular}{@{}lrr@{}}
			\toprule
			\textbf{Cluster} & \textbf{\#\,Genes} & $\boldsymbol{R^{2}}$ \\
			\midrule
			13 & 46 & 0.57 \\
			14 & 19 & 1.00$^{\ast}$ \\
			15 &  6 & 1.00$^{\ast}$ \\
			16 & 16 & 1.00$^{\ast}$ \\
			17 & 24 & 0.90 \\
			18 & 10 & 1.00$^{\ast}$ \\
			\bottomrule
		\end{tabular}
	\end{table}

	{\bf Transcriptome Data.} We re-examine the 1,267-gene \textit{Arabidopsis thaliana} transcriptome dataset previously analyzed using \texttt{SelvarClust} and \texttt{SelvarClustMV} in \cite{maugis2009variable_a, maugis2012selvarclustmv} to illustrate clustering with variable selection. Full experimental details are provided in the Supplementary Material; here, we summarize the settings and key findings. We fitted \texttt{SelvarMNARz} for $K \in \{2, \dots, 20\}$ clusters with $c = 5$, spectral distance weights $\mP_k$, and a $\pi_kLC$ mixture structure as in \cite{maugis2012selvarclustmv}. For conciseness, we refer to Project 1 as P1, Project 2 as P2, and so on.
	As shown in Figure~\ref{fig:transcriptome_mean_expr}, the proposed framework partitions the transcriptome into 18 clusters and reaffirms P1-P4 as core drivers, while globally assigning P5-P7 as being explained by these drivers. This finding is consistent with \cite{maugis2012selvarclustmv}, which also identifies P2 as relevant for clustering, supporting the hypothesis that iron signaling responses are critical for defining co-expression groups when a broader gene set (including entries with missing data) is considered. Additionally, all three methods agree that P5 is not a primary clustering factor and is more likely a correlated effect explained by broader cellular processes. However, a key departure lies in the reclassification of P6 and P7, previously identified as informative clustering drivers \cite{maugis2009variable_a, maugis2012selvarclustmv}, into a redundant $\sU$ block, explained by P1-P4. This suggests that, after accounting for MNAR effects, variation in P6-P7 is largely predictable from the core axes P1-P4, and P5-P7 all measure late-stage stress outputs.
	Further analysis of cluster sizes and the per-cluster $R^2$ values, shown in Table~\ref{tab:r2_cluster}, validates the selected variables when regressing P5-P7 on P1-P4 within each cluster. Six biologically coherent clusters (6, 7, 8, 10, 12, 18) exhibit $R^2 > 0.60$, indicating that late-stress variation is well encoded by P1-P4. Cluster 1, although large, is transcriptionally flat across both $\sS$ and $\sU$; its low $R^2 = 0.02$ thus reflects a vanishing denominator rather than hidden structure. True residual structure (unexplained $\sU$ signal) is concentrated in two groups: clusters 5 and 11, which are exactly where the 18th mixture component forms.

	\section{Conclusion, Limitations and Perspectives}\label{sec_conclusion}
	
	We proposed a unified model-based clustering framework that jointly performs variable selection and handles MNAR data by combining adaptive penalization with explicit modeling of missingness-latent class dependencies. This enhances clustering flexibility and robustness in high-dimensional settings like transcriptomics. The method enjoys theoretical guarantees, including selection and asymptotic consistency, and shows strong empirical performance. However, it is currently limited to continuous data via Gaussian mixture models, making it less suitable for categorical or mixed-type variables. Extending the framework to categorical settings is a natural next step. For instance, Dean and Raftery~\cite{dean2010latent} introduced variable selection for latent class models, later refined by Fop et al.~\cite{fop2017variable} using a statistically grounded stepwise approach. Bontemps and Toussile~\cite{bontemps_clustering_2013} proposed a mixture of multivariate multinomial distributions with combinatorial variable selection and slope heuristics for penalty calibration. Building upon these, a future extension of our framework could incorporate discrete data distributions, adapt penalization accordingly, and model missingness mechanisms using latent class structures. Further directions include mixed-type data extensions and scalability improvements to accommodate larger, more heterogeneous datasets.
	
	\newpage
	
	\section*{Acknowledgments} TrungTin Nguyen and Christopher Drovandi acknowledge funding from the Australian Research Council Centre of Excellence for the Mathematical Analysis of Cellular Systems (CE230100001). Christopher Drovandi was also supported by an Australian Research Council Future Fellowship (FT210100260).
	This paper was also supported by AISIA Research Lab, Vietnam.
	The authors would like to express their special thanks to the authors of~\cite{maugis2009variable_a,maugis2012selvarclustmv}, in particular Professor Cathy Maugis-Rabusseau and Dr. Marie-Laure Martin, for providing the Arabidopsis thaliana transcriptome dataset.
	
	% \section{Limitations}

	%%%%%%%%%%%%%%%%%%%%%%%%%%%%%%%%%%%%%%%%%%%%%%%%%%%%%%%%%%%%
	
	\bibliography{reference}
	\bibliographystyle{abbrv}

	\appendix
	
	\begin{center}
		\textbf{\Large Supplementary Materials for\\
			``A Unified Framework for Variable Selection in Model-Based Clustering with Missing Not at Random''}
	\end{center}

	In this supplementary material, we first provide additional details on the main challenges as well as the technical and computational contributions in \Cref{section_challenge_contribution}, with the aim of enhancing the reader's understanding of our key theoretical and methodological developments. Next, we summarize the definitions and theoretical results for the $\sS\sR$ and $\sS\sR\sU\sW$ models in \Cref{section_regular_assumptions}. We then present the regularity assumptions and technical proofs for the main theoretical results in \Cref{section_regular_assumptions,sec_main_results}, respectively. Finally, we include additional experimental results and further discussion of related work in \Cref{section_more_details_related_works,sec_additional_experiments}.
	
	% Concern: For a single GMM, we use f_{\clust} or \Phi(\vy_n; \vmu_k, \mSigma_k)
	
	\tableofcontents
	
	\section{More Details on Main Challenges and Contributions}\label{section_challenge_contribution}
	Model-based clustering with variable selection, particularly using the sophisticated SRUW framework \cite{maugis2009variable_b}, offers a powerful approach to understanding data structure by not only grouping data but also by assigning distinct functional roles to variables. However, extending this rich framework to handle missing data, especially MNAR, and to perform variable selection efficiently in high-dimensional settings presents substantial theoretical and practical challenges. We address some difficulties as follows:
	
	{\bf Integrating MNAR inside model-based clustering with variable selection: } Unlike MAR or MCAR data, MNAR mechanisms are generally non-ignorable, meaning the missing data process must be explicitly modeled alongside the data distribution to avoid biased parameter estimates and incorrect clustering \cite{little2019statistical, rubin1976inference}. Modeling this joint distribution significantly increases model complexity and the risk of misspecification. While various MNAR models exist (see \cite{marbac2017variable, sportisse2024model}), integration into complex variable selection clustering frameworks like SRUW, along with proving identifiability and estimator consistency, remains a frontier. Specifically, the MNARz mechanism (where missingness depends only on latent class membership) offers a tractable yet meaningful way to handle informative missingness, but its properties within a structured model like SRUW require careful elucidation.
	
	{\bf Variable selection with SRUW and missing data in high-dimensional regime: } The SRUW model's strength lies in its detailed variable role specification. However, determining these roles from data is a combinatorial model selection problem. Original SRUW procedures often rely on stepwise algorithms that become computationally prohibitive as the number of variables $D$ increases. Introducing missing data further complicates parameter estimation within each candidate model. Penalized likelihood methods, common in high-dimensional regression (e.g., LASSO \cite{tibshirani1996regression, zhou_penalized_2009, casa2022group}), offer a promising avenue for simultaneous parameter estimation and variable selection. However, adapting these to the GMM likelihood and then to the structured SRUW model with an integrated MNARz mechanism requires establishing theoretical guarantees such as parameter estimation consistency and ranking/selection consistency for the chosen penalized objective. Some notable, nontrivial hurdles are appropriate curvature properties like RSC for non-convex log-likelihood loss functions in high dimensions or the cone conditions to ensure sparse recovery. 
	
	{\bf Consistency of multi-stage procedures: } Practical algorithms for variable selection tasks often involve multiple stages (e.g., initial ranking/variable screening, then detailed role assignment). Proving the consistency of such a multi-stage procedure requires advanced tools to handle.
	
	Being aware of these difficulties, we attempt to make several contributions to address the aforementioned challenges, particularly focusing on providing a theoretical foundation for a two-stage variable selection procedure within the SRUW-MNARz framework:
	\begin{enumerate}
		\item Though the extension of the reinterpretation result from \cite{sportisse2024model} is straightforward, this significantly simplifies the analysis and provides a way to handle identifiability of SRUW under MNAR missingness, which potentially facilitates algorithmic design. 
		\item  Building on the MAR reinterpretation and existing results for SRUW data model identifiability and MNARz mechanism identifiability, we establish conditions under which the full SRUW-MNARz model parameters (including SRUW structure, GMM parameters, regression coefficients, and MNARz probabilities) are identifiable from the observed data. This is important since identifiability is a prerequisite for consistent parameter estimation.
		\item To underpin the variable ranking stage, which uses a LASSO-like penalized GMM, we provide a rigorous analysis of the penalized GMM estimator based on \cite{stadler_l_1_penalization_2010, negahban2012, loh_regularized_2015}:  First, we establish a non-asymptotic sup-norm bound on the gradient of the GMM negative log-likelihood at the true parameters, which is crucial for selecting the appropriate magnitude for regularization parameters. Subsequently, we formally state the RSC condition needed for the GMM loss function, and the error vector of the penalized GMM estimator lies in a specific cone, which is essential for sparse recovery. These settings provide the necessary high-dimensional statistical guarantees for the penalized GMM used in the ranking stage, ensuring that its estimated sparse means are reliable. 
		\item Leveraging the parameter consistency of the penalized GMM, we prove that the variable ranking score $\gO_K(j)$ can consistently distinguish between truly relevant variables (for clustering means) and irrelevant variables. This involves establishing precise ``signal strength'' conditions; thereby, validating the first step of the two-stage variable selection procedure in \cite{celeux2019variable}.
		\item For the main theorem on selection consistency, the process combines ranking consistency with an analysis of the BIC-based stepwise decisions for assigning roles. We show that under appropriate conditions on local BIC decision error rates and the choice of the stopping parameter $c$, the entire two-stage procedure consistently recovers the true variable partition. This provides the first (to our knowledge) formal consistency proof for such a two-stage variable selection in model-based clustering (SRUW) that incorporates an initial LASSO-like ranking, especially in the context of MNAR pattern. It bridges the theoretical gap between the computationally intensive and a more practical high-dimensional strategy.
		\item While full optimality of $c$ is not derived, we provide a condition for choosing $c$ to achieve a target false positive rate in relevant variable selection, given the single-decision error probability. This moves beyond purely heuristic choices for $c$ in \cite{celeux2019variable}, paving the way for more adaptive or theoretically grounded choices of $c$ in practice. This offers a more principled approach to selecting this single hyperparameter in the algorithm, enhancing the reliability of the selection procedure.
	\end{enumerate}

	\section{Preliminary of \texorpdfstring{$\sS\sR$}{SR} and \texorpdfstring{$\sS\sR\sU\sW$}{SRUW} Models}\label{section_preliminary}
	This part summarizes definitions and theoretical results of the $\sS\sR$ and $\sS\sR\sU\sW$ models in \cite{maugis2009variable_a, maugis2009variable_b}. We start with the following two definitions:
	\begin{definition}[$\sS\sR$ Model for Complete Data]
		\label{def:sr}
		Let $\vy_n \in \sRe^D$ be the $n$-th observation, for $n=1, \dots, N$. A model $\gM_{\sS\sR} = (K, m, \sS, \sR)$ is defined by:
		\begin{enumerate}
			\item $K$: The number of mixture components (clusters).
			\item $m$: The covariance structure of the GMM for variables in $\sS$.
			\item $\sS \subseteq \{1, \dots, D\}$: The non-empty set of relevant clustering variables.
			\item $\sR \subseteq \sS$: The subset of relevant variables in $\sS$ used to explain the irrelevant variables $\sS^c = \{1, \dots, D\} \setminus \sS$ via linear regression.
		\end{enumerate}
		The parameters of the model are $\vtheta = (\valpha, \va, \mBeta, \mOmega)$, where $\valpha$ are the GMM parameters for variables in $\sS$. $\va$ is the intercept vector, $\mBeta$ is the matrix of regression coefficients for explaining $\vy^{\sS^c}$ from $\vy^{\sR}$, and $\mOmega$ is the covariance matrix of the regression component.
		The complete data likelihood for one observation $\vy_n$ is:
		\begin{equation*}
			f(\vy_n ; K, m, \sS, \sR, \vtheta) = f_{\clust}(\vy_n^\sS ; K, m, \valpha) f_{\reg}(\vy_n^{\sS^c} ; \va + \vy_n^\sR \mBeta, \mOmega)
		\end{equation*}
		where $\vy_n^\sS$ denotes the sub-vector of $\vy_n$ corresponding to variables in $\sS$, and similarly for $\vy_n^{\sS^c}$ and $\vy_n^\sR$. $f_{\clust}$ is a K-component GMM density, and $f_{\reg}$ is a multivariate Gaussian density.
		
		Importantly, $\sS\sR$ model can be equivalently written as a single K-component GMM for the full data vector $\vy_n \in \sRe^D$ \cite{maugis2009variable_a}:
		\begin{equation} \label{eq:full_gmm_representation}
			f(\vy_n ; K, m, \sS, \sR, \vtheta) = \sum_{k=1}^K \pi_k \Phi(\vy_n ; \vnu_k, \mDelta_k)
		\end{equation}
		where $\Phi(\cdot ; \vnu, \mDelta)$ is the multivariate Gaussian density with mean $\vnu$ and covariance $\mDelta$. The parameters $(\vnu_k, \mDelta_k)$ are constructed from the original $\sS\sR$ parameters $(\alpha_k, \va, \mBeta, \mOmega)$ as follows:
		Let $\mLambda$ be a $|\sS| \times |\sS^c|$ matrix derived from $\mBeta$ (which is $|\sR| \times |\sS^c|$). For a variable $j \in \sS$ and $l \in \sS^c$, $\emLambda_{jl} = \emBeta_{j'l}$ if the $j$-th variable of $\sS$ is the $j'$-th variable of $\sR$, and $\emLambda_{jl} = 0$ if the $j$-th variable of $\sS$ is in $\sS \setminus \sR$.
		Then, for each component $k$:
		\begin{itemize}
			\item The mean vector $\vnu_k \in \sRe^D$ has elements:
			\begin{equation*}
				\evnu_{kj} =
				\begin{cases}
					\evmu_{kj} & \text{if variable } j \in \sS \\
					(\va + \vmu_k^\sS \mLambda)_j & \text{if variable } j \in \sS^c
				\end{cases}
			\end{equation*}
			where $\vmu_k^\sS$ is the mean of the $k$-th component for variables in $\sS$.
			\item The covariance matrix $\mDelta_k \in \sRe^{D \times D}$ has blocks:
			\begin{equation*}
				\emDelta_{k,jl} =
				\begin{cases}
					\emSigma_{k,jl} & \text{if } j \in \sS, l \in \sS \\
					(\mSigma_k \mLambda)_{jl} & \text{if } j \in \sS, l \in \sS^c \\
					(\mLambda^T \mSigma_k)_{jl} & \text{if } j \in \sS^c, l \in \sS \\
					(\mOmega + \mLambda^T \mSigma_k \mLambda)_{jl} & \text{if } j \in \sS^c, l \in \sS^c
				\end{cases}
			\end{equation*}
		\end{itemize}
	\end{definition}
	
	\begin{definition}[$\sS\sR\sU\sW$ Model for Complete Data]
		\label{def:sruw}
		Let $\vy_n \in \sRe^D$ be the $n$-th observation. An $\sS\sR\sU\sW$ model $\gM_{\sS\sR\sU\sW} = (K, m, r, l, \sV)$ is defined by:
		\begin{enumerate}
			\item $K$: The number of mixture components.
			\item $m$: The covariance structure of the GMM for variables in $\sS$.
			\item $r$: The form of the covariance matrix $\mOmega$ for the regression of $\sU$ on $\sR$.
			\item $l$: The form of the covariance matrix $\mGamma$ for the independent variables $\sW$.
			\item $\sV = (\sS, \sR, \sU, \sW)$: A partition of the $D$ variables, where
			\begin{itemize}
				\item $\sS$: Non-empty set of relevant clustering variables.
				\item $\sR \subseteq \sS$: Subset of $\sS$ regressing variables in $\sU$. ($\sR = \emptyset$ if $\sU = \emptyset$).
				\item $\sU$: Set of irrelevant variables explained by $\sR$.
				\item $\sW$: Set of irrelevant variables independent of all variables in $\sS$.
				\item $\sS \cup \sU \cup \sW = \{1, \dots, D\}$ and these sets are disjoint.
			\end{itemize}
		\end{enumerate}
		The parameters are $\vtheta = (\valpha, \va, \mBeta, \mOmega, \vgamma, \mGamma)$, where $\valpha$ is the GMM parameters.
		The complete data density for one observation $\vy_n$ is:
		\begin{equation*}
			f(\vy_n ; K, m, r, l, \sV, \vtheta) = f_{\clust}(\vy_n^\sS ; K, m, \valpha) f_{\reg}(\vy_n^\sU ; \va + \vy_n^\sR \mBeta, \mOmega) f_{\indep}(\vy_n^\sW ; \vgamma, \mGamma)
		\end{equation*}
		This can be written as a K-component GMM for the full data vector $\vy_n \in \sRe^D$:
		\begin{equation} \label{eq:full_gmm_sruw_def}
			f(\vy_n ; K, m, r, l, \sV, \vtheta) = \sum_{k=1}^K \pi_k \PHI(\vy_n ; \tilde{\vnu}_k, \tilde{\mDelta}_k)
		\end{equation}
		where $\tilde{\vnu}_k$ and $\tilde{\mDelta}_k$ are constructed from $\vtheta$. Let $(\vnu_k^{(\sS,\sU)}, \mDelta_k^{(\sS,\sU)})$ be the mean and covariance for the $(\sS, \sU)$ part derived as in the \hyperref[def:sr]{$\sS\sR$ model} (where $\sU$ plays the role of $\sS^c_{\sS\sR}$). Then:
		\begin{align*}
			\tilde{\vnu}_k &= \left( (\vnu_k^{(\sS,\sU)})^T, \vgamma^T \right)^T \\
			\tilde{\mDelta}_k &= \begin{pmatrix} \mDelta_k^{(\sS,\sU)} & \vzero \\ \vzero & \mGamma \end{pmatrix}
		\end{align*}
		due to the independence of $\sW$ from $\sS$ and $\sU$ (given the cluster $k$, which is implicit in the construction of $\vnu_k^{(\sS,\sU)}$ using $\vmu_k^\sS$).
	\end{definition}
	
	Similar to the global GMM representation of $\sS\sR$ model under MAR, we can obtain such a global form of $\sS\sR\sU\sW$ model under the same missingness pattern, which is illustrated by the following claim. 
	
	\begin{proposition}\label{proposition_global_GMM_representation}
		Under MAR, $\sS\sR\sU\sW$ has a natural global GMM representation with appropriate parameters.
	\end{proposition}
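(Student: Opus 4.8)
The plan is to reduce the statement to the already-established complete-data global GMM representation of the $\sS\sR\sU\sW$ model in \cref{def:sruw} (see \cref{eq:full_gmm_sruw_def}), combined with the ignorability of the MAR mechanism and the marginalization property of multivariate Gaussian densities. First I would recall that, by \cref{def:sruw}, the complete-data density $f(\vy_n ; K,m,r,l,\sV,\vtheta)$ of \cref{eq_sruw_density} can be rewritten as the single $K$-component mixture $\sum_{k=1}^{K}\pi_k\,\PHI(\vy_n;\tilde{\vnu}_k,\tilde{\mDelta}_k)$, with block-structured mean $\tilde{\vnu}_k=((\vnu_k^{(\sS,\sU)})^\top,\vgamma^\top)^\top$ and block-diagonal covariance $\tilde{\mDelta}_k=\mathrm{diag}(\mDelta_k^{(\sS,\sU)},\mGamma)$, where $(\vnu_k^{(\sS,\sU)},\mDelta_k^{(\sS,\sU)})$ are constructed exactly as in the $\sS\sR$ model with $\sU$ playing the role of $\sS^{c}$. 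This reduction is purely algebraic, so I would simply cite it.

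Next I would perform the marginalization over the missing coordinates. Fix an observation $n$ and write $\vy_n=(\vy_n^{o},\vy_n^{m})$ for its split into observed and missing entries. Since the mechanism is MAR and hence ignorable, observation $n$ contributes $\int f(\vy_n^{o},\vy_n^{m};K,m,r,l,\sV,\vtheta)\,d\vy_n^{m}$ to the observed-data likelihood. Substituting the mixture representation and exchanging the finite sum with the integral gives $\sum_{k=1}^{K}\pi_k\int\PHI((\vy_n^{o},\vy_n^{m});\tilde{\vnu}_k,\tilde{\mDelta}_k)\,d\vy_n^{m}$; the marginal of a multivariate Gaussian over a subset of coordinates is again Gaussian with the corresponding subvector of the mean and principal submatrix of the covariance, so each integral equals $\PHI(\vy_n^{o};\tilde{\vnu}_{k,o},\tilde{\mDelta}_{k,oo})$. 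Taking the product over the independent observations then yields $f(\mY^{o};K,m,r,l,\sV,\vtheta)=\prod_{n=1}^{N}\sum_{k=1}^{K}\pi_k\,\cN(\vy_n^{o}\mid\tilde{\vnu}_{k,o},\tilde{\mDelta}_{k,oo})$, which is a global GMM on the observed data, i.e., exactly \cref{eq_mar_sruw_gmm}.

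Finally I would read off the block structure of the observed-data parameters. Partition the observed index set of observation $n$ into its part in $\sS\cup\sU$ and its part in $\sW$; since $\tilde{\mDelta}_k$ is block-diagonal across these two groups, so is $\tilde{\mDelta}_{k,oo}$, its $\sW$-block equals the corresponding submatrix $\mGamma_{oo}$ of $\mGamma$ and is therefore shared across all components, while its $(\sS,\sU)$-block is the observed restriction $\mDelta_{k,oo}$ of the $\sS\sR$-type covariance; likewise $\tilde{\vnu}_{k,o}$ splits into the component-specific mean $\vnu_{k,o}$ of the $\sS\sR$ submodel and the shared $\vgamma_o$, which reproduces the displayed forms for $\tilde{\vnu}_{k,o}$ and $\tilde{\mDelta}_{k,oo}$. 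The only mildly delicate point, and the closest thing to an obstacle, is bookkeeping: the observed set varies with $n$, and a missing $\sR$-coordinate feeds into the regression block of $\sU$; both are handled automatically because the marginalization is carried out on the joint Gaussian $\PHI(\vy_n;\tilde{\vnu}_k,\tilde{\mDelta}_k)$ rather than on the conditional factors $f_{\clust}$, $f_{\reg}$, $f_{\indep}$, so the argument is a direct lift of the $\sS\sR$ derivation in \cite{maugis2012selvarclustmv} and presents no genuine difficulty.
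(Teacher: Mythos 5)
Your proposal is correct and follows essentially the same route as the paper: marginalize the complete-data SRUW mixture over the missing coordinates using Gaussian marginalization under MAR ignorability, then read off the block-structured observed-data parameters $\tilde{\vnu}_{k,o}$ and $\tilde{\mDelta}_{k,oo}$. The only difference is presentational — you marginalize the joint Gaussian from the global complete-data representation of \cref{def:sruw} in one step, whereas the paper integrates the factorized $f_{\clust}\,f_{\reg}\,f_{\indep}$ form separately over the $(\sS,\sU)$ and $\sW$ blocks before re-absorbing the independent factor into the mixture sum; both arguments yield \cref{eq_mar_sruw_gmm} identically.
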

	
	\begin{proof}[Proof of \Cref{proposition_global_GMM_representation}]
		Let $\vy_n = (\vy_n^\sS, \vy_n^\sU, \vy_n^\sW)$ be the complete data vector for observation $n \in \{1, \dots, N\}$, partitioned according to the $\sS\sR\sU\sW$ model structure $\sV = (\sS, \sR, \sU, \sW)$. The parameters are $\vtheta = (\valpha, \va, \mBeta, \mOmega, \vgamma, \mGamma)$, where $\valpha = (\pi_1, \dots, \pi_k, \vmu_1, \dots, \vmu_K, \mSigma_1, \dots, \mSigma_K)$.
		
		The complete-data density for a single observation $\vy_n$, given its membership to cluster $k$ (denoted by $z_{ik}=1$), is:
		\begin{equation*}
			f(\vy_n \mid z_{ik}=1, \sV, \vtheta) = f_{\clust}(\vy_n^\sS ; \vmu_k, \mSigma_k) \cdot f_{\reg}(\vy_n^\sU ; \va + \vy_n^\sR \mBeta, \mOmega) \cdot f_{\indep}(\vy_n^\sW ; \vgamma, \mGamma)
		\end{equation*}
		The marginal complete-data density for $\vy_n$,by summing over latent cluster memberships is:
		\begin{equation} \label{eq:marginal_complete_sruw_i}
			f(\vy_n ; \sV, \vtheta) = \sum_{k=1}^K \pi_k \left[ f_{\clust}(\vy_n^\sS ; \vmu_k, \mSigma_k) \cdot f_{\reg}(\vy_n^\sU ; \va + \vy_n^\sR \mBeta, \mOmega) \cdot f_{\indep}(\vy_n^\sW ; \vgamma, \mGamma) \right]
		\end{equation}
		Due to the independence of $\vy_n^\sW$ from $(\vy_n^\sS, \vy_n^\sU)$ given the cluster $k$ and its parameters $\vgamma, \mGamma$ are not $k$-specific, we can rewrite the term inside the sum as follows: suppose that $f_k(\vy_n^\sS, \vy_n^\sU) = \PHI(\vy_n^\sS ; \vmu_k, \mSigma_k) \cdot \PHI(\vy_n^\sU ; \va + \vy_n^\sR \mBeta, \mOmega)$. This is the density of $(\vy_n^\sS, \vy_n^\sU)$ given cluster $k$.
		Then, \Cref{eq:marginal_complete_sruw_i} becomes:
		\begin{equation*}
			f(\vy_n ; \sV, \vtheta) = \left( \sum_{k=1}^K \pi_k f_k(\vy_n^\sS, \vy_n^\sU) \right) \cdot f_{\indep}(\vy_n^\sW ; \vgamma, \mGamma)
		\end{equation*}
		
		The term $\PHI(\vy_n^\sW ; \vgamma, \mGamma)$ factors out of the sum over $k$ because $\vgamma$ and $\mGamma$ are not indexed by $k$.
		
		Let $\vy_n = (\vy_n^o, \vy_n^m)$ be the partition of $\vy_n$ into observed and missing parts for observation $n$. The missing parts are $\vy_n^m = (\vy_n^{\sS \cap \sM_n}, \vy_n^{\sU \cap \sM_n}, \vy_n^{\sW \cap \sM_n})$, where $\sM_n$ denotes the set of missing variable indices for observation $n$.
		The observed-data density for observation $n$ is obtained by integrating $f(\vy_n ; \sV, \vtheta)$ over $\vy_n^m$:
		\begin{align*}
			f(\vy_n^o ; \sV, \vtheta) &= \int f(\vy_n^o, \vy_n^m ; \sV, \vtheta) \, d\vy_n^m \\
			&= \int \Big [ ( \sum_{k=1}^K \pi_k f_k(\vy_n^{\sS \cap \sO_n}, \vy_n^{\sS \cap \sM_n}, \vy_n^{\sU \cap \sO_n}, \vy_n^{\sU \cap \sM_n}) ) \\
			& \quad \times f_{\indep}(\vy_n^{\sW \cap \sO_n}, \vy_n^{\sW \cap \sM_n} ; \vgamma, \mGamma) \Big ] \, d\vy_n^{\sS \cap \sM_n} \, d\vy_n^{\sU \cap \sM_n} \, d\vy_n^{\sW \cap \sM_n}
		\end{align*}
		where $\sO_n$ denotes the set of observed variable indices for observation $n$.
		We can swap the sum and the integral:
		\begin{align*}
			f(\vy_n^o ; \sV, \vtheta) &= \sum_{k=1}^K \pi_k \int \Big [ f_k(\vy_n^{\sS \cap \sO_n}, \vy_n^{\sS \cap \sM_n}, \vy_n^{\sU \cap \sO_n}, \vy_n^{\sU \cap \sM_n})\\
			& \quad \times f_{\indep}(\vy_n^{\sW \cap \sO_n}, \vy_n^{\sW \cap \sM_n} ; \vgamma, \mGamma) \Big] \, d\vy_n^{\sS \cap \sM_n} \, d\vy_n^{\sU \cap \sM_n} \, d\vy_n^{\sW \cap \sM_n}
		\end{align*}
		Since $f_k(\cdot)$ only involves variables in $\sS$ and $\sU$, and $f_{\clust}(\cdot ; \vgamma, \mGamma)$ only involves variables in $\sW$, and these sets are disjoint, the integration can be separated:
		\begin{align*}
			f(\vy_n^o ; \sV, \vtheta) = \sum_{k=1}^K \pi_k & \left[ \int f_k(\vy_n^{\sS \cap \sO_n}, \vy_n^{\sS \cap \sM_n}, \vy_n^{\sU \cap \sO_n}, \vy_n^{\sU \cap \sM_n}) \, d\vy_n^{\sS \cap \sM_n} \, d\vy_n^{\sU \cap \sM_n} \right] \\
			& \quad \times \left[ \int f_{\indep}(\vy_n^{\sW \cap \sO_n}, \vy_n^{\sW \cap \sM_n} ; \vgamma, \mGamma) \, d\vy_n^{\sW \cap \sM_n} \right]
		\end{align*}
		Let $f_k(\vy_n^{(\sS,\sU) \cap \sO_n}) = \int f_k(\vy_n^{\sS \cap \sO_n}, \vy_n^{\sS \cap \sM_n}, \vy_n^{\sU \cap \sO_n}, \vy_n^{\sU \cap \sM_n}) \, d\vy_n^{\sS \cap \sM_n} \, d\vy_n^{\sU \cap \sM_n}$. This is the marginal density of the observed parts of $(\sS, \sU)$ variables for component $k$. It is Gaussian, $\PHI(\vy_n^{(\sS,\sU) \cap \sO_n} ; \vnu_{k,o}^{(\sS,\sU)(n)}, \mDelta_{k,oo}^{(\sS,\sU)(n)})$, where these parameters are derived from the complete-data $\sS\sR$ parameters for the $(\sS,\sU)$ part.
		Let $\PHI(\vy_n^{\sW \cap \sO_n} ; \vgamma_o^{(n)}, \mGamma_{oo}^{(n)}) = \int \PHI(\vy_n^{\sW \cap \sO_n}, \vy_n^{\sW \cap \sM_n} ; \vgamma, \mGamma) \, d\vy_n^{\sW \cap \sM_n}$. This is the marginal density of the observed parts of $\sW$ variables.
		Then,
		\begin{equation} \label{eq:almost_there_sruw}
			f(\vy_n^o ; \sV, \vtheta) = \left( \sum_{k=1}^K \pi_k f_{\clust}(\vy_n^{(\sS,\sU) \cap \sO_n} ; \vnu_{k,o}^{(\sS,\sU)(n)}, \mDelta_{k,oo}^{(\sS,\sU)(n)}) \right) \cdot f_{\indep}(\vy_n^{\sW \cap \sO_n} ; \vgamma_o^{(n)}, \mGamma_{oo}^{(n)})
		\end{equation}
		
		From \Cref{eq:almost_there_sruw}, we can get
		\begin{equation*}
			\prod_{n=1}^{N} \left( \sum_{k=1}^{K} \pi_k \, f_{\clust}\left( \vy_{n}^{o} ; \tilde{\vnu}_{k, o}^{(n)},  \tilde{\mDelta}_{k, oo}^{(n)} \right) \right)
		\end{equation*}
		by allowing $f_{\indep}(\vy_n^{\sW \cap \sO_n} ; \vgamma_o^{(n)}, \mGamma_{oo}^{(n)})$ to be ``absorbed'' into the sum over $k$.
		So, we can write:
		\begin{align*}
			f(\vy_n^o ; \sV, \vtheta) &= \sum_{k=1}^K \left( \pi_k f_{\clust}(\vy_n^{(\sS,\sU) \cap \sO_n} ; \vnu_{k,o}^{(\sS,\sU)(n)}, \mDelta_{k,oo}^{(\sS,\sU)(n)}) \cdot f_{\indep}(\vy_n^{\sW \cap \sO_n} ; \vgamma_o^{(n)}, \mGamma_{oo}^{(n)}) \right)
		\end{align*}
		Since the variables in $(\sS,\sU)$ and $\sW$ are conditionally independent given $k$, the product of their marginal observed densities is the marginal observed density of their union.
		Let $\vy_n^o = (\vy_n^{(\sS,\sU) \cap \sO_n}, \vy_n^{\sW \cap \sO_n})$.
		Let $\tilde{\vnu}_{k,o}^{(n)} = \left( (\vnu_{k,o}^{(\sS,\sU)(n)})^T, (\vgamma_o^{(n)})^T \right)^T$.
		Let $\tilde{\mDelta}_{k,oo}^{(n)} = \begin{pmatrix} \mDelta_{k,oo}^{(\sS,\sU)(n)} & \vzero \\ \vzero & \mGamma_{oo}^{(n)} \end{pmatrix}$.
		Then, $f_{\clust}(\vy_n^{(\sS,\sU) \cap \sO_n} ; \vnu_{k,o}^{(\sS,\sU)(n)}, \mDelta_{k,oo}^{(\sS,\sU)(n)}) \cdot f_{\indep}(\vy_n^{\sW \cap \sO_n} ; \vgamma_o^{(n)}, \mGamma_{oo}^{(n)}) = f_{\clust}(\vy_n^o ; \tilde{\vnu}_{k,o}^{(n)}, \tilde{\mDelta}_{k,oo}^{(n)})$.
		
		Therefore, the observed-data density for a single observation $n$ is:
		\begin{equation} \label{eq:observed_likelihood_single_n_sruw}
			f(\vy_n^o ; \sV, \vtheta) = \sum_{k=1}^K \pi_k f_{\clust}(\vy_n^o ; \tilde{\vnu}_{k,o}^{(n)}, \tilde{\mDelta}_{k,oo}^{(n)})
		\end{equation}
		And for $N$ i.i.d. observations, the total observed-data likelihood is:
		\begin{equation} \label{eq:final_observed_likelihood_sruw}
			\prod_{n=1}^N f(\vy_n^o ; K, m, r, l, \sV, \vtheta) = \prod_{n=1}^N \left( \sum_{k=1}^K \pi_k f_{\clust}(\vy_n^o ; \tilde{\vnu}_{k,o}^{(n)}, \tilde{\mDelta}_{k,oo}^{(n)}) \right)
		\end{equation}
	\end{proof}
	
	When the parameters meet some distinct properties, $\sS\sR\sU\sW$ under complete-data can achieve identifiability, as shown in Theorem \ref{theorem_complete_sruw_iden}.
	\begin{fact}[Theorem 1 in \cite{maugis2009variable_b}]
		\label{theorem_complete_sruw_iden}
		Let $\mTheta_{(K,m,r,l,\sV)}$ be a subset of the parameter set $\Upsilon_{(K,m,r,l,\sV)}$ such that elements $\vtheta = (\valpha, \va, \mBeta, \mOmega, \vgamma, \mGamma)$ 
		\begin{itemize}
			\item contain distinct couples $(\mu_k, \Sigma_k)$ fulfilling $\forall s \subseteq \sS, \exists(k,k'), 1 \leq k < k' \leq K;$
			\begin{equation}
				\label{eq:distinct_components_appendix}
				\mu_{k,\bar{s}|s} \neq \mu_{k',\bar{s}|s} \text{ or } \Sigma_{k,\bar{s}|s} \neq \Sigma_{k',\bar{s}|s} \text{ or } \Sigma_{k,\bar{s}\bar{s}|s} \neq \Sigma_{k',\bar{s}\bar{s}|s}
			\end{equation}
			where $\bar{s}$ denotes the complement in $\sS$ of any nonempty subset $s$ of $\sS$
			
			\item if $\sU \neq \emptyset$,
			\begin{itemize}
				\item for all variables $j$ of $\sR$, there exists a variable $u$ of $\sU$ such that the restriction $\vbeta_{uj}$ of the regression coefficient matrix $\vbeta$ associated to $j$ and $u$ is not equal to zero.
				\item for all variables $u$ of $\sU$, there exists a variable $j$ of $\sR$ such that $\vbeta_{uj} \neq 0$.
			\end{itemize}
			
			\item parameters $\Omega$ and $\tau$ exactly respect the forms $r$ and $l$ respectively. They are both diagonal matrices with at least two different eigenvalues if $r=[LB]$ and $l=[LB]$ and $\Omega$ has at least a non-zero entry outside the main diagonal if $r=[LC]$.
		\end{itemize}
		
		Let $(K,m,r,l,\sV)$ and $(K^*,m^*,r^*,l^*,\sV^*)$ be two models. If there exist $\vtheta \in \mTheta_{(K,m,r,l,\sV)}$ and $\vtheta^* \in \mTheta_{(K^*,m^*,r^*,l^*,\sV^*)}$ such that
		
		\[f(\cdot|K,m,r,l,\sV,\vtheta) = f(\cdot|K^*,m^*,r^*,l^*,\sV^*,\vtheta^*)\]
		
		then $(K,m,r,l,\sV) = (K^*,m^*,r^*,l^*,\sV^*)$ and $\vtheta = \vtheta^*$ (up to a permutation of mixture components).
	\end{fact}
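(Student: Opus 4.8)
The plan is to lift the identity of densities to the \emph{global} Gaussian-mixture representation of the $\sS\sR\sU\sW$ model (\cref{def:sruw}) and then to ``peel off'' the three functional blocks $\sW$, $\sU$ and $\sS$ one at a time. Suppose $f(\cdot\mid K,m,r,l,\sV,\vtheta)=f(\cdot\mid K^*,m^*,r^*,l^*,\sV^*,\vtheta^*)$. By \cref{def:sruw} both sides are finite mixtures of nondegenerate multivariate Gaussians on $\sRe^D$, of the form $\sum_{k=1}^{K}\pi_k\,\PHI(\cdot\,;\tilde\vnu_k,\tilde\mDelta_k)$ and its starred analogue. Since the $\sS$-block of $\tilde\vnu_k$ (resp.\ $\tilde\mDelta_k$) is exactly $\vmu_k$ (resp.\ $\mSigma_k$) and the couples $(\vmu_k,\mSigma_k)$ are assumed pairwise distinct in condition~(i), the $K$ global components are pairwise distinct; the classical identifiability of finite mixtures of multivariate Gaussians then forces $K=K^*$ and, after a relabelling of components that we fix once and for all, $\pi_k=\pi_k^*$ and $(\tilde\vnu_k,\tilde\mDelta_k)=(\tilde\vnu_k^*,\tilde\mDelta_k^*)$ for every $k$. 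All the remaining work is to read the structure $\sV$ and the parameters $\vtheta$ uniquely off these global quantities.

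Next I would recover $\sW$. By construction, $\sW$ is characterised intrinsically as the \emph{maximal} index set $A\subseteq[D]$ whose coordinates are jointly independent of those in $A^c$ and for which $\tilde\vnu_{k,A}$ and $\tilde\mDelta_{k,AA}$ do not depend on $k$. One verifies this maximal set equals $\sW$: any nonempty subset of $\sS$ fails component-homogeneity (condition~(i), applied with the complementary conditioning subset, forces some pair of components to differ on it), and any $\sU$-coordinate fails independence from $A^c$ (condition~(ii) gives it a nonzero regression coefficient on some $j\in\sR$, hence, since $\mSigma_{k,\sR\sR}\succ0$, a nonzero covariance with a coordinate of $\sS\subseteq A^c$). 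Conversely $\sW$ itself is independent of its complement and component-homogeneous, so the maximal such set is exactly $\sW$, after which $\vgamma=\tilde\vnu_{k,\sW}$ and $\mGamma=\tilde\mDelta_{k,\sW\sW}$ are determined.

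On the remaining coordinates $\sS\cup\sU$ the model is precisely an $\sS\sR$ model (\cref{def:sr}) with relevant set $\sS$, irrelevant set $\sU$ and regressor set $\sR$, so I would invoke the identifiability of the $\sS\sR$ model from \cite{maugis2009variable_a} (valid under conditions~(i)--(ii)) to recover the partition $(\sS,\sR,\sU)$, the GMM parameters $\valpha$ on $\vy^\sS$ (again by Gaussian-mixture identifiability), and the regression parameters $(\va,\mBeta,\mOmega)$ (by identifiability of the multivariate Gaussian linear model for $\vy^\sU\mid\vy^\sR$). Finally, with the full partition $\sV$ and all numerical parameters fixed, the structural labels $m,r,l$ are read off $\mSigma_k,\mOmega,\mGamma$: condition~(iii) ensures each of these matrices displays the distinguishing feature of its nominal form (e.g.\ diagonal with at least two distinct eigenvalues for $[LB]$, a nonzero off-diagonal entry for $[LC]$), so distinct admissible forms cannot yield the same matrix. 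This gives $(K,m,r,l,\sV)=(K^*,m^*,r^*,l^*,\sV^*)$ and $\vtheta=\vtheta^*$ up to permutation of components.

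The main obstacle is the second step, namely pinning down the four-way partition from distribution-level information alone, i.e.\ excluding every conceivable ``role swap'' of a coordinate among $\sS$, $\sR$, $\sU$ and $\sW$. Conditions~(i)--(iii) are designed exactly so that each such swap violates one of them, but a fully rigorous argument requires a careful case analysis (one case per ordered pair of roles and per migrating coordinate), each case exhibiting the violated condition; by contrast, once the partition is fixed the remaining identifiability statements are the standard mixture, regression and single-Gaussian results together with a direct inspection of the covariance forms.
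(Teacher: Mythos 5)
First, note that the paper does not prove this statement at all: it is imported verbatim as a known result (Theorem~1 of Maugis et al.\ \cite{maugis2009variable_b}) and is only \emph{used} later, so the closest in-paper argument to compare with is the proof of \cref{theorem_identifiability_SRUW}, which follows the original strategy. Your route is genuinely different from that strategy. The original (and the paper's Theorem~\ref{theorem_identifiability_SRUW} proof) collapses $\sU\cup\sW$ into a single block $\sS^c$ of an equivalent $\sS\sR$ model, with $\check{\mBeta}=\begin{pmatrix}\mBeta\\ \vzero\end{pmatrix}$ and $\check{\mOmega}=\mathrm{blockdiag}(\mOmega,\mGamma)$, applies the $\sS\sR$ identifiability theorem of \cite{maugis2009variable_a} once on the full $D$-dimensional density to get $K$, $m$, $\sS$, $\sR$ and the collapsed parameters, and then separates $\sU$ from $\sW$ by a one-line contradiction: a coordinate in $\sU^*\cap\sW$ would need its $\check{\mBeta}$-column to be simultaneously zero and nonzero, by condition~(ii). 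You instead first recover $\sW$ intrinsically (maximal index set that is independent of its complement with component-homogeneous parameters) and only then invoke $\sS\sR$ identifiability on the remaining coordinates. Your outline is correct as far as it goes: the global-mixture step is sound (pairwise distinctness of $(\vmu_k,\mSigma_k)$ plus Yakowitz--Spragins identifiability of finite Gaussian mixtures), the exclusion of $\sS$-coordinates from the candidate set uses condition~(i) with $s=\sS\setminus t$ exactly as intended (including the degenerate instance where $t=\sS$, which reduces to plain distinctness of components), and the exclusion of $\sU$-coordinates uses nonsingularity of $\mSigma_k$ together with condition~(ii). But this $\sW$-recovery step is precisely where you concede that a full case analysis is still needed, and it is the part the collapsed-$\sS\sR$ route renders unnecessary: there the $\sS$ versus $\sS^c$ split is delivered wholesale by the $\sS\sR$ theorem, and only the $\sU/\sW$ split inside $\sS^c$ remains, which condition~(ii) settles immediately. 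So your approach buys a more ``intrinsic'' distributional characterization of $\sW$ at the cost of redoing, by hand, separation arguments that the cited $\sS\sR$ result already encapsulates; if you intend to complete the proof, either carry out that maximality case analysis rigorously or switch to the block-diagonal collapse, which is shorter and is the argument actually reused in the paper's proof of \cref{theorem_identifiability_SRUW}. Also make sure the final read-off of $(m,r,l)$ cites condition~(iii) exactly as stated (e.g.\ the $[LB]$ versus $[LC]$ distinction for $\mOmega$), since without the ``at least two distinct eigenvalues / nonzero off-diagonal entry'' clauses the form labels are not determined by the matrices alone.
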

	
	\section{Regular Assumptions for Main Theoretical Results}\label{section_regular_assumptions}
	
	For a fixed tuple of $(K_0, m_0, r_0, l_0, \sV_0)$, we can write $f(\vy;\vtheta)$ instead of $f(\vy;K_0,m_0,\sS_0,\sR_0,\vtheta)$ or $f(\vy;\sV,\vtheta)$ for short notation.
	
	\begin{assumption}[$\sS\sR$]
		\label{assum:sr_h1}
		There exists a unique $(K_0,m_0,\sS_0,\sR_0)$ such that 
		\[
		h=f\bigl(\cdot; \vtheta^*_{(K_0,m_0,\sS_0,\sR_0)}\bigr)
		\]
		for some parameter value $\vtheta^*$, and the pair $(K_0,m_0)$ is assumed known. (In the following, we omit explicit notation of the dependence on $(K_0,m_0)$ for brevity.)
	\end{assumption}
	\begin{assumption}
		\label{assum:sr_h2}
		The vectors $\vtheta^*_{(\sS,\sR)}$ and $\hat{\vtheta}_{(\sS,\sR)}$ belong to a compact subset 
		\[
		\mTheta'_{(\sS,\sR)}\subset \mTheta_{(\sS,\sR)}
		\]
		defined by
		\[
		\mTheta'_{(\sS,\sR)} = \gP_{K_0-1} \times \gB(\eta,|\sS|)^{K_0} \times \gD^{K_0}_{|\sS|} \times \gB(\rho,|\sS^c|) \times \gB(\rho,|\sR|,|\sS^c|) \times \gD_{|\sS^c|}
		\]
		where the components are defined as follows:
		\begin{itemize}
			\item $\gP_{K-1} = \{(\pi_1,\dots,\pi_K)\in [0,1]^K: \sum_{k=1}^K \pi_k=1\}$ (set of proportions);
			\item $\gB(\eta,r) = \{ x\in\sRe^r:\; \|x\| \le \eta\}$ with $\|x\|=\sqrt{\sum_{i=1}^r x_i^2}$;
			\item $\gB(\rho,q,r) = \{ A\in\mathcal{M}_{q\times r}(\sRe):\; \vertiii{A}\le\rho\}$, with norm
			\[
			\vertiii{A}=\sup_{y\in\sRe^r,\, \|y\|=1}\|Ay\|;
			\]
			\item $\gD_r$ is the set of $r\times r$ positive definite matrices with eigenvalues in $[s_m,s_M]$ (with $0<s_m<s_M$)
		\end{itemize}
	\end{assumption}
	\begin{assumption}
		\label{assum:sr_h3}
		The optimal parameter $\vtheta^*_{(\sS_0,\sR_0)}$ is an interior point of $\mTheta'_{(\sS_0,\sR_0)}$.
	\end{assumption}
	
	The assumptions can be extended to the $\sS\sR\sU\sW$ framework. We define the optimal parameter as $\vtheta^*_{\sV}$, and the maximum likelihood estimator as $\hat{\vtheta}_{\sV}$. In this modified framework, the assumptions become:
	
	\begin{assumption}[$\sS\sR\sU\sW$]
		\label{assum:sruw_h1}
		There exists a unique $(K_0,m_0,r_0,l_0,\sV_0)$ such that
		\[
		h=f\bigl(\cdot;\vtheta^\star_{(K_0,m_0,r_0,l_0,\sV_0)}\bigr)
		\]
		for some $\vtheta^\star$, and the model $(K_0,m_0,r_0,l_0,\sV_0)$ is assumed known.
	\end{assumption}
	\begin{assumption}
		\label{assum:sruw_h2}
		Vectors $\vtheta^*_{\sV}$ and $\hat{\vtheta}_{\sV}$ belong to the compact subspace $\mTheta'_{\sV}$ of the compact set $\mTheta_{\sV}$, defined by:
		\begin{align*}
			\mTheta'_{\sV} &= \gP_{K_0-1} \times \gB(\eta,|\sS|)^{K_0} \times \gD^{K_0}_{|\sS|} \times \gB(\rho,|\sU|)
			\times \gB(\rho,|\sR|,|\sU|) \times \gD_{|\sU|}
			\times \gB(\eta,|\sW|) \times \gD_{|\sW|}.
		\end{align*}
	\end{assumption}
	\begin{assumption}
		\label{assum:sruw_h3}
		The optimal parameter $\vtheta^*_{\sV}$ is an interior point of $\mTheta'_{\sV}$.
	\end{assumption}
	%%%
	\section{Proof of Main Results}\label{sec_main_results}
	%%%
	Throughout the proofs, we denote the full parameter vector of the $K$-component Gaussian Mixture Model (GMM) by $\valpha = (\vpi, \{\vmu_k\}_{k=1}^K, \{\mPsi_k\}_{k=1}^K)$, where $\vpi$ are the mixing proportions, $\vmu_k \in \R^D$ are the component means for $D$ variables, and $\mPsi_k$ are the component precision matrices (inverse of covariance matrices $\mSigma_k$). The true parameter vector is denoted by $\valpha^*$. The negative single-observation log-likelihood is $\ellnone(\vy_n; \valpha) = -\ln \fclust(\vy_n; \valpha)$, and its sample average is $\ell_N(\valpha)$. The score function components are $S_j(\vy; \valpha^*) = - \frac{\partial \ellnone(\vy; \valpha)}{\partial \alpha_j} \at{\valpha=\valpha^*}$. The Fisher Information Matrix (FIM) is $\mI(\valpha^*)$, and the empirical Hessian is $\mH_N(\valpha)$. The set of true non-zero penalized parameters in $\valpha^*$ is $\sS_0$, with cardinality $s_0$. The error vector is $\mDelta = \widehat{\valpha} - \valpha^*$. For ranking, $\gO_K(j)$ is the ranking score for variable $j$ based on a grid of regularization parameters $\gG_\lambda$. Standard  notations like $\mathbb{E}[\cdot]$ for expectation, $\|\cdot\|_1, \|\cdot\|_2, \|\cdot\|_\infty$ for $L_1, L_2, L_\infty$ norms. Other abbreviations include KKT (Karush-Kuhn-Tucker), RSC (Restricted Strong Convexity), and SRUW (for the variable role framework).
	\subsection{Proof of Theorem \ref{theorem_identifiability_SRUW}}
	%% Binh Ho + Tin
	{\bf Proof Sketch. } The proof involves two main steps. First, we establish that equality of observed-data likelihoods under MAR implies equality of the parameters of an equivalent full-data GMM representation for the $\sS\sR\sU\sW$ model. Second, we apply the identifiability arguments for the complete-data $\sS\sR\sU\sW$ model similar to \cite{maugis2009variable_b}, which itself relies on the identifiability of the $\sS\sR$ model \cite{maugis2009variable_a}
	
	\begin{proof}
		Let $\gM_1 = (K,m,r,l,\sV,\vtheta)$ and $\gM_2 = (K^*,m^*,r^*,l^*,\sV^*,\vtheta^*)$ be two $\sS\sR\sU\sW$ models with $\gM_1, \gM_2 \in \mTheta_{(K, m, r, l, \sV)}$. The complete-data density for an observation $\vy_n$ under $\gM_1$ and \Cref{eq:observed_likelihood_single_n_sruw} is given by:
		\begin{equation*}
			f(\vy_n^o ; K,m,r,l,\sV,\vtheta, \mM_n) = \sum_{k=1}^K p_k f_{\clust}(\vy_n^o ; \tilde{\vnu}_{k,o}^{(n)}, \tilde{\mDelta}_{k,oo}^{(n)}),
		\end{equation*}
		Analogously, we can derive the same expression for $\gM_2$. Consider the specific missingness pattern $\mM_0$ where no data are missing for observation $n$. For this pattern, $\vy_n^o = \vy_n$, $\tilde{\vnu}_{k,o}^{(n)} = \tilde{\vnu}_k$, and $\tilde{\mDelta}_{k,oo}^{(n)} = \tilde{\mDelta}_k$. Under the premise of the theorem, if the equality of observed-data likelihoods for pattern $\mM_0$, we get:
		\begin{equation*}
			\label{eq:complete_density_equality_final}
			\sum_{k=1}^K \pi_k f_{\clust}(\vy_n ; \tilde{\vnu}_k, \tilde{\mDelta}_k) = \sum_{k'=1}^{K^*} p_{k'}^* f_{\clust}(\vy_n ; \tilde{\vnu}_{k'}^*, \tilde{\mDelta}_{k'}^*), \quad \forall \vy_n \in \sRe^D.
		\end{equation*}
		From theorem 1 in \cite{maugis2009variable_b}, since $\sS\sR\sU\sW$ are identifiable under complete-data scenario this implies $K=K^*$, and up to a permutation of component labels:
		\begin{equation} \label{eq:full_gmm_params_equal_final}
			\pi_k = \pi_k^*, \quad \tilde{\vnu}_k = \tilde{\vnu}_k^*, \quad \text{and} \quad \tilde{\mDelta}_k = \tilde{\mDelta}_k^* \quad \text{for all } k=1, \dots, K.
		\end{equation}
		
		For any $\vy \in \sRe^D$:
		\begin{equation*}
			f_{\reg}(\vy^\sU ; r, \va + \vy^\sR \mBeta, \mOmega) f_{\indep}(\vy^\sW ; l, \vgamma, \mGamma) = f_{\clust}(\vy^{\sU \cup \sW} ; \check{\va} + \vy^\sR \check{\mBeta}, \check{\mOmega})
		\end{equation*}
		where $\sS^c = \sU \cup \sW$, $\check{\va} = ((\va)^T, (\vgamma)^T)^T$, $\check{\mBeta} = \begin{pmatrix} \mBeta \\ \vzero \end{pmatrix}$ (with $|\vzero| = |\sW| \times |\sR|$), and $\check{\mOmega} = \text{blockdiag}(\mOmega, \mGamma)$. Here, $\text{blockdiag}(\mOmega, \mGamma)$ is the block diagonal matrix created by aligning the input matrices $\mOmega, \mGamma$. The $\sS\sR\sU\sW$ density can be expressed as an equivalent $\sS\sR$ model density:
		\begin{align*}
			f(\vy ; K,m,r,l,\sV,\vtheta) &= f_{\clust}(\vy^\sS ; K,m,\valpha) f_{\clust}(\vy^{\sS^c} ; \check{\va} + \vy^\sR \check{\mBeta}, \check{\mOmega}) \\
			&= f_{\text{$\sS\sR$-eq}}(\vy ; K,m,\sS,\sR,\vtheta_{\text{$\sS\sR$-eq}})
		\end{align*}
		where $\vtheta_{\text{$\sS\sR$-eq}} = (\valpha, \check{\va}, \check{\mBeta}, \check{\mOmega})$.
		From \Cref{eq:complete_density_equality_final}, the equivalent $\sS\sR$ densities for the two models, where the second model is denoted with $^*$) must be equal:
		\begin{equation*}
			f_{\text{$\sS\sR$-eq}}(\vy ; K,m,\sS,\sR,\vtheta_{\text{$\sS\sR$-eq}}) = f_{\text{$\sS\sR$-eq}}(\vy ; K^*,m^*,\sS^*,\sR^*,\vtheta_{\text{$\sS\sR$-eq}}^*)
		\end{equation*}
		By the identifiability of the $\sS\sR$ model \cite{maugis2009variable_a} and three conditions in Theorem \ref{theorem_complete_sruw_iden} , we deduce: $K=K^*$, $m=m^*$, $\sS=\sS^*$, $\sR=\sR^*$, $\valpha=\valpha^*$, $\check{\va}=\check{\va}^*$, $\check{\mBeta}=\check{\mBeta}^*$, and $\check{\mOmega}=\check{\mOmega}^*$.
		
		Since $\sS=\sS^*$ and  $\sS^c = \sS^{*c}$, supposed that, for contradiction, that $\sU^* \cap \sW \neq \emptyset$. Let $j \in \sU^* \cap \sW$. Since $j \in \sW$, for all $q \in \sU^*$ the rows of $(\check{\mBeta})_{qj}$ are $\vzero^T$. Additionally, since $j \in \sU^*$, there exists some $q \in \sR* = \sR$ such that $(\check{\mBeta}^*)_{qj} \neq \vzero^T$. However, we have established $\check{\mBeta} = \check{\mBeta}^*$. This is a contradiction: $(\check{\mBeta})_{qj}$ must be simultaneously $\vzero^T$ and non-zero for all $q \in sR$. Therefore, $\sU^* \cap \sW = \emptyset$. By symmetry, $\sU \cap \sW^* = \emptyset$.
		Given $\sU \cup \sW = \sU^* \cup \sW^*$, the disjoint conditions imply $\sU \subseteq \sU^*$ and $\sU^* \subseteq \sU$, hence $\sU = \sU^*$. Consequently, $\sW = \sW^*$. This establishes $\sV=\sV^*$.
		The parameter equalities $\va=\va^*, \mBeta=\mBeta^*, \vgamma=\vgamma^*, \mOmega=\mOmega^*, \mGamma=\mGamma^*$ and $r=r^*, l=l^*$ then follow directly from comparing the components of $\check{\va}=\check{\va}^*$, $\check{\mBeta}=\check{\mBeta}^*$, $\check{\mOmega}=\check{\mOmega}^*$, condition 3 in Theorem \ref{theorem_complete_sruw_iden}. Thus, the SRUW parameters under MAR are identifiable.
		
		Now, we extend this to the full model which includes the MNARz mechanism parameters $\vpsi = \{\evpsi_k\}_{k=1}^K$. The full set of parameters for the MNARz-SRUW model is $(\vtheta, \vpsi)$. Suppose that the two models $\gM_A$ and $\gM_B$ produce the same observed-data likelihood for all observed data $(\vy_n^o, \vc_n)$:
		\[ L_{\text{MNARz-SRUW}}(\vy_n^o, \vc_n; \vtheta_A^{\text{data}}, \vpsi_A^{\text{MNARz}}) = L_{\text{MNARz-SRUW}}(\vy_n^o, \vc_n; \vtheta_B^{\text{data}}, \vpsi_B^{\text{MNARz}}) \quad \forall (\vy_n^o, \vc_n), \]
		By Theorem \ref{thm:MAR_MNARz_SRUW}, the equality of observed-data likelihoods under MNARz-SRUW:
		\[ L_{\text{MNARz-SRUW}}(\vy_n^o, \vc_n; \vtheta_A^{\text{data}}, \vpsi_A^{\text{MNARz}}) = L_{\text{MNARz-SRUW}}(\vy_n^o, \vc_n; \vtheta_B^{\text{data}}, \vpsi_B^{\text{MNARz}}) \]
		for all $(\vy_n^o, \vc_n)$ implies the equality of the likelihoods for the augmented observed data $\tilde{\vy}_n^o = (\vy_n^o, \vc_n)$ under the MAR interpretation where $f(\vc_n | z_{nk}=1; \evpsi_k)$ is treated as part of the component density:
		\begin{align}
			\label{eq:aug_lik_equality}
			&\sum_{k=1}^{K_A} (\evpi_A)_k f_k(\vy_n^o | z_{nk}=1; (\evtheta_A^{\text{SRUW}})_k) f(\vc_n | z_{nk}=1; (\evpsi_A)_k) \nonumber\\
			&\quad =
			\sum_{k'=1}^{K_B} (\evpi_B)_{k'} f_{k'}(\vy_n^o | z_{nk'}=1; (\evtheta_B^{\text{SRUW}})_{k'}) f(\vc_n | z_{nk'}=1; (\evpsi_B)_{k'})
		\end{align}
		for all $(\vy_n^o, \vc_n)$. Let $g_k(\vy_n^o, \vc_n; (\evtheta_A)_k, (\evpsi_A)_k) = f_k(\vy_n^o | z_{nk}=1; (\evtheta_A^{\text{SRUW}})_k) f(\vc_n | z_{nk}=1; (\evpsi_A)_k)$. This means two finite mixture models on the augmented data $(\vy_n^o, \vc_n)$ are identical. If $\sum_{k=1}^{K_A} (\evpi_A)_k g_k(\cdot) = \sum_{k'=1}^{K_B} (\evpi_B)_{k'} g'_{k'}(\cdot)$, and the family of component densities $\{g_k\}$ (and $\{g'_{k'}\}$) is identifiable and satisfies certain linear independence conditions (\cite{teicher1963}, \cite{allman2009}), then $K_A=K_B=K$, and after a permutation of labels, $(\evpi_A)_k = (\evpi_B)_k$ and $g_k(\cdot) = g'_k(\cdot)$ for all $k$. Then, we need to ensure the family of component densities $h_k(\vy_n^o, \vc_n; \evtheta_k, \evpsi_k) = f_k(\vy_n^o | z_{nk}=1; \evtheta_k) f(\vc_n | z_{nk}=1; \evpsi_k)$ meets such conditions. Hence, we further assume:
		\begin{enumerate}
			\item The parameter sets $\vtheta^{\text{SRUW}}$ and $\vpsi^{\text{MNARz}}$ are functionally independent
			\item The support of $\vy_n$ and $\vc_n$ is rich enough such that equalities holding for all $(\vy_n, \vc_n)$ imply functional equalities. This implies there is \textit{sufficient variability} of $\vy_n$ and $\vc_n$. 
		\end{enumerate}
		These assumptions hold because $\evpsi$ parametrizes only $f(\vc_n | \vz)$ and $\theta$ only $f(\vy_n^o |\vz)$; hence there is no functional overlap. Applying those assumptions to \Cref{eq:aug_lik_equality}, we conclude that $K_A=K_B=K$, and up to a permutation of labels:
		\begin{equation} \label{eq:component_product_equality}
			(\evpi_A)_k = (\evpi_B)_k \quad \text{for all } k,
		\end{equation}
		and
		\begin{equation}
			\label{eq:component_product_equality_densities}
			\begin{gathered}
				f_k(\vy_n^o | z_{nk}=1; (\evtheta_A^{\text{SRUW}})_k) f(\vc_n | z_{nk}=1; (\evpsi_A)_k) \\
				=\\
				f_k(\vy_n^o | z_{nk}=1; (\evtheta_B^{\text{SRUW}})_k) f(\vc_n | z_{nk}=1; (\evpsi_B)_k) \\
			\end{gathered}
		\end{equation}
		for all $(\vy_n^o, \vc_n)$ and for each $k=1,\dots,K$.
		
		% Note: add more clarification why y^n_o and c_n can vary independently
		Now we need to deduce that $(\evtheta_A^{\text{SRUW}})_k = (\evtheta_B^{\text{SRUW}})_k$ and $(\evpsi_A)_k = (\evpsi_B)_k$.
		\Cref{eq:component_product_equality_densities} can be written as:
		\[ A_1(\vy_n^o) B_1(\vc_n) = A_2(\vy_n^o) B_2(\vc_n) \]
		where $A_1(\vy_n^o) = f_k(\vy_n^o | z_{nk}=1; (\evtheta_A^{\text{SRUW}})_k)$, $B_1(\vc_n) = f(\vc_n | z_{nk}=1; (\evpsi_A)_k)$. By defined assumptions, this equality holding for all $\vy_n^o, \vc_n$ implies a relationship between these functions. Since $\vy_n^o$ and $\vc_n$ can vary independently \footnote{Since, for each component $k$, the joint component density factorizes as $f_k(\vy_n^o | z_{nk}=1;\evtheta_k)\,f(\vc_n | z_{nk}=1;\evpsi_k)$ and both factors are positive on sets of positive measure, pick $c_0$ in that common positive-measure set; integrating $A_1(\vy_n^o)B_1(c_0)=A_2(\vy_n^o)B_2(c_0)$ over $\vy_n^o$ gives $B_1(c_0)=B_2(c_0)$, hence $A_1=A_2$, and then $B_1=B_2$.} 
		and the parameters $(\evtheta_A)_k, (\evpsi_A)_k$ are functionally independent, if $\int A_1 B_1 d\vy^o d\vc = \int A_2 B_2 d\vy^o d\vc = 1$, and $A_1 B_1 = A_2 B_2$ pointwise, and assuming $A_1, B_1, A_2, B_2$ are non-zero almost everywhere on their support:
		Pick a $\vc_n^0$ such that $f(\vc_n^0 | k; (\evpsi_A)_k) \neq 0$ and $f(\vc_n^0 | k; (\evpsi_B)_k) \neq 0$. Then for this fixed $\vc_n^0$:
		\[ f_k(\vy_n^o | z_{nk}=1; (\evtheta_A^{\text{SRUW}})_k) \cdot C_A = f_k(\vy_n^o | z_{nk}=1; (\evtheta_B^{\text{SRUW}})_k) \cdot C_B \]
		where $C_A = f(\vc_n^0 | k; (\evpsi_A)_k)$ and $C_B = f(\vc_n^0 | k; (\evpsi_B)_k)$. Since both $f_k(\vy_n^o | \dots)$ are conditional densities , this implies $C_A=C_B$ and thus
		\[ f_k(\vy_n^o | z_{nk}=1; (\evtheta_A^{\text{SRUW}})_k) = f_k(\vy_n^o | z_{nk}=1; (\evtheta_B^{\text{SRUW}})_k) \quad \text{for all } \vy_n^o. \]
		By identifiability of SRUW under MAR, this implies that the structural SRUW parameters $(m_A, r_A, l_A, \sV_A)$ must equal $(m_B, r_B, l_B, \sV_B)$ and the parameters $(\evtheta_A^{\text{SRUW}})_k = (\evtheta_B^{\text{SRUW}})_k$.
		
		Since $C_A=C_B$ means $f(\vc_n^0 | k; (\evpsi_A)_k) = f(\vc_n^0 | k; (\evpsi_B)_k)$, and this must hold for all $\vc_n^0$ (by choosing different fixed $\vc_n^0$ or by varying $\vy_n^o$ first), this implies:
		\[ f(\vc_n | k; (\evpsi_A)_k) = f(\vc_n | k; (\evpsi_B)_k) \quad \text{for all } \vc_n. \]
		By idenitifiability of MNARz, this implies $(\evpsi_A)_k = (\evpsi_B)_k$.
		
		Thus, we have $K_A=K_B=K$, and up to a common permutation of labels for $k=1,\dots,K$:
		\begin{gather*}
			(\evpi_A)_k = (\evpi_B)_k \\
			(m_A, r_A, l_A, \sV_A) = (m_B, r_B, l_B, \sV_B)\\
			(\evtheta_A^{\text{SRUW}})_k = (\evtheta_B^{\text{SRUW}})_k\\
			(\evpsi_A)_k = (\evpsi_B)_k
		\end{gather*}
		This implies the full set of model specifications and parameters for $\gM_A$ and $\gM_B$ is identical. Therefore, the SRUW model under the specified MNARz mechanism is identifiable.
	\end{proof}
	
	\subsection{Proof of Theorem \ref{theorem_bic_consistency_SRUW}}
	%% Long + Binh Ho + Long
	{\bf Proof Sketch. } 
	
	Suppose that the observations are from a parametric model $\gP_\mTheta = \{f(\cdot; \vtheta): \vtheta \in \mTheta\}$, and assume that the distributions in $\gP_\mTheta$ are dominated by a common $\sigma$-finite measure $\upsilon$ with respect to which they have probability density functions $f(\cdot;\vtheta)$.
	
	Now, we need to prove the consistency of the sample KL, which will lead to the consistency of the BIC criterion in both the $\sS\sR$ and $\sS\sR\sU\sW$ models under the MAR mechanism. First, we prove the consistency of the sample KL in the $\sS\sR$ model:
	
	\begin{proposition}
		\label{prop:consistency_of_KL_ex1}
		Under Assumption \ref{assum:sr_h1} and \ref{assum:sr_h2}, for all $(\sS, \sR)$,
		\[
		\frac{1}{N} \sum_{n=1}^{N} \ln\left( \frac{h(\vy_n^o)}{f(\vy_n^o;\hat{\vtheta}_{(\sS,\sR)})} \right) \underset{n \to \infty}{\overset{P}{\to}}\KL\Bigl[h, f(\cdot ;\vtheta^*_{(\sS,\sR)})\Bigr].
		\]
	\end{proposition}
	
	\begin{proof}
		Let $\sO \subset \{1, \dots, D\}$ denote the indices of the observed components. 
		
		Recall that the full covariance matrix for the component $k$, $ \mDelta_k $, is built as:
		\[
		\emDelta_{k,jl} =
		\begin{cases}
			\Sigma_{k,jl}, & j, l \in \sS, \\
			(\mSigma_k \mLambda)_{jl}, & j \in \sS, \, l \in \sS^c, \\
			(\mLambda^T \mSigma_k)_{jl}, & j \in \sS^c, \, l \in \sS, \\
			(\mOmega + \mLambda^T \mSigma_k \mLambda)_{jl}, & j, l \in \sS^c.
		\end{cases}
		\]
		
		For indices $ j,l \in \sS $:
		As before,
		\[
		\emDelta_{k,jl} = \Sigma_{k,jl}.
		\]
		Thus, eigenvalues of $ \Delta_{k,\sS\sS} $ are in $ [s_m, s_M] $.
		
		For $ j \in \sS, \, l \in \sS^c $:
		\[
		\emDelta_{k,jl} = (\mSigma_k \mLambda)_{jl}.
		\]
		The norm of the product satisfies
		\[
		\norm{\mSigma_k \mLambda} \leq \norm{\mSigma_k} \cdot \norm{\mLambda}.
		\]
		Since $ \mSigma_k \in \gD_{\abs{\sS}} $, 
		\[
		\norm{\mSigma_k} = \lambda_{\max}(\mSigma_k) \leq s_M.
		\]
		Given $ \mBeta \in \mathcal{B}(\rho, \abs{\sR}, \abs{\sS^c}) $, 
		\[
		\norm{\mLambda} \leq \rho.
		\]
		Thus,
		\[
		\norm{\mSigma_k \mLambda} \leq s_M \rho.
		\]
		This bounds the norm of the $ (\sS\times \sS^c) $ block of $ \mDelta_k $.
		
		For $ j,l \in \sS^c $:
		\[
		\emDelta_{k,jl} = (\mOmega + \mLambda^T \mSigma_k \mLambda)_{jl}.
		\]
		We can bound the eigenvalues of this sum using the spectral norm:
		\[
		\norm{\mOmega + \mLambda^T \mSigma_k \mLambda} \leq \norm{\mOmega} + \norm{\mLambda^T \mSigma_k \mLambda}.
		\]
		Since $ \mOmega \in \gD_{\abs{\sS^c}} $,
		\[
		\norm{\mOmega} = \lambda_{\max}(\mOmega) \leq s_M.
		\]
		Next,
		\[
		\norm{\mLambda^T \mSigma_k \mLambda}
		\leq \norm{\mLambda^T} \cdot \norm{\mSigma_k} \cdot \norm{\mLambda}
		= \norm{\mLambda}^2 \cdot \norm{\mSigma_k}
		\leq \rho^2 s_M.
		\]
		Thus,
		\[
		\norm{\mOmega + \mLambda^T \mSigma_k \mLambda} \leq s_M + \rho^2 s_M = s_M(1 + \rho^2).
		\]
		Given the above bounds, there exist refined constants $ \tilde{s}_m > 0 $ and 
		\[
		\tilde{s}_M = \max\{ s_M(1 + \rho^2), \ s_M \rho, \ s_M \} = s_M(1 + \rho^2)
		\]
		such that
		\[
		\tilde{s}_m I \preceq \mDelta_k \preceq \tilde{s}_M I.
		\]
		
		Note that $ \tilde{s}_M $ will depend on the sizes of the blocks, and it is finite being dependent on $ s_M $, $ \rho $, and the structure of $ \mOmega $, $ \mBeta $, and $ \mSigma_k $.
		
		Since $ \mDelta_{k,oo} $ is a principal submatrix of $ \mDelta_k $ and $ \mDelta_k $ is a symmetric matrix, by the interlacing theorem:
		
		\[
		\tilde{s}_m \le \lambda_{\min}(\mDelta_{k,oo}) \le \lambda_{\max}(\mDelta_{k,oo}) \le \tilde{s}_M.
		\]
		Thus,
		\[
		\tilde{s}_m \, I_{\abs{\sO}} \preceq \mDelta_{k,oo} \preceq \tilde{s}_M \, I_{\abs{\sO}},
		\]
		and
		\[
		(\mDelta_{k,oo})^{-1} \preceq \frac{1}{\tilde{s}_m} \, I_{\abs{\sO}}.
		\]
		
		We are now moving to the main proof. By hypothesis, the true observed-data density is given by 
		\[
		h(\vy^o) = \sum_{k=1}^{K} \pi_k\, \PHI\Bigl( \vy^o;\vnu^*_{k,o}, \Delta^*_{k,oo} \Bigr)
		\]
		and if we can show that 
		\[
		\Ef{\vy^o}{|\ln h(\vy^o)|} < \infty.
		\]
		then by the LLN, 
		\[
		\frac{1}{N}\sum_{n=1}^{N} \ln\bigl( h(\vy_n^o) \bigr) \underset{n \to \infty}{\overset{P}{\to}}\mathbb{E}_{\vy^o}{\bigl[\ln h(\vy^o)\bigr]}.
		\]
		
		Moreover, the observed-data likelihood under the model is
		\[
		f(\vy^o;\vtheta) = \sum_{k=1}^{K} \pi_k\, \PHI\Bigl( \vy^o;\vnu_{k,o}, \mDelta_{k,oo} \Bigr).
		\]
		and we later show in Proposition \ref{prop:consistency_of_PDF_ex1} that
		\[
		\frac{1}{N}\sum_{n=1}^{N} \ln \bigl( f(\vy_n^o ;\hat{\vtheta}_{(\sS,\sR)}) \bigr) \underset{n \to \infty}{\overset{P}{\to}}\mathbb{E}_{\vy^o}{\bigl[\ln f(\vy^o ;\vtheta^*_{(\sS,\sR)})\bigr]}.
		\]
		
		To do so, we must verify that the class of functions 
		\[
		\cF_{(\sS,\sR)} = \Bigl\{ \vy^o \mapsto \ln f(\vy^o;\vtheta): \vtheta \in \mTheta'_{(\sS,\sR)} \Bigr\}
		\]
		satisfies the conditions for uniform convergence under Assumption \ref{assum:sr_h2}. In particular, by Assumption \ref{assum:sr_h2}, the parameter space $\mTheta'_{(\sS,\sR)}$ is compact, and for every $\vy^o \in \sRe^{\abs{\sO}}$ the mapping
		\[
		\vtheta \mapsto \ln f(\vy^o;\vtheta)
		\]
		is continuous. Next, we verify that there is an $h$-integrable envelope function $F \in \cF_{(\sS,\sR)}$. Recalling that for 
		$ \vy^o \in \sRe^{\abs{\sO}} $, the Gaussian component density is
		\[
		\PHI\Big(\vy^o;\vnu_{k, o}, \mDelta_{k,oo}\Big)
		= (2\pi)^{-\frac{\abs{\sO}}{2}} |\mDelta_{k,oo}|^{-\frac{1}{2}} 
		\exp\Big(-\tfrac{1}{2}(\vy^o - \vnu_{k, o})^T 
		\mDelta_{k,oo}^{-1} (\vy^o - \vnu_{k, o})\Big).
		\]
		We can derive the bounds for this as follows.
		
		For the upper bound, since $ \lambda_{\min}(\mDelta_{k,oo}) \ge \tilde{s}_m $,
		\[
		|\mDelta_{k,oo}|^{-\frac{1}{2}} \leq (\tilde{s}_m)^{-\frac{\abs{\sO}}{2}}.
		\]
		With $ \exp(-\cdot) \leq 1 $ and $\mDelta_{k,oo}$ is positive definite
		\[
		\PHI\Big(\vy^o;\vnu_{k, o}, \mDelta_{k,oo}\Big)
		\le (2\pi)^{-\frac{\abs{\sO}}{2}} (\tilde{s}_m)^{-\frac{\abs{\sO}}{2}}. 
		\]
		Summing over $ k $ and using $ \sum_{k=1}^K \pi_k = 1 $,
		\[
		f(\vy^o;\vtheta)
		\le (2\pi \tilde{s}_m)^{-\frac{\abs{\sO}}{2}}.
		\]
		Taking logarithms:
		\[
		\ln f(\vy^o;\vtheta)
		\le -\frac{\abs{\sO}}{2} \ln(2\pi \tilde{s}_m).
		\]
		
		For the lower bound, for each $ k $,
		\[
		\ln \PHI\Big(\vy^o;\vnu_{k, o}, \mDelta_{k,oo}\Big)
		= -\frac{\abs{\sO}}{2} \ln(2\pi)
		-\frac{1}{2} \ln|\mDelta_{k,oo}|
		-\frac{1}{2}(\vy^o - \vnu_{k, o})^T 
		\mDelta_{k,oo}^{-1} (\vy^o - \vnu_{k, o}).
		\]
		Since $ \lambda_{\max}(\mDelta_{k,oo}) \le \tilde{s}_M $,
		\[
		\ln|\mDelta_{k,oo}| \leq \abs{\sO} \ln(\tilde{s}_M),
		\]
		and $ \mDelta_{k,oo}^{-1} \preceq \frac{1}{\tilde{s}_m} I $,
		\[
		(\vy^o - \vnu_{k, o})^T 
		\mDelta_{k,oo}^{-1} (\vy^o - \vnu_{k, o})
		\leq \frac{\|\vy^o - \vnu_{k, o}\|^2}{\tilde{s}_m}.
		\]

		Using $ \|\vy^o - \vnu_{k, o}\|^2 
		\le 2(\|\vy^o\|^2 + \|\vnu_{k, o}\|^2) $,
		\[
		(\vy^o - \vnu_{k, o})^T 
		\mDelta_{k,oo}^{-1} (\vy^o - \vnu_{k, o})
		\leq \frac{2(\|\vy^o\|^2 + \|\vnu_{k, o}\|^2)}{\tilde{s}_m}.
		\]
		
		Given the construction of the mean vector $ \vnu_k $ for each component $ k $:
		\[
		\nu_{kj} =
		\begin{cases}
			\mu_{kj}, & \text{if } j \in \sS, \\[6pt]
			(a + \vmu_k \Lambda)_j, & \text{if } j \in \sS^c,
		\end{cases}
		\]
		
		and knowing that $ a $ belongs to the closed ball $\gB(\rho,1,\abs{\sS^c}) $-a set of $1\times\abs{\sS^c} $ matrices with norm bounded by $ \rho $-we seek to derive a uniform bound for $ \vnu_k $. For indices $j\in\sS$, $ \nu_{kj} = \mu_{kj} $, and since the parameter space $ \mTheta'_V $ is compact, the cluster means $ \mu_{kj} $ are bounded by some constant $ \eta > 0 $. For indices $ j \in \sS^c $, we have $ \nu_{kj} = (a + \vmu_k \Lambda)_j = a_j + (\vmu_k \Lambda)_j $. Using the elementary inequality $(u+v)^2 \le 2(u^2+v^2)$, it follows that
		% Note: Fix factor 2 (missing)
		% \[
		% \norm{\nu_{kj}}^2 \leq \norm{a_j}^2 + \norm{(\vmu_k \Lambda)_j}^2.
		% \]
		\[
		\norm{\nu_{kj}}^2 \leq 2\Big(\norm{a_j}^2 + \norm{(\vmu_k\Lambda)_j}^2\Big).
		\]
		
		Given that $ a $ lies in $ \gB(\rho, 1, \abs{\sS^c}) $, $ \norm{a_j}^2 \leq \rho^2 $ for each $ j \in \sS^c $. In addition, the term $ (\vmu_k \Lambda)_j $ can be bounded by $ \|\vmu_k\|_2 \|\Lambda_{\cdot j}\|_2 $, where $ \Lambda_{\cdot j} $ is the $ j $-th column of $ \Lambda $. With $ \vmu_k $ belonging to a compact set $\gB(\eta, \abs{\sS}) $,  $ \norm{\vmu_k}^2 \leq \eta^2 $, and because $ \Lambda $ is derived from bounded parameters (including $ \mBeta $ with norm being bounded by $\rho^2$), each column $ \Lambda_{\cdot j} $ is bounded in norm by $\rho^2$. Consequently, $ |(\vmu_k \Lambda)_j| \leq \rho^2 + \eta^2 \rho^2 = \rho^2(1 + \eta^2) $ for each $ j \in \sS^c $. Combining these results, for any $ j \in \sS^c $,
		
		\[
		\norm{\nu_{kj}}^2 \leq 2 \rho^2(1 + \eta^2).
		\]
		
		Thus, all entries of $ \vnu_k $, irrespective of whether they correspond to indices in $\sS $ or $ \sS^c $, are bounded by a constant that depends on $ \eta $, $ \rho $. Hence, 
		\begin{align*}
			\|\vnu_k\|^2 &= \sum_{j} \|\vnu_{k, j}\|^2 \\
			& \leq \sum_{j} 2\rho^2(1 + \eta^2) \\
			& = 2D \rho^2(1 + \eta^2)
		\end{align*}
		uniformly for all $ k $. 
		
		Consider again the Gaussian density for the observed variables
		\[
		\ln \PHI\Big(\vy^o;\vnu_{k,o}, \mDelta_{k,oo}\Big)
		= -\frac{\abs{\sO}}{2} \ln(2\pi)
		- \frac{1}{2} \ln|\mDelta_{k,oo}|
		- \frac{1}{2} (\vy^o - \vnu_{k,o})^T \mDelta_{k,oo}^{-1} (\vy^o - \vnu_{k,o}).
		\]
		
		To control the quadratic term in the exponent, we note that
		
		\[
		\|\vy^o - \vnu_{k,o}\|^2 \leq 2\Big(\|\vy^o\|^2 + \|\vnu_{k,o}\|^2\Big).
		\]
		
		Given that $\|\vnu_k\| = \sum_{j \in\sO} \norm{\nu_{kj}}^2 \leq 2 \abs{\sO} \rho^2(1 + \eta^2) $ is uniformly bounded, it follows that
		
		\[
		\|\vy^o - \vnu_{k,o}\|^2 \leq 2\Big(\|\vy^o\|^2 + 2 \abs{\sO} \rho^2(1 + \eta^2)\Big).
		\]
		
		This bound ensures that the quadratic form $ (\vy^o - \vnu_{k,o})^T \mDelta_{k,oo}^{-1} (\vy^o - \vnu_{k,o}) $ remains finite.  The uniform boundedness of $ \vnu_k $ thus allows us to assert the existence of an integrable envelope function $ F(\vy^o) $ that uniformly bounds $ |\ln f(\vy^o;\vtheta)| $ for all $ \vtheta $ within the compact parameter space $ \mTheta'_V $.
		
		Thus,
		\[
		\ln \PHI\Big(\vy^o;\vnu_{k, o}, \mDelta_{k,oo}\Big)
		\ge -\frac{\abs{\sO}}{2} \ln(2\pi) 
		-\frac{\abs{\sO}}{2} \ln(\tilde{s}_M)
		-\frac{1}{2} \cdot \frac{2(\|\vy^o\|^2 + 2 \abs{\sO} \rho^2(1 + \eta^2))}{\tilde{s}_m}.
		\]
		Simplifying,
		\[
		\ln \PHI\Big(\vy^o;\vnu_{k, o}, \mDelta_{k,oo}\Big)
		\ge -\frac{\abs{\sO}}{2} \ln(2\pi\tilde{s}_M)
		-\frac{\|\vy^o\|^2 + 2\abs{\sO} \rho^2(1 + \eta^2))}{\tilde{s}_m}.
		\]
		Using Jensen's inequality over the mixture:
		\[
		\ln f(\vy^o;\vtheta)
		\ge \sum_{k=1}^K \pi_k 
		\Bigg(-\frac{\abs{\sO}}{2} \ln(2\pi\tilde{s}_M)
		-\frac{\|\vy^o\|^2 + 2 \abs{\sO} \rho^2(1 + \eta^2))}{\tilde{s}_m}\Bigg).
		\]
		Since $ \sum_{k=1}^K \pi_k = 1 $,
		\[
		\ln f(\vy^o;\vtheta)
		\ge -\frac{\abs{\sO}}{2} \ln(2\pi\tilde{s}_M)
		-\frac{\|\vy^o\|^2 + 2 \abs{\sO} \rho^2(1 + \eta^2))}{\tilde{s}_m}.
		\]
		
		Combining the refined upper and lower bounds:
		\[
		-\frac{\abs{\sO}}{2} \ln(2\pi\tilde{s}_M)
		-\frac{\|\vy^o\|^2 + 2 \abs{\sO} \rho^2(1 + \eta^2))}{\tilde{s}_m}
		\le \ln f(\vy^o;\vtheta_{(\sS,\sR)})
		\le -\frac{\abs{\sO}}{2} \ln(2\pi\tilde{s}_m).
		\]
		
		As a final note, these bounds also rely on the eigenvalue constraints on $ \mSigma_k $, $ \mOmega $, and the norm constraint on $ \mBeta $.
		
		Now we prove that the envelope function $F$, which is related to the upper and lower bounds above, is $h$-integrable. The true density $ h(\vy) $ corresponds to a Gaussian mixture model $ f(\vy;\vtheta^*_{(\sS_0, \sR_0)}) $ with parameters in a compact set. The observed-data density for $ \vy^o $, obtained by marginalizing over the missing components, is given by
		\[
		h(\vy^o) = \sum_{k=1}^K \pi_k \, \PHI\Big( \vy^o;\vnu^*_{k,o},\, \Delta^*_{k,oo} \Big).
		\]
		To verify that the envelope function $ F $ is $ h $-integrable, we need to show 
		\[
		\int \| \vy^o \|^2 h(\vy^o) \, d\vy^o < \infty.
		\]
		
		We proceed by examining the second moment of the observed components. First, observe that
		\begin{align*}
			\int \| \vy^o \|^2 h(\vy^o) \, d\vy^o 
			& = \sum_{k=1}^K \pi_k \int \| \vy^o \|^2 \, \PHI\Big( \vy^o;\vnu^*_{k,o},\, \Delta^*_{k,oo} \Big) d\vy^o. \\
			& \leq 2\|\vnu^*_{k,o} \|^2 + 2\operatorname{tr}(\Delta^*_{k,oo}). 
		\end{align*}
		by using Lemma. By the compactness of the parameter space $ \mTheta_\sV $ and since $\sum_{k=1}^{K}\ \pi_k = 1 $ and the bounds derived earlier, we have
		\[
		\int \| \vy^o \|^2 h(\vy^o) \, d\vy^o \leq \abs{\sO} \rho^2(1 + \eta^2) + 2 s_{M} |\sS_{0, o}| < \infty
		\]
		Therefore, $F$ is h-integrable, i.e.,
		\[
		\int F(\vy^o) h(\vy^o) \, d\vy^o  < \infty,
		\]
		since $ \int \|\vy^o\|^2 h(\vy^o) \, d\vy^o < \infty $.
		
		Finally, because $ |\ln(h(\vy^o))| \leq F(\vy^o) $, it implies
		
		\[
		\mathbb{E}[|\ln(h(\vy^o))|]
		= \int |\ln(h(\vy^o))| h(\vy^o) \, d\vy^o 
		\leq \int F(\vy^o) h(\vy^o) \, d\vy^o < \infty.
		\]
		
		Hence, the envelope function $ F $ is $ h $-integrable and $ \mathbb{E}[|\ln(h(\vy^o))|] < \infty $. Thus, we can apply the law of large numbers to obtain the consistency of the sample KL divergence.
	\end{proof}
	
	\begin{proposition}
		\label{prop:consistency_of_PDF_ex1}
		Assume that
		
		% Note: fix notation on theta hat 
		\begin{enumerate}
			\item $ (\vy_1^o,\ldots,\vy_n^o) $ are i.i.d. observed vectors with unknown density $h$.
			\item $\mTheta$ is a compact metric space.
			\item $\vtheta\in\mTheta\mapsto\ln[f(\vy^o;\vtheta)]$ is continuous for every $\vy^o \in\sRe^{\abs{\sO}}$.
			\item $F$ is an envelope function of $\cF:=\{\ln[f(\cdot;\vtheta)];\ \vtheta\in\mTheta\}$ which is $h$-integrable.
			\item $\vtheta^{*}=\operatorname*{argmax}_{\vtheta\in\mTheta}\KL[h,f(\cdot;\vtheta)]$
			\item $\hat{\vtheta}=\operatorname*{argmax}_{\vtheta\in\mTheta}\sum_{n=1}^{N} \ln f(\vy_n;\vtheta)$.
		\end{enumerate}
		Then, as $n\to\infty$,
		\[
		\frac{1}{N}\sum_{n=1}^{N}\ln f(\vy_n^o;\hat{\vtheta}_{(\sS,\sR)}) \underset{n \to \infty}{\overset{P}{\to}}\mathbb{E}_{\vy^o}{\Bigl[\ln f(\vy^o;\vtheta^*_{(\sS,\sR)})\Bigr]}.
		\]
	\end{proposition}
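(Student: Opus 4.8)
The plan is to treat this as the textbook M-estimation (argmax) consistency result, with a uniform strong law of large numbers (ULLN) over the compact parameter space as the workhorse. Throughout, write $Q_N(\vtheta) = \tfrac1N\sum_{n=1}^N \ln f(\vy_n^o;\vtheta)$ for the normalized observed log-likelihood and $Q(\vtheta) = \mathbb{E}_{\vy^o}[\ln f(\vy^o;\vtheta)]$ for its population counterpart. Here $\hat\vtheta_{(\sS,\sR)}$ is the maximizer of $Q_N$ and $\vtheta^*_{(\sS,\sR)}$ is the KL projection of $h$ onto the model, i.e.\ the maximizer of $Q$ over $\mTheta'_{(\sS,\sR)}$; this agrees with the stated definitions because $\KL[h, f(\cdot;\vtheta)] = \mathbb{E}_h[\ln h] - Q(\vtheta)$ with $\mathbb{E}_h[|\ln h|]<\infty$, the latter having already been verified in the proof of Proposition \ref{prop:consistency_of_KL_ex1}.

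\textbf{Step 1 (ULLN).} First I would invoke Wald's uniform strong law for the function class $\cF = \{\vy^o \mapsto \ln f(\vy^o;\vtheta):\vtheta\in\mTheta'_{(\sS,\sR)}\}$. Assumptions 2--4 of the proposition are exactly its hypotheses: $\mTheta'_{(\sS,\sR)}$ is a compact metric space, $\vtheta\mapsto \ln f(\vy^o;\vtheta)$ is continuous for every $\vy^o$, and $\cF$ admits the $h$-integrable envelope $F$ constructed in Proposition \ref{prop:consistency_of_KL_ex1} (the two-sided bound on $\ln f(\vy^o;\vtheta)$ that is quadratic in $\|\vy^o\|$ with constants depending only on $\tilde s_m,\tilde s_M,\rho,\eta$). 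By this law (see, e.g., \cite{van2000asymptotic}) two conclusions follow: $Q$ is continuous, hence attains its maximum on the compact set at $\vtheta^*_{(\sS,\sR)}$; and
\[
\delta_N := \sup_{\vtheta\in\mTheta'_{(\sS,\sR)}}\bigl|Q_N(\vtheta) - Q(\vtheta)\bigr| \overset{\text{a.s.}}{\longrightarrow} 0 .
\]
Measurability of $\delta_N$ is not an issue because continuity in $\vtheta$ lets one restrict the supremum to a countable dense subset, and $\hat\vtheta_{(\sS,\sR)}$ exists since $Q_N$ is continuous on a compact set.

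\textbf{Step 2 (sandwich).} Next I would squeeze the attained value $Q_N(\hat\vtheta_{(\sS,\sR)})$ between $Q(\vtheta^*_{(\sS,\sR)}) \pm \delta_N$. On one side, $Q_N(\hat\vtheta_{(\sS,\sR)}) \le Q(\hat\vtheta_{(\sS,\sR)}) + \delta_N \le Q(\vtheta^*_{(\sS,\sR)}) + \delta_N$, using that $\vtheta^*_{(\sS,\sR)}$ maximizes $Q$. On the other side, since $\hat\vtheta_{(\sS,\sR)}$ maximizes $Q_N$, $Q_N(\hat\vtheta_{(\sS,\sR)}) \ge Q_N(\vtheta^*_{(\sS,\sR)}) \ge Q(\vtheta^*_{(\sS,\sR)}) - \delta_N$. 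Combining, $\bigl|Q_N(\hat\vtheta_{(\sS,\sR)}) - Q(\vtheta^*_{(\sS,\sR)})\bigr| \le \delta_N \to 0$ almost surely, hence in probability, which is precisely the assertion. This route requires neither uniqueness of $\vtheta^*_{(\sS,\sR)}$ nor consistency of $\hat\vtheta_{(\sS,\sR)}$ as an estimator; only the convergence of the optimal value is used.

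The only genuinely substantive ingredient is Step 1, and within it the construction of the $h$-integrable envelope $F$ — but that labor has already been carried out in the proof of Proposition \ref{prop:consistency_of_KL_ex1} (bounding the Gaussian log-densities above and below via the eigenvalue bounds $\tilde s_m I \preceq \mDelta_{k,oo} \preceq \tilde s_M I$, and checking $\int \|\vy^o\|^2 h(\vy^o)\,d\vy^o < \infty$ using compactness of $\mTheta'_V$). Given that envelope, the uniform law is off-the-shelf, and the two inequalities of Step 2 close the argument. I therefore expect no real obstacle beyond bookkeeping: matching the hypotheses of the uniform SLLN to Assumptions 2--4 and reusing the envelope from the preceding proposition rather than re-deriving it.
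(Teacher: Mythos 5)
Your proposal is correct and follows essentially the same route as the paper: a uniform law of large numbers (Glivenko--Cantelli over the compact parameter space, using the $h$-integrable envelope already built in Proposition~\ref{prop:consistency_of_KL_ex1}) combined with the two optimality inequalities for $\hat{\vtheta}_{(\sS,\sR)}$ (empirical) and $\vtheta^*_{(\sS,\sR)}$ (population). The only difference is bookkeeping: your two-sided sandwich bounds the error by the uniform deviation $\delta_N$ itself, whereas the paper's three-term decomposition yields the same conclusion with the constant $3\delta_N$.
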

	
	\begin{proof}
		We consider the following inequality:
		\[
		\Biggl|\mathbb{E}\Bigl[\ln f(\vy^o;\vtheta^*_{(\sS,\sR)})\Bigr] - \frac{1}{N}\sum_{n=1}^{N}\ln f(\vy_n^o;\hat{\vtheta}_{(\sS,\sR)}) \Biggr| 
		\]
		\[
		\le \Biggl| \mathbb{E}\Bigl[\ln f(\vy^o;\vtheta^*_{(\sS,\sR)})\Bigr] - \mathbb{E}\Bigl[\ln f(\vy^o;\hat{\vtheta}_{(\sS,\sR)})\Bigr] \Biggr|
		+ \sup_{\vtheta\in\mTheta_{(\sS,\sR)}} \Biggl|\mathbb{E}\Bigl[\ln f(\vy^o;\vtheta)\Bigr]- \frac{1}{N}\sum_{n=1}^{N}\ln f(\vy_n^o;\vtheta) \Biggr|.
		\]
		By the definition of $\vtheta^*_{(\sS,\sR)}$, we have
		\[
		\mathbb{E}\Bigl[\ln f(\vy^o;\vtheta^*_{(\sS,\sR)})\Bigr] - \mathbb{E}\Bigl[\ln f(\vy^o;\hat{\vtheta}_{(\sS,\sR)})\Bigr] \ge 0.
		\]
		Note that, 
		\begin{align*}
			\mathbb{E}\Bigl[\ln f(\vy^o;\vtheta^*_{(\sS,\sR)})\Bigr] - \mathbb{E}\Bigl[\ln f(\vy^o;\hat{\vtheta}_{(\sS,\sR)})\Bigr] &= 
			\underbrace{\mathbb{E}\Bigl[\ln f(\vy^o;\vtheta^*_{(\sS,\sR)})\Bigr] - \frac{1}{N} \sum_{n=1}^{N} \ln f(\vy^o ;\vtheta^*_{(\sS,\sR)})}_{I} \\
			& + \underbrace{\frac{1}{N} \sum_{n=1}^{N} \ln f(\vy^o ;\vtheta^*_{(\sS,\sR)}) - \frac{1}{N} \sum_{n=1}^{N} \ln f(\vy^o ;\hat{\vtheta}_{(\sS,\sR)})}_{II} \\
			& - \underbrace{\frac{1}{N} \sum_{n=1}^{N} \ln f(\vy^o;\hat{\vtheta}_{(\sS,\sR)} ) + \mathbb{E}\Bigl[ \ln f(\vy^o ;\hat{\vtheta}_{(\sS,\sR)})}_{III} \Bigl]
		\end{align*}
		For term $II$, sine $\hat{\vtheta}$ maximizes the empirical log-likelihood, 
		\[
		\frac{1}{N} \sum_{n=1}^{N} \ln(f(X_n \lvert \hat{\vtheta}_{(\sS,\sR)}) \geq \frac{1}{N} \sum_{n=1}^{N} \ln(f(X_n \lvert \vtheta^*_{(\sS,\sR)}),
		\]
		which implies that $II \leq 0$. Therefore, 
		\begin{equation*}
			\mathbb{E}\Bigl[\ln f(\vy^o;\vtheta^*_{(\sS,\sR)})\Bigr] - \mathbb{E}\Bigl[\ln f(\vy^o;\hat{\vtheta}_{(\sS,\sR)})\Bigr] \leq \abs{I} + \abs{III}.
		\end{equation*}
		Moreover, since both $\abs{I} \text{ and } \abs{III}$ are bounded by the uniform deviation:
		\begin{gather*}
			\abs{I} \leq \sup_{\vtheta\in\mTheta_{(\sS,\sR)}} \Biggl|\mathbb{E}\Bigl[\ln f(\vy^o;\vtheta)\Bigr]- \frac{1}{N}\sum_{n=1}^{N}\ln f(\vy_n^o;\vtheta) \Biggr| \\
			\text{ and } \\
			\abs{II} \leq \sup_{\vtheta\in\mTheta_{(\sS,\sR)}} \Biggl|\mathbb{E}\Bigl[\ln f(\vy^o;\vtheta)\Bigr]- \frac{1}{N}\sum_{n=1}^{N}\ln f(\vy_n^o;\vtheta) \Biggr|
		\end{gather*}
		Hence, 
		\begin{align*}
			\mathbb{E}\Bigl[\ln f(\vy^o;\vtheta^*_{(\sS,\sR)})\Bigr] - \mathbb{E}\Bigl[\ln f(\vy^o;\hat{\vtheta}_{(\sS,\sR)})\Bigr] & \leq \abs{I} + \abs{III} \\
			& \leq 2 \sup_{\vtheta\in\mTheta_{(\sS,\sR)}} \Biggl|\mathbb{E}\Bigl[\ln f(\vy^o;\vtheta)\Bigr]- \frac{1}{N}\sum_{n=1}^{N}\ln f(\vy_n^o;\vtheta) \Biggr|.\\
		\end{align*}
		Hence, the left-hand side is bounded by three times the uniform deviation:
		\[
		\Biggl|\mathbb{E}\Bigl[\ln f(\vy^o;\vtheta^*_{(\sS,\sR)})\Bigr] - \frac{1}{N}\sum_{n=1}^{N}\ln f(\vy_n^o;\hat{\vtheta}_{(\sS,\sR)}) \Biggr|
		\le 
		3\sup_{\vtheta\in\mTheta_{(\sS,\sR)}} \Biggl|\mathbb{E}\Bigl[\ln f(\vy^o;\vtheta)\Bigr]- \frac{1}{N}\sum_{n=1}^{N}\ln f(\vy_n^o;\vtheta) \Biggr|.
		\]
		By the argument in the Proposition \ref{prop:consistency_of_KL_ex1}-namely, using the compactness of $\mTheta_{(\sS,\sR)}$, the continuity of $\vtheta\mapsto \ln f(\cdot;\vtheta)$, and the existence of an $h$-integrable envelope $F$-the class $\cF_{(\sS,\sR)}$ is P-Glivenko-Cantelli by applying Example 19.8 in van der Vaart (1998) to conclude on the finiteness of bracketing numbers of $\cF$ under the assumptions. In particular,
		\[
		\sup_{\vtheta\in\mTheta_{(\sS,\sR)}} \Biggl|\mathbb{E}\Bigl[\ln f(\vy^o;\vtheta)\Bigr]- \frac{1}{N}\sum_{n=1}^{N}\ln f(\vy_n^o;\vtheta) \Biggr| \underset{n \to \infty}{\overset{P}{\to}}0.
		\]
		Therefore,
		\[
		\frac{1}{N}\sum_{n=1}^{N}\ln f(\vy_n^o;\hat{\vtheta}_{(\sS,\sR)}) \underset{n \to \infty}{\overset{P}{\to}}\mathbb{E}\Bigl[\ln f(\vy^o;\vtheta^*_{(\sS,\sR)})\Bigr].
		\]
		which concludes the proof of Proposition \ref{prop:consistency_of_KL_ex1}.
	\end{proof}
	
	Now, we prove the consistency of the sample KL in the $\sS\sR\sU\sW$ model:
	
	\begin{proposition}
		\label{prop:consistency_of_KL_ex2}
		Under Assumption \ref{assum:sruw_h1} and \ref{assum:sruw_h2}, for all $\sV$,
		\[
		\frac{1}{N} \sum_{n=1}^{N} \ln\left( \frac{h(\vy_n^o)}{f(\vy_n^o;\hat{\vtheta}_{\sV})} \right) \underset{n \to \infty}{\overset{P}{\to}}\KL\Bigl[h, f(\cdot;\vtheta^*_{\sV})\Bigr].
		\]
	\end{proposition}
	
	\begin{proof}
		We carry the proof in the same manner as in the $\sS\sR$ model. Let $\sV = (\sS,\sR,\sU,\sW) $ and $\sO \subset \{1, \dots, D\}$ denotes the observed variables. We still want to apply the Proposition \ref{prop:consistency_of_PDF_ex1} with the new family 
		\[
		\cF_{(\sV)} \coloneqq \{\ln[f(\cdot \lvert \vtheta)]; \, \vtheta \in \mTheta'_{\sV} \}
		\] 
		similarly to the proof of Proposition \ref{prop:consistency_of_KL_ex1} to achieve 
		\[
		\frac{1}{N}\sum_{n=1}^{N} \ln \bigl( f(\vy_n^o;\hat{\vtheta}_{\sV}) \bigr) \underset{n \to \infty}{\overset{P}{\to}}\mathbb{E}_{\vy^o}{\bigl[\ln f(\vy^o;\vtheta^*_{\sV})\bigr]}.
		\]
		To do so, we must initially verify that the class of functions 
		\[
		\cF_{(\sV)} = \Bigl\{ \vy^o \mapsto \ln f(\vy^o;\vtheta): \vtheta \in \mTheta'_{\sV} \Bigr\}
		\]
		satisfies the conditions for uniform convergence under Assumption \ref{assum:sruw_h2}. In particular, by Assumption \ref{assum:sruw_h2}, the parameter space $\mTheta'_{\sV}$ is compact, and for every $\vy^o \in \sRe^{\abs{\sO}}$ the mapping
		\[
		\vtheta \mapsto \ln f(\vy^o;\vtheta)
		\]
		is continuous. Next, we verify that there is an $h$-integrable envelope function $F \in \cF_{(\sV)}$. Recalling that for 
		$ \vy^o \in \sRe^{\abs{\sO}} $ and a given component $k$, the current Gaussian component density is
		\[
		\PHI\Big(\vy^o;\tilde{\vnu}_{k, o}, \tilde{\mDelta}_{k,oo}\Big)
		= (2\pi)^{-\frac{\abs{\sO}}{2}} |\tilde{\mDelta}_{k,oo}|^{-\frac{1}{2}} 
		\exp\Big(-\tfrac{1}{2}(\vy^o - \tilde{\vnu}_{k, o})^T 
		\tilde{\mDelta}_{k,oo}^{-1} (\vy^o - \tilde{\vnu}_{k, o})\Big).
		\]
		We will bound the density function as usual. From Proposition \ref{prop:consistency_of_KL_ex1}, we know that there exists $\tilde{s}_m > 0$ and $\tilde{s}'_M = \max \{s_M (1 + \rho^2), s_M \rho, s_M\} = s_M ( 1 + \rho^2)$  such that 
		\[
		\tilde{s}_m I \preceq \mDelta_k \preceq \tilde{s}_M I.
		\]
		Thus,
		\[
		\tilde{s}_m \, I_{\abs{\sO}} \preceq \mDelta_{k,oo} \preceq \tilde{s}_M \, I_{\abs{\sO}},
		\]
		and
		\[
		(\mDelta_{k,oo})^{-1} \preceq \frac{1}{\tilde{s}_m} \, I_{\abs{\sO}}.
		\]
		Given that $\mGamma\in\gD_{\abs{\sW}}$ and is independent of any relevant variables, we have that the principal sub-matrix $\mGamma_{oo}$ is bounded by $s_M$
		\[
		\lambda_{\max}(\mGamma_{oo}) \le s_M.
		\]
		Since $\tilde{\mDelta}_{k,oo}$ is block-diagonal, its eigenvalues are the union of the eigenvalues of $\mDelta_{k,oo}$ and $\mGamma_{oo}$. Hence,
		\[
		\lambda_{\max}(\tilde{\mDelta}_{k,oo}) \le \max\{s_M(1+\rho^2),\, s_M\} = s_M(1+\rho^2).
		\]
		We define the upper bound by
		\[
		\tilde{s}_M := s_M(1+\rho^2).
		\]
		The lower bound of the structured block, together with the lower bound on the independent block, implies that the covariance $\tilde{\mDelta}_{k,oo}$ satisfies
		\[
		\tilde{s}_m\,I \preceq \tilde{\mDelta}_{k,oo} \preceq \tilde{s}_M\,I.
		\]
		Because $\tilde{\mDelta}_{k,oo}^{-1} \preceq \frac{1}{\tilde{s}_m}I$, it follows that
		\[
		(\vy^o-\tilde{\vnu}_{k,o})^T \tilde{\mDelta}_{k,oo}^{-1}(\vy^o-\tilde{\vnu}_{k,o})
		\le \frac{\|\vy^o-\tilde{\vnu}_{k,o}\|^2}{\tilde{s}_m}.
		\]
		Using the elementary inequality
		\[
		\|\vy^o-\tilde{\vnu}_{k,o}\|^2 \le 2\Bigl(\|\vy^o\|^2+\|\tilde{\vnu}_{k,o}\|^2\Bigr),
		\]
		we obtain
		\[
		(\vy^o-\tilde{\vnu}_{k,o})^T \tilde{\mDelta}_{k,oo}^{-1}(\vy^o-\tilde{\vnu}_{k,o})
		\le \frac{2\Bigl(\|\vy^o\|^2+\|\tilde{\vnu}_{k,o}\|^2\Bigr)}{\tilde{s}_m}.
		\]
		Now we process the bound for $\tilde{\vnu}_{k, o}$. Recalling that if we denote $D_{\sS\sU}$ denote the number of coordinates in the structured part (i.e. $\sS\cup\sU$); then
		\[
		\|\vnu_{k,o}\|^2 = \sum_{j \in o} \norm{\nu_{kj}}^2 \leq D_{\sS\sU, o} \, \rho^2(1 + \eta^2)
		\]
		is uniformly bounded and $\norm{\gamma_o}^2 \leq \eta^2 $ since $\gamma$ belongs to the compact set $\gB(\eta, \abs{\sW})$, it follows that
		\[
		\|\tilde{\vnu}_{k,o}\|^2 \le D_{\sS\sU, o}\rho^2(1+\eta)^2 + \eta^2,
		\]
		We deduce that
		\[
		\ln\PHI\Bigl(\vy^o; \tilde{\vnu}_{k,o},\tilde{\mDelta}_{k,oo}\Bigr)
		\ge -\frac{\abs{\sO}}{2}\ln\bigl(2\pi\,\tilde{s}_M\bigr)
		-\frac{\|\vy^o\|^2+ D_{\sS\sU, o} \rho^2 (1 + \eta^2) + \eta^2}{\tilde{s}_m}.
		\]
		Using Jensen's inequality over the mixture and $\sum_{k=1}^{K} \pi_k = 1$, we have
		\begin{align*}
			\ln f(\vy^o; \vtheta)
			& \ge \sum_{k=1}^{K}\pi_k 
			\Bigg(-\frac{\abs{\sO}}{2} \ln(2\pi\tilde{s}_M)
			-\frac{\|\vy^o\|^2+ D_{\sS\sU, o} \rho^2 (1 + \eta^2) + \eta^2}{\tilde{s}_m}\Bigg). \\
			& \ge  -\frac{\abs{\sO}}{2} \ln(2\pi\tilde{s}_M)
			-\frac{\|\vy^o\|^2+ D_{\sS\sU, o} \rho^2 (1 + \eta^2) + \eta^2}{\tilde{s}_m}
		\end{align*}
		Therefore, each family's member in $\cF_{(\sV)}$ is bounded by 
		\[
		-\frac{\abs{\sO}}{2} \ln(2\pi\tilde{s}_M)
		-\frac{\|\vy^o\|^2+ D_{\sS\sU, o} \rho^2 (1 + \eta^2) + \eta^2}{\tilde{s}_m}
		\le \ln f(\vy^o;\vtheta_{(\sV)})
		\le -\frac{\abs{\sO}}{2} \ln(2\pi\tilde{s}_m).
		\]
		
		Now we prove that $F$ is $h$-integrable. The true observed-data density under the $\sS\sR\sU\sW$ model is given by
		\[
		h(\vy^o) = \sum_{k=1}^{K} \pi_k\, \PHI\Bigl(\vy^o;\tilde{\vnu}_{k, o}^*,\, \tilde{\Delta}^*_{k,oo}\Bigr),
		\]
		where the “$*$” indicates that the parameters are the true ones and the density is that of a Gaussian mixture with parameters in a compact set. In our derivation we have shown that, for any $\vtheta$ in the family $\cF_{(\sV)}$, the log-density satisfies
		\[
		-\frac{\abs{\sO}}{2} \ln(2\pi\tilde{s}_M)
		-\frac{\|\vy^o\|^2+ D_{\sS\sU,o}\,\rho^2(1+\eta^2)+ \eta^2}{\tilde{s}_m}
		\le \ln f(\vy^o;\vtheta)
		\le -\frac{\abs{\sO}}{2} \ln(2\pi\tilde{s}_m).
		\]
		In other words, every function $\ln f(\vy^o; \vtheta)$ is bounded in absolute value by
		\[
		F(\vy^o)= \frac{\abs{\sO}}{2}\ln\bigl(2\pi\tilde{s}_M\bigr)
		+\frac{\|\vy^o\|^2+ D_{\sS\sU,o}\,\rho^2(1+\eta^2)+ \eta^2}{\tilde{s}_m}.
		\]
		Hence, for all $\vtheta$ in the compact parameter space, 
		\[
		\Bigl|\ln f(\vy^o; \vtheta)\Bigr|\le F(\vy^o).
		\]
		To verify that $F$ is $h$-integrable, we must show that
		\[
		\int F(\vy^o)\, h(\vy^o) \, d\vy^o < \infty.
		\]
		Since the envelope function $F$ is of the form
		\[
		F(\vy^o) = C_0 + \frac{1}{\tilde{s}_m}\|\vy^o\|^2,
		\]
		with 
		\[
		C_0 = \frac{\abs{\sO}}{2}\ln\bigl(2\pi\tilde{s}_M\bigr) + \frac{ D_{\sS\sU,o}\,\rho^2(1+\eta^2)+ \eta^2}{\tilde{s}_m},
		\]
		it is clear that
		\[
		\int F(\vy^o)\, h(\vy^o)\, d\vy^o \le C_0 + \frac{1}{\tilde{s}_m}\int \|\vy^o\|^2\, h(\vy^o)\, d\vy^o.
		\]
		Because $h(\vy^o)$ is a Gaussian mixture with parameters in a compact set, standard properties of Gaussian mixtures guarantee that the second moment is finite, that is,
		\[
		\int \|\vy^o\|^2\, h(\vy^o)\, d\vy^o < \infty.
		\]
		Thus,
		\[
		\int F(\vy^o)\, h(\vy^o)\, d\vy^o < \infty.
		\]
		Finally, since for every $\vy^o$ we have
		\[
		\bigl|\ln h(\vy^o)\bigr| \le F(\vy^o),
		\]
		It follows that
		\[
		\mathbb{E}\Bigl[|\ln h(\vy^o)|\Bigr]
		=\int |\ln h(\vy^o)|\, h(\vy^o)\, d\vy^o 
		\le \int F(\vy^o)\, h(\vy^o)\, d\vy^o < \infty.
		\]
		Therefore, the envelope function $F$ is $h$-integrable and $ \mathbb{E}[|\ln(h(\vy^o))|] < \infty $, and consequently, we can apply the law of large numbers and uniform convergence to conclude the proof.
	\end{proof}

	\begin{proof}[Proof of Theorem~\ref{theorem_bic_consistency_SRUW}]
		Define the BIC score of a variable partition $\sV=(\sS,\sR,\sU,\sW)$ by
		\[
		\BIC(\sV)
		=
		2\ell_N\bigl(\hat{\vtheta}_{\sV}\bigr)
		-
		\Xi_{(\sV)}\log N,
		\]
		where $\Xi_{(\sV)}$ is the number of free parameters. Let $\sV_0$ be the true partition and set $\Delta\BIC(\sV)=\BIC(\sV_0)-\BIC(\sV)$.
		
		Let $\sV_1=\bigl\{\sV\neq\sV_0:\KL[h,f(\cdot;\vtheta_{\sV}^{\star})]>0\bigr\}$. The Identifiability Theorem \ref{theorem_complete_sruw_iden} implies every $\sV\neq\sV_0$ belongs to $\sV_1$. For $\sV\in\sV_1$,
		\begin{equation}
			\label{eq:diff_BIC}
			\Delta\BIC(\sV)
			=
			2N\Biggl[\frac{1}{N} \sum_{n=1}^{N} \ln\left( \frac{f(\vy_n^o;\hat{\vtheta}_{\sV_0})}{h(\vy_n^o)} \right)-\frac{1}{N} \sum_{n=1}^{N} \ln\left( \frac{f(\vy_n^o;\hat{\vtheta}_{\sV})}{h(\vy_n^o)} \right) \Biggr]
			+
			\bigl[\Xi_{(\sV)}-\Xi_{(\sV_0)}\bigr]\log N
		\end{equation}
		
		To prove the theorem, we will prove that:
		\[
		\forall \, \sV \in \sV_1, \mathbb{P}[\Delta\BIC(\sV) < 0] \underset{N \rightarrow \infty}{\rightarrow} 0
		\]
		
		Denoting $\sM_N(\sV) = \dfrac{1}{N} \sum_{n=1}^{N} \ln\left( \dfrac{f(\vy_n^o;\hat{\vtheta}_{\sV})}{h(\vy_n^o)} \right), M(\sV) = -\KL[h, f(\cdot;\vtheta^*_{\sV})]$, from \Cref{eq:diff_BIC}, we get:
		\begin{align*}
			&\, \mathbb{P}(\Delta\BIC(\sV) < 0) \\
			=&\, \mathbb{P}(2N(\sM_N(\sV_0) - \sM_N(\sV)) + \bigl[\Xi_{(\sV)}-\Xi_{(\sV_0)}\bigr]\log N < 0) \\
			=&\, \mathbb{P}(\sM_N(\sV_0)-M(\sV_0)+M(\sV_0)-M(\sV)+M(\sV)-\sM_N(\sV)+\frac{\bigl[\Xi_{(\sV)}-\Xi_{(\sV_0)}\bigr]\log N}{2N} < 0) \\
			\le&\, \mathbb{P}(\sM_N(\sV_0)-M(\sV_0) > \eps) + \mathbb{P}(M(\sV)-\sM_N(\sV) > \eps) + \mathbb{P}(M(\sV_0)-M(\sV)+\frac{\bigl[\Xi_{(\sV)}-\Xi_{(\sV_0)}\bigr]\log N}{2N} < 2\eps)
		\end{align*}
		
		From Proposition \ref{prop:consistency_of_KL_ex2}, we know that 
		\[
		\frac{1}{N} \sum_{n=1}^{N} \ln\left( \frac{h(\vy_n^o)}{f(\vy_n^o;\hat{\vtheta}_{\sV})} \right) \underset{n \to \infty}{\overset{P}{\to}}\KL\Bigl[h, f(\cdot;\vtheta^*_{\sV})\Bigr].
		\]
		
		This leads to $\sM_N(\sV) \underset{n \to \infty}{\overset{P}{\to}} M(\sV)$. Similar to the proof for the $\sS\sR\sU\sW$ model in Maugis \cite{maugis2009variable_b}, we prove the theorem.
	\end{proof}

	\subsection{Proof of Theorem \ref{theorem_sel_Lasso_like_ranking_consistency}}
	%% Binh Ho + Tin
	Throughout this proof, we analyze the estimator defined by minimizing 
	the \emph{negative log-likelihood} penalized. Accordingly, although we denote by 
	\[
	\ell(\vy;\valpha) = \ln f(\vy;\valpha)
	\]
	the usual log-likelihood of a single observation, we consistently work with its negative $-\ell$ as the optimization objective. To streamline notation, we therefore use the term ``score'' to mean the gradient of the \emph{negative} log-likelihood:
	\[
	S_j(\vy_n;\valpha) := -\,\frac{\partial \ell(\vy_n;\valpha)}{\partial \alpha_j}.
	\]
	This convention differs by a minus sign from the standard statistical score, 
	but it ensures that all gradients, Hessians, and Fisher information matrices below are taken with respect to the minimized objective $-\ell_N$. 
	
	We start with presenting some useful lemmas before proving our main theorem. These lemmas establish fundamental properties of the penalized GMM estimators and the variable ranking procedure.
	\begin{lemma}[Score Function Components]
		\label{lem:score_components}
		Let $\fclust(\vy_n; \valpha) = \sum_{m=1}^K \pi_m \PHI(\vy_n ; \vmu_m, \mSigma_m)$ be the p.d.f of a $K$-component GMM for an observation $\vy_n$, where $\valpha = (\vpi, \{\vmu_k\}_{k=1}^K, \{\mSigma_k\}_{k=1}^K)$ represents the GMM parameters. Let $\mPsi_k = \mSigma_k^{-1}$ be the precision matrix for component $k$. The log-likelihood for observation $\vy_n$ is $\ell(\vy_n; \valpha) = \ln \fclust(\vy_n; \valpha)$.
		The responsibility for component $k$ given observation $\vy_n$ and parameters $\valpha$ is $t_k(\vy_n; \valpha) = \frac{\pi_k \PHI(\vy_n ; \vmu_k, \mSigma_k)}{\fclust(\vy_n; \valpha)}$.
		The score components, $S_j(\vy_n; \valpha^*)$, where $\alpha_j$ is a parameter in $\valpha$ and $\valpha^*$ are the true parameter values, are given as follows:
		
		\begin{enumerate}
			\item \textbf{Mixing proportions $\pi_k$:}
			When parameterizing $\pi_K = 1 - \sum_{j=1}^{K-1} \pi_j$, the score component for $\pi_k$, $k \in \{1, \dots, K-1\}$, is:
			\[ S_{\pi_k}(\vy_n; \valpha^*) = \frac{t_k(\vy_n; \valpha^*)}{\pi_K^*} - \frac{t_k(\vy_n; \valpha^*)}{\pi_k^*} \]
			
			\item \textbf{Mean parameters $(\vmu_k)_d$:}
			Let $(\vmu_k)_d$ be the $d$-th component of the mean vector $\vmu_k$. The score component is:
			\[ S_{(\vmu_k)_d}(\vy_n; \valpha^*) = -t_k(\vy_n; \valpha^*) (\mPsi_k^* (\vy_n - \vmu_k^*))_d \]
			
			\item \textbf{Precision matrix parameters $(\mPsi_k)_{rs}$:}
			Let $(\mPsi_k)_{rs}$ be the element $(r,s)$ of the symmetric precision matrix $\mPsi_k$. The score component is:
			\[ S_{(\mPsi_k)_{rs}}(\vy_n; \valpha^*) = -t_k(\vy_n; \valpha^*) \frac{(2-\delta_{rs})}{2} \left( (\mSigma_k^*)_{rs} - (\vy_n - \vmu_k^*)_r (\vy_n - \vmu_k^*)_s \right) \]
			where $\delta_{rs}$ is the Kronecker delta.
		\end{enumerate}
	\end{lemma}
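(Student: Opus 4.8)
\textbf{Proof proposal for Lemma~\ref{lem:score_components}.} The plan is a direct differentiation. For every scalar parameter $\alpha_j$ of $\valpha$, apply the chain rule to $\ell(\vy_n;\valpha) = \ln\fclust(\vy_n;\valpha)$, namely $\partial\ell/\partial\alpha_j = \fclust(\vy_n;\valpha)^{-1}\,\partial\fclust(\vy_n;\valpha)/\partial\alpha_j$, then multiply by $-1$ and evaluate at $\valpha^*$. Writing $\PHI_m := \PHI(\vy_n;\vmu_m,\mSigma_m)$, only a few summands of $\fclust = \sum_{m=1}^K\pi_m\PHI_m$ depend on any given parameter, and dividing by $\fclust$ turns each surviving term $\pi_m\PHI_m/\fclust$ into the responsibility $t_m(\vy_n;\valpha)$ --- this is the mechanism by which $t_k$ ends up as a multiplicative prefactor in all three formulas. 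The only outside ingredients are the standard Gaussian derivatives $\partial\PHI_k/\partial\vmu_k = \PHI_k\,\mPsi_k(\vy_n-\vmu_k)$, $\partial\ln|\mPsi_k|/\partial\mPsi_k = \mSigma_k$, and $\partial[(\vy_n-\vmu_k)^\top\mPsi_k(\vy_n-\vmu_k)]/\partial\mPsi_k = (\vy_n-\vmu_k)(\vy_n-\vmu_k)^\top$, the last two read elementwise with the symmetric-matrix correction described below.

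For the mixing proportions I will use the reduced parameterization $\pi_K = 1 - \sum_{j=1}^{K-1}\pi_j$, under which $\partial\fclust/\partial\pi_k = \PHI_k - \PHI_K$ for $k \le K-1$ (the $-\PHI_K$ coming from $\partial\pi_K/\partial\pi_k = -1$); dividing by $\fclust$, re-expressing $\PHI_k/\fclust$ and $\PHI_K/\fclust$ via the corresponding responsibilities, and negating gives the stated score. For the mean components, only the $k$-th summand depends on $(\vmu_k)_d$, so $\partial\ell/\partial(\vmu_k)_d = (\pi_k\PHI_k/\fclust)(\mPsi_k(\vy_n-\vmu_k))_d = t_k(\mPsi_k(\vy_n-\vmu_k))_d$, and negation yields the claim. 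Both cases are routine once the chain rule and the Gaussian gradient are in hand.

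The precision-matrix case is where the bookkeeping requires care, and I expect it to be the only mild obstacle. From $\ln\PHI_k = -\tfrac{D}{2}\ln(2\pi) + \tfrac12\ln|\mPsi_k| - \tfrac12(\vy_n-\vmu_k)^\top\mPsi_k(\vy_n-\vmu_k)$ I differentiate with respect to the \emph{free} entries of the symmetric matrix $\mPsi_k$, treating $(\mPsi_k)_{rs}$ and $(\mPsi_k)_{sr}$ as one variable when $r \neq s$. This identification of off-diagonal entries introduces a factor $2$ that is exactly encoded by $(2-\delta_{rs})$: one obtains $\partial\ln|\mPsi_k|/\partial(\mPsi_k)_{rs} = (2-\delta_{rs})(\mSigma_k)_{rs}$ and $\partial[(\vy_n-\vmu_k)^\top\mPsi_k(\vy_n-\vmu_k)]/\partial(\mPsi_k)_{rs} = (2-\delta_{rs})(\vy_n-\vmu_k)_r(\vy_n-\vmu_k)_s$. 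Substituting into $\partial\fclust/\partial(\mPsi_k)_{rs} = \pi_k\PHI_k\cdot\partial\ln\PHI_k/\partial(\mPsi_k)_{rs}$, dividing by $\fclust$, and negating produces the stated formula.

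Since every expression above is smooth in $\valpha$ on the interior of the parameter space (the component covariances being non-degenerate there), evaluation at $\valpha^*$ is immediate, and no identifiability, compactness, or mixture-geometry input is needed: the lemma is a self-contained differentiation, with the only delicate point being the symmetric-matrix derivative convention in part~(3). These per-observation score components are precisely what feeds the non-asymptotic sup-norm control of the empirical score gradient underlying the ranking-consistency argument in Theorem~\ref{theorem_sel_Lasso_like_ranking_consistency}.
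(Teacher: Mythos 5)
Your proposal is correct and takes essentially the same route as the paper's own proof: direct chain-rule differentiation of $\ln \fclust$, with the responsibilities $t_k$ arising from the factors $\pi_k\PHI_k/\fclust$, the standard Gaussian gradients for the mean, and the symmetric-matrix derivative convention supplying the $(2-\delta_{rs})$ factor for the precision entries. One small remark: carried out carefully, the reduced parameterization gives $S_{\pi_k}(\vy_n;\valpha^*) = t_K(\vy_n;\valpha^*)/\pi_K^* - t_k(\vy_n;\valpha^*)/\pi_k^*$, i.e.\ the first term involves $t_K$ rather than $t_k$; your phrase about using the \emph{corresponding} responsibilities produces this correct form, while the displayed formula in the lemma (and in the paper's proof) writes $t_k$ in both terms, which appears to be a typo in the paper rather than a flaw in your argument.
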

	
	\begin{proof}[Proof of Lemma \ref{lem:score_components}]
		Let $\PHI_m(\vy_n; \valpha) = \PHI(\vy_n ; \vmu_m, \mSigma_m)$. The log-likelihood for a single observation $\vy_n$ is $\ell(\vy_n; \valpha) = \ln \left( \sum_{m=1}^K \pi_m \PHI_m(\vy_n; \valpha) \right)$. The score component for a generic parameter $\alpha_j$ is $S_j(\vy_n; \valpha^*) = - \frac{\partial \ell(\vy_n; \valpha)}{\partial \alpha_j} \at{\valpha=\valpha^*}$.
		
		{\bf Score components for mixing proportions $\pi_k$.}
		We parameterize $\pi_K = 1 - \sum_{j=1}^{K-1} \pi_j$. For $k \in \{1, \dots, K-1\}$:
		\begin{align*} 
			\frac{\partial \ell(\vy_n; \valpha)}{\partial \pi_k} &= \frac{1}{\fclust(\vy_n; \valpha)} \frac{\partial}{\partial \pi_k} \left( \sum_{m=1}^{K-1} \pi_m \PHI_m(\vy_n; \valpha) + \left(1 - \sum_{j=1}^{K-1} \pi_j\right) \PHI_K(\vy_n; \valpha) \right) \\ &= \frac{\PHI_k(\vy_n; \valpha) - \PHI_K(\vy_n; \valpha)}{\fclust(\vy_n; \valpha)}
		\end{align*}
		Since $t_m(\vy_n; \valpha) = \frac{\pi_m \PHI_m(\vy_n; \valpha)}{\fclust(\vy_n; \valpha)}$, we have $\frac{\PHI_m(\vy_n; \valpha)}{\fclust(\vy_n; \valpha)} = \frac{t_m(\vy_n; \valpha)}{\pi_m}$.
		Thus,
		\[ \frac{\partial \ell(\vy_n; \valpha)}{\partial \pi_k} = \frac{t_k(\vy_n; \valpha)}{\pi_k} - \frac{t_k(\vy_n; \valpha)}{\pi_K} \]
		The score component at $\valpha^*$ is:
		\[ S_{\pi_k}(\vy_n; \valpha^*) = - \left( \frac{t_k(\vy_n; \valpha^*)}{\pi_k^*} - \frac{t_k(\vy_n; \valpha^*)}{\pi_K^*} \right) = \frac{t_k(\vy_n; \valpha^*)}{\pi_K^*} - \frac{t_k(\vy_n; \valpha^*)}{\pi_k^*}\,. \]
		
		{\bf Score components for mean parameters $(\vmu_k)_d$.} Let $(\vmu_k)_d$ be the $d$-th component of $\vmu_k$.
		\begin{align*}
			\frac{\partial \ell(\vy_n; \valpha)}{\partial (\vmu_k)_d} &= \frac{1}{\fclust(\vy_n; \valpha)} \frac{\partial}{\partial (\vmu_k)_d} \left( \sum_{m=1}^K \pi_m \PHI_m(\vy_n; \valpha) \right) \\ &= \frac{\pi_k}{\fclust(\vy_n; \valpha)} \frac{\partial \PHI_k(\vy_n; \valpha)}{\partial (\vmu_k)_d}
		\end{align*}
		Since $\frac{\partial \PHI_k}{\partial (\vmu_k)_d} = \PHI_k \frac{\partial \ln \PHI_k}{\partial (\vmu_k)_d}$, and for $\PHI_k$, $\ln \PHI_k(\vy_n; \valpha) = C_k - \frac{1}{2}(\vy_n - \vmu_k)^\top \mPsi_k (\vy_n - \vmu_k)$, we have:
		\[ \frac{\partial \ln \PHI_k(\vy_n; \valpha)}{\partial (\vmu_k)_d} = (\mPsi_k (\vy_n - \vmu_k))_d\]
		Substituting this back:
		\[ \frac{\partial \ell(\vy_n; \valpha)}{\partial (\vmu_k)_d} = \frac{\pi_k \PHI_k(\vy_n; \valpha)}{\fclust(\vy_n; \valpha)} (\mPsi_k (\vy_n - \vmu_k))_d = t_k(\vy_n; \valpha) (\mPsi_k (\vy_n - \vmu_k))_d \]
		The score component at $\valpha^*$ is:
		\[ S_{(\vmu_k)_d}(\vy_n; \valpha^*) = -t_k(\vy_n; \valpha^*) (\mPsi_k^* (\vy_n - \vmu_k^*))_d\,.\]
		
		{\bf Score components for precision matrix parameters $(\mPsi_k)_{rs}$.}
		Let $(\mPsi_k)_{rs}$ be an element of the symmetric precision matrix $\mPsi_k$.
		\[ \frac{\partial \ell(\vy_n; \valpha)}{\partial (\mPsi_k)_{rs}} = \frac{\pi_k}{\fclust(\vy_n; \valpha)} \frac{\partial \PHI_k(\vy_n; \valpha)}{\partial (\mPsi_k)_{rs}} = t_k(\vy_n; \valpha) \frac{1}{\pi_k} \pi_k \frac{\partial \ln \PHI_k(\vy_n; \valpha)}{\partial (\mPsi_k)_{rs}} \]
		The log-density of a single Gaussian component is $\ln \PHI_k(\vy_n; \valpha) = C'_k + \frac{1}{2}\ln\det(\mPsi_k) - \frac{1}{2}(\vy_n - \vmu_k)^\top \mPsi_k (\vy_n - \vmu_k)$.
		For a symmetric matrix $\mPsi_k$, the derivative with respect to an element $(\mPsi_k)_{rs}$ (using the ``symmetric'' derivative convention where $\frac{\partial}{\partial X_{rs}}$ means varying $X_{rs}$ and $X_{sr}$ simultaneously if $r \neq s$) is:
		\begin{align*} \frac{\partial \ln\det(\mPsi_k)}{\partial (\mPsi_k)_{rs}} &= (2-\delta_{rs}) (\mPsi_k^{-1})_{rs} = (2-\delta_{rs}) (\mSigma_k)_{rs} \\ \frac{\partial (\vy_n - \vmu_k)^\top \mPsi_k (\vy_n - \vmu_k)}{\partial (\mPsi_k)_{rs}} &= (2-\delta_{rs}) (\vy_n - \vmu_k)_r (\vy_n - \vmu_k)_s \end{align*}
		Therefore,
		\[ \frac{\partial \ln \PHI_k(\vy_n; \valpha)}{\partial (\mPsi_k)_{rs}} = \frac{1}{2}(2-\delta_{rs})(\mSigma_k)_{rs} - \frac{1}{2}(2-\delta_{rs})(\vy_n - \vmu_k)_r (\vy_n - \vmu_k)_s \]
		So,
		\[ \frac{\partial \ell(\vy_n; \valpha)}{\partial (\mPsi_k)_{rs}} = t_k(\vy_n; \valpha) \frac{(2-\delta_{rs})}{2} \left( (\mSigma_k)_{rs} - (\vy_n - \vmu_k)_r (\vy_n - \vmu_k)_s \right) \]
		The score component at $\valpha^*$ is:
		\[ S_{(\mPsi_k)_{rs}}(\vy_n; \valpha^*) = -t_k(\vy_n; \valpha^*) \frac{(2-\delta_{rs})}{2} \left( (\mSigma_k^*)_{rs} - (\vy_n - \vmu_k^*)_r (\vy_n - \vmu_k^*)_s \right) \,.\]
	\end{proof}
	
	For the two lemmas below, we will use the following assumptions:
	\begin{assumption}[Identifiability \& Smoothness] \label{ass:iden_smooth}
		The GMM density $\fclust(\vy; \valpha)$ is identifiable. $\ell_1(\vy; \valpha)$. \iffalse Currently the notation for negative likelihood is a bit messy \fi is three times continuously differentiable w.r.t. $\valpha$ in an open ball $\sB(\valpha^*, r_0)$ around $\valpha^*$.
	\end{assumption}
	
	\begin{assumption}[Compact Parameter Space \& True Parameter Properties] \label{ass:compact_interior}
		$\valpha^*$ is an interior point of a compact set $\mTheta'_{\sV} \subset \sB(\valpha^*, r_0)$. This implies:
		\begin{itemize}
			\item Mixing proportions: $\pi_k^* \ge \pi_{\min} > 0$ for all $k=1,\dots,K$, for some constant $\pi_{\min} \in (0, 1/K]$.
			\item Means: $\|\vmu_k^*\|_2 \leq \eta < \infty$ for all $k$.
			\item Covariance and Precision Matrices: The covariance matrices $\mSigma_k^*$ have eigenvalues $\lambda(\mSigma_k^*)$ such that $0 < s_m \leq \lambda(\mSigma_k^*) \leq s_M < \infty$.
			Consequently, for the precision matrices $\mPsi_k^* = (\mSigma_k^*)^{-1}$, their eigenvalues $\lambda(\mPsi_k^*)$ satisfy $0 < 1/s_M \leq \lambda(\mPsi_k^*) \leq 1/s_m < \infty$.
			We define $\sigma_{\min}^2 = s_m$, $\sigma_{\max}^2 = s_M$.
			And for precision matrices, $\theta_{\min} = 1/s_M$, $\theta_{\max} = 1/s_m$.
		\end{itemize}
		All $\valpha \in \mTheta'_{\sV}$ satisfy these bounds. Let $L_3$ be an upper bound on the norm of the third derivative tensor of $\ell_1(\vy;\valpha)$ for $\valpha \in \mTheta'_{\sV}$, such that $\mathbb{E}[L_3(\vy)] < \infty$.
	\end{assumption}
	
	\begin{assumption}[Data Distribution] \label{ass:data_dist}
		Observations $\vy_1, \dots, \vy_N$ are i.i.d. from $\fclust(\vy; \valpha^*)$. For each $\vy_n$, there exists a latent class variable $\rz_n \in \{1, \dots, K\}$ with $P(\rz_n=k) = \pi_k^*$, such that $\vy_n | \rz_n = k \sim \mathcal{N}(\vmu_k^*, \mSigma_k^*)$.
	\end{assumption}
	
	\begin{assumption}[Bounded Posteriors] \label{ass:bound_post}
		For $\valpha \in \mTheta'_{\sV}$, the posterior probabilities $t_k(\vy; \valpha) = \frac{\pi_k \phi(\vy | \vmu_k, \mSigma_k)}{\sum_{j=1}^K \pi_j \phi(\vy | \vmu_j, \mSigma_j)}$ satisfy $0 < t_k(\vy; \valpha) \leq 1$.
	\end{assumption}
	
	The empirical Hessian is $\mH_N(\valpha) = \nabla^2 \ell_N(\valpha)$. The Fisher Information Matrix is $\mI(\valpha^*) = \mathbb{E}_{\valpha^*}[\nabla^2 \ell(\vy;\valpha^*)]$. Moreover, we have the following definition of the true support of penalized parameters:
	\begin{definition}[True Support of Penalized Parameters $\sS_0$]
		\label{def:TrueSupport}
		Let $\valpha^* = (\vpi^*, \{\vmu_k^*\}_{k=1}^K, \{\mPsi_k^*\}_{k=1}^K)$ be the true GMM parameter vector. 
		Consider the penalty 
		\[
		P(\valpha) = \lambda \sum_{k=1}^K \|\vmu_k\|_1 + \rho \sum_{k=1}^K \|\mPsi_k\|_1.
		\]
		We define the true support set $\sS_0$ as the collection of indices corresponding to nonzero parameters in $\valpha^*$ that are subject to penalization:
		\[
		\sS_0 = \sS_{\vmu}^* \cup \sS_{\mPsi}^*,
		\qquad 
		s_0 = |\sS_0| = s_{\vmu}^* + s_{\mPsi}^*.
		\]
		Here:
		\begin{enumerate}
			\item Support of means
			\[
			\sS_{\vmu}^* = \big\{ (k,d) : 1 \le k \le K,\; 1 \le d \le p,\; (\vmu_k^*)_d \neq 0 \big\}, 
			\quad s_{\vmu}^* = |\sS_{\vmu}^*|.
			\]
			\item Support of precision matrices
			\[
			\sS_{\mPsi}^* = \big\{ (k,r,s) : 1 \le k \le K,\; 1 \le r,s \le p,\; (\mPsi_k^*)_{rs} \neq 0 \big\}, 
			\quad s_{\mPsi}^* = |\sS_{\mPsi}^*|.
			\]
		\end{enumerate}
	\end{definition}
	
	\begin{assumption}[Restricted Eigenvalue Condition] \label{ass:re}
		For any parameter increment vector $\mDelta$ indexed compatibly with $\valpha$, we $\mDelta_{\sS_0}$ for its restriction to indices in $\sS_0$ and $\mDelta_{(\sS_0)^c}$ for its complement. For $c_0 \ge 1$, define the cone
		\[
		\gC(c_0,\sS_0) = \big\{ \mDelta : \norm{\mDelta_{(\sS_0)^c}}_1 \le c_0 \norm{\mDelta_{\sS_0}}_1 \big\}.
		\]
		Assume the FIM $\mI(\valpha^*)$ is positive definite, then there exists $\kappa_I > 0$ such that, for all $\mDelta\in\gC(c_0,\sS_0)$,
		\[ \mDelta^\top \mI(\valpha^*) \mDelta \ge \kappa_I \|\mDelta\|_2^2. \]
	\end{assumption}
	
	\begin{remark}
		The constant $c_0$ is chosen to match the cone where the estimation error $\widehat{\valpha}-\valpha^*$ lies.
		A sufficient choice is $c_0 \ge \frac{A_0+1}{A_0-1}$ when the regularization level
		satisfies $\lambda \ge A_0 \,\|\nabla \ell_N(\valpha^*)\|_\infty$ with $A_0>1$.
	\end{remark}
	
	% Actually we can relax this strict assumption by using a Talagrand-Bousquet-type bound, but I think we can keep the assumption as it was
	
	\begin{assumption}[Uniform Hessian Concentration]
		\label{ass:unif_hess}
		et $d_\alpha=(K-1)+KD+KD(D+1)/2$ denote the number of free parameters in $\valpha$. There exist constants $C_H>0$, $c_1^H,c_2^H>0$ and a radius $\delta_R>0$ such that,
		for $N \gtrsim s_0 \ln d_\alpha$, with probability at least $1-c_1^H d_\alpha^{-c_2^H}$,
		\[
		\sup_{\substack{\tilde{\valpha}\in\sB(\valpha^*,\delta_R)\cap\mTheta'_{\sV} \\
				\mDelta\in\gC(c_0,\sS_0),\ \|\mDelta\|_2=1}}
		\Big|\,\mDelta^\top\big(\mH_N(\tilde{\valpha})-\mI(\tilde{\valpha})\big)\mDelta\,\Big|
		\;\le\; C_H \sqrt{\frac{s_0 \ln d_\alpha}{N}}.
		\]
		The constant $C_H$ depends on the bounds in Assumption~\ref{ass:compact_interior}
		and on moment bounds for derivatives of $\ell$.
		The radius can be taken as $\delta_R \asymp \sqrt{s_0 \ln d_\alpha / N}$.
	\end{assumption}
	
	\begin{lemma}[Gradient Bound]
		\label{lem:gradient_bound}
		Let $\valpha^* = (\vpi^*, \vmu_1^*, \dots, \vmu_K^*, \mPsi_1^*, \dots, \mPsi_K^*)$ be the true parameter vector of a $K$-component GMM with $D$-dimensional components. 
		Suppose Assumptions~\ref{ass:iden_smooth}-\ref{ass:bound_post} hold.
		Let $d_\alpha$ be defined in Assumption~\ref{ass:unif_hess}. Then there exist absolute constants $C_1, C_2 > 0$ and $C_g > 0$ such that, for any $N \ge 1$,
		\[
		\mathbb{P}\left( \|\nabla \ell_N(\valpha^*) \|_\infty \le C_g \sqrt{\frac{\ln d_\alpha}{N}} \right) 
		\;\ge\; 1 - C_1\, d_\alpha^{-C_2},
		\]
		where $\ell_N(\valpha) = \frac{1}{N}\sum_{n=1}^N \ell(\vy_n; \valpha)$ is the empirical \emph{negative} log-likelihood (recall $\ell(\vy;\valpha)=-\ln \fclust(\vy;\valpha)$). An explicit choice is $C_g = \sqrt{2(C_2+1)/c_0}\,\nu_{\max}$, where $c_0$ is an absolute constant in Bernstein's inequality (e.g., $c_0=1/2$). 
		This bound holds provided $N \ge C_B \ln d_{\alpha}$, with
		\[
		C_B \;=\; \frac{2(C_2+1)\,\nu_{\max}^2}{c_0 \big(\min_{j:\,b_j \neq 0} (\Es[S_{jn}^2]/b_j)\big)^2},
		\]
		$\nu_{\max}^2 = \max_j \Es[S_{jn}^2]$, and $b_j$ is the sub-exponential scale of $S_{jn}$ ($b_j=0$ for sub-Gaussian components).
	\end{lemma}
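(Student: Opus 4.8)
Since $L_N(\valpha) = -\frac1N\sum_{n=1}^N \ln\fclust(\vy_n;\valpha)$, the $j$-th coordinate of its gradient at the truth is the sample average $[\nabla L_N(\valpha^*)]_j = \frac1N\sum_{n=1}^N S_j(\vy_n;\valpha^*)$ of the (negative) single-observation score components, whose closed forms are given in \cref{lem:score_components}. Under \cref{ass:data_dist} the $\vy_n$ are i.i.d.\ from the correctly specified model $\fclust(\cdot;\valpha^*)$, and the smoothness and compactness in \cref{ass:iden_smooth,ass:compact_interior} justify interchanging differentiation and integration, so Bartlett's first identity gives $\mathbb{E}_{\valpha^*}[S_j(\vy;\valpha^*)] = 0$ for every $j$ (the simplex reparameterization $\pi_K = 1-\sum_{k<K}\pi_k$ does not affect this, since $\int\fclust\,d\vy = 1$ identically). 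Thus each coordinate of $\nabla L_N(\valpha^*)$ is an average of i.i.d.\ centered random variables, and the statement is a concentration-plus-union-bound result for its $\ell_\infty$ norm. The plan is: (i) obtain uniform sub-exponential tail parameters for the score components $S_{jn} := S_j(\vy_n;\valpha^*)$; (ii) apply Bernstein's inequality to each coordinate; (iii) union-bound over the $d_\alpha$ coordinates at the deviation level $C_g\sqrt{\ln d_\alpha / N}$.

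For step (i), I would handle the three parameter blocks from \cref{lem:score_components} separately, conditioning on the latent class $\rz_n$ of \cref{ass:data_dist}: given $\rz_n = k'$, $\vy_n - \vmu_k^*$ is Gaussian with mean $\vmu_{k'}^*-\vmu_k^*$ (of norm at most $2\eta$) and covariance $\mSigma_{k'}^*$ (eigenvalues in $[s_m,s_M]$) by \cref{ass:compact_interior}. Using $0 < t_k(\vy_n;\valpha^*)\le 1$ (\cref{ass:bound_post}): the proportion-block scores are almost-surely bounded linear combinations of the $t_k$, hence sub-Gaussian with an absolute parameter; the mean-block scores satisfy $|S_{(\vmu_k)_d,n}| \le |(\mPsi_k^*(\vy_n-\vmu_k^*))_d|$, a Gaussian linear form with proxy variance controlled by $\|\mPsi_k^*\|_{\mathrm{op}}^2 s_M \le s_M/s_m^2$ plus a bounded mean shift; and the precision-block scores satisfy $|S_{(\mPsi_k)_{rs},n}| \le |(\mSigma_k^*)_{rs}| + |(\vy_n-\vmu_k^*)_r(\vy_n-\vmu_k^*)_s|$, a sub-exponential random variable (product of two jointly Gaussian coordinates plus a bounded constant). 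Mixing over $\rz_n$ with weights $\pi_{k'}^*$ preserves these tail classes, yielding for each $j$ a uniform pair $(\nu_j, b_j)$ depending only on $K,\eta,s_m,s_M$, with $\nu_{\max}^2 = \max_j\mathbb{E}[S_{jn}^2] < \infty$ and $b_j = 0$ for the sub-Gaussian (proportion and mean) coordinates.

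For steps (ii)--(iii), Bernstein's inequality gives, for every $j$ and $t > 0$,
\[
\mathbb{P}\!\left(\Big|\tfrac1N\sum\nolimits_{n=1}^{N} S_{jn}\Big| > t\right) \le 2\exp\!\left(-c_0 N\min\!\left\{\frac{t^2}{\nu_j^2},\,\frac{t}{b_j}\right\}\right)
\]
for an absolute constant $c_0$. Taking $t = C_g\sqrt{\ln d_\alpha / N}$ with $C_g = \sqrt{2(C_2+1)/c_0}\,\nu_{\max}$ makes the sub-Gaussian branch equal to $2(C_2+1)\ln d_\alpha$, so each coordinate's tail is at most $2 d_\alpha^{-2(C_2+1)}$; a union bound over the $d_\alpha$ coordinates then bounds the failure probability by $2 d_\alpha^{-(2C_2+1)} \le C_1 d_\alpha^{-C_2}$. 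It remains to verify that the sub-Gaussian branch is the active one in the minimum for the sub-exponential (precision) coordinates, i.e.\ $t/\nu_j^2 \le 1/b_j$; since $t\downarrow 0$ as $N$ grows, this holds once $N \ge C_B\ln d_\alpha$ with $C_B$ as in the lemma statement. \textbf{Main obstacle.} The delicate step is (i) for the precision block: those score components are genuinely sub-exponential rather than sub-Gaussian (they involve products $(\vy_n-\vmu_k^*)_r(\vy_n-\vmu_k^*)_s$), so one must carefully propagate moment and Orlicz-norm bounds through the conditioning on the random latent class and the mixture (rather than single-Gaussian) structure, and verify that every resulting constant depends only on $K$ and the compactness constants $\eta, s_m, s_M$ and not on $d$ — otherwise the advertised $\sqrt{\ln d_\alpha / N}$ rate would be spoiled.
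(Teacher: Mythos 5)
Your proposal follows essentially the same route as the paper's proof: centering the coordinate-wise score averages via the first Bartlett identity, classifying the proportion and mean blocks as sub-Gaussian and the precision block as sub-exponential by conditioning on the latent class under the compactness bounds $(\pi_{\min},\eta,s_m,s_M)$, then applying Bernstein's inequality with $t_N = C_g\sqrt{\ln d_\alpha/N}$, the same choice $C_g=\sqrt{2(C_2+1)/c_0}\,\nu_{\max}$, the condition $N\ge C_B\ln d_\alpha$ to keep the sub-Gaussian branch active, and a union bound over the $d_\alpha$ coordinates. The only difference is a harmless factor-of-two slack the paper inserts in the Bernstein exponent (so your per-coordinate tail is slightly stronger, $2d_\alpha^{-2(C_2+1)}$ versus the paper's $2d_\alpha^{-(C_2+1)}$), which does not affect the stated conclusion.
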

	
	\begin{proof}[Proof of Lemma \ref{lem:gradient_bound}]
		By Assumptions~\ref{ass:iden_smooth} and \ref{ass:data_dist} (regularity and correct specification), 
		\[
		\Es_{\valpha^*}[S_j(\vy_n;\valpha^*)]=0 \quad \text{ for all } j=1,\dots,d_\alpha
		\]
		Let $\nu_j^2 = \mathbb{E}[S_j(\vy_n; \valpha^*)^2]$ be the variance of the score evaluated at $\evalpha^*_j$. We bound variances and tail parameters for each parameter block.
		
		\medskip
		\noindent{\bf Mixing proportions $S_{\pi_k}(\vy_n; \valpha^*)$.}
		By Lemma~\ref{lem:score_components}, 
		\[
		S_{\pi_k}(\vy_n; \valpha^*) 
		= \frac{t_K(\vy_n; \valpha^*)}{\pi_K^*} - \frac{t_k(\vy_n; \valpha^*)}{\pi_k^*}.
		\]
		By Assumption~\ref{ass:bound_post}, $0< t_m(\vy_n;\valpha^*) \le 1$, and by Assumption~\ref{ass:compact_interior}, $\pi_m^* \ge \pi_{\min}>0$; hence
		$|S_{\pi_k}(\vy_n; \valpha^*)| \le 2/\pi_{\min}$. Thus $S_{\pi_k}$ is bounded sub-Gaussian with
		$\nu_{\pi_k}^2=\Es[S_{\pi_k}^2]\le (2/\pi_{\min})^2$ and $b_{\pi_k}=0$.
		
		\medskip
		\noindent{\bf Mean parameters $S_{(\vmu_k)_d}(\vy_n; \valpha^*)$.}
		From Lemma~\ref{lem:score_components},
		\[
		S_{(\vmu_k)_d}(\vy_n; \valpha^*) = -\,t_k(\vy_n; \valpha^*)\,\big(\mPsi_k^*(\vy_n-\vmu_k^*)\big)_d.
		\]
		Let $U_{nkd}:=\big(\mPsi_k^*(\vy_n-\vmu_k^*)\big)_d$. 
		Conditionally on $z_n=\ell$, we have $\vy_n=\vmu_\ell^*+\varepsilon_\ell$ with $\varepsilon_\ell\sim\cN(\vzero,\mSigma_\ell^*)$.
		Then
		\[
		\Var\big(U_{nkd}\mid z_n=\ell\big)
		= e_d^\top \mPsi_k^* \mSigma_\ell^* \mPsi_k^* e_d 
		\;\le\; \|\mPsi_k^*\|_{2}^2\,\|\mSigma_\ell^*\|_{2}
		\;\le\; \theta_{\max}^2\,\sigma_{\max}^2,
		\]
		and
		\[
		\big|\Es[U_{nkd}\mid z_n=\ell]\big|
		= \big|e_d^\top \mPsi_k^*(\vmu_\ell^*-\vmu_k^*)\big|
		\;\le\; \|\mPsi_k^*\|_{2}\,\|\vmu_\ell^*-\vmu_k^*\|_2
		\;\le\; \theta_{\max}\,(2\eta).
		\]
		Hence $\Es[U_{nkd}^2]\le \theta_{\max}^2(\sigma_{\max}^2+4\eta^2)$.
		Since $|t_k|\le 1$, 
		\[
		\nu_{(\vmu_k)_d}^2=\Es\big[S_{(\vmu_k)_d}^2\big] 
		\;\le\; \Es[U_{nkd}^2]
		\;\le\; \theta_{\max}^2(\sigma_{\max}^2+4\eta^2).
		\]
		Moreover, $U_{nkd}$ is (conditionally) Gaussian and hence sub-Gaussian; with the bounded multiplier $t_k$, $S_{(\vmu_k)_d}$ is sub-Gaussian, so $b_{(\vmu_k)_d}=0$.
		
		\medskip
		\noindent{\bf Precision entries $S_{(\mPsi_k)_{rs}}(\vy_n; \valpha^*)$.}
		From Lemma~\ref{lem:score_components},
		\[
		S_{(\mPsi_k)_{rs}}(\vy_n; \valpha^*) 
		= -\,t_k(\vy_n; \valpha^*)\,\frac{(2-\delta_{rs})}{2}\,
		\Big( (\mSigma_k^*)_{rs} - (\vy_n-\vmu_k^*)_r(\vy_n-\vmu_k^*)_s \Big).
		\]
		Let $V_{nkrs}:=(\mSigma_k^*)_{rs} - (\vy_n-\vmu_k^*)_r(\vy_n-\vmu_k^*)_s$.
		Each $(\vy_n-\vmu_k^*)_r$ is sub-Gaussian with parameter $\lesssim \sigma_{\max}+\eta$, so the product 
		$(\vy_n-\vmu_k^*)_r(\vy_n-\vmu_k^*)_s$ is sub-exponential; hence $V_{nkrs}$ is sub-exponential. 
		Since $|t_k|\le 1$, $S_{(\mPsi_k)_{rs}}$ is sub-exponential. 
		Thus there exist constants $C_{\Psi,\nu},C_{\Psi,b}>0$ such that
		\[
		\nu_{(\mPsi_k)_{rs}}^2=\Es\left[S_{(\mPsi_k)_{rs}}^2\right] \;\le\; C_{\Psi,\nu}\,(\sigma_{\max}^2+\eta^2)^2,
		\qquad
		b_{(\mPsi_k)_{rs}} \;\le\; C_{\Psi,b}\,(\sigma_{\max}^2+\eta^2).
		\]
		
		\medskip
		\noindent{\bf Union bound.}
		Let $\nu_{\max}^2=\max_j \Es[S_{jn}^2]$ and $b_{\max}=\max_j\{b_j:b_j\neq 0\}$; by the above bounds these depend only on $(\pi_{\min},\eta,s_m,s_M)$.
		By Bernstein’s inequality, for each coordinate $j$ and any $t>0$,
		\[
		\mathbb{P}\left(\left|\frac{1}{N}\sum_{n=1}^N S_{jn}\right|\ge t\right)
		\;\le\; 2\exp\left(-c_0 N \min\left(\frac{t^2}{\nu_j^2},\,\frac{t}{b_j}\right)\right),
		\]
		with the convention that if $b_j=0$ then $\min(\cdot)=t^2/\nu_j^2$.
		Set $t_N = C_g\sqrt{\frac{\ln d_\alpha}{N}}$ and assume $N \ge C_B \ln d_\alpha$ so that 
		$t_N \le \min_{j:b_j\neq 0} \nu_j^2/b_j$. Then for all $j$,
		\[
		\mathbb{P}\left(\left|\frac{1}{N}\sum_{n=1}^N S_{jn}\right|\ge t_N\right)
		\;\le\; 2\exp\left(-\frac{c_0 N t_N^2}{2\nu_j^2}\right)
		\;\le\; 2\exp\left(-\frac{c_0 N t_N^2}{2\nu_{\max}^2}\right)
		= 2\,d_\alpha^{-\frac{c_0 C_g^2}{2\nu_{\max}^2}}.
		\]
		A union bound over $j=1,\dots,d_\alpha$ yields
		\[
		\mathbb{P}\left(\|\nabla \ell_N(\valpha^*)\|_\infty \ge t_N\right)
		\;\le\; 2\,d_\alpha^{\,1-\frac{c_0 C_g^2}{2\nu_{\max}^2}}.
		\]
		Choosing $C_1=2$ and $C_g$ so that $\frac{c_0 C_g^2}{2\nu_{\max}^2}=C_2+1$ gives the claim, i.e.,
		$C_g = \sqrt{2(C_2+1)/c_0}\,\nu_{\max}$.
		Finally, the condition $N \ge C_B \ln d_\alpha$ is guaranteed by taking
		\[
		C_B \;=\; \frac{C_g^2}{\big(\min_{j:\,b_j\neq 0}\nu_j^2/b_j\big)^2} 
		\;=\; \frac{2(C_2+1)\,\nu_{\max}^2}{c_0 \big(\min_{j:\,b_j \neq 0} (\Es[S_{jn}^2]/b_j)\big)^2}.
		\]
	\end{proof}
	
	% Parameter Consistency (first key lemma)
	\begin{lemma}[Parameter Consistency for Penalized GMM Estimator]
		\label{lem:parameter_consistency}
		Let $\widehat{\valpha}$ be any local minimizer of the penalized negative average log-likelihood 
		\[
		Q(\valpha) = \ell_N(\valpha) + P(\valpha),
		\]
		where
		\[
		\ell_N(\valpha) = -\frac{1}{N}\sum_{n=1}^N \ln \fclust(\vy_n; \valpha),
		\qquad
		P(\valpha) = \lambda \sum_{k=1}^K \|\vmu_k\|_1 + \rho \sum_{k=1}^K \|\mPsi_k\|_1.
		\]
		Let $\valpha^*$ be the true GMM parameter vector, and $\sS_0$ be the true support of the penalized parameters in $\valpha^*$, with sparsity $s_0 = |\sS_0|$. 
		Let $d_\alpha$ be the total number of free parameters in $\valpha$ as defined in Assumptions~\ref{ass:unif_hess}.
		
		Suppose Assumptions \ref{ass:iden_smooth}-\ref{ass:unif_hess} hold and the gradient bound in Lemma \ref{lem:gradient_bound} is satisfied. 
		Choose the regularization parameters $\lambda = \rho = \lambda_{\text{chosen}}$, where
		\[
		\lambda_{\text{chosen}} = A_0 C_g \sqrt{\frac{\ln d_\alpha}{N}}
		\]
		for a constant $A_0 > 1$ (e.g.\ $A_0=3$). Then, provided $N \gtrsim s_0 \ln d_\alpha$, with probability at least 
		$1 - C_1^{\text{grad}} d_\alpha^{-C_2^{\text{grad}}} - c_1^H d_\alpha^{-c_2^H}$:
		\begin{enumerate}
			\item \textbf{($L_2$-norm consistency):}
			\[
			\|\widehat{\valpha} - \valpha^*\|_2 
			\le \frac{2 (A_0+1) C_g}{\kappa_I} \sqrt{\frac{s_0 \ln d_\alpha}{N}}.
			\]
			\item \textbf{($L_1$-norm consistency):}
			\[
			\|\widehat{\valpha} - \valpha^*\|_1 
			\le \frac{4 A_0 (A_0+1) C_g}{(A_0-1)\kappa_I}\, s_0 \sqrt{\frac{\ln d_\alpha}{N}}.
			\]
		\end{enumerate}
		Here $C_g$ is the gradient bound constant from Lemma \ref{lem:gradient_bound}, and $\kappa_I$ is the restricted eigenvalue constant from Assumption \ref{ass:re}. 
		The factor $A_0$ is a user-chosen constant controlling the regularization strength.
	\end{lemma}
	
	\begin{proof}[Proof of Lemma \ref{lem:parameter_consistency}]
		Let $\widehat{\valpha}$ be a minimizer of the penalized negative log-likelihood
		$Q(\valpha) = \ell_N(\valpha) + P(\valpha)$, where 
		$P(\valpha) = \lambda \sum_k \|\vmu_k\|_1 + \rho \sum_k \|\mPsi_k\|_1$ and 
		$\ell_N(\valpha) = \frac{1}{N}\sum_{n=1}^N \ell(\vy_n;\valpha)$ with 
		$\ell(\vy;\valpha)=-\ln \fclust(\vy;\valpha)$. 
		Since $\widehat{\valpha}$ minimizes $Q$, we have
		\begin{equation} \label{eq:basic_optimality}
			\ell_N(\widehat{\valpha}) - \ell_N(\valpha^*) \;\le\; P(\valpha^*) - P(\widehat{\valpha}).
		\end{equation}
		
		Let $\mDelta = \widehat{\valpha} - \valpha^*$ and assume $\widehat{\valpha}\in\mTheta'_{\sV}$ so that 
		$\valpha^*+\mDelta\in\mTheta'_{\sV}$. A second-order Taylor expansion gives
		\[
		\ell_N(\widehat{\valpha}) - \ell_N(\valpha^*) 
		= \langle \nabla \ell_N(\valpha^*), \mDelta \rangle 
		+ \frac{1}{2}\,\mDelta^\top \mH_N(\valpha^* + t_0 \mDelta)\,\mDelta
		\]
		for some $t_0\in(0,1)$. Write $\tilde{\valpha}=\valpha^*+t_0\mDelta$. We lower bound 
		$\frac{1}{2}\mDelta^\top \mH_N(\tilde{\valpha})\mDelta$ by decomposing
		\begin{align*}
			\mDelta^\top \mH_N(\tilde{\valpha}) \mDelta
			&= \mDelta^\top \mI(\valpha^*) \mDelta
			+ \mDelta^\top\big(\mI(\tilde{\valpha}) - \mI(\valpha^*)\big)\mDelta
			+ \mDelta^\top\big(\mH_N(\tilde{\valpha}) - \mI(\tilde{\valpha})\big)\mDelta .
		\end{align*}
		\emph{From this point, all bounds are stated for arbitrary $\mDelta$ in the cone $\gC(c_0,\sS_0)$;} the fact that the actual error 
		$\widehat{\valpha}-\valpha^*$ lies in $\gC(c_0,\sS_0)$ will be established later by Lemma~\ref{lem:cone_condition_gmm}.
		
		By Assumption~\ref{ass:re}, for $\mDelta\in\gC(c_0,\sS_0)$,
		\[
		\mDelta^\top \mI(\valpha^*) \mDelta \;\ge\; \kappa_I \|\mDelta\|_2^2.
		\]
		By Assumption~\ref{ass:iden_smooth} and compactness (Assumption~\ref{ass:compact_interior}), 
		$\valpha\mapsto \mI(\valpha)=\Es[\nabla^2 \ell(\vy;\valpha)]$ is Lipschitz on 
		$\sB(\valpha^*,r_0)\cap \mTheta'_{\sV}$ in spectral norm: there exists $L_I>0$ such that
		\[
		\|\mI(\tilde{\valpha})-\mI(\valpha^*)\|_2 \;\le\; L_I \|\tilde{\valpha}-\valpha^*\|_2 
		= L_I t_0 \|\mDelta\|_2 \;\le\; L_I \|\mDelta\|_2,
		\]
		and therefore
		\[
		\big|\mDelta^\top(\mI(\tilde{\valpha})-\mI(\valpha^*))\mDelta\big|
		\;\le\; \|\mI(\tilde{\valpha})-\mI(\valpha^*)\|_2\,\|\mDelta\|_2^2
		\;\le\; L_I \|\mDelta\|_2^3
		\;\le\; L_I \delta_R \|\mDelta\|_2^2,
		\]
		provided $\|\mDelta\|_2\le \delta_R$.
		By Assumption~\ref{ass:unif_hess}, if $\|\mDelta\|_2\le \delta_R$ and $\mDelta\in\gC(c_0,\sS_0)$, then with probability at least 
		$1-c_1^H d_\alpha^{-c_2^H}$,
		\[
		\big|\mDelta^\top(\mH_N(\tilde{\valpha})-\mI(\tilde{\valpha}))\mDelta\big|
		\;\le\; C_H \sqrt{\frac{s_0 \ln d_\alpha}{N}}\,\|\mDelta\|_2^2 .
		\]
		Combining the three representations, for $\mDelta\in\gC(c_0,\sS_0)$ with $\|\mDelta\|_2\le \delta_R$,
		\[
		\frac{1}{2}\mDelta^\top \mH_N(\tilde{\valpha})\mDelta 
		\;\ge\; \frac{1}{2}\Big(\kappa_I - L_I\|\mDelta\|_2 - C_H\sqrt{\frac{s_0 \ln d_\alpha}{N}}\Big)\,\|\mDelta\|_2^2 .
		\]
		Choose $\delta_R$ and $N$ so that $L_I\delta_R\le \kappa_I/4$ and 
		$C_H\sqrt{\frac{s_0 \ln d_\alpha}{N}}\le \kappa_I/4$ (e.g., $N\gtrsim (C_H^2/\kappa_I^2)s_0\ln d_\alpha$).
		Then, with the same probability,
		\[
		\frac{1}{2}\mDelta^\top \mH_N(\tilde{\valpha})\mDelta \;\ge\; \frac{\kappa_I}{4}\,\|\mDelta\|_2^2.
		\]
		Hence we obtain the \emph{restricted strong convexity} (RSC) inequality on the cone:
		\begin{equation} \label{eq:rsc_loss_proven}
			\ell_N(\valpha^*+\mDelta) - \ell_N(\valpha^*) - \langle \nabla \ell_N(\valpha^*), \mDelta \rangle 
			\;\ge\; \frac{\kappa_L}{2}\,\|\mDelta\|_2^2,
			\qquad \text{for all }\mDelta\in\gC(c_0,\sS_0),\ \|\mDelta\|_2\le \delta_R,
		\end{equation}
		where $\kappa_L=\kappa_I/2>0$.
		
		\medskip
		\noindent\textbf{Stochastic term.}
		By Lemma~\ref{lem:gradient_bound}, if $N \ge C_B \ln d_\alpha$, then with probability at least 
		$1 - C_1^{\text{grad}} d_\alpha^{-C_2^{\text{grad}}}$,
		\[
		\|\nabla \ell_N(\valpha^*)\|_\infty \;\le\; C_g \sqrt{\frac{\ln d_\alpha}{N}}.
		\]
		By Hölder’s inequality $\Big (\langle x, y \rangle <= \norm{x}_\infty \norm{y}_1 \Big)$,
		\begin{equation}\label{eq:score_bound_applied}
			|\langle \nabla \ell_N(\valpha^*), \mDelta \rangle| 
			\;\le\; \|\nabla \ell_N(\valpha^*)\|_\infty \,\|\mDelta\|_1
			\;\le\; C_g \sqrt{\frac{\ln d_\alpha}{N}}\,\|\mDelta\|_1 .
		\end{equation}
		
		\medskip
		\noindent\textbf{Penalty difference.}
		Let $\mDelta_{\vmu_k}=\widehat{\vmu}_k-\vmu_k^*$ and $\mDelta_{\mPsi_k}=\widehat{\mPsi}_k-\mPsi_k^*$, and denote 
		by $\sS_{\vmu}^*$ and $\sS_{\mPsi}^*$ the true supports (Definition~\ref{def:TrueSupport}). 
		Using the $L_1$-triangle inequality on supports: for any vector $\va, \vb$ and support $\sS$ of $\va$: $\|\va\|_1 - \|\vb\|_1 \le \|\va_{\sS} - \vb_{\sS}\|_1 - \|\va_{(\sS)^c} - \vb_{(\sS)^c}\|_1 + 2\|\va_{(\sS)^c}\|_1$. Since $\va_{(\sS)^c}=\vzero$:
		\[
		\|\vmu_k^*\|_1 - \|\widehat{\vmu}_k\|_1 
		\;\le\; \|\mDelta_{\vmu_k,\sS_{\vmu}^*}\|_1 - \|\mDelta_{\vmu_k,(\sS_{\vmu}^*)^c}\|_1,
		\qquad
		\|\mPsi_k^*\|_1 - \|\widehat{\mPsi}_k\|_1 
		\;\le\; \|\mDelta_{\mPsi_k,\sS_{\mPsi}^*}\|_1 - \|\mDelta_{\mPsi_k,(\sS_{\mPsi}^*)^c}\|_1.
		\]
		Summing over $k$ and writing $\sS_0=\sS_{\vmu}^*\cup\sS_{\mPsi}^*$,
		\begin{align}\label{eq:penalty_difference}
			P(\valpha^*) - P(\widehat{\valpha})
			&\le \lambda\big(\|\mDelta_{\vmu,\sS_0}\|_1 - \|\mDelta_{\vmu,(\sS_0)^c}\|_1\big)
			+ \rho\big(\|\mDelta_{\mPsi,\sS_0}\|_1 - \|\mDelta_{\mPsi,(\sS_0)^c}\|_1\big).
		\end{align}
		
		\medskip
		\noindent\textbf{Combining.}
		With probability at least $1 - C_1^{\text{grad}} d_\alpha^{-C_2^{\text{grad}}} - c_1^H d_\alpha^{-c_2^H}$, 
		combining \Cref{eq:rsc_loss_proven}, \Cref{eq:score_bound_applied}, and \Cref{eq:penalty_difference} with \Cref{eq:basic_optimality} yields, 
		for all $\mDelta\in\gC(c_0,\sS_0)$ with $\|\mDelta\|_2\le \delta_R$,
		\begin{equation}\label{eq:final_rsc_cond}
			\frac{\kappa_L}{2}\,\|\mDelta\|_2^2 
			\;\le\; C_g \sqrt{\frac{\ln d_\alpha}{N}}\,\|\mDelta\|_1
			+ \lambda\big(\|\mDelta_{\vmu,\sS_0}\|_1 - \|\mDelta_{\vmu,(\sS_0)^c}\|_1\big)
			+ \rho\big(\|\mDelta_{\mPsi,\sS_0}\|_1 - \|\mDelta_{\mPsi,(\sS_0)^c}\|_1\big).
		\end{equation}
		Invoking the inequality \Cref{eq:ineq_cone_precond} from Lemma~\ref{lem:cone_condition_gmm} (the cone lemma, proved via KKT and the gradient bound, thus independent of RSC), 
		\[ \frac{\kappa_L}{2} \|\mDelta\|_2^2 \le \left(1+\frac{1}{A_0}\right) \sum_{j \in \sS_0} \lambda_j |\Delta_j| - \left(1-\frac{1}{A_0}\right) \sum_{j \in (\sS_0)^c} \lambda_j |\Delta_j|. \]
		Since the second term is non-positive (as $A_0 > 1$ and $\lambda_j |\Delta_j| \ge 0$), we can drop it to get an upper bound:
		\[ 
		\frac{\kappa_L}{2} \|\mDelta\|_2^2 \le \left(1+\frac{1}{A_0}\right) \sum_{j \in \sS_0} \lambda_j |\Delta_j|. 
		\]
		Taking the common regularization level $\lambda_{\text{chosen}}=\lambda=\rho=A_0 C_g \sqrt{\frac{\ln d_\alpha}{N}}$ with $A_0>1$ and note that $\sum_{j \in \sS_0} \lambda_j |\Delta_j| = \lambda_{\text{chosen}} \|\mDelta_{\sS_0}\|_1$, we obtain 
		\[
		\frac{\kappa_L}{2}\,\|\mDelta\|_2^2 
		\;\le\; \Big(1+\frac{1}{A_0}\Big)\lambda_{\text{chosen}}\,\|\mDelta_{\sS_0}\|_1 .
		\]
		Using the Cauchy-Schwarz inequality $\|\mDelta_{\sS_0}\|_1 \le \sqrt{s_0} \|\mDelta_{\sS_0}\|_2$, and since $\|\mDelta_{\sS_0}\|_2 \le \|\mDelta\|_2$:
		\[ \frac{\kappa_L}{2} \|\mDelta\|_2^2 \le \lambda_{\text{chosen}} \left(1+\frac{1}{A_0}\right) \sqrt{s_0} \|\mDelta\|_2. \]
		If $\|\mDelta\|_2 \neq 0$, we can divide by $\|\mDelta\|_2$:
		\[ \|\mDelta\|_2 \le \frac{2\lambda_{\text{chosen}}}{\kappa_L} \left(1+\frac{1}{A_0}\right) \sqrt{s_0}. \]
		Substituting $\lambda_{\text{chosen}} = A_0 C_g \sqrt{\frac{\ln d_\alpha}{N}}$:
		\begin{align*}
			\|\widehat{\valpha} - \valpha^*\|_2 &\le \frac{2 A_0 C_g}{\kappa_L} \left(1+\frac{1}{A_0}\right) \sqrt{s_0} \sqrt{\frac{\ln d_\alpha}{N}} \\
			&= \frac{2 (A_0+1) C_g}{\kappa_L} \sqrt{\frac{s_0 \ln d_\alpha}{N}}. \label{eq:l2_consistency_final}
		\end{align*}
		
		This proves the $L_2$-rate with constant $C_{L2}=\frac{2(A_0+1)C_g}{\kappa_L}$.
		
		For the $L_1$-rate, Lemma~\ref{lem:cone_condition_gmm} yields the cone bound 
		$\|\mDelta_{(\sS_0)^c}\|_1 \le C_{\text{cone}}\|\mDelta_{\sS_0}\|_1$ with 
		$C_{\text{cone}}=\frac{A_0+1}{A_0-1}$ (decomposability via \cite{negahban2012}). Hence, assuming all $\lambda_j$ for penalized components are $\lambda_{\text{chosen}}$):
		\begin{align*}
			\|\mDelta\|_1 &= \|\mDelta_{\sS_0}\|_1 + \|\mDelta_{(\sS_0)^c}\|_1 \\
			&\le \|\mDelta_{\sS_0}\|_1 + C_{\text{cone}} \|\mDelta_{\sS_0}\|_1 = (1+C_{\text{cone}}) \|\mDelta_{\sS_0}\|_1 \\
			&\le (1+C_{\text{cone}}) \sqrt{s_0} \|\mDelta_{\sS_0}\|_2 \quad (\text{by Cauchy-Schwarz}) \\
			&\le (1+C_{\text{cone}}) \sqrt{s_0} \|\mDelta\|_2.
		\end{align*}
		Substituting $C_{\text{cone}} = \frac{A_0+1}{A_0-1}$:
		\[ 1+C_{\text{cone}} = 1 + \frac{A_0+1}{A_0-1} = \frac{A_0-1+A_0+1}{A_0-1} = \frac{2A_0}{A_0-1}. \]
		So,
		\[ \|\mDelta\|_1 \le \left(\frac{2A_0}{A_0-1}\right) \sqrt{s_0} \|\mDelta\|_2. \]
		Now substitute the bound for $\|\mDelta\|_2$:
		\begin{align*}
			\|\widehat{\valpha} - \valpha^*\|_1 &\le \left(\frac{2A_0}{A_0-1}\right) \sqrt{s_0} \left( \frac{2 (A_0+1) C_g}{\kappa_L} \sqrt{\frac{s_0 \ln d_\alpha}{N}} \right) \\
			&= \frac{4 A_0 (A_0+1) C_g}{(A_0-1)\kappa_L} s_0 \sqrt{\frac{\ln d_\alpha}{N}}. \label{eq:l1_consistency_final} \\
			& =  C_{L1}(A_0, C_g, \kappa_L) s_0 \sqrt{\frac{\ln d_\alpha}{N}}
		\end{align*}
		where $C_{L1}(A_0, C_g, \kappa_L) = \frac{4 A_0 (A_0+1) C_g}{(A_0-1)\kappa_L}$. This concludes the proof of parameter consistency in $L_1$ and $L_2$ norms.
	\end{proof}
	
	\begin{remark}
		The radius $\delta_R$ governing the RSC inequality \Cref{eq:rsc_loss_proven} is critical and is typically of the same order as the target statistical error. If $\|\mDelta\|_2$ exceeds $\delta_R$, the Lipschitz remainder $L_I\|\mDelta\|_2^{3}$ can dominate, and/or the uniform Hessian concentration in Assumption~\ref{ass:unif_hess} may fail on such a large neighborhood of $\valpha^*$. In general M-estimation analyses (e.g., \cite{negahban2012}), one often proves an RSC with a tolerance term,
		\[
		\ell_N(\valpha^*+\mDelta)-\ell_N(\valpha^*)-\langle \nabla \ell_N(\valpha^*) , \mDelta\rangle
		\;\ge\; \frac{\kappa_L}{2}\,\|\mDelta\|_2^2 \;-\; \tau_L\,\frac{\ln d_\alpha}{N}\,\|\mDelta\|_1^2,
		\]
		valid on a larger set. Under Assumption~\ref{ass:unif_hess}, we obtain sufficiently strong control to dispense with this tolerance and derive the cleaner quadratic curvature bound \Cref{eq:rsc_loss_proven}. If Assumption~\ref{ass:unif_hess} were weakened (e.g., only yielding a bound of the form 
		$C_H \sqrt{\frac{\ln d_\alpha}{N}}\,\frac{\|\mDelta\|_1}{\sqrt{s_0}}\|\mDelta\|_2$ on $\mDelta^\top(\mH_N-\mI)\mDelta$ over the cone), then a tolerance term proportional to $\|\mDelta\|_1^2$ would naturally appear in the RSC.
	\end{remark}
	
	% Lemma on KKT and Penalty inequality
	\begin{lemma}[Cone Condition for $L_1$-Penalized M-Estimators]
		\label{lem:cone_condition_gmm}
		Let $\widehat{\valpha}$ be any local minimizer of $Q(\valpha) = \ell_N(\valpha) + P(\valpha)$, where $\ell_N(\valpha)$ is a differentiable loss function and the penalty $P(\valpha)$ is a sum of component-wise $L_1$ penalties:
		\[ P(\valpha) = \sum_{j =1}^{d_\alpha} \lambda_j |\alpha_j|. \]
		Let $\sS_0$ be the true support of $\valpha^*$. Let $\mDelta = \widehat{\valpha} - \valpha^*$.
		Suppose the regularization parameters are chosen such that for some constant $A_0 > 1$:
		\begin{equation} \label{eq:lambda_choice_for_cone}
			\lambda_j \ge A_0 |[\nabla \ell_N(\valpha^*)]_j| \quad \text{for all } j \in (\sS_0)^c.
		\end{equation}
		and $\lambda_j$ of similar order for $j \in \sS_0$. For simplicity, we often set $\lambda_j = \lambda_{\text{chosen}}$ for all penalized components, where $\lambda_{\text{chosen}} \ge A_0 \|\nabla \ell_N(\valpha^*)\|_\infty$.
		\noindent If the RSC condition from \Cref{eq:rsc_loss_proven} holds for $\mDelta$:
		\[ \ell_N(\valpha^* + \mDelta) - \ell_N(\valpha^*) - \langle \nabla \ell_N(\valpha^*), \mDelta \rangle \ge \frac{\kappa_L}{2} \|\mDelta\|_2^2, \]
		then, for $A_0 > 1$, the error vector $\mDelta$ satisfies the cone condition:
		\begin{equation} \label{eq:cone_condition_lemma_result}
			\sum_{j \in (\sS_0)^c} \lambda_j |\Delta_j| \le \frac{A_0+1}{A_0-1} \sum_{j \in \sS_0} \lambda_j |\Delta_j|.
		\end{equation}
		If all $\lambda_j$ for penalized components are equal to $\lambda_{\text{chosen}}$, this simplifies to:
		\[ \|\mDelta_{(\sS_0)^c}\|_{\text{pen},1} \le \frac{A_0+1}{A_0-1} \|\mDelta_{\sS_0}\|_{\text{pen},1}, \]
		where $\|\cdot\|_{\text{pen},1}$ refers to the $L_1$ norm over the components that are actually penalized. If all components were penalized with the same $\lambda_0$, this would be $\|\mDelta_{(\sS_0)^c}\|_1 \le \frac{A_0+1}{A_0-1} \|\mDelta_{\sS_0}\|_1$.
	\end{lemma}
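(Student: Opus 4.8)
The plan is to run the standard ``basic inequality'' argument for $L_1$-penalized M-estimators, the only nonroutine feature being that $\gL_N$ is the (non-convex) GMM negative log-likelihood, so the place usually occupied by convexity of the loss is taken over by the RSC hypothesis \eqref{eq:rsc_loss_proven}. First I would use optimality of $\widehat{\valpha}$: since $\widehat{\valpha}$ is a local minimizer of $Q=\gL_N+P$ and $\valpha^*$ is feasible, $Q(\widehat{\valpha})\le Q(\valpha^*)$, i.e. $\gL_N(\widehat{\valpha})-\gL_N(\valpha^*)\le P(\valpha^*)-P(\widehat{\valpha})$. Substituting the RSC lower bound $\gL_N(\widehat{\valpha})-\gL_N(\valpha^*)\ge\langle\nabla\gL_N(\valpha^*),\mDelta\rangle+\tfrac{\kappa_L}{2}\|\mDelta\|_2^2$ and keeping the curvature term for now yields the ``curvature-adjusted basic inequality''
\[ \tfrac{\kappa_L}{2}\|\mDelta\|_2^2 \;\le\; \bigl(P(\valpha^*)-P(\widehat{\valpha})\bigr) \;-\; \langle\nabla\gL_N(\valpha^*),\mDelta\rangle . \]

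Next I would bound the two terms on the right separately. For the gradient term, coordinate-wise Hölder gives $-\langle\nabla\gL_N(\valpha^*),\mDelta\rangle\le\sum_j\bigl|[\nabla\gL_N(\valpha^*)]_j\bigr|\,|\Delta_j|$, and the calibration $\lambda_j\ge A_0\bigl|[\nabla\gL_N(\valpha^*)]_j\bigr|$ (which under the convenient normalization $\lambda_j\equiv\lambda_{\mathrm{chosen}}\ge A_0\|\nabla\gL_N(\valpha^*)\|_\infty$ holds on every coordinate, in particular on $\sS_0$) turns this into $-\langle\nabla\gL_N(\valpha^*),\mDelta\rangle\le\tfrac{1}{A_0}\bigl(\sum_{j\in\sS_0}\lambda_j|\Delta_j|+\sum_{j\in(\sS_0)^c}\lambda_j|\Delta_j|\bigr)$. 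For the penalty term I would decouple over the support: for $j\in(\sS_0)^c$ we have $\alpha_j^*=0$, so $\lambda_j(|\alpha_j^*|-|\widehat{\alpha}_j|)=-\lambda_j|\Delta_j|$, while for $j\in\sS_0$ the reverse triangle inequality gives $\lambda_j(|\alpha_j^*|-|\widehat{\alpha}_j|)\le\lambda_j|\Delta_j|$; summing, $P(\valpha^*)-P(\widehat{\valpha})\le\sum_{j\in\sS_0}\lambda_j|\Delta_j|-\sum_{j\in(\sS_0)^c}\lambda_j|\Delta_j|$. Plugging both bounds into the curvature-adjusted basic inequality produces the intermediate estimate
\[ \tfrac{\kappa_L}{2}\|\mDelta\|_2^2 \;\le\; \Bigl(1+\tfrac{1}{A_0}\Bigr)\sum_{j\in\sS_0}\lambda_j|\Delta_j| \;-\; \Bigl(1-\tfrac{1}{A_0}\Bigr)\sum_{j\in(\sS_0)^c}\lambda_j|\Delta_j| , \]
which is exactly the inequality that the proof of Lemma~\ref{lem:parameter_consistency} reuses. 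Finally, discarding the nonnegative left-hand side and rearranging (legitimate because $A_0>1$ makes $1-1/A_0>0$) gives $\bigl(1-\tfrac{1}{A_0}\bigr)\sum_{j\in(\sS_0)^c}\lambda_j|\Delta_j|\le\bigl(1+\tfrac{1}{A_0}\bigr)\sum_{j\in\sS_0}\lambda_j|\Delta_j|$, i.e. the claimed cone bound $\sum_{j\in(\sS_0)^c}\lambda_j|\Delta_j|\le\tfrac{A_0+1}{A_0-1}\sum_{j\in\sS_0}\lambda_j|\Delta_j|$; the equal-weight corollary follows by cancelling $\lambda_{\mathrm{chosen}}$.

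The computation is essentially bookkeeping, so I do not expect an analytic obstacle inside the lemma. The one genuine subtlety is the logical status of the RSC hypothesis: Assumption~\ref{ass:re} (hence \eqref{eq:rsc_loss_proven}) is only asserted over the cone $\gC(c_0,\sS_0)$, which is precisely the object this lemma is meant to establish, so there is an apparent circularity. In the self-contained lemma this is sidestepped by phrasing the RSC bound as a hypothesis on the \emph{particular} error vector $\mDelta=\widehat{\valpha}-\valpha^*$; when the lemma is invoked inside Lemma~\ref{lem:parameter_consistency} the circularity must instead be broken either by a localization/peeling argument, or by first deriving the cone condition from only the remainder nonnegativity $\gL_N(\widehat{\valpha})-\gL_N(\valpha^*)-\langle\nabla\gL_N(\valpha^*),\mDelta\rangle\ge0$ (valid in a neighbourhood of $\valpha^*$ where the Hessian is PSD, and already enough for the displays above, since only nonnegativity of the remainder is used there) and then upgrading to the quantitative $\tfrac{\kappa_L}{2}\|\mDelta\|_2^2$ bound. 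A second, minor, point is making the ``$\lambda_j$ of similar order for $j\in\sS_0$'' clause precise; I would simply fix $\lambda_j\equiv\lambda_{\mathrm{chosen}}=A_0C_g\sqrt{\ln d_\alpha/N}$ and invoke Lemma~\ref{lem:gradient_bound} so that $\lambda_{\mathrm{chosen}}\ge A_0\|\nabla\gL_N(\valpha^*)\|_\infty$ on the stated high-probability event, which makes the coordinate-wise calibration hold uniformly over all $d_\alpha$ coordinates.
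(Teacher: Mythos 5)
Your argument is correct and is essentially the paper's own proof: the same basic inequality from optimality, the same penalty decomposition over $\sS_0$ and $(\sS_0)^c$, the same gradient bound via $\lambda_j \ge A_0|[\nabla\gL_N(\valpha^*)]_j|$, leading to the identical intermediate display and then dropping the nonnegative curvature term to obtain the cone bound. Your closing remark about the apparent circularity of assuming RSC on the cone is a fair observation about how the lemma is used in Lemma~\ref{lem:parameter_consistency}, but within the lemma itself the hypothesis is stated for the particular error vector $\mDelta$, so it does not affect the correctness of this proof.
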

	\begin{remark}
		In our case, $\lambda_j = \lambda$ for mean components $\evmu_{kd}$, and $\lambda_j = \rho$ for off-diagonal precision components $(\emPsi_k)_{rs}$, and $\lambda_j=0$ for parameters not penalized like proportions or diagonal precision elements if they are not penalized towards a specific value. Moreover, the Cone Condition ensures that the error vector is primarily concentrated on the true support. The key is that the regularization parameter for the "noise" variables (off-support) must be sufficiently larger than the corresponding component of the score vector, allowing the penalty to effectively shrink noise components.
	\end{remark}
	
	\begin{proof}[Proof of Lemma \ref{lem:cone_condition_gmm}]
		Let $\mathcal P\subset\{1,\dots,d_\alpha\}$ denote the index set of \emph{penalized} coordinates and write
		$\|\vx\|_{\text{pen},1}=\sum_{j\in\mathcal P}|x_j|$. In what follows, sums and $L_1$-norms are taken over $\mathcal P$. 
		Assume the tuning condition holds on \emph{all penalized} coordinates:
		\begin{equation}\label{eq:tuning-pen}
			\lambda_j \;\ge\; A_0\,\big|[\nabla \ell_N(\valpha^*)]_j\big|\qquad \forall\, j\in\mathcal P,\quad A_0>1,
		\end{equation}
		equivalently $\lambda_{\text{chosen}}\ge A_0\|\nabla \ell_N(\valpha^*)\|_{\infty,\text{pen}}$ when a common level is used.
		\footnote{If some coordinates are unpenalized, they are excluded from $\mathcal P$ and do not enter the cone inequality.}
		
		Since $\widehat{\valpha}$ is a local minimizer of $Q(\valpha)$, it satisfies the basic optimality inequality
		\[
		\ell_N(\widehat{\valpha}) - \ell_N(\valpha^*) \;\le\; P(\valpha^*) - P(\widehat{\valpha}).
		\]
		Invoking the restricted curvature bound (RSC) \Cref{eq:rsc_loss_proven} for $\mDelta=\widehat{\valpha}-\valpha^*$,
		\[
		\ell_N(\widehat{\valpha}) - \ell_N(\valpha^*) 
		\;\ge\; \langle \nabla \ell_N(\valpha^*), \mDelta \rangle + \frac{\kappa_L}{2}\,\|\mDelta\|_2^2.
		\]
		Note that, only the nonnegativity of the quadratic remainder is needed for the cone; the explicit $\kappa_L>0$ is used later for $L_2$-rates. Therefore,
		\begin{equation}\label{eq:basic+RSC-pen}
			\langle \nabla \ell_N(\valpha^*), \mDelta \rangle + \frac{\kappa_L}{2}\,\|\mDelta\|_2^2
			\;\le\; P(\valpha^*) - P(\widehat{\valpha}).
		\end{equation}
		
		For the penalty difference, write $P(\valpha)=\sum_{j\in\mathcal P}\lambda_j|\alpha_j|$ and let $\sS_0\subseteq\mathcal P$ denote the true support of the penalized coordinates. Using the standard $L_1$ support inequality with $\va=\valpha^*$ and $\vb=\widehat{\valpha}=\valpha^*+\mDelta$,
		\[
		\|\va\|_1-\|\vb\|_1
		=\|\va_{\sS_0}\|_1-\|\vb_{\sS_0}\|_1-\|\vb_{(\sS_0)^c}\|_1
		\le \|\va_{\sS_0}-\vb_{\sS_0}\|_1-\|\vb_{(\sS_0)^c}\|_1,
		\]
		and since $(\valpha^*)_{(\sS_0)^c}=\vzero$ on $\mathcal P$, we obtain
		\[
		P(\valpha^*) - P(\widehat{\valpha})
		\;\le\; \sum_{j\in\sS_0}\lambda_j |\Delta_j| \;-\; \sum_{j\in(\sS_0)^c}\lambda_j |\Delta_j|.
		\]
		Substitute this into \Cref{eq:basic+RSC-pen} and bound the linear term by the triangle inequality over $\mathcal P$:
		\[
		\langle \nabla \ell_N(\valpha^*), \mDelta \rangle
		\;\le\; \sum_{j\in\mathcal P} \big|[\nabla \ell_N(\valpha^*)]_j\big|\,|\Delta_j|
		\;\le\; \sum_{j\in\mathcal P} \frac{\lambda_j}{A_0}\,|\Delta_j|,
		\]
		where the last step uses \Cref{eq:tuning-pen}. We obtain
		\[
		\frac{\kappa_L}{2}\,\|\mDelta\|_2^2
		\;\le\; \sum_{j\in\sS_0}\left(\lambda_j+\frac{\lambda_j}{A_0}\right)|\Delta_j|
		\;-\; \sum_{j\in(\sS_0)^c}\left(\lambda_j-\frac{\lambda_j}{A_0}\right)|\Delta_j|.
		\]
		Equivalently,
		\begin{equation}\label{eq:ineq_cone_precond}
			\frac{\kappa_L}{2}\,\|\mDelta\|_2^2
			\;\le\; \Big(1+\tfrac{1}{A_0}\Big)\sum_{j\in\sS_0}\lambda_j|\Delta_j|
			\;-\; \Big(1-\tfrac{1}{A_0}\Big)\sum_{j\in(\sS_0)^c}\lambda_j|\Delta_j|.
		\end{equation}
		Since the left-hand side is nonnegative, it follows that
		\[
		\Big(1-\tfrac{1}{A_0}\Big)\sum_{j\in(\sS_0)^c}\lambda_j|\Delta_j|
		\;\le\; \Big(1+\tfrac{1}{A_0}\Big)\sum_{j\in\sS_0}\lambda_j|\Delta_j|,
		\]
		and for $A_0>1$ this yields the cone inequality
		\[
		\sum_{j\in(\sS_0)^c}\lambda_j|\Delta_j|
		\;\le\; \frac{A_0+1}{A_0-1}\sum_{j\in\sS_0}\lambda_j|\Delta_j|.
		\]
		If all penalized coordinates share a common level $\lambda_{\text{chosen}}$, this becomes 
		$\|\mDelta_{(\sS_0)^c}\|_{\text{pen},1}\le \frac{A_0+1}{A_0-1}\,\|\mDelta_{\sS_0}\|_{\text{pen},1}$, 
		and if every coordinate is penalized equally it reduces to 
		$\|\mDelta_{(\sS_0)^c}\|_{1}\le \frac{A_0+1}{A_0-1}\,\|\mDelta_{\sS_0}\|_{1}$.
	\end{proof}
	
	% Ranking Consistency (Second key lemma)
	With the cone condition in Lemma~\ref{lem:cone_condition_gmm} established on the penalized coordinates, we now consolidate the standing assumptions for the ranking-consistency analysis (Lemma~5) and for Theorem~\ref{theorem_sel_Lasso_like_ranking_consistency}. The goal is to avoid duplication, align the SRUW assumptions used earlier with the GMM penalized framework here, and make explicit exactly which conditions are invoked downstream.
	
	\begin{remark}[Alignment with SRUW assumptions]
		\label{rem:assumption_alignment}
		Assumptions \ref{assum:sruw_h1}-\ref{assum:sruw_h3} were introduced for the SRUW model in the MNARz setting. In the present penalized GMM analysis for Theorem~\ref{theorem_sel_Lasso_like_ranking_consistency}:
		\begin{itemize}
			\item \textbf{Model uniqueness (SRUW-\ref{assum:sruw_h1}).} We condition on a fixed mixture structure (number of components $K$ and dimension $D$) and require identifiability of the GMM density; this role is played by Assumption~\ref{ass:iden_smooth} below. We do not require the SRUW tuple $(K_0,m_0,r_0,l_0,\sV_0)$ explicitly here because no model selection over SRUW structures is performed in Theorem~\ref{theorem_sel_Lasso_like_ranking_consistency}.
			\item \textbf{Compactness and interiority (SRUW-\ref{assum:sruw_h2}-\ref{assum:sruw_h3}).} These correspond directly to Assumption~\ref{ass:compact_interior} below (compact parameter subset $\mTheta'_{\sV}$ and $\valpha^*$ interior), together with the eigenvalue and boundedness constraints therein. Thus SRUW-\ref{assum:sruw_h2}-\ref{assum:sruw_h3} are subsumed by Assumption~\ref{ass:compact_interior}.
		\end{itemize}
		Hence, for the purposes of Lemma~\ref{lem:ranking_consistency} and Theorem~\ref{theorem_sel_Lasso_like_ranking_consistency}, it suffices to work with Assumptions \ref{ass:iden_smooth}-\ref{ass:unif_hess} below; SRUW-\ref{assum:sruw_h1}-\ref{assum:sruw_h3} need not be re-stated.
	\end{remark}
	
	To proceed rigorously towards Lemma ~\ref{lem:ranking_consistency} and, ultimately, Theorem~\ref{theorem_sel_Lasso_like_ranking_consistency}, we now restate the full set of standing assumptions-consolidating those aligned with the $\sS\sR\sU\sW$ framework and those introduced earlier for penalized likelihood analysis-into a unified assumption block.
	
	\begin{assumption}[Standing Assumptions for proving Theorem~\ref{theorem_sel_Lasso_like_ranking_consistency}]
		\label{ass:standing_rc}
		The following holds:
		\begin{enumerate}
			\item \textbf{Identifiability \& Smoothness} (Assumption~\ref{ass:iden_smooth}): the GMM density $\fclust(\vy;\valpha)$ is identifiable and $\ell(\vy;\valpha)=-\ln \fclust(\vy;\valpha)$ is three times continuously differentiable in an open ball $\sB(\valpha^*,r_0)$.
			\item \textbf{Compactness \& Bounds} (Assumption~\ref{ass:compact_interior}): $\valpha^*$ is an interior point of a compact $\mTheta'_{\sV}\subset\sB(\valpha^*,r_0)$; mixture weights, means, and covariance/precision eigenvalues satisfy the stated uniform bounds (with $\pi_{\min},\eta,\sigma^2_{\min},\sigma^2_{\max},\theta_{\min},\theta_{\max}$).
			\item \textbf{Data-generating mechanism} (Assumption~\ref{ass:data_dist}): $\vy_1,\dots,\vy_N$ i.i.d.\ from $\fclust(\cdot;\valpha^*)$ with latent $z_n\sim\text{Mult}(\vpi^*)$ and $\vy_n|z_n=k\sim\mathcal N(\vmu_k^*,\mSigma_k^*)$.
			\item \textbf{Posterior responsibilities} (Assumption~\ref{ass:bound_post}): $t_k(\vy;\valpha)\in(0,1]$ for all $\valpha\in\mTheta'_{\sV}$.
			\item \textbf{Fisher RE on a cone} (Assumption~\ref{ass:re}): restricted eigenvalue condition for $\mI(\valpha^*)$ on $\gC(c_0,\sS_0)$ with constant $\kappa_I>0$.
			\item \textbf{Uniform Hessian concentration} (Assumption~\ref{ass:unif_hess}): for radius $\delta_R\asymp\sqrt{s_0\ln d_\alpha/N}$ and $N\gtrsim s_0\ln d_\alpha$,
			\[
			\sup_{\tilde{\valpha}\in\sB(\valpha^*,\delta_R)\cap\mTheta'_{\sV}}
			\sup_{\mDelta\in\gC(c_0,\sS_0),\ \|\mDelta\|_2=1}
			\big|\mDelta^\top\big(\mH_N(\tilde{\valpha})-\mI(\tilde{\valpha})\big)\mDelta\big|
			\;\le\; C_H \sqrt{\frac{s_0\ln d_\alpha}{N}}
			\]
			with probability at least $1-c_1^H d_\alpha^{-c_2^H}$.
			\item \textbf{Penalty on penalized coordinates.} On the penalized index set $\mathcal P$, choose $\lambda_j$ so that
			\[
			\lambda_j \;\ge\; A_0\,\big\| \nabla \ell_N(\valpha^*) \big\|_{\infty,\text{pen}}
			\qquad\text{for some }A_0>1,
			\]
			e.g.\ a common level $\lambda_{\text{chosen}}=A_0 C_g \sqrt{\frac{\ln d_\alpha}{N}}$ with $C_g$ from Lemma~\ref{lem:gradient_bound}. Unpenalized coordinates are excluded from $\mathcal P$ and from all $\|\cdot\|_{\text{pen},1}$ norms.
		\end{enumerate}
	\end{assumption}
	
	\begin{assumption}[Working high-probability event for ranking analysis]
		\label{ass:RC_event}
		Let $\cE_{\text{grad}}=\{\|\nabla \ell_N(\valpha^*)\|_{\infty,\text{pen}}\le C_g\sqrt{\ln d_\alpha/N}\}$ be the event in Lemma~\ref{lem:gradient_bound}, and $\cE_{\text{RSC}}$ the event on which \Cref{eq:rsc_loss_proven} holds with curvature $\kappa_L>0$ over $\{\mDelta\in\gC(c_0,\sS_0):\|\mDelta\|_2\le\delta_R\}$. Under Assumption~\ref{ass:standing_rc} and for $N$ large enough, both events hold with probability at least $1-\delta_N$, where $\delta_N=C_1^{\text{grad}} d_\alpha^{-C_2^{\text{grad}}} + c_1^H d_\alpha^{-c_2^H}$. In the proof of Lemma~\ref{lem:ranking_consistency}, we condition on $\cE_{\text{grad}}\cap\cE_{\text{RSC}}$.
	\end{assumption}
	
	\begin{definition}[Quantities for Ranking Consistency]
		\label{def:rc_key_const}
		Let $\valpha^*$ be the true GMM parameters, and $\widehat{\valpha}(\lambda)$ be the estimator for a given regularization level $\lambda$ (with $\rho$ either tied to $\lambda$ or fixed).
		
		\begin{itemize}
			\item \textbf{Noise level.} 
			\[
			\lambda_{\text{noise}} \;=\; C_g \sqrt{\frac{\ln d_\alpha}{N}},
			\]
			the uniform bound on the score at $\valpha^*$ from Lemma~\ref{lem:gradient_bound}.
			
			\item \textbf{Effective curvature for means.}
			\[
			H_{kj}^{\text{eff}}(\valpha^*) \;=\; \Es_{\valpha^*}\left[\, t_k(\vy;\valpha^*)\,(\mPsi_k^*)_{jj}\,\right]
			\;=\; (\mPsi_k^*)_{jj}\,\Es_{\valpha^*}[t_k(\vy;\valpha^*)],
			\]
			where the expectation is with respect to 
			\[\vy\sim \fclust(\cdot;\valpha^*).\] 
			By Assumption~\ref{ass:compact_interior}, $\Es_{\valpha^*}[t_k(\vy;\valpha^*)]=\pi_k^*\ge\pi_{\min}$ and $(\mPsi_k^*)_{jj}\ge \theta_{\min}$ (since $\mPsi_k^*\succ 0$ and $e_j^\top \mPsi_k^* e_j\ge \lambda_{\min}(\mPsi_k^*)$), hence
			\[
			H_{kj}^{\text{eff}}(\valpha^*) \;\ge\; \pi_{\min}\,\theta_{\min} \;>\; 0.
			\]
			
			\item \textbf{KKT remainder for mean coordinates.}
			\[
			R_{kj}\big(\widehat{\valpha}(\lambda), \valpha^*, \lambda\big) 
			\;=\; \partial_{\mu_{kj}} \ell_N\big(\widehat{\valpha}(\lambda)\big)
			\;-\; \partial_{\mu_{kj}} \ell_N(\valpha^*)
			\;-\; H_{kj}^{\text{eff}}(\valpha^*)\big(\widehat{\mu}_{kj}(\lambda) - \mu_{kj}^*\big).
			\]
			We assume that, with high probability and for $\lambda$ in the regime $\lambda \asymp \sqrt{(\ln d_\alpha)/N}$,
			\[
			\big|R_{kj}\big(\widehat{\valpha}(\lambda), \valpha^*, \lambda\big)\big| 
			\;\le\; E_{\text{KKT\_rem}}(\lambda),
			\]
			where $E_{\text{KKT\_rem}}(\lambda)$ depends on $\|\widehat{\valpha}(\lambda)-\valpha^*\|$ (cf. Lemma~\ref{lem:parameter_consistency}) and on bounds for Hessian/third derivatives (from Assumptions~\ref{ass:iden_smooth}-\ref{ass:unif_hess}). In particular, when $\|\mDelta\|_2$ is of order $\sqrt{s_0\ln d_\alpha/N}$, we take
			\[
			E_{\text{KKT\_rem}}(\lambda) \;\le\; C_{\text{rem}}\,\lambda_{\text{noise}}
			\]
			for some constant $C_{\text{rem}}\ge 0$ over the relevant range of $\lambda$.
			
			\item \textbf{Noise-screening threshold.}
			\[
			\Lambda_N^* \;=\; \big(1+C_{\text{rem}}+\epsilon_N\big)\,\lambda_{\text{noise}},
			\]
			with a small margin $\epsilon_N>0$.
			
			\item \textbf{Signal-preservation threshold.}
			For a coordinate $j$ and some $k_0$ attaining (or meeting) a signal condition for $|\mu_{k_0 j}^*|$,
			\[
			\Lambda_S^*(j;\lambda) 
			\;=\; H_{k_0 j}^{\text{eff}}(\valpha^*)\,|\mu_{k_0 j}^*|
			\;-\; \big(1+\epsilon_S\big)\,\lambda_{\text{noise}}
			\;-\; \big(1+\epsilon_S\big)\,E_{\text{KKT\_rem}}(\lambda),
			\]
			where $\epsilon_S>0$ is a fixed margin. Choosing $\lambda$ so that $\Lambda_S^*(j;\lambda)>0$ ensures the signal at $(k_0,j)$ persists (i.e., $\widehat{\mu}_{k_0 j}(\lambda)\neq 0$) under the KKT inequalities.
		\end{itemize}
	\end{definition}
	
	% New
	\begin{assumption}[\textbf{True Relevant/Irrelevant Variables}]
		\label{ass:RC_true_sets}
		Let $\sS^*_{\vmu}$ denote the set of indices $j$ such that at least one component mean has a nonzero $j$-th entry, i.e., $j\in\sS^*_{\vmu}$ iff $\exists\,k\in\{1,\dots,K\}$ with $\mu_{kj}^*\neq 0$. For $j\notin\sS^*_{\vmu}$, we have $\mu_{kj}^*=0$ for all $k$.
	\end{assumption}
	
	\begin{assumption}[\textbf{Minimum Signal Strength}]
		\label{ass:RC_min_signal}
		For each $j\in\sS^*_{\vmu}$, there exists $k_0\in\{1,\dots,K\}$ such that, for all $\lambda$ up to some detection level $\lambda_{\mathrm{detect\_upper}}$,
		\begin{equation}\label{eq:min_signal}
			H_{k_0 j}^{\mathrm{eff}}(\valpha^*)\,|\mu_{k_0 j}^*|
			\;\ge\; \lambda \;+\; (1+\epsilon_S)\lambda_{\mathrm{noise}}
			\;+\; (1+\epsilon_S)E_{\mathrm{KKT\_rem}}(\lambda),
		\end{equation}
		with $\epsilon_S>0$ fixed. In particular, \Cref{eq:min_signal} implies $\Lambda_S^*(j;\lambda)>\lambda$. Moreover, we assume the separation
		\begin{equation}\label{eq:separation_condition}
			\min_{j\in\sS^*_{\vmu}}\;\lambda_j^\dagger \;>\; \Lambda_N^*,
			\qquad
			\text{where }\;\lambda_j^\dagger:=\inf\{\lambda'>0:\;\lambda'=\Lambda_S^*(j;\lambda')\},
		\end{equation}
		i.e., the smallest regularization at which the $j$-th signal would vanish strictly exceeds the noise-screening threshold $\Lambda_N^*$.
	\end{assumption}
	
	\begin{assumption}[\textbf{Regularization Grid}]
		\label{ass:RC_grid}
		The grid $\gG_\lambda=\{\lambda^{(1)},\dots,\lambda^{(M_G)}\}$ covers a neighborhood of the noise threshold $\Lambda_N^*$ and extends up to
		\[
		\min_{j\in\sS^*_{\vmu}}\lambda_j^\dagger
		\quad\text{with }\;\lambda_j^\dagger \text{ as in \Cref{eq:separation_condition}.}
		\]
		Assume $\rho$ is either tied to $\lambda$ (e.g., $\rho\asymp\lambda$) or fixed so that its effect is absorbed by constants in the bounds.
	\end{assumption}
	
	\begin{remark}
		The remainder $R_{kj}(\widehat{\valpha},\valpha^*,\lambda)$ in Definition~\ref{def:rc_key_const} captures the deviation of the mean-coordinate KKT equation from its linearized form. It aggregates: (i) off-diagonal Fisher blocks acting on other coordinates of $\mDelta$, (ii) empirical-population curvature fluctuations $(\mH_N-\mI)$, and (iii) population curvature drift $\mI(\tilde{\valpha})-\mI(\valpha^*)$ from evaluating at $\tilde{\valpha}$. Under the standing assumptions (smoothness/compactness: Assumptions~\ref{ass:iden_smooth}-\ref{ass:compact_interior}, Hessian concentration: Assumption~\ref{ass:unif_hess}) and the $L_2$-error bound from Lemma~\ref{lem:parameter_consistency}, one obtains the bound
		\[
		|R_{kj}(\widehat{\valpha}(\lambda),\valpha^*,\lambda)|
		\;\le\; E_{\mathrm{KKT\_rem}}(\lambda)
		\;\le\; C_{\mathrm{rem}}\lambda_{\mathrm{noise}}
		\]
		on the high-probability event $\cE_{\text{grad}}\cap\cE_{\text{RSC}}$, provided the sparsity/sample-size regime ensures 
		$C_H\sqrt{\frac{s_0\ln d_\alpha}{N}} + L_I\|\widehat{\valpha}(\lambda)-\valpha^*\|_2 \lesssim 1$ (e.g., $s_0\lesssim \sqrt{N/\ln d_\alpha}$). A crude but sufficient bound for individual coordinates is $\|\mDelta\|_\infty\le\|\mDelta\|_2=O\big(\sqrt{s_0\ln d_\alpha/N}\big)$ by Lemma~\ref{lem:parameter_consistency}. Assumption~\ref{ass:RC_min_signal} then ensures the effective signal $H_{k_0 j}^{\mathrm{eff}}(\valpha^*)|\mu_{k_0 j}^*|$ dominates both the regularization level and the stochastic/remainder terms, yielding persistence of true signals and suppression of noise along the regularization path.
	\end{remark}
	
	\begin{lemma}[Ranking Consistency for Mean Parameters]
		\label{lem:ranking_consistency}
		Under Assumptions \ref{ass:standing_rc}, \ref{ass:RC_true_sets}, \ref{ass:RC_min_signal}, and \ref{ass:RC_grid}, and on the high-probability event $\cE_{\text{grad}}\cap\cE_{\text{RSC}}$ from Assumption~\ref{ass:RC_event}, let the grid $\gG_\lambda$ contain points distributed across $[0,\lambda_{\text{grid\_max}}]$ with
		\[
		\lambda_{\text{grid\_max}} \;>\; \min_{j\in \sS^*_{\vmu}} \lambda_j^\dagger,
		\qquad
		\lambda_j^\dagger:=\inf\{\lambda'>0:\ \lambda'=\Lambda_S^*(j;\lambda')\}.
		\]
		Then, with probability at least
		\[
		P_{\text{rank}} \;\ge\; 1 \;-\; M_G \cdot D \cdot \delta_N,
		\quad \text{where } \delta_N:= C_1^{\text{grad}} d_\alpha^{-C_2^{\text{grad}}} + c_1^H d_\alpha^{-c_2^H},
		\]
		$M_G=|\gG_\lambda|$, and $D$ denotes the number of penalized mean coordinates per variable (e.g., $D=K$), the following hold:
		\begin{enumerate}
			\item \textbf{Relevant Variables.} For each $j\in \sS^*_{\vmu}$,
			\[
			\gO_K(j) \;\ge\; N_{\text{signal}}(j) \;:=\; \#\{\lambda\in \gG_\lambda:\ \lambda < \Lambda_S^*(j;\lambda)\}.
			\]
			Let $\eta_R := \min_{j\in \sS^*_{\vmu}} N_{\text{signal}}(j)$. By the separation \Cref{eq:separation_condition}, 
			\[
			\eta_R \;>\; \#\{\lambda\in \gG_\lambda:\ \lambda \le \Lambda_N^*\}.
			\]
			\item \textbf{Irrelevant Variables.}
			For each $j\notin \sS^*_{\vmu}$,
			\[
			\gO_K(j) \;\le\; N_{\text{noise}} \;:=\; \#\{\lambda\in \gG_\lambda:\ \lambda \le \Lambda_N^*\}.
			\]
		\end{enumerate}
		Consequently, $\eta_R>N_{\text{noise}}$, yielding a strict separation in ranking scores between relevant and irrelevant variables with probability at least $P_{\text{rank}}$.
	\end{lemma}

	% New
	\begin{proof}[Proof of Lemma \ref{lem:ranking_consistency}]
		Work on the high-probability event
		\[
		\cE_{\text{rank}}:=\cE_{\text{grad}}\cap\cE_{\text{RSC}}
		\]
		(on which the uniform gradient bound $|S_{kj}^*|\le \lambda_{\text{noise}}$ and the KKT-remainder bound $|R_{kj}(\widehat{\valpha}(\lambda),\valpha^*,\lambda)|\le E_{\text{KKT\_rem}}(\lambda)$ hold simultaneously for all penalized mean coordinates $(k,j)$ and all $\lambda\in\gG_\lambda$). By Assumptions~\ref{ass:standing_rc}-\ref{ass:RC_grid} and the union bound,
		\[
		\mathbb{P}(\cE_{\text{rank}})\;\ge\;1-M_G\cdot D\cdot\delta_N,
		\quad \delta_N:= C_1^{\text{grad}} d_\alpha^{-C_2^{\text{grad}}} + c_1^H d_\alpha^{-c_2^H}.
		\]
		
		The ranking score for variable $j$ is
		\[
		\gO_K(j) \;=\; \sum_{\lambda\in\gG_\lambda}\; \gI\left(\exists\,k\in\{1,\dots,K\}:\ \widehat{\mu}_{kj}(\lambda)\neq 0\right).
		\]
		
		\medskip
		\noindent\textbf{KKT expansion.}
		For a penalized mean coordinate $(k,j)$, the KKT condition reads
		\[
		0 \;=\; \nabla_{\mu_{kj}}\ell_N(\widehat{\valpha}(\lambda)) + \lambda\,\widehat{\xi}_{kj},
		\qquad
		\widehat{\xi}_{kj}\in
		\begin{cases}
			\{\sign(\widehat{\mu}_{kj}(\lambda))\}, & \widehat{\mu}_{kj}(\lambda)\neq 0,\\
			[-1,1], & \widehat{\mu}_{kj}(\lambda)=0.
		\end{cases}
		\]
		A first-order expansion at $\valpha^*$ yields
		\[
		\nabla_{\mu_{kj}}\ell_N(\widehat{\valpha}(\lambda))
		=\underbrace{[\nabla_{\mu_{kj}}\ell_N(\valpha^*)]}_{=:S_{kj}^*}
		\;+\; H_{kj}^{\text{eff}}(\valpha^*)\big(\widehat{\mu}_{kj}(\lambda)-\mu_{kj}^*\big)
		\;+\; R_{kj}(\widehat{\valpha}(\lambda),\valpha^*,\lambda),
		\]
		hence
		\begin{equation}\label{eq:KKT-signed}
			\lambda\,\widehat{\xi}_{kj}
			\;=\; -\,S_{kj}^* \;-\; H_{kj}^{\text{eff}}(\valpha^*)\big(\widehat{\mu}_{kj}(\lambda)-\mu_{kj}^*\big)
			\;-\; R_{kj}(\widehat{\valpha}(\lambda),\valpha^*,\lambda).
		\end{equation}
		
		\medskip
		\noindent\textbf{Relevant variables ($j\in\sS_{\vmu}^*$).}
		Fix $j\in\sS_{\vmu}^*$. By Assumption~\ref{ass:RC_min_signal} there exists $k_0$ with $\mu_{k_0j}^*\neq 0$ satisfying the signal condition.
		Assume, for contradiction, that at a given $\lambda\in\gG_\lambda$ we have $\widehat{\mu}_{k_0j}(\lambda)=0$. Then $\widehat{\xi}_{k_0j}\in[-1,1]$ and $\widehat{\mu}_{k_0j}(\lambda)-\mu_{k_0j}^*=-\mu_{k_0j}^*$. Applying \Cref{eq:KKT-signed} and taking absolute values,
		\[
		\lambda \;\ge\; \big|\,H_{k_0j}^{\text{eff}}(\valpha^*)\,\mu_{k_0j}^* \;-\; S_{k_0j}^* \;-\; R_{k_0j}\,\big|
		\;\ge\; H_{k_0j}^{\text{eff}}(\valpha^*)\,|\mu_{k_0j}^*| \;-\; |S_{k_0j}^*| \;-\; |R_{k_0j}|.
		\]
		On $\cE_{\text{rank}}$,
		\[
		\lambda \;\ge\; H_{k_0j}^{\text{eff}}(\valpha^*)\,|\mu_{k_0j}^*| \;-\; \lambda_{\text{noise}} \;-\; E_{\text{KKT\_rem}}(\lambda).
		\]
		Therefore, whenever
		\[
		\lambda \;<\; H_{k_0j}^{\text{eff}}(\valpha^*)\,|\mu_{k_0j}^*|
		\;-\; \lambda_{\text{noise}} \;-\; E_{\text{KKT\_rem}}(\lambda),
		\]
		we must have $\widehat{\mu}_{k_0j}(\lambda)\neq 0$. Since $\Lambda_S^*(j;\lambda)
		= H_{k_0j}^{\text{eff}}(\valpha^*)|\mu_{k_0j}^*|-(1+\epsilon_S)\lambda_{\text{noise}}-(1+\epsilon_S)E_{\text{KKT\_rem}}(\lambda)$ is a stricter threshold,
		\[
		\lambda < \Lambda_S^*(j;\lambda)
		\quad\Longrightarrow\quad
		\widehat{\mu}_{k_0j}(\lambda)\neq 0.
		\]
		Hence the count of grid points for which variable $j$ is (at least in one component) active satisfies
		\[
		\gO_K(j) \;\ge\; N_{\text{signal}}(j):=\#\{\lambda\in\gG_\lambda:\ \lambda<\Lambda_S^*(j;\lambda)\}.
		\]
		Taking the minimum over $j\in\sS_{\vmu}^*$ gives $\eta_R:=\min_{j\in\sS_{\vmu}^*}N_{\text{signal}}(j)$.
		
		\medskip
		\noindent\textbf{Irrelevant variables ($j\notin\sS_{\vmu}^*$).}
		Here $\mu_{kj}^*=0$ for all $k$. For a given $\lambda$, the zero solution $\widehat{\mu}_{kj}(\lambda)=0$ is KKT-feasible iff
		\[
		\big|\nabla_{\mu_{kj}}\ell_N(\widehat{\valpha}(\lambda))\big| \;\le\; \lambda.
		\]
		Using the expansion with $\mu_{kj}^*=0$ and $\widehat{\mu}_{kj}(\lambda)=0$,
		\[
		\big|S_{kj}^* + R_{kj}(\widehat{\valpha}(\lambda),\valpha^*,\lambda)\big|
		\;\le\; \lambda.
		\]
		On $\cE_{\text{rank}}$, this is guaranteed whenever
		\[
		\lambda \;\ge\; \lambda_{\text{noise}} + E_{\text{KKT\_rem}}(\lambda).
		\]
		By definition of the noise threshold $\Lambda_N^*=(1+\epsilon_N)\lambda_{\text{noise}}+(1+\epsilon_N)\sup_{\lambda'\le \Lambda_N^*}E_{\text{KKT\_rem}}(\lambda')$ and the monotone/worst-case domination in its definition, any $\lambda>\Lambda_N^*$ satisfies $\lambda\ge \lambda_{\text{noise}}+E_{\text{KKT\_rem}}(\lambda)$. Therefore, for each $j\notin\sS_{\vmu}^*$ and all $\lambda>\Lambda_N^*$, all coordinates $\widehat{\mu}_{kj}(\lambda)$ equal zero, so the indicator $\gI(\exists k:\widehat{\mu}_{kj}(\lambda)\neq 0)=0$. Consequently,
		\[
		\gO_K(j) \;\le\; N_{\text{noise}}:=\#\{\lambda\in\gG_\lambda:\ \lambda\le \Lambda_N^*\}.
		\]
		
		\medskip
		\noindent\textbf{Separation and probability.}
		By the separation Assumption~\Cref{eq:separation_condition},
		\[
		\eta_R \;>\; N_{\text{noise}},
		\]
		yielding a strict gap between the scores of relevant and irrelevant variables on $\cE_{\text{rank}}$. Finally, by the union bound over at most $M_G$ grid points and $D$ penalized mean coordinates per variable, we obtain
		\[
		\mathbb{P}\Big(\text{the above conclusions hold for all }j\Big)\;\ge\; 1-M_G\cdot D\cdot\delta_N,
		\]
		which is $P_{\text{rank}}$ in the statement. This completes the proof.
	\end{proof}
	
	% Selection consistency
	\begin{theorem}[Selection Consistency of the Two-Step SRUW Procedure]
		\label{thm:selection_consistency}
		Assume all assumptions in Lemma \ref{lem:parameter_consistency} and Lemma \ref{lem:ranking_consistency}, together with Theorem \ref{theorem_bic_consistency_SRUW} for the final $(K,m,r,\ell)$ choice, hold. Let $s_0 = |\sS_0|$ and $M_{NR} = p-s_0$ be the number of non-$\sS_0$ variables. Moreover, suppose the following assumptions hold:
		\begin{enumerate}[label=(\alph*)]
			\item \textbf{False negative for $\sS_0$.} For any true relevant variable $j \in \sS_0$ and any intermediate set $\widehat{\sS}_{\text{cur}} \subset \sS_0$ with $j \notin \widehat{\sS}_{\text{cur}}$, the probability of incorrectly rejecting $j$ from $\sS$ by the $\BICdiff$ criterion is uniformly bounded:
			\[
			\sup_{\widehat{\sS}_{\text{cur}}\subset \sS_0}
			\mathbb{P}\big(\BICdiff(j\mid \widehat{\sS}_{\text{cur}}) \le 0\big) \;\le\; p_S(N),
			\]
			where $s_0 \cdot p_S(N) \le \epsilon_{S,FN}(N)$ and $\epsilon_{S,FN}(N) = o_N(1)$.
			\item \textbf{False positive for non-$\sS_0$.} For any true non-relevant variable $j \notin \sS_0$ (i.e., $j \in \sU_0 \cup \sW_0$), given that $\sS_0$ has been correctly identified (i.e., $\widehat{\sS}_{\text{cur}}=\sS_0$), the probability of $\BICdiff(j\mid\sS_0) > 0$ is bounded:
			\[
			\mathbb{P}\big(\BICdiff(j\mid\sS_0) > 0\big) \;\le\; p_N(N),
			\]
			where $p_N(N) \to 0$ as $N \to \infty$.
			For BIC, typically $p_N(N) = \gO(N^{-\gamma_B})$ for some $\gamma_B > 0$ (e.g., $\gamma_B \ge \Delta\nparams_{\min}/2$ where $\Delta\nparams_{\min} \ge 1$ is the minimum parameter-penalty gap).
			\item \textbf{Consistency of BIC-penalized regressions for $\widehat{\sR}[j\mid\sS_0]$.}:
			\begin{itemize}
				\item For $j \in \sW_0$, $\mathbb{P}\big(\widehat{\sR}[j\mid\sS_0]=\emptyset\big) \ge 1-p_{\text{reg}}(N)$.
				\item For $j \in \sU_0$, $\mathbb{P}\big(\widehat{\sR}[j\mid\sS_0]=\sR_0(j) \neq \emptyset\big) \ge 1-p_{\text{reg}}(N)$.
			\end{itemize}
			where $(w_0+u_0)\,p_{\text{reg}}(N) = o_N(1)$.
		\end{enumerate}
		Let the stopping-rule parameter for selecting $\widehat{\sS}$ (and $\widehat{\sW}$) be $c$, understood so that once $\widehat{\sS}_{\text{cur}}=\sS_0$ holds, each $j\notin \sS_0$ is tested at most $c$ times by the $\BICdiff(\cdot\mid \sS_0)$ screening before termination.
		
		For a desired tolerance $\epsilon_{S,FP} > 0$ on the probability of including any false positive in $\widehat{\sS}$, it suffices to choose $c$ so that
		\begin{equation} \label{eq:choice_of_c}
			M_{NR}\,\Big(1-(1-p_N(N))^{c}\Big)\;\le\;\epsilon_{S,FP}
			\qquad\Longleftrightarrow\qquad
			c \;\le\; \frac{\ln\big(1-\epsilon_{S,FP}/M_{NR}\big)}{\ln\big(1-p_N(N)\big)}.
		\end{equation}
		For small $p_N(N)$, this is well-approximated by
		\[
		c \;\lesssim\; \frac{\epsilon_{S,FP}}{M_{NR}\,p_N(N)}.
		\]
		In particular, if $p_N(N)$ decays polynomially in $N$ (e.g., $N^{-\gamma_B}$ with $\gamma_B \ge 1/2$) and $M_{NR}\,p_N(N)\to 0$, then \emph{any fixed} $c_{\text{fixed}}\ge 1$ (e.g., $c_{\text{fixed}}=1$ or $3$) ensures
		\[
		\mathbb{P}\big(\text{any false positive in }\widehat{\sS}\big)
		\;\le\; M_{NR}\,\Big(1-(1-p_N(N))^{c_{\text{fixed}}}\Big)
		\;\le\; M_{NR}\,c_{\text{fixed}}\,p_N(N)\;\to\;0.
		\]
		
		Then, under (a)-(c) and with $c$ chosen according to \Cref{eq:choice_of_c} (or with $c=c_{\text{fixed}}$ such that $M_{NR}\,c_{\text{fixed}}\,p_N(N)\to 0$), the two-step SRUW procedure recovers the true model structure $(K_0, m_0, r_0, \ell_0, \sV_0)$ with probability $\mathbb{P}(\text{Success}) \to 1$ as $N \to \infty$.
	\end{theorem}
	
	% new
	\begin{proof}
		We show each stage succeeds with high probability and then combine them.
		
		\medskip
		\noindent\textbf{Conditioning on ranking accuracy.}
		By Lemma~\ref{lem:ranking_consistency}, with probability at least $1-P_{\text{fail,rank\_total}}$, the ranking event $\gE_{\text{rank}}$ holds: all $s_0$ relevant variables in $\sS_0$ appear before any pure-noise $\sW_0$ variables, and $\sU_0$ appear after $\sS_0$. We condition on $\gE_{\text{rank}}$ for the remainder of the argument.
		
		\medskip
		\noindent\textbf{No false negatives in $\widehat{\sS}$.}
		Fix $j\in\sS_0$ and any $\widehat{\sS}_{\text{cur}}\subset\sS_0$ not containing $j$. By Assumption (a),
		
		$$
		\mathbb{P}\big(\BICdiff(j\mid \widehat{\sS}_{\text{cur}})\le 0\big)\ \le\ p_S(N).
		$$
		
		By a union bound over the $s_0$ true variables,
		
		$$
		\mathbb{P}\big(\sS_0\not\subseteq \widehat{\sS} \,\big|\, \gE_{\text{rank}}\big)\ \le\ s_0\,p_S(N)\ =:\ \epsilon_{S,FN}(N)\ \to 0.
		$$
		
		Let $\gE_{\text{S-noFN}}$ be the event $\sS_0\subseteq \widehat{\sS}$. Then $\mathbb{P}(\gE_{\text{S-noFN}}\mid \gE_{\text{rank}})\ge 1-\epsilon_{S,FN}(N)$.
		
		\medskip
		\noindent\textbf{No false positives in $\widehat{\sS}$.}
		On $\gE_{\text{rank}}\cap \gE_{\text{S-noFN}}$ we have $\widehat{\sS}_{\text{cur}}=\sS_0$. For any $j\notin \sS_0$, Assumption (b) gives
		
		$$
		\mathbb{P}\big(\BICdiff(j\mid \sS_0) > 0\big)\ \le\ p_N(N)\ =: q.
		$$
		
		Let $\gE_{\text{S-noFP}}$ be the event that no non-relevant variable is ever added to $\widehat{\sS}$ after $\sS_0$ is reached.
		
		\emph{(A) Distribution-free bound.} For any fixed set $\cJ$ of candidates examined of any size,
		
		$$
		\mathbb{P}\big(\exists j\in \cJ:\ \BICdiff(j\mid \sS_0)>0\big)\ \le\ |\cJ|\,q
		$$
		
		by a union bound. In particular,
		
		$$
		\mathbb{P}\big(\gE_{\text{S-noFP}}^c \,\big|\, \gE_{\text{rank}}\cap \gE_{\text{S-noFN}}\big)
		\ \le\ M_{NR}\,q,
		$$
		
		which tends to $0$ if $M_{NR}\,p_N(N)\to 0$. This bound requires no independence and is always valid.
		
		\emph{(B) Sharper bound under $c$-run termination.}
		Assume the screening step proceeds through non-relevant candidates until it observes $c$ consecutive rejections, and that the $c$ decisions in such a run are (asymptotically) independent or, more generally, satisfy
		
		$$
		\mathbb{P}\Big(\bigcap_{t=1}^c\{\BICdiff(j_t\mid \sS_0)\le 0\}\Big)\ \ge\ (1-q)^c
		$$
		
		for any $c$ distinct non-relevant candidates $(j_1,\dots,j_c)$. Then the probability that a given $c$-block fails (i.e., contains at least one FP) is
		
		$$
		\mathbb{P}(\text{block error})\ \le\ 1-(1-q)^c.
		$$
		
		Partition the $M_{NR}$ non-relevant indices into $\lfloor M_{NR}/c\rfloor$ disjoint blocks of size $c$ (discarding a remainder if needed). By a union bound over blocks,
		
		$$
		\mathbb{P}\big(\gE_{\text{S-noFP}}^c \,\big|\, \gE_{\text{rank}}\cap \gE_{\text{S-noFN}}\big)
		\ \le\ \frac{M_{NR}}{c}\,\Big(1-(1-q)^c\Big)
		\ \le\ M_{NR}\,\Big(1-(1-q)^c\Big).
		$$
		
		Therefore, enforcing
		
		$$
		M_{NR}\,\Big(1-(1-p_N(N))^c\Big)\ \le\ \epsilon_{S,FP}
		\quad\Longleftrightarrow\quad
		c\ \le\ \frac{\ln\big(1-\epsilon_{S,FP}/M_{NR}\big)}{\ln\big(1-p_N(N)\big)}
		$$
		
		yields $\mathbb{P}(\gE_{\text{S-noFP}}^c\mid \gE_{\text{rank}}\cap \gE_{\text{S-noFN}})\le \epsilon_{S,FP}$. For small $p_N(N)$, $\ 1-(1-p_N(N))^c \sim c\,p_N(N)$, so $c \lesssim \epsilon_{S,FP}/(M_{NR}\,p_N(N))$.
		
		\smallskip
		Combining (A)-(B): whenever the independence/decoupling condition for runs holds, we may use the sharper $(1-(1-q)^c)$ design in the theorem statement; otherwise, the distribution-free guarantee $M_{NR}\,p_N(N)$ is valid (and implies the sharper one whenever $c$ is fixed and $M_{NR}p_N(N)\to 0$).
		
		\medskip
		\noindent\textbf{Put together $\widehat{\sS}=\sS_0$.}
		Let $\gE_S^*:=\gE_{\text{S-noFN}}\cap \gE_{\text{S-noFP}}$. Then, with either (A) or (B),
		
		$$
		\mathbb{P}\big(\widehat{\sS}=\sS_0 \,\big|\, \gE_{\text{rank}}\big)
		\ \ge\ 1-\epsilon_{S,FN}(N)-\epsilon_{S,FP}(N)\ \to\ 1,
		$$
		
		for $\epsilon_{S,FP}(N)=M_{NR}\,p_N(N)$ in (A), or $\epsilon_{S,FP}(N)=M_{NR}\,(1-(1-p_N(N))^c)$ in (B).
		
		\medskip
		\noindent\textbf{Consistency of $\widehat{\sW},\widehat{\sU},\widehat{\sR}$.}
		Given $\widehat{\sS}=\sS_0$, the reverse scan uses BIC-penalized regressions to decide $\widehat{\sW}$ and $\widehat{\sR}$. By Assumption (c),
		
		$$
		\mathbb{P}\big(\widehat{\sW}=\sW_0 \,\big|\, \gE_{\text{rank}}\cap \gE_S^*\big)
		\ \ge\ 1-(w_0+u_0)\,p_{\text{reg}}(N)-\epsilon_{W,FP}(N)\ \to\ 1,
		$$
		
		and
		
		$$
		\mathbb{P}\big(\widehat{\sR}=\sR_0 \,\big|\, \gE_{\text{rank}}\cap \gE_S^*\cap \{\widehat{\sW}=\sW_0\}\big)
		\ \ge\ 1-u_0\,p_{\text{reg}}(N)\ \to\ 1.
		$$
		
		\medskip
		\noindent\textbf{Final SRUW choice via BIC.}
		By Theorem~\ref{theorem_bic_consistency_SRUW}, the final selection of $(K,m,r,\ell)$ is consistent, i.e., $\mathbb{P}(P_{\text{final\_choice}})\to 1$.
		
		Multiplying the stage-wise success probabilities,
		
		\begin{align*}
			\mathbb{P}(\text{Overall Success})
			\ \ge\ \mathbb{P}(\gE_{\text{rank}})\cdot \mathbb{P}(\gE_S^* \mid \gE_{\text{rank}})&\cdot \mathbb{P}(\widehat{\sW}=\sW_0 \mid \cdots)\\
			&\cdot \mathbb{P}(\widehat{\sR}=\sR_0 \mid \cdots)\cdot \mathbb{P}(P_{\text{final\_choice}})
			\ \to\ 1.    
		\end{align*}
		
		The slowest decaying term among $P_{\text{fail,rank\_total}},\ \epsilon_{S,FN}(N),\ \epsilon_{S,FP}(N),\ (w_0+u_0)p_{\text{reg}}(N),\ u_0p_{\text{reg}}(N)$ governs the overall rate. In particular, under the sharper design (B) with $c$ chosen as above,
		
		$$
		\epsilon_{S,FP}(N)\ =\ M_{NR}\,\big(1-(1-p_N(N))^c\big)\ \le\ \epsilon_{S,FP},
		$$
		
		while the distribution-free design (A) yields $\epsilon_{S,FP}(N)=M_{NR}\,p_N(N)\to 0$ whenever $M_{NR}\,p_N(N)\to 0$.
	\end{proof}
	
	\begin{theorem}[Equivalence of MNARz-SRUW and MAR on Augmented Data for SRUW]
		\label{thm:MAR_MNARz_SRUW}
		Consider an observation $(\vy_n, \vc_n)$ where $\vy_n = (\vy_n^S, \vy_n^U, \vy_n^W)$ and $\vc_n$ is its missingness pattern.
		Let the complete-data likelihood for $\vy_n$ given cluster $z_{nk}=1$ under the SRUW model $(K,m,r,\ell,\sV)$ be:
		\[
		f_{\text{SRUW}}(\vy_n \mid z_{nk}=1; \evtheta_k)
		= f_k(\vy_n^S; \valpha_k)\, f_{\text{reg}}(\vy_n^U \mid \vy_n^R; \vtheta_{\reg})\, f_{\text{indep}}(\vy_n^W; \vtheta_{\indep}).
		\]
		Assume an MNARz-SRUW mechanism for missing data, where the probability of the missingness pattern $\vc_n$ depends only on the cluster membership $z_{nk}$:
		\begin{equation}
			\begin{aligned}
				f(\vc_n \mid \vy_n, z_{nk}=1; \evpsi_k) &= f(\vc_n \mid z_{nk}=1; \evpsi_k) \\
				&= \prod_{d \in S \cup U \cup W} \rho_{kd}^{\,c_{nd}} \,\big(1-\rho_{kd}\big)^{1-c_{nd}},
			\end{aligned}
		\end{equation}
		where $\rho_{kd}\in(0,1)$ are components of the missingness parameter $\evpsi_k$ (thus $f(\vc_n\mid z_{nk}=1;\evpsi_k)$ does not depend on $\vy_n$).
		The observed data under this MNARz-SRUW model is $(\vy_n^o, \vc_n)$, and its likelihood is:
		\begin{equation} 
			\label{eq:L_obs_mnarz_sps}
			\begin{aligned}
				L_{\text{MNARz-SRUW}}(\vy_n^o, \vc_n; \vtheta, \vpsi)
				\;=\; \int \sum_{k=1}^K \pi_k\,
				f_{\text{SRUW}}(\vy_n \mid z_{nk}=1; \evtheta_k)\,
				f(\vc_n \mid z_{nk}=1; \evpsi_k)\ d\vy_n^m. 
			\end{aligned}
		\end{equation}
		
		Now consider the augmented observed vector $\tilde{\vy}_n^o = (\vy_n^o, \vc_n)$. Assume $\tilde{\vy}_n^o$ is i.i.d.\ from a mixture model in which the conditional law of the missing part $\vy_n^m$ is MAR with respect to $\tilde{\vy}_n^o$ (i.e., the missingness mechanism does not depend on $\vy_n^m$ given $(\vy_n^o,\vc_n)$). Define the **augmented observed-data** likelihood as
		\begin{equation} \label{eq:L_aug_mar_sps}
			\tilde{f}_{\text{MAR}}(\tilde{\vy}_n^o; \vtheta, \vpsi)
			= \sum_{k=1}^K \pi_k\,
			\bigg(\int f_{\text{SRUW}}(\vy_n^o,\vy_n^m \mid z_{nk}=1; \evtheta_k)\, d\vy_n^m\bigg)\,
			f(\vc_n \mid z_{nk}=1; \evpsi_k),
		\end{equation}
		where $f_{\text{SRUW}}(\cdot\mid z_{nk}=1;\evtheta_k)$ is the same component density as above.
		
		Then, for fixed parameters $(\vtheta, \vpsi)$, the observed-data likelihood under MNARz-SRUW for $(\vy_n^o, \vc_n)$ is identical to the likelihood of the augmented observation $\tilde{\vy}_n^o=(\vy_n^o,\vc_n)$ under the MAR interpretation:
		\[
		L_{\text{MNARz-SRUW}}(\vy_n^o, \vc_n; \vtheta, \vpsi)
		\;=\; \tilde{f}_{\text{MAR}}(\tilde{\vy}_n^o; \vtheta, \vpsi).
		\]
	\end{theorem}
	
	\begin{proof}[Proof of Theorem \ref{thm:MAR_MNARz_SRUW}]
		Starting from \Cref{eq:L_obs_mnarz_sps}, by the MNARz assumption $f(\vc_n \mid \vy_n, z_{nk}=1; \evpsi_k)= f(\vc_n \mid z_{nk}=1; \evpsi_k)$ is independent of $\vy_n^m$, hence
		\begin{align*}
			L_{\text{MNARz-SRUW}}(\vy_n^o, \vc_n; \vtheta, \vpsi)
			&= \int \sum_{k=1}^K \pi_k\,
			f_{\text{SRUW}}(\vy_n^o, \vy_n^m \mid z_{nk}=1; \evtheta_k)\,
			f(\vc_n \mid z_{nk}=1; \evpsi_k)\ d\vy_n^m \\
			&= \sum_{k=1}^K \pi_k\, f(\vc_n \mid z_{nk}=1; \evpsi_k)\,
			\bigg(\int f_{\text{SRUW}}(\vy_n^o, \vy_n^m \mid z_{nk}=1; \evtheta_k)\ d\vy_n^m\bigg) \\
			&= \sum_{k=1}^K \pi_k\, f_k(\vy_n^o \mid z_{nk}=1; \evtheta_k)\, f(\vc_n \mid z_{nk}=1; \evpsi_k),
		\end{align*}
		where we set $f_k(\vy_n^o \mid z_{nk}=1; \evtheta_k):=\int f_{\text{SRUW}}(\vy_n^o, \vy_n^m \mid z_{nk}=1; \evtheta_k)\, d\vy_n^m$. Comparing with \Cref{eq:L_aug_mar_sps} yields
		\[
		L_{\text{MNARz-SRUW}}(\vy_n^o, \vc_n; \vtheta, \vpsi) \;=\; \tilde{f}_{\text{MAR}}(\tilde{\vy}_n^o; \vtheta, \vpsi),
		\]
		as claimed.
	\end{proof}
	
	By Theorem~\ref{thm:MAR_MNARz_SRUW}, the observed-data likelihood under MNARz-SRUW equals the augmented observed-data likelihood under MAR for every $(\vtheta,\vpsi)$ and each observation; hence, the full-sample likelihoods coincide pointwise in $(\vtheta,\vpsi)$. The following corollary shows the resulting estimator-level equivalence.
	
	\begin{corollary}[Estimator equivalence under augmentation]
		\label{cor:estimator_equivalence_MNARz_MAR}
		Under the conditions of Theorem~\ref{thm:MAR_MNARz_SRUW}, for any fixed $(\vtheta,\vpsi)$ and for each observation,
		\[
		L_{\text{MNARz-SRUW}}(\vy_n^o,\vc_n;\vtheta,\vpsi)\;=\;\tilde{f}_{\text{MAR}}(\tilde{\vy}_n^o;\vtheta,\vpsi).
		\]
		Hence the full-sample observed-data likelihoods coincide, and therefore the sets of maximum likelihood estimators for $(\vtheta,\vpsi)$ under MNARz-SRUW based on $(\vy^o,\vc)$ and under MAR on the augmented data $\tilde{\vy}^o=(\vy^o,\vc)$ are identical (up to label switching). Moreover, if the same priors on $(\vtheta,\vpsi)$ are used in both formulations, the resulting Bayesian posteriors coincide.
	\end{corollary}
	
	\begin{proof}[Proof of Corollary \ref{cor:estimator_equivalence_MNARz_MAR}]
		By Theorem~\ref{thm:MAR_MNARz_SRUW}, for every observation $n$ and every fixed $(\vtheta,\vpsi)$,
		\[
		L_{\text{MNARz-SRUW}}(\vy_n^o,\vc_n;\vtheta,\vpsi)
		\;=\;
		\tilde{f}_{\text{MAR}}(\tilde{\vy}_n^o;\vtheta,\vpsi),
		\qquad
		\tilde{\vy}_n^o=(\vy_n^o,\vc_n).
		\]
		Let $\cD=\{(\vy_n^o,\vc_n)\}_{n=1}^N$ denote the sample, assumed i.i.d.\ under the mixture model as in the theorem. The full-sample observed-data likelihoods are then
		\begin{align*}
			L_{\text{MNARz-SRUW}}(\cD;\vtheta,\vpsi)
			\;&=\;\prod_{n=1}^N L_{\text{MNARz-SRUW}}(\vy_n^o,\vc_n;\vtheta,\vpsi) \\
			\tilde{L}_{\text{MAR}}(\cD;\vtheta,\vpsi)
			\;&=\;\prod_{n=1}^N \tilde{f}_{\text{MAR}}(\tilde{\vy}_n^o;\vtheta,\vpsi). 
		\end{align*}
		
		By pointwise equality for each factor, we obtain the full-sample equality
		\[
		L_{\text{MNARz-SRUW}}(\cD;\vtheta,\vpsi)
		\;=\;
		\tilde{L}_{\text{MAR}}(\cD;\vtheta,\vpsi)
		\quad\text{for all }(\vtheta,\vpsi).
		\]
		Consequently, the sets of maximum likelihood estimators coincide:
		\[
		\arg\max_{(\vtheta,\vpsi)} L_{\text{MNARz-SRUW}}(\cD;\vtheta,\vpsi)
		\;=\;
		\arg\max_{(\vtheta,\vpsi)} \tilde{L}_{\text{MAR}}(\cD;\vtheta,\vpsi),
		\]
		up to the usual permutations of mixture component labels (label switching), since both likelihoods are invariant under relabelings.
		
		For the Bayesian statement, let $\Pi$ be a common prior on $(\vtheta,\vpsi)$ admitting a density $\pi(\vtheta,\vpsi)$ with respect to a common dominating measure. The posteriors are
		\begin{align*}
			\pi_{\text{MNARz}}(\vtheta,\vpsi\mid\cD)
			\;&\propto\;
			\pi(\vtheta,\vpsi)\,L_{\text{MNARz-SRUW}}(\cD;\vtheta,\vpsi)\\
			\pi_{\text{MAR}}(\vtheta,\vpsi\mid\cD)
			\;&\propto\;
			\pi(\vtheta,\vpsi)\,\tilde{L}_{\text{MAR}}(\cD;\vtheta,\vpsi). 
		\end{align*}
		
		Since the likelihoods coincide pointwise in $(\vtheta,\vpsi)$, the unnormalized posteriors agree, hence their normalizing constants (integrals over the same parameter space) are also equal. Therefore
		\[
		\pi_{\text{MNARz}}(\vtheta,\vpsi\mid\cD)
		\;=\;
		\pi_{\text{MAR}}(\vtheta,\vpsi\mid\cD),
		\]
		again, modulo label switching. 
	\end{proof}
	
	\begin{remark}
		In the MAR interpretation of the augmented data $\tilde{\vy}_n^o$, the components $\vy_n^o$ have density $f_k(\vy_n^o \mid z_{nk}=1; \evtheta_k)$ in cluster $k$, and the components $\vc_n$ (which are fully observed within $\tilde{\vy}_n^o$) have density $f(\vc_n \mid z_{nk}=1; \evpsi_k)$ in cluster $k$. The MAR assumption applies to $\vy_n^m$, meaning its missingness mechanism, given $\vy_n^o$ and $\vc_n$, does not depend on $\vy_n^m$ itself. The likelihood of $\tilde{\vy}_n^o$ is formed by integrating out $\vy_n^m$ from $f(\vy_n, \vc_n \mid z_{nk}=1)$, which leads to the product $f_k(\vy_n^o \mid z_{nk}=1; \evtheta_k)\, f(\vc_n \mid z_{nk}=1; \evpsi_k)$ for each cluster $k$.
	\end{remark}
	
	% \iffalse
	\section{Detailed EM Algorithms for the Two-Stage Procedure}
	\label{sec:detailed-em}
	
	This section gives complete EM derivations for both stages. In Stage~A (ranking), we run a penalized GMM on the \emph{imputed and standardized} data
	$\bar{\mY} = \mathrm{std}(\tilde{\mY})$, where $\tilde{\mY}$ is the single-imputed matrix defined in \Cref{alg:twostage} which is used for ranking only.
	In Stage~B (role assignment), we fit the unpenalized SRUW model under MNARz by exploiting the equivalence to MAR on the augmented data $(\mY,\mC)$.
	Throughout, $\mPsi_k=\mSigma_k^{-1}$, the SRUW partition is $V=(\sS,\sR,\sU,\sW)$, and $t_{nk}$ denotes EM responsibilities.
	We write $\ell(\cdot)$ for (penalized) log-likelihoods and $Q(\cdot\,;\cdot)$ for EM $Q$-functions.
	
	\subsection*{Stage A (ranking): Penalized EM for adaptive-regularized GMM}
	\label{subsec:stageA}
	
	\paragraph*{Penalized objective.}
	Given $\bar\vy_1,\ldots,\bar\vy_N \in \sRe^D$, we maximize the penalized observed-data log-likelihood
	\begin{align}
		\ell_{\mathrm{pen}}(\valpha)
		= \sum_{n=1}^N \log\!\Big[\sum_{k=1}^K \pi_k\,\phi(\bar\vy_n\mid \vmu_k,\mSigma_k)\Big]
		\;-\; \lambda \sum_{k=1}^K \|\vmu_k\|_1
		\;-\; \rho \sum_{k=1}^K \sum_{i\neq j} \mP_{k,ij}\,|\mPsi_{k,ij}|.
		\label{eq:pen-obj-clean}
	\end{align}
	\emph{Assumption:} For each EM run at a fixed $(\lambda,\rho)$, the weights $\mP_k$ are computed from the warm start and then held fixed.%
	\footnote{If $\mP_k$ is updated during EM, one must add a majorization step to preserve monotonicity.}
	
	At iteration $t$, with responsibilities $t_{nk}^{(t)}$, the penalized $Q$-function is
	\begin{align}
		Q_{\mathrm{pen}}(\valpha;\valpha^{(t-1)}) &=
		\sum_{n=1}^N \sum_{k=1}^K t_{nk}^{(t)}\Big\{\log\pi_k - \tfrac12 \log|\mSigma_k|
		-\tfrac12 (\bar\vy_n-\vmu_k)^\top \mPsi_k (\bar\vy_n-\vmu_k)\Big\}
		\nonumber\\[-2mm]
		&\quad - \lambda \sum_{k=1}^K \|\vmu_k\|_1
		\;-\; \rho \sum_{k=1}^K \sum_{i\neq j} \mP_{k,ij}\,|\mPsi_{k,ij}|.
		\label{eq:pen-Q-clean}
	\end{align}
	
	\paragraph*{E-step.}
	\begin{align}
		t_{nk}^{(t)} \;=\;
		\frac{\pi_k^{(t-1)} \,\phi(\bar\vy_n\mid \vmu_k^{(t-1)},\mSigma_k^{(t-1)})}
		{\sum_{\ell=1}^K \pi_\ell^{(t-1)} \,\phi(\bar\vy_n\mid \vmu_\ell^{(t-1)},\mSigma_\ell^{(t-1)})},
		\qquad
		n_k^{(t)}=\sum_{n=1}^N t_{nk}^{(t)}.
	\end{align}
	
	\paragraph*{M-step: mixing weights.}
	$\displaystyle \pi_k^{(t)}= n_k^{(t)}/N.$
	
	\paragraph*{M-step: means $\vmu_k$ with $\ell_1$ penalty.}
	Given $\mPsi_k^{(t-1)}$, the subproblem in $\vmu_k$ is convex. Let
	\[
	\bar{\vm}_k^{(t)} := \frac{1}{n_k^{(t)}}\sum_{n=1}^N t_{nk}^{(t)} \bar\vy_n.
	\]
	Then the gradient of the smooth part is
	\[
	\nabla_{\vmu_k}\Big[\tfrac12\sum_{n=1}^N t_{nk}^{(t)}(\bar\vy_n-\vmu_k)^\top \mPsi_k^{(t-1)}(\bar\vy_n-\vmu_k)\Big]
	= n_k^{(t)} \,\mPsi_k^{(t-1)}(\vmu_k-\bar{\vm}_k^{(t)}).
	\]
	The KKT optimality for coordinate $j$ is
	\[
	n_k^{(t)}\,(\mPsi_k^{(t-1)})_{jj}\,\mu_{kj}
	\;+\; n_k^{(t)} \!\sum_{v\neq j} (\mPsi_k^{(t-1)})_{jv}\,\mu_{kv}
	\;-\; n_k^{(t)}\,(\mPsi_k^{(t-1)} \bar{\vm}_k^{(t)})_j
	\;\in\; \lambda\,\partial |\mu_{kj}|.
	\]
	A coordinate-descent update is
	\[
	\mu_{kj} \;\leftarrow\; \frac{1}{n_k^{(t)}(\mPsi_k^{(t-1)})_{jj}}\;
	\gS_\lambda\!\Big(
	n_k^{(t)}(\mPsi_k^{(t-1)} \bar{\vm}_k^{(t)})_j
	- n_k^{(t)} \!\sum_{v\neq j} (\mPsi_k^{(t-1)})_{jv}\,\mu_{kv}
	\Big),
	\]
	where $\gS_\lambda(u)=\mathrm{sign}(u)\,\max\{|u|-\lambda,0\}$.
	
	\paragraph*{M-step: precisions $\mPsi_k$ via weighted graphical lasso.}
	With $\vmu_k^{(t)}$ fixed, define the responsibility-weighted covariance
	\[
	\mS_k^{(t)}=\frac{1}{n_k^{(t)}}\sum_{n=1}^N t_{nk}^{(t)}(\bar\vy_n-\vmu_k^{(t)})(\bar\vy_n-\vmu_k^{(t)})^\top.
	\]
	Then
	\[
	\mPsi_k^{(t)} \in \arg\min_{\mPsi\succ 0}\Big\{-\log\det\mPsi + \mathrm{tr}(\mS_k^{(t)}\mPsi)
	+ \frac{2\rho}{n_k^{(t)}}\sum_{i\neq j} \mP_{k,ij}\,|\mPsi_{ij}|\Big\},
	\]
	with diagonals unpenalized; standard \texttt{glasso} solvers apply.
	
	\subsection*{Stage B (role assignment): EM for SRUW under MNARz}
	\label{subsec:stageB}
	
	\paragraph*{Observed likelihood via augmentation.}
	Let $\gD_{\mathrm{MAR}}\dot\cup \gD_{\mathrm{MNAR}}=[D]$. Under MNARz,
	\[
	\ell(\Theta;\mY,\mC)
	=\sum_{n=1}^N \log\!\Big[\sum_{k=1}^K \pi_k\,
	f^{o}_{k,\mathrm{MAR}}(\vy_n^o; \valpha_k,\vxi)\;\; f_c^{\mathrm{MNARz}}(\vc_{n,\mathrm{MNAR}};\vpsi_k)\Big],
	\]
	is a standard mixture on the augmented observation $(\vy_n^o,\vc_{n,\mathrm{MNAR}})$.
	
	\paragraph*{Complete-data log-likelihood and $Q$-function.}
	With $\Theta=(\vpi,\{\valpha_k\},\vxi,\{\vpsi_k\})$ and latent $\{z_{nk}\}$,
	\[
	Q(\Theta;\Theta^{(t-1)})=\sum_{n=1}^N\sum_{k=1}^K t_{nk}^{(t)}
	\Big\{\log\pi_k + \Es[\log f_k(\mY_n;\valpha_k,\vxi)\mid \vy_n^o,z_{nk}{=}1]
	+ \log f_c^{\mathrm{MNARz}}(\vc_{n,\mathrm{MNAR}};\vpsi_k)\Big\}.
	\]
	
	\paragraph*{E-step.}
	\[
	t_{nk}^{(t)}=\frac{\pi_k^{(t-1)}\,
		f^{o}_{k,\mathrm{MAR}}(\vy_n^o; \valpha_k^{(t-1)},\vxi^{(t-1)})\,
		f_c^{\mathrm{MNARz}}(\vc_{n,\mathrm{MNAR}};\vpsi_k^{(t-1)})}
	{\sum_{\ell=1}^K \pi_\ell^{(t-1)}\,
		f^{o}_{\ell,\mathrm{MAR}}(\vy_n^o; \valpha_\ell^{(t-1)},\vxi^{(t-1)})\,
		f_c^{\mathrm{MNARz}}(\vc_{n,\mathrm{MNAR}};\vpsi_\ell^{(t-1)})}.
	\]
	
	\paragraph*{M-step: $\pi_k$ and MNARz parameters.}
	\[
	\pi_k^{(t)}=\frac{1}{N}\sum_{n=1}^N t_{nk}^{(t)},\qquad
	\hat\rho_k^{(t)}=\frac{\sum_{n=1}^N t_{nk}^{(t)}\sum_{d\in\gD_{\mathrm{MNAR}}} c_{nd}}
	{\sum_{n=1}^N t_{nk}^{(t)}\,|\gD_{\mathrm{MNAR}}|},\quad
	\vpsi_k^{(t)}=\log\frac{\hat\rho_k^{(t)}}{1-\hat\rho_k^{(t)}}.
	\]
	
	\paragraph*{M-step: SRUW data model updates.}
	Write
	\[
	f_k(\vy;\valpha_k,\vxi)= f_{\clust}(\vy^{\sS};\vmu_k^{\sS},\mSigma_k^{\sS})\,
	f_{\reg}(\vy^{\sU}\mid \vy^{\sR}; \va,\mBeta,\mOmega)\,
	f_{\indep}(\vy^{\sW}; \vgamma,\mGamma).
	\]
	Let $\Es_k[\cdot\mid \vy_n^o]$ denote Gaussian conditional expectations under component $k$.
	Then the sufficient statistics are the \emph{mixture-weighted} moments
	\[
	\Es[\cdot\mid \vy_n^o] \;=\; \sum_{k=1}^K t_{nk}^{(t)}\, \Es_k[\cdot\mid \vy_n^o, z_{nk}{=}1].
	\]
	
	\emph{Cluster block $\sS$:} for each $k$, compute
	\[
	\bar\vy_{k}^{\sS,(t)}=\frac{1}{n_k^{(t)}}\sum_{n=1}^N t_{nk}^{(t)}\,\Es_k[\vy_n^{\sS}\mid \vy_n^o],\qquad
	\mS_{k}^{\sS,(t)}=\frac{1}{n_k^{(t)}}\sum_{n=1}^N t_{nk}^{(t)}\,\Es_k[\vy_n^{\sS} (\vy_n^{\sS})^\top\mid \vy_n^o],
	\]
	and set $\vmu_k^{\sS,(t)}=\bar\vy_{k}^{\sS,(t)}$ and $\mSigma_k^{\sS,(t)}=\mS_{k}^{\sS,(t)}-\bar\vy_{k}^{\sS,(t)}(\bar\vy_{k}^{\sS,(t)})^\top$.
	
	\emph{Regression block $\sU\mid\sR$:} let $\mX_n=[\,1\ \ (\vy_n^{\sR})^\top\,]$.
	Form the \emph{mixture-weighted} global moments
	\[
	\mA=\sum_{n=1}^N \sum_{k=1}^K t_{nk}^{(t)}\, \Es_k[\mX_n^\top \mX_n\mid \vy_n^o],\qquad
	\mB=\sum_{n=1}^N \sum_{k=1}^K t_{nk}^{(t)}\, \Es_k[\mX_n^\top \vy_n^{\sU}\mid \vy_n^o],
	\]
	then
	$
	\begin{bmatrix} \va^{(t)}\\ \vbeta^{(t)}\end{bmatrix}=\mA^{-1}\mB,
	$
	and
	\[
	\mOmega^{(t)}=\frac{1}{N}\sum_{n=1}^N \sum_{k=1}^K t_{nk}^{(t)}
	\,\Es_k\!\Big[(\vy_n^{\sU}-\va^{(t)}-\vy_n^{\sR}\vbeta^{(t)})(\vy_n^{\sU}-\va^{(t)}-\vy_n^{\sR}\vbeta^{(t)})^\top \,\Big|\, \vy_n^o\Big].
	\]
	
	\emph{Independent block $\sW$:}
	\[
	\vgamma^{(t)}=\frac{1}{N}\sum_{n=1}^N \sum_{k=1}^K t_{nk}^{(t)}\, \Es_k[\vy_n^{\sW}\mid \vy_n^o],\quad
	\mGamma^{(t)}=\frac{1}{N}\sum_{n=1}^N \sum_{k=1}^K t_{nk}^{(t)}\, \Es_k[\vy_n^{\sW}(\vy_n^{\sW})^\top\mid \vy_n^o]
	-\vgamma^{(t)}(\vgamma^{(t)})^\top.
	\]
	
	\paragraph*{Numerical stability.}
	In Stage~A, we use warm starts and pathwise $(\lambda,\rho)$; diagonals unpenalized and enforce $\mPsi_k\succ0$.
	In Stage~B, if some $n_k^{(t)}$ is tiny, we add a small ridge to $\mSigma_k^{\sS}$ or merge/discard components per BIC.
	% \fi
	
	\section{Additional Experiments}\label{sec_additional_experiments}
	\subsection{Metric Details}
	We outline some metrics that we use to evaluate the clustering accuracy and imputation error, which are summarized below:
	\begin{table}[ht]
		\centering
		\caption{Summary of evaluation metrics for clustering and imputation quality}
		\label{tab:metrics_summary}
		\small
		\begin{tabular}{@{}ccc@{}}
			\toprule
			\textbf{Name} & \textbf{Formula} & \textbf{Range} \\ 
			\midrule
			\multicolumn{3}{l}{\textbf{Imputation error}} \\ 
			\midrule
			NRMSE & 
			$\dfrac{\sqrt{\dfrac{1}{N}\sum_{i=1}^{N}(y_i - \hat{y}_i)^2}}{\sigma}$ & 
			$[0, 1]$ \\
			
			WNRMSE & 
			$\dfrac{\sum_{c=1}^{C} (\text{NRMSE}_c \times w_c)}{\sum_{c=1}^{C} w_c}$ & 
			$[0, 1]$ \\
			
			\midrule
			\multicolumn{3}{l}{\textbf{Clustering similarity}} \\ 
			\midrule
			ARI & 
			$\dfrac{\text{RI} - \mathbb{E}[\text{RI}]}{\max(\text{RI}) - \mathbb{E}[\text{RI}]}$ & 
			$[0, 1]$ \\
			
			\midrule
			\multicolumn{3}{l}{\textbf{Composite}} \\ 
			\midrule
			CIIE & 
			$\alpha \times (1 - \text{NRMSE}) + \vbeta \times \text{Similarity Score}$ & 
			$[0, 1]$ \\
			
			\bottomrule
		\end{tabular}
	\end{table}

	\subsection{Assessing Integrated MNARz Handling}
	We conduct an additional experiment to showcase the performance of handling MNAR pattern within the variable selection framework of which the setups are similar to \cite{sportisse2024model}. Datasets were simulated with $n=100$ observations, $K=3$ true clusters with proportions $\vpi=(0.5, 0.25, 0.25)$), and $D=6,9$ variables. True cluster memberships $\rmZ$ were drawn according to $\vpi$. The complete data $\mY$ was generated as $Y_{nd} = \sum_{k'=1}^K Z_{nk'} \delta_{k'd} + \epsilon_{nd}$, where $\epsilon_{nd} \sim \mathcal{N}(0,1)$, and $\vdelta$ defined cluster-specific mean shifts with signal strength $\tau=2.31$ with  $\delta_{11}=\delta_{14}=\delta_{22}=\delta_{25}=\delta_{33}=\delta_{36}=\tau$, others zero). For a specific MNAR scenario, the class-specific intercept component $\psi_k^z$ was $0$, and variable-specific slopes $\psi_j^y$ were $(1.45, 0.2, -3, 1.45, 0.2, -3)$, resulting in $P(M_{ij}=1 | Y_{ij}, Z_{ik}=1) = \Phi(\psi_j^y Y_{ij})$. 
	
	In Figure \ref{fig_sportisse_ari} and \ref{fig_sportisse_nrmse}, we plot the boxplots of ARI and NRMSE over 20 replications. We observe that MNAR-based approaches attain better results over methods not designed for MNAR patterns. These methods deliver competitive performance even when the true missingness mechanism is more complex (MNARy, MNARyz); thereby, supporting the conclusion in \cite{sportisse2024model}. Furthermore, our framework demonstrates even slightly higher ARI and lower NRMSE in some cases compared to standard MNARz.
	\begin{figure}[t]
		\centering
		\includegraphics[width=\linewidth]{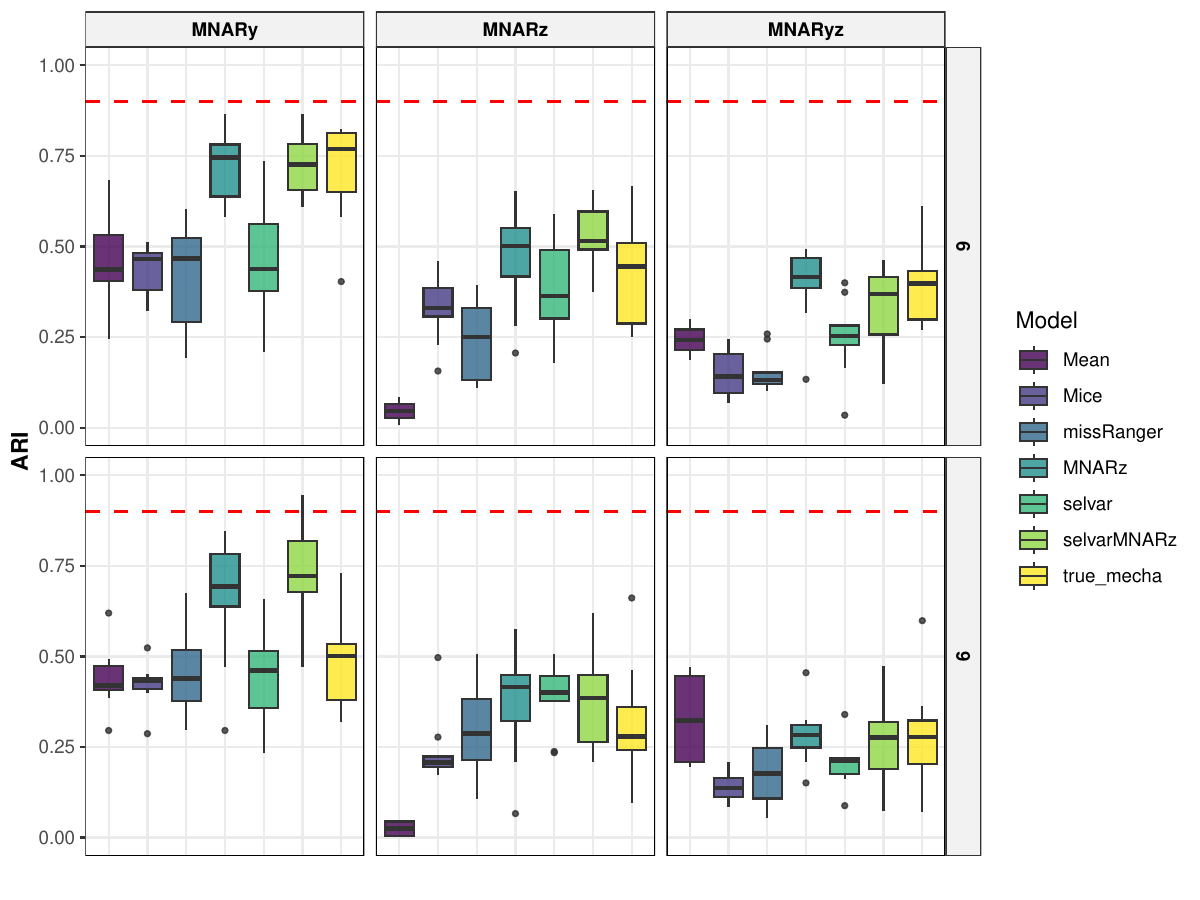}
		\caption{Boxplot of the ARI obtained over 20 replications of simulated data. The theoretical ARIs are represented by a red dashed line.}
		\label{fig_sportisse_ari}
	\end{figure}
	
	\begin{figure}[t]
		\centering
		\includegraphics[width=\linewidth]{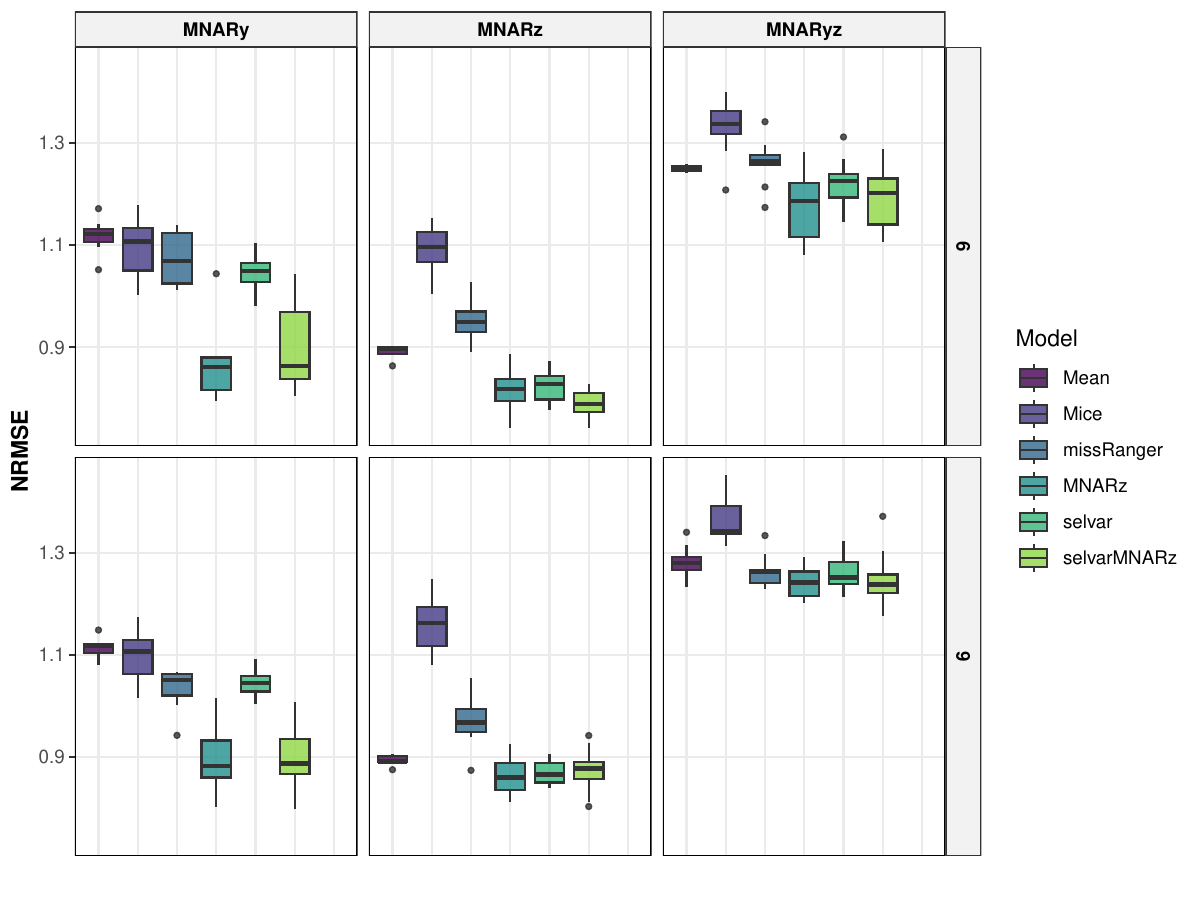}
		\caption{Boxplot of the NRMSE obtained over 20 replications of simulated data}
		\label{fig_sportisse_nrmse}
	\end{figure}
	
	\subsection{Sensitivity on the choice of c}
	As discussed in \cite{celeux2019variable} and further investigated through Theorem \ref{thm:selection_consistency}, the choice of the hyperparameter $c$ plays a crucial role in the stepwise construction of the relevant variable set $\widehat{\sS}$, balancing the risk of stopping the selection process too early with the risk of incorrectly including irrelevant variables. Our theoretical work, particularly Theorem \ref{thm:selection_consistency} and the formulation for selecting $c$ in  \Cref{eq:choice_of_c}, suggests that $c$ should ideally be determined by considering the probability of a single incorrect inclusion ($p_N(N)$), the number of non-relevant variables ($M_{NR}$), and a desired tolerance for overall false positives in $\widehat{\sS}$ ($\epsilon_{S,FP}$). The experimental results presented here provide practical insights into this interplay and generally support our theoretical conclusions.
	
	As illustrated in Figures \ref{fig_sensi_num_clust_rel_vars} and \ref{fig_sensi_metrics}, the two datasets (Dataset 1: $D=7, s_0=3$; Dataset 2: $D=14, s_0=2$) show different optimal ranges for $c$. This aligns with the theoretical expectation that $c$ is not a universal constant but rather interacts with dataset-specific factors, which are encapsulated by $p_N(N)$ and $M_{NR}$. For Dataset 1, the excellent performance with $c=2$ (achieving 100\% correct selection of relevant variables) points to a very low $p_N(N)$. This suggests that after the three true relevant variables are found, the remaining four non-$\sS_0$ variables consistently produce $\text{BIC}_{\text{diff}} \le 0$, making a small stopping value like $c=2$ both effective and safe. The continued strong performance at $c=7$ (90\% correct) further indicates that $p_N(N)$ is small enough that even scanning all $M_{NR}=4$ non-relevant variables rarely leads to a false inclusion. This situation is consistent with a theoretical scenario where $p_N(N) \ll 1/M_{NR}$, minimizing the risk of false positives regardless of $c$ within a reasonable range.
	
	In contrast, Dataset 2, which has more non-$\sS_0$ variables ($M_{NR}=12$), demonstrates a clearer trade-off. Good performance is observed for $c=2$ (80\%) and $c=3$ (90\%), implying that $p_N(N)$ is still relatively small. However, the notable decline in performance when $c=7$ (only 5\% correct $\sS_0$ selection) empirically validates a key theoretical concern: if $c$ is set too high and $p_N(N)$ is not sufficiently close to zero, the algorithm examines more non-$\sS_0$ variables while awaiting $c$ consecutive negative $\text{BIC}_{\text{diff}}$ values. Each additional variable examined increases the cumulative chance of a Type I error (a non-$\sS_0$ variable having $\text{BIC}_{\text{diff}} > 0$ by chance). If the likelihood of obtaining $c$ correct rejections in a row, $(1-p_N(N))^c$, diminishes significantly as more variables are processed, false inclusions become more probable. The poor result for $c=7$ in Dataset 2 suggests its $p_N(N)$ value makes it unlikely to achieve seven consecutive correct rejections before a false positive occurs among the $M_{NR}=12$ candidates. This aligns with the theoretical relationship $c \approx \ln(M_{NR}/\epsilon_{S,FP})/p_N(N)$: for a given $\epsilon_{S,FP}$, a higher $p_N(N)$ or a larger $M_{NR}$ would generally favor a smaller $c$ to maintain that error tolerance, or a large fixed $c$ might lead to a poorer effective $\epsilon_{S,FP}$. The finding that $c=3$ is effective for Dataset 2 is consistent with the heuristic used in prior work \cite{celeux2019variable}, suggesting it offers a practical compromise when $p_N(N)$ is small but non-negligible, and $M_{NR}$ is moderate. Furthermore, the relative stability of cluster number selection across different values of $c$ suggests that $c$'s main influence is on the selection of variables for $\sS$, as predicted by theory, with more indirect effects on the overall model choice, primarily if $\widehat{\sS}$ is significantly misidentified.
	\begin{figure}[t]
		\centering
		\includegraphics[width=\linewidth]{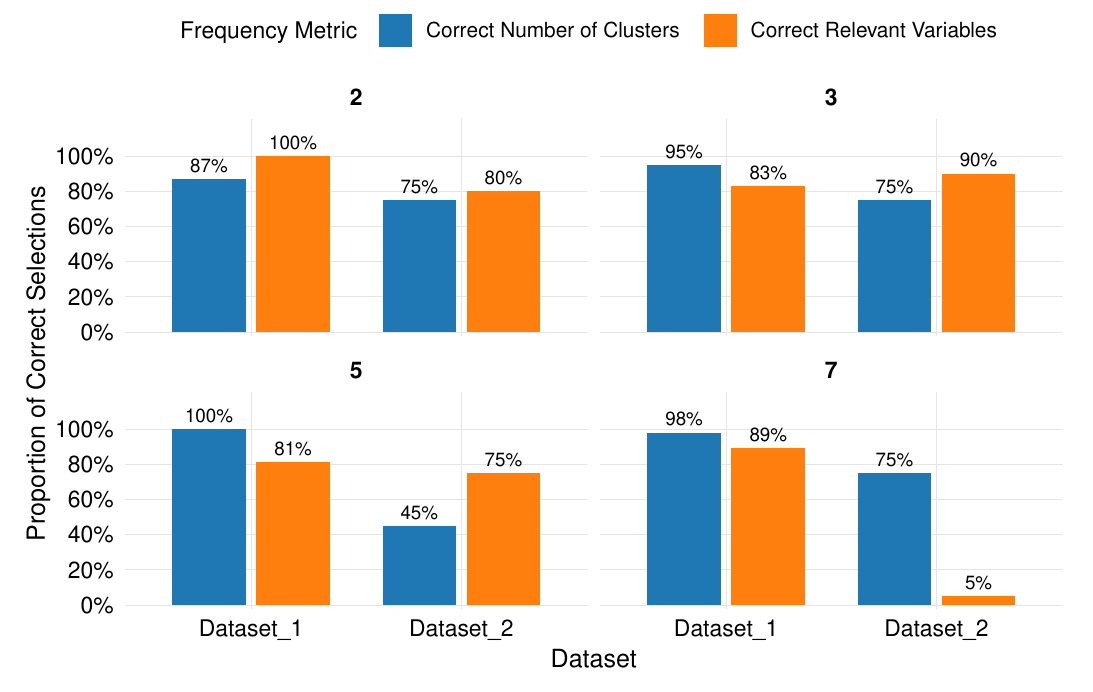}
		\caption{Proportions of choosing correct number clusters and relevant variables for simulated dataset in~\Cref{sec_experiment} under varying c. We run the experiment over 50 replications}
		\label{fig_sensi_num_clust_rel_vars}
	\end{figure}
	
	\begin{figure}[t]
		\centering
		\includegraphics[width=\linewidth]{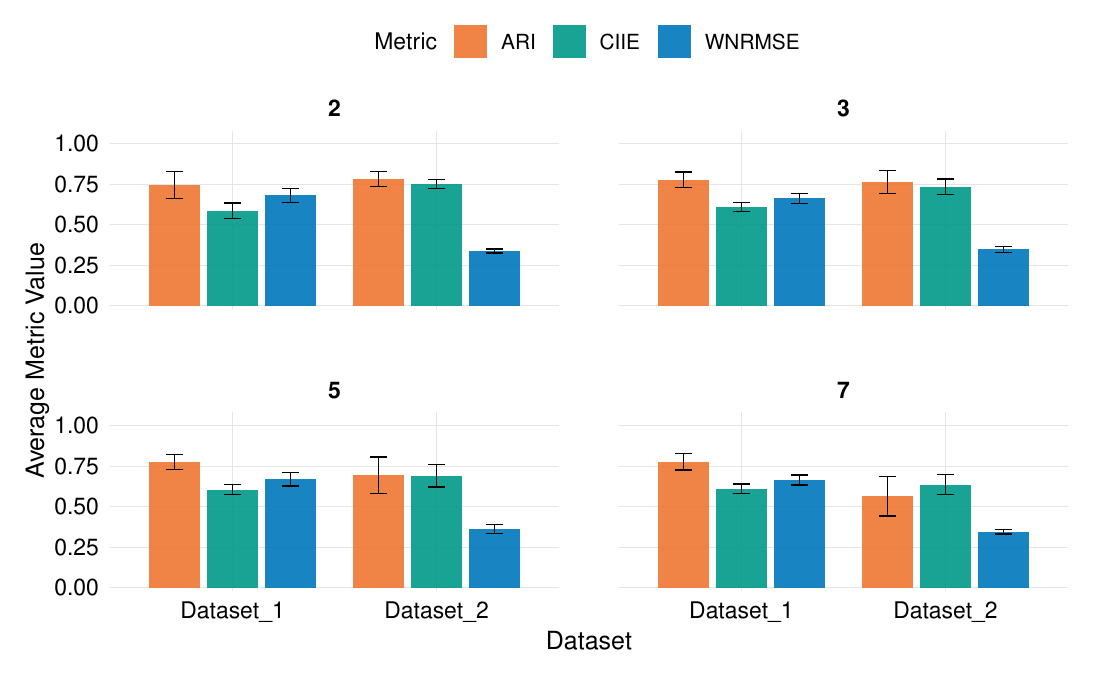}
		\caption{Clustering performance of simulated dataset in \Cref{sec_experiment} under varying c. We run the experiment over 50 replications}
		\label{fig_sensi_metrics}
	\end{figure}
	
	\subsection{Computational Times and Scalability}
	\subsubsection*{Complexity Analysis}
	We count arithmetic operations up to absolute constants (big-Oh). Throughout this part we denote: $N\in\mathbb{N}$ (samples), $D\in\mathbb{N}$ (variables), $K\in\mathbb{N}$ (mixture components). Let $M_{\text{EM}}$ be a uniform upper bound on the per-fit number of EM iterations until convergence (Assumption A2 below). In Stage A (ranking), the M-step contains a graphical-lasso solve per component with at most $M_{\text{glasso}}$ outer iterations. Covariance inverses and log-determinants are computed by Cholesky factorizations.
	
	{\bf Assumptions.} We work under the following explicit conditions.
	\begin{itemize}
		\item \textbf{A1.} The ground-truth SRUW partition has $D_{\mathrm{eff}}:=|\sS_{\text{true}}|+|\sU_{\text{true}}|\ll D$.
		\item \textbf{A2.} Each EM fit terminates in at most $M_{\text{EM}}$ iterations. For population/regularized EM, this is justified by established convergence rates: either sublinear convergence $M_{\text{EM}} = \gO_p(1/\sqrt{N})$ \cite{kwon2020em} or geometric convergence $M_{\text{EM}} = \gO(\log(1/\epsilon))$ to achieve $\epsilon$-accuracy \cite{zhou2025global}. Our analysis uses the more conservative bounded iteration assumption for clarity.
		
		\item \textbf{A3.} With probability $1-o(1)$, the Stage-A ranking lists all $D_{\mathrm{eff}}$ informative variables before any purely irrelevant variables (proved in Theorem~3 of the paper).
		\item \textbf{A4.} Let $C_{\text{glasso}}(d)$ denote the arithmetic cost of one graphical-lasso solve of size $d\times d$. In general, $C_{\text{glasso}}(d)=\Theta(M_{\text{glasso}}\,d^{3})$. Under connected-component decomposition with largest block size $s_{\max}$ (as in \cite{witten2011glasso, mazumder2012glasso}), $C_{\text{glasso}}(d)=\Theta\Big(M_{\text{glasso}}\sum_{c} p_c^{3}\Big)\le \Theta\big(M_{\text{glasso}}\,d\,s_{\max}^{2}\big)$.
	\end{itemize}
	{\bf Per-iteration costs for a $d$-variate GMM.} One EM iteration for a $K$-component Gaussian mixture in dimension $d$ has:
	\begin{itemize}
		\item \emph{E-step:} Responsibilities $t_{nk}$ require evaluating Gaussian log-densities for all $(n,k)$. With per-component $\mSigma_k^{-1}$ and $\log\det\mSigma_k$ fixed within the iteration, each evaluation uses a matrix-vector product and a quadratic form, costing $\Theta(d^{2})$. Hence $ \Theta(N K d^{2})$.
		\item \emph{M-step (means, weights):} Weighted sums are $\Theta(N K d)$.
		\item \emph{M-step (covariances):} Weighted second moments yield $\Theta(N K d^{2})$. The per-component matrix factorization/inversion is $\Theta(d^{3})$, hence $\Theta(K d^{3})$.
	\end{itemize}
	Therefore one EM iteration in dimension $d$ costs
	\[
	\Theta\big(N K d^{2}\big)+\Theta\big(K d^{3}\big),
	\]
	and one EM fit (up to convergence) costs
	\begin{equation}
		\label{eq:em-fit-cost}
		\mathcal{C}_{\text{EM}}(d)=\Theta \Big(M_{\text{EM}}\big(N K d^{2}+K d^{3}\big)\Big).
	\end{equation}
	
	The classical SRUW selection starts from all $D$ variables and eliminates one at a time. At step $j$ ($j=D,D-1,\dots,2$), to remove one variable it evaluates $j$ candidates; each evaluation requires an EM fit in dimension $j-1$. Using \Cref{eq:em-fit-cost}, the total cost is
	\[
	\sum_{j=2}^{D} j\cdot\mathcal{C}_{\text{EM}}(j-1)
	=\Theta \Big(M_{\text{EM}}\sum_{j=2}^{D} j\big(N K (j-1)^{2}+K (j-1)^{3}\big)\Big).
	\]
	Using the polynomial sums: 
	\[\sum_{j=1}^{D} j^3 = \frac{D^2(D+1)^2}{4} = \Theta(D^4)\]
	and
	\[\sum_{j=1}^{D} j^4 = \frac{D(D+1)(2D+1)(3D^2+3D-1)}{30} = \Theta(D^5)\], 
	we obtain the tight bound
	\begin{equation}
		\label{eq:C-stepwise}
		\mathcal{C}_{\text{stepwise}}=\Theta\big(M_{\text{EM}}\,(N K D^{4} + K D^{5})\big).
	\end{equation}
	The $D^{5}$ term renders this approach impractical beyond moderate $D$.
	
	{\bf Stage A (Ranking).} For each $(\lambda,\rho)$ on a grid of size $M_{\text{grid}}$, we run a penalized EM. Per iteration: the E-step remains $\Theta(N K D^{2})$; the M-step adds $K$ graphical-lasso solves. By Assumption A4,
	\[
	\text{per iter cost}=\Theta\big(N K D^{2}\big)+\Theta \big(K\,C_{\text{glasso}}(D)\big).
	\]
	Therefore the total Stage-A cost is
	\begin{equation}
		\label{eq:C-rank-general}
		\mathcal{C}_{\text{rank}}=\Theta \Big(M_{\text{grid}}\,M_{\text{EM}}\,\big(N K D^{2}+K\,C_{\text{glasso}}(D)\big)\Big).
	\end{equation}
	Two explicit regimes follow immediately from A4:
	\begin{align*}
		\textbf{(Dense/general)}\quad
		&C_{\text{glasso}}(D)=\Theta(M_{\text{glasso}} D^{3}) \\
		&\Rightarrow
		\mathcal{C}_{\text{rank}}=\Theta\Big(M_{\text{grid}} M_{\text{EM}} \big(N K D^{2}+K M_{\text{glasso}} D^{3}\big)\Big).
		\tageq \label{eq:Crank-dense}\\
		\textbf{(Connected components of size }s_{\max}\textbf{)}\quad
		&C_{\text{glasso}}(D)=\Theta \big(M_{\text{glasso}} D s_{\max}^{2}\big)\\
		&\Rightarrow
		\mathcal{C}_{\text{rank}}=\Theta\Big(M_{\text{grid}} M_{\text{EM}} \big(N K D^{2}+K M_{\text{glasso}} D s_{\max}^{2}\big)\Big).
		\tageq \label{eq:Crank-sparse}
	\end{align*}
	
	{\bf Stage B (Role assignment).} A single forward/backward pass evaluates a constant number of SRUW-MNARz fits per newly considered variable. Under A3 the pass stops after $D_{\text{eff}}$ additions; hence the Stage-B complexity is
	\begin{equation}
		\label{eq:C-role}
		\mathcal{C}_{\text{role}}=\Theta\Big(M_{\text{EM}}\big(N K D_{\text{eff}}^{2}+K D_{\text{eff}}^{3}\big)\Big).
	\end{equation}
	Since $D_{\text{eff}}\ll D$ (A1), $\mathcal{C}_{\text{role}}$ is strictly lower order than $\mathcal{C}_{\text{rank}}$ and Stage A dominates. Combining \Cref{eq:Crank-dense}-\Cref{eq:Crank-sparse} and \Cref{eq:C-role}, the two-stage complexity is
	\begin{equation}
		\label{eq:C-lasso}
		\mathcal{C}_{\text{two-stage}}=\mathcal{C}_{\text{rank}}+\mathcal{C}_{\text{role}}
		=\Theta\Big(M_{\text{grid}} M_{\text{EM}}\big(N K D^{2}+K\,C_{\text{glasso}}(D)\big)\Big)+o\big(\mathcal{C}_{\text{rank}}\big).
	\end{equation}
	
	From \Cref{eq:C-stepwise} and \Cref{eq:C-lasso}, the speedup factor satisfies
	\[
	\frac{\mathcal{C}_{\text{stepwise}}}{\mathcal{C}_{\text{two-stage}}}
	=
	\Omega\left(
	\frac{N K D^{4}+K D^{5}}{M_{\text{grid}}\big(N K D^{2}+K\,C_{\text{glasso}}(D)\big)}
	\right).
	\]
	Two explicit lower bounds follow.
	\begin{itemize}
		\item If $N\ge M_{\text{glasso}}D$ (E-step dominates Stage A): using \Cref{eq:Crank-dense},
		\[
		\frac{\mathcal{C}_{\text{stepwise}}}{\mathcal{C}_{\text{two-stage}}}
		=\Omega\!\left(\frac{D^{2}}{M_{\text{grid}}}\right).
		\]
		\item If $N< M_{\text{glasso}}D$ (glasso dominates Stage A): using \Cref{eq:Crank-dense},
		\[
		\frac{\mathcal{C}_{\text{stepwise}}}{\mathcal{C}_{\text{two-stage}}}
		=\Omega\left(\frac{D^{2}}{M_{\text{grid}}\,M_{\text{glasso}}}\right).
		\]
	\end{itemize}
	Under connected-component sparsity with block size $s_{\max}$ \Cref{eq:Crank-sparse} the second case strengthens to
	\[
	\frac{\mathcal{C}_{\text{stepwise}}}{\mathcal{C}_{\text{two-stage}}}
	=\Omega\left(\frac{D^{2}}{M_{\text{grid}}}\cdot
	\min\left\{1,\frac{N}{M_{\text{glasso}}\,s_{\max}^{2}}\right\}\right).
	\]
	In all regimes the speedup is $\Omega\big(D^{2}/M_{\text{grid}}\big)$, and strictly larger when sparsity (small $s_{\max}$) is present.
	
	\subsubsection*{Empirical Validation}
	We report wall-clock times (seconds) for representative scenarios, confirming the predicted polynomial speedups:
	\begin{center}
		\begin{tabular}{lcccccc}
			\toprule
			Scenario & $N$ & $D$ & $K$ & SelvarMNARz (s) & Clustvarsel (s) & Speedup \\
			\midrule
			Varying $D$ & 750 & 15 & 4 & 12.2 & 190  & $\sim$15$\times$ \\
			Varying $D$ & 750 & 21 & 4 & 14.4 & 640  & $\sim$44$\times$ \\
			Varying $D$ & 750 & 27 & 4 & 15.5 & 2054 & $\sim$132$\times$ \\
			Varying $N$ & 1000& 20 & 4 & 19.0 & 838  & $\sim$44$\times$ \\
			Varying $K$ & 750 & 20 & 12& 9.77 & 1242 & $\sim$127$\times$ \\
			\bottomrule
		\end{tabular}
	\end{center}
	These measurements are consistent with the theory: the two-stage method scales like $D^{2}$ in the dense case (and better under sparsity), whereas backward stepwise scales as $D^{4}$-$D^{5}$.
	
	\subsubsection*{Dependency of Computational Times under Degree of Missingness}
	We analyze the computational dependency of our framework on the proportion of missing data, revealing an advantageous property: runtime \emph{decreases} with increasing missing rates due to the efficient handling of incomplete data patterns in the EM algorithm.
	
	The key insight stems from the E-step computational complexity in Stage B, which operates directly on the observed data patterns. Let $X_n$ denote the number of observed entries in observation $\vy_n$. The E-step cost for Gaussian mixture models scales as:
	
	\[
	\mathcal{C}_{\text{E-step}} = \Theta\left(\sum_{n=1}^N \sum_{k=1}^K X_n^2\right) = \Theta\left(K N \Es[X^2]\right)
	\]
	
	For different missingness mechanisms:
	
	\begin{itemize}
		\item \textbf{MCAR at rate $r$:} $X \sim \mathrm{Binom}(D, 1-r)$, yielding
		\[
		\Es[X^2] = \Var(X) + \Es[X]^2 = D(1-r)r + D^2(1-r)^2
		\]
		This produces a quadratic decrease in E-step work as $r \uparrow 1$.
		
		\item \textbf{MAR/MNARz:} The same complexity bound applies, with $X_n$ representing the random count of observed coordinates per observation. Each per-record Gaussian density evaluation scales quadratically with observed entries, maintaining the $\Theta(K N \Es[X^2])$ complexity.
	\end{itemize}
	
	Stage A employs efficient single imputation and remains largely insensitive to missing rates, while Stage B inherits the beneficial $\Es[X^2]$ scaling. Consequently, total runtime decreases monotonically with increasing missingness rates.
	
	We empirically verified this property by fixing $(N, D, K) = (1000, 14, 4)$ and systematically increasing the missing data rate under two mixed missingness scenarios. The results, summarized in Table~\ref{tab:runtime_missingness} and Figure~\ref{fig_missingness_dependency}, confirm the theoretical predictions.
	
	\begin{table}[ht]
		\centering\small
		\caption{Runtime (seconds) with increasing missing rate under mixed mechanisms.}
		\label{tab:runtime_missingness}
		\begin{tabular}{@{}lcccc@{}}
			\toprule
			\textbf{Missing Mechanism} & \textbf{20\%} & \textbf{30\%} & \textbf{50\%} & \textbf{80\%} \\
			\midrule
			Mixed (MAR+MNAR)          & 54.1 & 46.2 & 32.9 & 23.9 \\
			Mixed (MCAR+MAR+MNAR)     & 34.9 & 23.8 & 25.5 & 22.8 \\
			\bottomrule
		\end{tabular}
	\end{table}
	
	\begin{figure}[ht]
		\centering
		\includegraphics[width=0.8\linewidth]{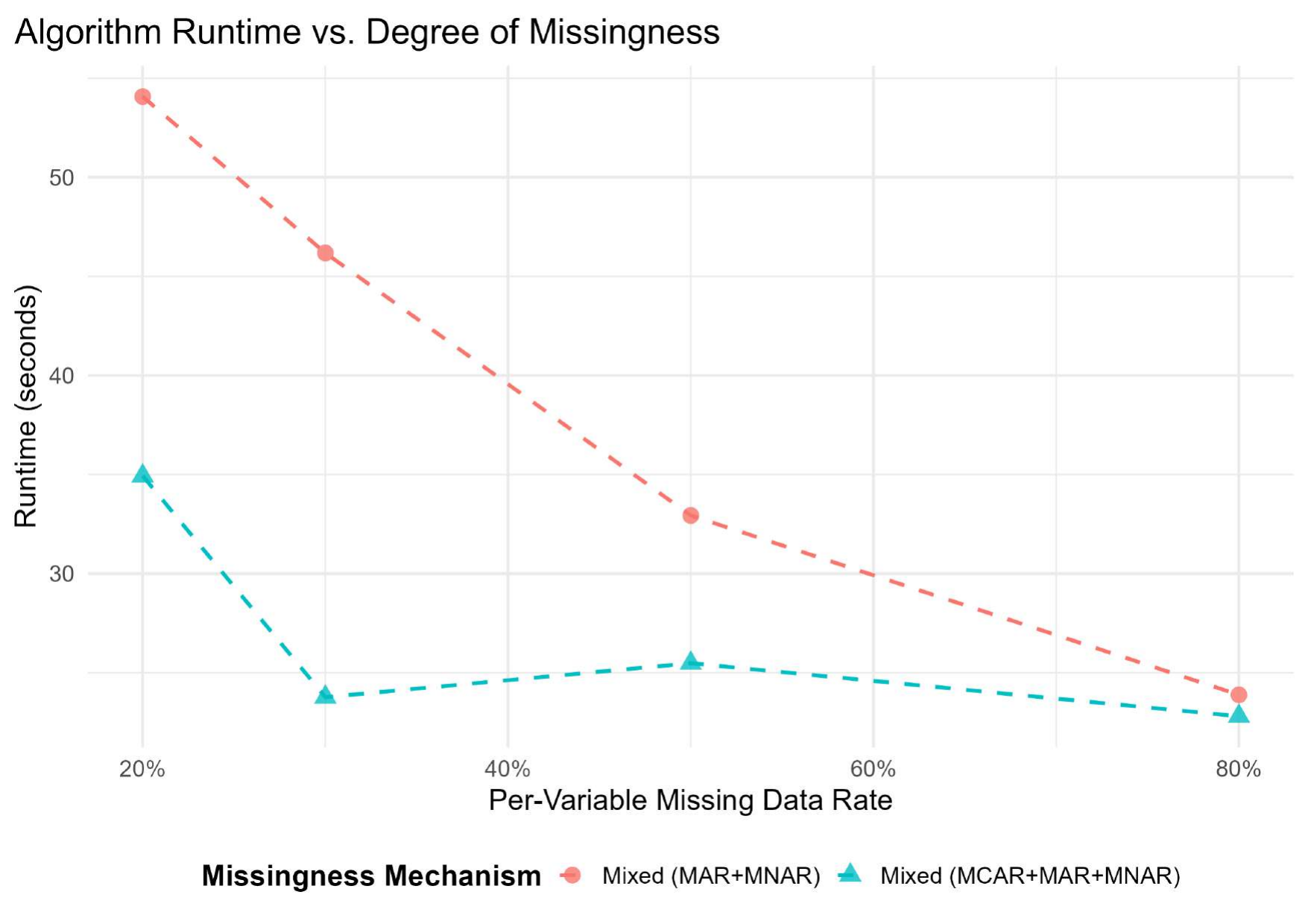}
		\caption{Runtime as the proportion of missing data increases.}
		\label{fig_missingness_dependency}
	\end{figure}
	
	Three key observations emerge:
	\begin{enumerate}
		\item Runtime consistently decreases with higher missing rates, as predicted by the $\Es[X^2]$ scaling. The MNARz block processes smaller observed patterns during E-step calculations, reducing computational burden for conditional expectations and complete-data sufficient statistics.
		\item The slight variation at 30\%-50\% missingness in the three-mechanism scenario likely stems from random allocation of missing positions. When missing data concentrates in clustering variables ($\sS$), the number of EM iterations for convergence may vary, but the overall inverse relationship persists.
		\item This property provides significant practical benefits. Users can expect faster processing on datasets with higher missing rates-a valuable characteristic for real-world applications where extensive missing data is common.
	\end{enumerate}
	The combination of mathematical analysis and empirical results demonstrates that our framework not only handles high missing rates effectively but also becomes more computationally efficient as missing data increases, making it particularly suitable for challenging real-world datasets with substantial missingness.
	
	\subsection{More on Simulated Dataset}
	% Currently maybe too verbose. Shorten it a little bit
	{\bf Performance under Model Misspecification.}
	Previous simulations focused on ``pure'' MAR and MNAR scenarios to clarify the benefits of our model when combined with the MNARz mechanism. However, real-world missingness mechanisms are often mixed. To assess robustness, we conducted additional simulations with a dataset of $D=14$ variables, where the true clustering variable set is $\sS^\star=\{1,2\}$ (remaining variables play roles of $\sR,\sU,\sW$ according to scenario 8 in section \ref{sec_experiment}. We generated missing data using a mixture of mechanisms:
	\begin{itemize}
		\item \textbf{MCAR:} Some variables have missing values completely at random
		\item \textbf{MAR:} Missingness depends only on observed components $y_n^{o}$
		\item \textbf{MNARy:} Missingness depends directly on unobserved values $\vy^m$ (actively violating MNARz assumption)
	\end{itemize}
	The MNARy mechanism (see \cite{sportisse2024model} for precise formulation) specifically tests our model's robustness to misspecification, as it depends on the unobserved values rather than cluster assignments. We compare our method against several baselines, including a novel two-step approach designed to isolate the benefits of joint modeling:
	\begin{enumerate}
		\item \textbf{MNARz + SelvarMix}: Estimate GMM-MNARz model and impute missing data, then run SelvarMix~\cite{celeux2019variable} on the imputed dataset
		\item \textbf{Multiple Imputation variants}: Using both missRanger and gcimputeR (cite here) with Mclust
		\item \textbf{VarSelLCM}: A competing joint modeling approach
	\end{enumerate}
	
	\begin{table}[ht]
		\centering\small
		\caption{Experiment with mixed \textsc{MAR+MNARy} (True set $\{1,2\}$)}
		\label{tab:mixed_mar_mnary}
		\begin{tabular}{lccc}
			\toprule
			\textbf{Method} & \textbf{ARI} & \textbf{NRMSE} & \textbf{Relevant Variables Selected} \\
			\midrule
			\texttt{SelvarMNARz (ours)} & 0.808 & 0.176 & 1, 2 \textit{(correct)} \\
			\texttt{VarSelLCM}             & 0.779 &  --    & 1-11 \textit{(extra variables)} \\
			\texttt{missRanger + Mclust}   & 0.799 & 0.190 & -- \\
			\texttt{gcimputeR + Mclust}    & 0.774 & 0.413 & -- \\
			\texttt{MNARz + SelvarMix}     & 0.328 & 0.070 & 1 \textit{(missing variable)} \\
			\bottomrule
		\end{tabular}
	\end{table}
	
	\begin{table}[ht]
		\centering\small
		\caption{Experiment with mixed \textsc{MCAR+MAR+MNARy} (True set $\{1,2\}$)}
		\label{tab:mixed_mcar_mar_mnary}
		\begin{tabular}{lccc}
			\toprule
			\textbf{Method} & \textbf{ARI} & \textbf{NRMSE} & \textbf{Relevant Variables Selected} \\
			\midrule
			\texttt{SelvarMNARz (ours)} & 0.808 & 0.181 & 1, 2 \textit{(correct)} \\
			\texttt{VarSelLCM}             & 0.772 &  --    & 1-11 \textit{(extra variables)} \\
			\texttt{missRanger + Mclust}   & 0.783 & 0.198 & -- \\
			\texttt{gcimputeR + Mclust}    & 0.755 & 0.401 & -- \\
			\texttt{MNARz + SelvarMix}     & 0.328 & 0.087 & 1 \textit{(missing variable)} \\
			\bottomrule
		\end{tabular}
	\end{table}
	
	The poor performance of the \texttt{MNARz + SelvarMix} baseline prompted a deeper investigation. To test whether properly handling imputation uncertainty could rescue the decoupled strategy, we also ran Multiple Imputation (MI) variants (missRanger-MI and MNARz-MI with random initializations) and pooled the results. Using the default \texttt{Rmixmod} backend for \texttt{SelvarMix} yielded the results below.
	\begin{table}[ht]
		\centering
		\small
		\caption{Performance on Mixed (MAR+MNARy) Data (Rmixmod backend)}
		\label{tab:mar_mnary_rmixmod}
		\begin{tabular}{@{}lccc@{}}
			\toprule
			\textbf{Method} & \textbf{ARI} & \textbf{NRMSE} & \textbf{Relevant Variables} \\
			\midrule
			\texttt{SelvarMNARz (Ours)} & 0.778 & 0.162 & 1, 2 \\
			\texttt{Decoupled (SI MNARz)} & 0.328 & 0.263 & 1 \\
			\texttt{Decoupled (MI missRanger)} & 0.336 & 0.257 & 1 \\
			\texttt{Decoupled (MI MNARz)} & 0.328 & 0.331 & 1 \\
			\bottomrule
		\end{tabular}
	\end{table}
	
	\begin{table}[ht]
		\centering
		\small
		\caption{Performance on Mixed (MCAR+MAR+MNARy) Data (Rmixmod backend)}
		\label{tab:mcar_mar_mnary_rmixmod}
		\begin{tabular}{@{}lccc@{}}
			\toprule
			\textbf{Method} & \textbf{ARI} & \textbf{NRMSE} & \textbf{Relevant Variables} \\
			\midrule
			\texttt{SelvarMNARz} (Ours) & 0.773 & 0.164 & 1, 2 \\
			\texttt{Decoupled (SI MNARz)} & 0.328 & 0.387 & 1 \\
			\texttt{Decoupled (MI MNARz)} & 0.336 & 0.322 & 1 \\
			\texttt{Decoupled (MI missRanger)} & 0.328 & 0.248 & 1 \\
			\bottomrule
		\end{tabular}
	\end{table}
	
	From Table~\ref{tab:mar_mnary_rmixmod} and Table~\ref{tab:mcar_mar_mnary_rmixmod}, we deduce two key takeaways: (i) Our joint model retains high ARI and correct selection under mixed mechanisms, including MNARy misspecification; (ii) MI does not repair the decoupled pipeline in this setting, with results mirroring single-imputation. This suggests two contributing factors:
	\begin{enumerate}
		\item \textbf{Uncertainty Propagation.} Decoupled pipelines treat completed data as observed, discarding posterior uncertainty in $\vy^m$. Our EM-based framework propagates this uncertainty through all parameter and role updates.
		\item \textbf{Downstream Stability.} The variable-selection backend (\texttt{SelvarMix} with its \texttt{Rmixmod} engine) shows instability (e.g., sensitivity to local maxima); MI then averages multiple weak fits. The joint estimation procedure is empirically more stable in this context.
	\end{enumerate}
	Finally, under these mixed mechanisms, the class-level MNARz parameters in our model often converge to similar values across clusters for variables whose missingness is effectively MAR or MCAR, while remaining discriminative where missingness is truly class-linked. This helps explain the model's robustness to misspecification.

	{ \bf Effect of Initialization on EM Stability and Accuracy.}
	We compared a hierarchical clustering (HC) based initialization (Ward's linkage on Euclidean distances with cluster centers extracted from the dendrogram cut at $K$) against multiple random initializations (MIs). HC places initial centers in high–density regions, yielding more stable responsibilities at the first E–step and fewer poor local optima than purely random starts. As shown in Table~\ref{tab:init_hc_vs_mis}, HC attains higher ARI on 7 of 8 scenarios, and trails slightly once, confirming its overall robustness and improved convergence behavior for the EM algorithm.
	\begin{table}[ht]
		\centering
		\caption{Comparison of EM initialization methods under 8 data scenarios in Section~\ref{sec_experiment} (higher ARI is better). 
			MIs: Multiple random initializations.}
		\label{tab:init_hc_vs_mis}
		\small
		\begin{tabular}{@{}ccc@{}}
			\toprule
			\textbf{Scenario} & \textbf{MIs (ARI)} & \textbf{HC (ARI)} \\
			\midrule
			1 & 0.283 & \textbf{0.317} \\
			2 & 0.479 & \textbf{0.533} \\
			3 & \textbf{0.551} & 0.536 \\
			4 & 0.441 & \textbf{0.654} \\
			5 & 0.760 & \textbf{0.771} \\
			6 & 0.711 & \textbf{0.778} \\
			7 & 0.716 & \textbf{0.780} \\
			8 & 0.785 & \textbf{0.786} \\
			\bottomrule
		\end{tabular}
	\end{table}
	
	HC initialization offers a more sensible and stable warm start than multiple random restarts, typically improving both EM convergence and final clustering accuracy (ARI), with negligible overhead relative to the overall EM cost.
	
	\subsection{More on Transcriptome Dataset}
	\subsubsection*{Background and Prior Analyses}
	{\bf Dataset.} We analyze the \textit{Arabidopsis thaliana} transcriptome comprising $1267$ genes measured across $27$ experimental conditions aggregated from seven projects P1-P7. Genes were preselected for differential expression at least once in the hypocotyl growth switch time course (Project~6), making P6 biologically central. Following \cite{maugis2009variable_a, maugis2012selvarclustmv}, we retain all $1267$ genes: $1149$ are fully observed; $118$ contain missing entries (107 with one, 10 with two, 1 with three). Overall, $9.3\%$ of genes have any missingness and the global missing rate is $0.38\%$.

	{\bf Prior findings.}
	SelvarClust \cite{maugis2009variable_a} (complete cases) found that including irrelevant variables degrades homogeneity; variable selection produced more coherent clusters and recovered known co-expression groups (e.g., a cluster of 15 genes co-clustered with 4 well-studied markers).
	\emph{SelvarClustMV} \cite{maugis2012selvarclustmv} (MAR setting) expanded the gene set by reprocessing previously excluded genes and treating missing data within an EM framework, concluding that P6 (hypocotyl switch) and P7 (isoxaben treatment) are clustering‐relevant, together with P1-P4, whereas P5 (nematode infection) is not primarily grouping.
	Both studies support P2 (iron signaling) as a core axis for defining co-expression groups. Note that the 2012 analysis assumes MAR for the missing entries. In contrast, our framework explicitly models class-dependent missingness (MNARz), while retaining MAR/MCAR as limiting cases through parameterization.
	
	\subsubsection*{Additional Interpretation of Our Results on the Transcriptome}
	
	{\bf Global outcome.} Fitting \texttt{SelvarMNARz} for $K\in{2,\ldots,20}$ with $c=5$, spectral distance weights $\mP_k$, and $\text{p}_{k}LC$ structure, the selected model yields $18$ clusters and a global role assignment with P1-P4 in $\sS$ and P5-P7 in $\sU$. This agrees with prior work on P5 (not clustering-relevant) but differs by reclassifying P6-P7 from $\sS$ (in \cite{maugis2009variable_a,maugis2012selvarclustmv}) to $\sU$.
	
	{\bf Cluster-level diagnostics clarify the difference.}
	Let $R^2_k$ denote the coefficient of determination from regressing the $\sU$-block (here, P5-P7) on the $\sS$-block (P1-P4) within cluster $k$. Table~\ref{tab:r2_cluster} shows pronounced heterogeneity:
	\begin{itemize}
		\item Several clusters (e.g., 6, 7, 8, 10, 12, 18) have $R^2_k > 0.60$, indicating that P5-P7 are largely explained by P1-P4 \emph{within those clusters}. For these groups, assigning P5-P7 to $\sU$ is appropriate.
		\item A large aggregate (Cluster 1) exhibits low $R^2$ alongside flat $\sS$-profiles and detectable $\sU$-activity, suggesting local decoupling of P5-P7 from P1-P4. This mirrors the biological intuition behind earlier inclusion of P6-P7 in $\sS$.
	\end{itemize}
	Hence, both the earlier global $\sS$-assignment (P6-P7 in $\sS$) and our global $\sU$-assignment (P6-P7 in $\sU$) are locally valid—but \emph{on different clusters}. The discrepancy is explained by heterogeneity: the relationship between early axes (P1-P4) and late/stress projects (P5-P7) is cluster-specific.
	
	{ \bf Why does our global BIC prefer P5-P7 in $\sU$? } Two factors are at play:
	\begin{enumerate}
		\item \emph{Global model selection.} BIC aggregates fit-complexity tradeoffs across all clusters. Because many clusters exhibit high $R^2_k$ (P5-P7 explained by P1-P4), the global criterion prefers a parsimonious $\sS$ (P1-P4) and assigns P5-P7 to $\sU$.
		\item \emph{MNARz identifiability and shrinkage.} Under MNARz, class-specific missingness parameters $\rho_{kd}$ absorb class-linked absence patterns. When a project effectively behaves as MAR/MCAR for many clusters, the fitted $\rho_{kd}$ across $k$ becomes nearly homogeneous and the conditional dependence of P5-P7 on P1-P4 becomes tighter, further favoring $\sU$ globally.
	\end{enumerate}
	\begin{figure}[t]
		\centering
		\includegraphics[width=.98\linewidth]{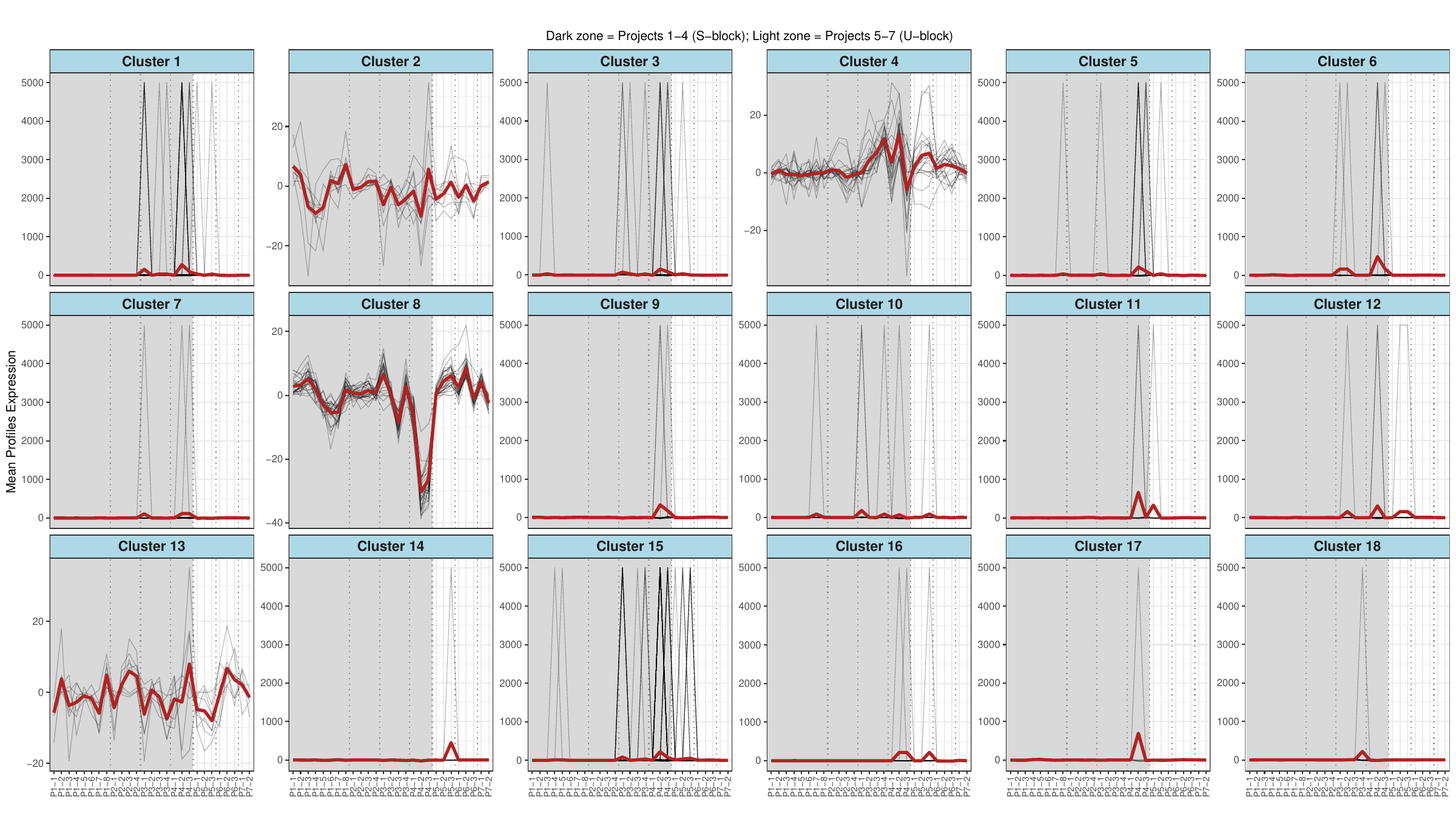}
		\caption{Mean expression profiles for each of the 18 clusters, displayed in separate panels. Light region indicates irrelevant P.}
		\label{fig:transcriptome_ind_mean_expr}
	\end{figure}
	
	{\bf Biological reading consistent with both views.} P2 (iron signaling) is reaffirmed as a core driver. P5-P7 behave as late/stress outputs strongly coupled to P1-P4 in many clusters (high $R^2_k$), but retain independent variation in a sizable subset (e.g., Cluster 1). This reconciles the prior decision to place P6-P7 in $\sS$ (emphasizing hypocotyl-centric signals) with our global $\sU$ assignment (emphasizing aggregate parsimony across all clusters).
	
	{\bf Practical implication.} Our diagnostics suggest a natural extension: \emph{cluster-adaptive} role assignment or a hierarchical prior tying per-cluster roles, which would allow P6-P7 to enter $\sS$ \emph{only} where local evidence (low residual error) warrants it while retaining parsimony elsewhere. This aligns with the heterogeneity revealed by $R^2_k$ and preserves the strengths of both global viewpoints.
	
	Our unified $\sS$/$\sU$ decision is globally efficient and statistically supported by MNARz-aware likelihood, while cluster-level diagnostics uncover biologically meaningful deviations. Together, they provide a coherent picture: P1-P4 are the principal axes; P5-P7 are predominantly redundant but locally informative in specific clusters, explaining the divergence from prior MAR-based analyses.
	
	\section{Additional Details on Related Work in the Literature} \label{section_more_details_related_works}
	
	\subsection{Model-based Clustering}
	Model-based clustering conceptualizes clustering as a statistical inference problem, where the data are assumed to be generated from a finite mixture of probability distributions, each corresponding to a latent cluster. Unlike heuristic-based methods, this paradigm enables principled inference, allowing for parameter estimation via likelihood-based techniques and objective model selection to determine the number of clusters.
	
	A prototypical instance is the GMM, wherein each cluster is characterized by a multivariate Gaussian distribution. Parameter estimation is typically conducted using the EM algorithm \cite{dempster1977maximum}, yielding soft assignments in which each observation is associated with posterior probabilities across clusters. GMMs offer more flexibility than simpler methods like $k$-means, as they can model clusters with varying shapes and overlapping regions via covariance structures.
	
	Bayesian mixture models (MMs) extend this framework by treating both the parameters and the number of components as random variables with prior distributions. This fully Bayesian approach enables comprehensive uncertainty quantification over both model parameters and cluster allocations. For a fixed number of components, inference is often performed using Markov Chain Monte Carlo (MCMC) or variational methods. However, Bayesian MMs face the label switching problem due to the symmetry of the likelihood with respect to component labels, which renders the posterior non-identifiable without additional constraints. Strategies such as relabeling algorithms and identifiability constraints have been proposed to address this issue \cite{stephens2000dealing}. A notable advantage of the Bayesian approach is the incorporation of priors on model complexity, e.g., via Reversible Jump MCMC or birth-death processes, to infer the number of clusters. Despite their flexibility and robustness, fully Bayesian clustering methods can be computationally demanding, particularly when MCMC chains converge slowly or require extensive post-processing to resolve label ambiguity \cite{jasra_markov_2005}.
	
	\subsection{Variable Selection for Model-based Clustering}
	
	Historically, variable selection was performed using best-subset or stepwise selection approaches, typically guided by information criteria such as AIC \cite{akaike2003new} or BIC \cite{schwarz1978estimating}. While effective for moderate-dimensional settings, these methods become computationally prohibitive as the number of features $D$ increases.
	
	The advent of penalized likelihood methods improved scalability and enabled sparse modeling. The LASSO \cite{tibshirani1996regression}, a seminal technique using an $L_1$ penalty, enables both variable selection and coefficient shrinkage. Predecessors include the nonnegative garrote \cite{breiman1995better} and ridge regression \cite{hoerl1970ridge}, the latter using an $L_2$ penalty, which does not induce sparsity. Efficient algorithms such as LARS \cite{efron2004least} and coordinate descent have facilitated high-dimensional applications of these methods. However, the time complexity of LARS limits its stability when $D$ is very large.
	
	Law et al. \cite{law2004simultaneous} proposed a wrapper approach that jointly performs clustering and variable selection through greedy subset evaluation, introducing the notion of feature saliency to assess variable importance in cluster discrimination. Andrews and McNicholas \cite{andrews2014variable} developed the VSCC algorithm, combining filter and wrapper strategies: variables are ranked based on within-cluster variance from an initial clustering, then incrementally added subject to correlation thresholds to enhance separability. Their R package implementation, \texttt{vscc}, offers efficient noise filtering prior to model-based refinement.
	
	Raftery and Dean \cite{raftery2006variable} introduced a model selection framework for variable selection within GMMs, categorizing variables as clustering, candidate, or noise. Each candidate is evaluated using BIC to compare models where the variable does or does not influence clustering. Their framework accommodates statistical dependence between irrelevant and clustering variables via regression, avoiding overly simplistic independence assumptions. Scrucca and Raftery \cite{scrucca2018clustvarsel} enhanced this method, introducing computational heuristics in the \texttt{clustvarsel} R package to streamline EM evaluations.
	
	Expanding this idea, Maugis et al. \cite{maugis2009variable_b} proposed a three-role framework, relevant, irrelevant, and redundant variables, with redundancies modeled via linear regression on relevant variables. This design improves accuracy in scenarios with correlated predictors. Nevertheless, the stepwise search remains computationally demanding as dimensionality increases.
	
	To address scalability, Celeux et al. \cite{celeux2019variable} proposed a two-stage approach: first, variables are ranked by penalized likelihood using the method of Zhou et al. \cite{zhou_penalized_2009}, which penalizes component means and precisions. Then, a linear scan through this ranking assigns roles. This heuristic drastically reduces computation, preserves model interpretability, and achieves strong empirical performance, though theoretical guarantees for recovering the correct $\sS\sR\sU\sW$ partition remain open.
	
	Extensions to categorical data include adaptations of the Raftery-Dean method to Latent Class Analysis (LCA) by Dean and Raftery \cite{dean2010latent}, with further refinement by Fop et al. \cite{fop2017variable}, resulting in improved class separation in clinical datasets. Bontemps and Toussile \cite{bontemps_clustering_2013} considered mixtures of multinomial distributions, using slope-heuristic-adjusted penalization to improve model selection under small-sample conditions.
	
	\subsection{Handling Missing Data}
	
	In practical datasets, missing values often occur and are categorized as MCAR, MAR, or MNAR. Under the MCAR assumption, missingness is unrelated to any data values; MAR allows dependence on observed data; MNAR, the most complex case, involves dependence on unobserved or latent variables.
	
	Under MAR, Multiple Imputation (MI) \cite{rubin1988overview} has emerged as a robust strategy to address uncertainty. Instead of single imputation, MI generates multiple completed datasets via draws from predictive distributions, followed by Rubin's combination rules to aggregate inference. MICE \cite{van2011mice}, a flexible implementation, fits univariate models conditionally and iteratively imputes missing values, supporting mixed data types and nonlinear relationships.
	
	High-dimensional settings pose new challenges, where full joint modeling becomes unstable. To address this, Zhao and Long \cite{zhao2016multiple} incorporated Lasso and Bayesian Lasso into the MICE framework to enhance prediction accuracy through regularized imputation, effectively reducing overfitting and variable selection bias in high-$D$ regimes.
	
	Machine learning techniques have also advanced imputation. MissForest \cite{stekhoven2012missforest} uses random forests to iteratively impute variables based on others, capturing nonlinearities in a nonparametric manner. Although not explicitly designed for clustering, the ensemble mechanism implicitly reflects local data structures akin to clustering. Deep learning methods, such as MIWAE \cite{mattei_miwae_2019}, adopt autoencoder architectures to generate multiple imputations from learned latent spaces, enabling further extensions like Fed-MIWAE \cite{balelli2023fed} for privacy-sensitive contexts. These models assume a latent manifold structure and are best suited for large sample sizes, though interpretability remains a challenge.
	
	Addressing MNAR data in clustering models remains difficult due to identifiability issues and the need for strong assumptions or auxiliary information. Two general strategies exist: (1) Selection models, specifying a joint model for data and missingness mechanisms, and (2) Pattern mixture models, defining distributions conditioned on missingness patterns and modeling their influence on cluster membership. In \cite{sportisse2024model}, Sportisse et al. investigated identifiability conditions in MNAR selection models and proposed methods to augment clustering models with missingness-informed constraints to improve identifiability.

\end{document}